\newtheorem{proposition}{Proposition}
\newtheorem{theorem}{Théorème}
\newtheorem{lemma}{Lemme}
\newenvironment{proof}{\noindent \emph{Preuve. }}{\hfill \hbox{\rlap{$\sqcap$}$\sqcup$}\\}
\title{Empilements compacts avec trois tailles de disque}
\author{
  Thomas Fernique
  \footnote{Universit\'e Paris 13, CNRS, Sorbonne Paris Cit\'e, UMR 7030, 93430 Villetaneuse, France.}
  \and
  Amir Hashemi
  \footnote{Isfahan University of Technology, 84156-83111 Isfahan, Iran.}
  \and
  Olga Sizova
  \footnote{Faculty of Mathematics, Higher School of Economics, Moscow, Russia}
  }
\begin{document}
\maketitle

\section{Introduction}

Un {\em empilement de disques} est un ensemble de disques d'intérieurs disjoints.
Il est dit {\em compact} si le graphe qui relie les centres des disques tangents est triangulé, c'est-à-dire que les interstices entre les disques sont tous des triangles curvilignes (notion introduite par L\'aszl\'o Fejes T\'oth dans \cite{FT64}).\\

Il n'y a qu'un seul empilement compact avec des disques tous identiques, celui où les disques sont centrés sur les sommets du réseau triangulaire ({\em empilement hexagonal compact}).
Les empilements compacts avec deux tailles de disque ont été classifiés dans \cite{Ken06}.
Il y a neuf ratios de rayons possibles.
Dans tous les cas, l'empilement compact le plus dense est périodique (même si des empilements apériodiques peuvent être possibles).
De plus, dans $7$ des $9$ cas, il a également été montré que l'empilement le plus dense avec ces deux tailles de disque est un empilement compact (\cite{Hep00,Hep03,Ken04}).\\

On s'intéresse ici aux empilements compacts avec des disques de rayon $s<r<1$, où chaque type de disque apparaît.
Sauf précision contraire, un ``empilement'' désignera un tel empilement dans tout ce qui suit.
Ce problème a été examiné dans \cite{Mes17}, où il est montré qu'il y a au plus $11462$ couples $(r,s)$ possibles.
Plusieurs exemples d'empilements sont donnés, mais l'auteur suggère qu'une caractérisation complète est hors de portée des moyens de calcul informatiques actuels.
On montre ici :

\begin{theorem}
  \label{th:164}
  Il y a exactement $164$ couples $(r,s)$ permettant un empilement.
\end{theorem}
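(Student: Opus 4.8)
The plan is to combine a finite case analysis that bounds the admissible pairs with explicit constructions realising them. The starting point is the local constraint satisfied by every compact packing. If three disks of radii $a,b,c$ are mutually tangent, their centres form a triangle with side lengths $a+b$, $b+c$, $c+a$, so the angle at the centre of the disk of radius $a$ equals
\[
  \alpha(a;b,c)=\arccos\!\left(1-\frac{2bc}{(a+b)(a+c)}\right).
\]
Since the contact graph is triangulated, the triangles incident to a fixed disk partition the full angle about its centre, giving for each disk a \emph{corona equation} $\sum\alpha=2\pi$, the sum running over the incident triangles. With three radii there are only ten triangle types (the multisets drawn from $\{1,r,s\}$) and, at a disk of radius $a$, six possible wedge angles $\alpha(a;\,\cdot\,,\cdot)$. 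I would first record these ten angle functions explicitly, and note that at the smallest disk every wedge is at least $\alpha(s;s,s)=\pi/3$, so its corona has at most six triangles.

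Next I would pass from these equations to a finite list of candidates. This is where I would invoke \cite{Mes17}: by the result quoted above there are at most $11462$ pairs $(r,s)$ with $s<r<1$ compatible with the corona constraints, so it suffices to decide, for each of these finitely many candidates, whether a genuine packing exists. I would tighten the necessary conditions by requiring \emph{all three} corona equations (around the unit-, the $r$- and the $s$-disk) to be satisfiable by an actual cyclic sequence of triangles, not merely by an integer angle count: consecutive triangles in a corona must share a neighbour, which forces the sequence of neighbour radii to be a closed walk. Discarding candidates for which no admissible closed corona exists around one of the three disks eliminates the large majority of the $11462$ pairs. Each test is a certified numerical check, since the $\alpha(a;b,c)$ are explicit algebraic-arccosine expressions and near-solutions must be ruled out with interval arithmetic.

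For the pairs that survive, satisfying the corona equations is only necessary; I must still exhibit a packing, because a locally admissible corona need not extend to the whole plane. The remaining and most labour-intensive step is therefore constructive: for each of the $164$ surviving pairs I would produce an explicit compact packing, in every case periodic, by specifying a fundamental domain together with the disk centres and radii, and then verifying that the disks have disjoint interiors, that every interstice is a curvilinear triangle, and that all three sizes occur. Periodicity is what upgrades a finite certificate into a bona fide infinite packing. Conversely, for every candidate not among the $164$ I would confirm non-existence either through the corona obstruction above or, when the corona closes locally but no global packing can be assembled, through a short dedicated argument.

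The hard part is twofold. First, the enumeration must be provably complete and certified: one cannot afford a single spurious inclusion or omission among thousands of transcendental constraints, so the numerical eliminations have to be made rigorous rather than merely convincing. Second, and more creative, is the construction of $164$ distinct periodic packings; there is no uniform recipe, and exhibiting a valid fundamental domain for each admissible pair---especially when the ratios $r$ and $s$ are awkward algebraic numbers---is the genuine obstacle of the proof.
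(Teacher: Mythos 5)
Your skeleton (finite candidate list, certified filtering, explicit periodic packings for survivors, ad hoc non-existence arguments for the rest) matches the logical structure of the paper, but it hides the actual difficulty at the point where you delegate to \cite{Mes17} and propose to ``decide, for each of these finitely many candidates, whether a genuine packing exists''. Messerschmidt's bound of $11462$ is a count of combinatorially feasible corona data, not an explicit list of algebraic pairs $(r,s)$: to turn it into checkable candidates one must solve, for each compatible pair of a small and a medium corona, a bivariate polynomial system. The paper documents that this head-on computation is infeasible --- there are $16805$ such corona pairs, some of the associated polynomials have degree up to $416$ with coefficients of $155$ digits, and already the resultant computation for the pair $\textrm{11rrs}/\textrm{11rrs}^{12}$ exceeds the machine's memory. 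This is precisely why Messerschmidt himself judged the complete characterization out of computational reach; your verification step is not routine, it \emph{is} the open problem.

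What the paper adds, and what is missing from your proposal, is the structural idea that makes the enumeration tractable: partition all packings into three types --- (i) ``two phases'' (no small--medium contact), (ii) packings containing two distinct small coronas, (iii) packings with a single small corona --- and attack each type with combinatorial lemmas that eliminate corona pairs \emph{before} any polynomial system is solved. In type (i) the equations decouple into two univariate ones ($45$ candidates), and connectivity lemmas about large coronas (lemmes~\ref{lem:biphase1} et~\ref{lem:biphase2}) kill $27$ of them. In type (iii), the requirement that the medium corona contain a small disk whose corona is the unique small corona forces the ``covering'' compatibility between codings, cutting the medium coronas from $1654$ to $192$ and the pairs to $803$, which is what makes the resultant-plus-interval-arithmetic machinery actually runnable; the finiteness of medium coronas itself rests on a uniform lower bound for $\tfrac{s}{r}$ (lemme~\ref{lem:ratio_minimal}). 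Two smaller points: interval arithmetic can only \emph{refute} candidates, so certifying that a surviving pair exactly satisfies the angle equations requires a final exact algebraic pass, as the paper does; and even at fixed $(r,s)$, non-existence of a packing is established not by ``short dedicated arguments'' in the abstract but by specific propagation lemmas (lemmes~\ref{lem:2pc}, \ref{lem:1pc_impossible_1}, \ref{lem:1pc_impossible_2}) whose content is the real negative half of the theorem.
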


Les empilements possibles ne sont pas tous classifiés : on se contente d'en exhiber un, périodique, pour chaque valeur.
La question de dé\-ter\-miner quels sont les empilements les plus denses reste ouverte.
En particulier, en existe-t-il toujours un périodique ?
Reste également ouverte la question de savoir si, quand ils existent, ces empilements maximisent la densité parmi {\em tous} les empilements, {\em i.e.}, sans hypothèse de compacité.

\section{Notations}

Le grand disque est supposé de rayon $1$.
On note $r$ et $s$ les rayons du moyen et du petit disque, avec $0<s<r<1$.
On appelle {\em couronne} d'un disque dans un empilement compact saturé l'ensemble des disques auquel il est tangent.
Elle est dite {\em petite}, {\em moyenne} ou {\em grande} selon que le disque entouré est petit, moyen ou grand.
Un {\em codage} d'une couronne est un mot formé des rayons des disques qui la constituent, ordonnés de sorte à ce que l'un soit tangent au suivant et le dernier au premier.
Toute permutation circulaire ou renversement de ce mot correspond à la même couronne : on choisira généralement l'écriture minimale pour l'ordre lexicographique.
\'Etant donnés des disques de rayons $x$, $y$ et $z$ deux à deux tangents, on note $\widehat{xyz}$ l'angle non-orienté entre les segments reliant le centre du disque de rayon $y$ aux deux autres centres.
La figure~\ref{fig:notations} illustre cela.

\begin{figure}
  \centering
  \includegraphics[width=0.4\textwidth]{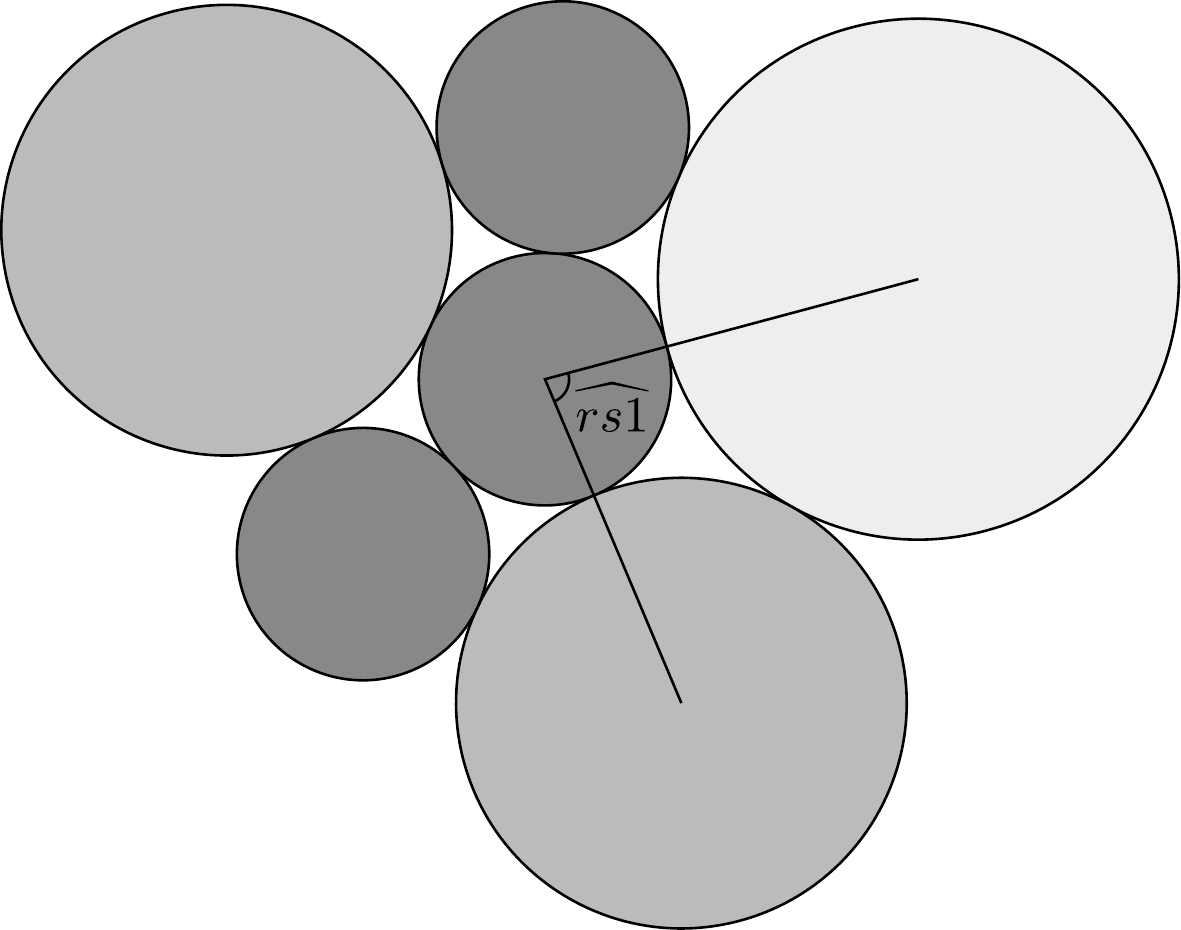}
  \caption{La petite couronne $1rsrs$ et l'angle $\widehat{rs1}$.}
  \label{fig:notations}
\end{figure}

\section{Petites couronnes}
\label{sec:petites}

Une petite couronne est formée d'au plus $6$ disques, et dans ce cas ce sont nécessairement tous des petits disques (c'est l'empilement hexagonal habituel).
Si elle comporte un disque moyen ou un grand, elle est donc formée d'au plus $5$ disques.
En particulier, il y a un nombre fini de petites couronnes différentes, quelque soient les valeurs de $r$ et $s$.
On va ici toutes les expliciter.\\

\noindent Soit $P_{\vec{k}}(r,s)$ la fonction définie pour $\vec{k}\in\mathbb{N}^6$ et $1>r>s>0$ par
$$
P_{\vec{k}}(r,s):=k_1\widehat{1s1}+k_2\widehat{1sr}+k_3\widehat{1ss}+k_4\widehat{rsr}+k_5\widehat{rss}+k_6\widehat{sss}.
$$
Les six angles impliqués sont ceux qui, dans un triangle rejoignant les centres de trois disques deux à deux tangents, peuvent apparaître au centre d'un petit disque.
Le vecteur $\vec{k}$ sera appelé {\em vecteur d'angles}.
Une petit couronne correspond alors une solution de l'équation
\begin{equation}\label{eq:petite_couronne}
P_{\vec{k}}(r,s)=2\pi.
\end{equation}
Chacun des six angles impliqués, vu comme une fonction de $r$ et $s$, est croissant en $r$ et décroissant en $s$.
La fonction $P_{\vec{k}}$ est donc majorée par sa limite en $(1,0)$ et minorée par sa plus petite limite sur la diagonale $r=s$.
On calcule
$$
\lim_{r\to 1\atop s\to 0}P_{\vec{k}}(r,s)=k_1\pi+k_2\pi+k_3\frac{\pi}{2}+k_4\pi+k_5\frac{\pi}{2}+k_6\frac{\pi}{3},
$$
$$
\inf_{r}\lim_{s\to r}P_{\vec{k}}(r,s)=\lim_{r\to 1\atop s\to 1}P_{\vec{k}}(r,s)=k_1\frac{\pi}{3}+k_2\frac{\pi}{3}+k_3\frac{\pi}{3}+k_4\frac{\pi}{3}+k_5\frac{\pi}{3}+k_6\frac{\pi}{3}.
$$
Comme tous ces angles, excepté $\widehat{sss}$, sont de plus {\em strictement} décroissants en $s$, $P_{\vec{k}}$ n'atteint pas son maximum sur l'ouvert $1>r>s>0$ et les bornes sont strictes, sauf si $k_1=k_2=k_3=k_4=k_5=0$, ce qui correspond à une petite couronne formée de $6$ petits disques.
Outre ce cas, l'équation $P_{\vec{k}}(r,s)=2\pi$ a donc une solution si et seulement si
\begin{equation}
k_1+k_2+k_3+k_4+k_5+k_6< 6<3k_1+3k_2+\frac{3}{2}k_3+3k_4+\frac{3}{2}k_5+k_6.
\end{equation}
Cette condition donne $383$ valeurs de $\vec{k}$ possibles.
Pour qu'une valeur de $\vec{k}$ corresponde réellement à une couronne, il faut aussi qu'il existe, dans le graphe représenté Fig.~\ref{fig:couronnable}, un cycle tel que $\vec{k}$ compte les passages de ce cycle dans chaque type d'arête.
Une boucle de ce graphe peut ne pas être empruntée par le cycle, mais si elle l'est il faut pouvoir y accéder, à moins que le cycle n'emprunte qu'elle.
Ceci se traduit par les trois conditions
\begin{eqnarray}
(k_1=0) \lor (k_2\neq 0 \lor k_3\neq 0) \lor (k_2=k_3=k_4=k_5=k_6=0)\\
(k_4=0) \lor (k_2\neq 0 \lor k_5\neq 0) \lor (k_1=k_2=k_3=k_5=k_6=0)\\
(k_6=0) \lor (k_3\neq 0 \lor k_5\neq 0) \lor (k_1=k_2=k_3=k_4=k_5=0)
\end{eqnarray}
Outre ces boucles, le cycle fait $k_0:=\min(k_2,k_3,k_5)$ tours entre les trois sommets, plus éventuellement quelques aller-retours qui empruntent un nombre pair de fois chaque arête par laquelle ils passent.
Ceci se traduit par les conditions
\begin{equation}
  k_2-k_0\in2\mathbb{N},\qquad
  k_3-k_0\in2\mathbb{N},\qquad
  k_5-k_0\in2\mathbb{N}.
\end{equation}

\begin{figure}
  \centering
  \includegraphics[width=0.3\textwidth]{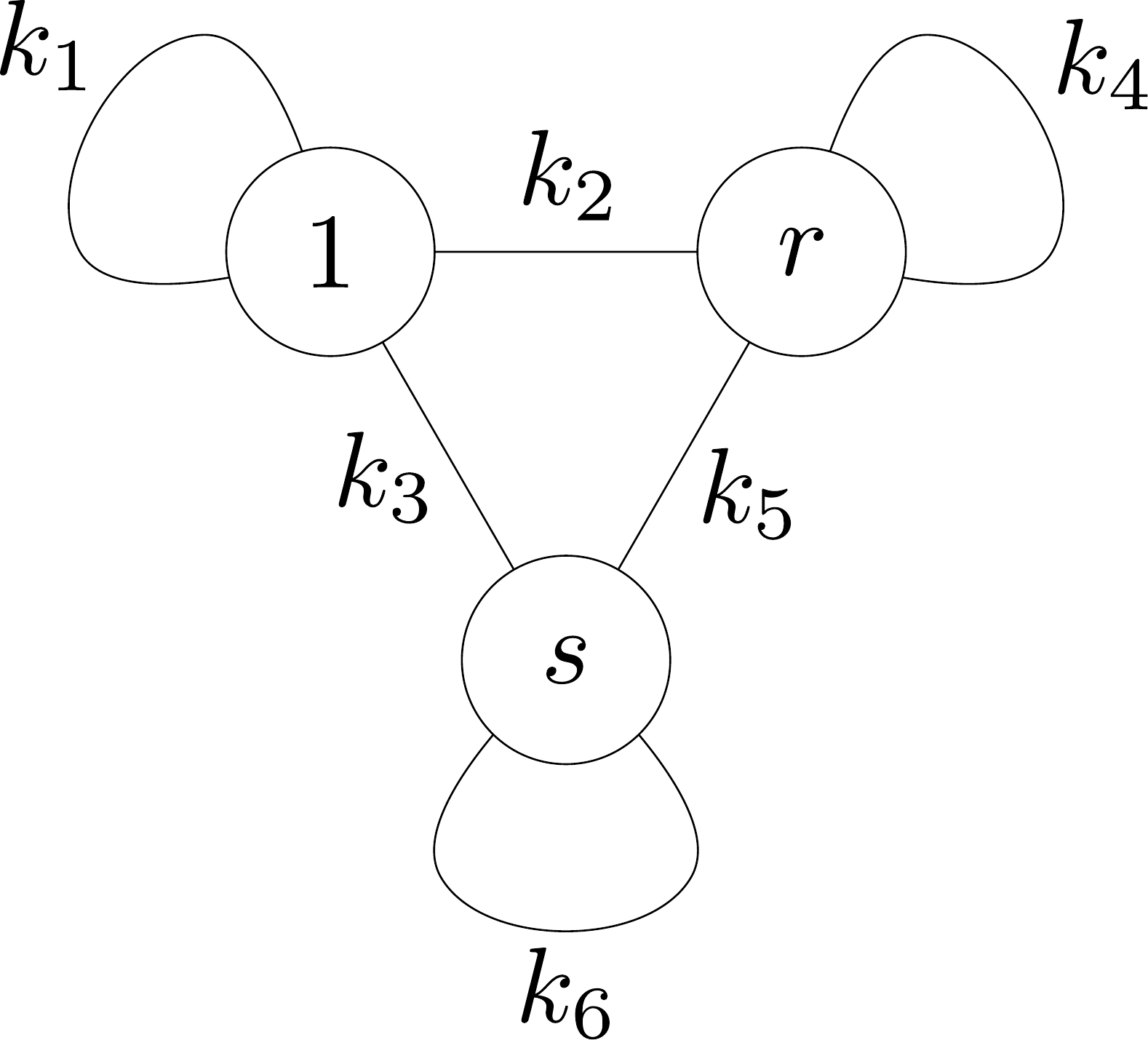}
  \caption{Les cycles de ce graphe codent les couronnes associées à $\vec{k}$.}
  \label{fig:couronnable}
\end{figure}

Toutes les conditions précédentes réduisent à $56$ le nombre de vecteurs d'angles  possibles.
La table~\ref{tab:petites_couronnes} donne un codage des couronnes correspondantes\footnote{Un vecteur d'angles pourrait correspondre à des couronnes différentes. Par exemple, $\vec{k}=(0,2,2,0,2,0)$ correspond à 1rsr1s, 1r1srs, 1rs1sr et 1rs1rs. Mais ce n'est jamais le cas ici.}.\\

\begin{table}[hbtp]
  \centering
  \begin{tabular}{cccccccccc}
    rrrrr & rrrrs & rrrss & rrsrs & rrrr & rrsss & rsrss & rrrs & rrr & rrss\\
    11111 & 1111s & 111ss & 11s1s & 1111 & 11sss & 1s1ss & 111s & 111 & 11ss\\
    1111r & 111rs & 11rss & 11srs & 111r & 1rsss & 1srss & 11rs & 11r & 1rss\\
    111rr & 11r1s & 1r1ss & 1rs1s & 11rr &  &  & 1r1s & 1rr & \\
    11r1r & 11rrs & 1rrss & 1rsrs & 1r1r &  &  & 1rrs &  & \\
    11rrr & 1r1rs & r1rss & rrs1s & 1rrr &  &  & r1rs &  & \\
    1r1rr & 1rr1s &  &  &  &  &  &  &  & \\
    1rrrr & 1rrrs &  &  &  &  &  &  &  & \\
    & r11rs &  &  &  &  &  &  &  & \\
    & r1rrs &  &  &  &  &  &  &  & \\
  \end{tabular}
  \caption{
    Les $55$ petites couronnes possibles, outre $ssssss$.
    Les couronnes de la première ligne n'ont pas de grand disque, celles de la deuxième n'ont pas de moyen disque.
    Les codages d'une même colonne ont tous la même image par l'application qui remplace chaque 1 par un r (utilisé dans le lemme~\ref{lem:ratio_minimal}).
  }
  \label{tab:petites_couronnes}
\end{table}

\section{Polynôme associé à une petite couronne}
\label{sec:equation}

En prenant le cosinus de chaque membre de l'équation \ref{eq:petite_couronne} puis en développant grâce aux formules d'addition et aux polynômes de Tchebychev, on obtient une équation polynomiale à coefficients entiers en les sinus et cosinus des six angles impliquées dans $P_{\vec{k}}(r,s)$.
La loi des cosinus permet d'exprimer les cosinus des angles $\widehat{1s1}$, $\widehat{1sr}$, $\widehat{1ss}$, $\widehat{rsr}$, $\widehat{rss}$ et $\widehat{sss}$ en fonction de $r$ et $s$ (dans cet ordre) :
$$
1-\frac{2}{(1+s)^2},\quad
1-\frac{2r}{(r+s)(1+s)},\quad
\frac{s}{1+s},\quad
1-\frac{2r^2}{(r+s)^2},\quad
\frac{s}{r+s},\quad
\frac{1}{2}.
$$
Le carré des sinus s'en déduit (dans le même ordre) :
$$
\frac{4s(s+2)}{(s + 1)^4},\quad
\frac{4rs(r + s + 1)}{(s + 1)^2(r + s)^2},\quad
\frac{2s + 1}{(s + 1)^2},\quad
\frac{4s(2r + s)r^2}{(r + s)^4},\quad
\frac{r(r + 2s)}{(r + s)^2},\quad
\frac{3}{4}.
$$
On exprime les sinus eux-mêmes avec des variables auxiliaires :
$$
\frac{2X_1}{(s + 1)^2},\quad
\frac{2X_2}{(s + 1)(r + s)},\quad
\frac{X_3}{s + 1},\quad
\frac{2rX_4}{(r + s)^2},\quad
\frac{X_5}{r + s},\quad
\frac{X_6}{2},
$$
où les carrés des $X_i$ valent respectivement
$$
s(s+2),\quad
rs(r+s+1),\quad
2s+1,\quad
s(2r+s),\quad
r(r + 2s),\quad
3.
$$
Ceci donne un système d'équations polynomiales à coefficients entiers en les variables $r$, $s$ et $X_1,\ldots,X_6$.
Quitte à remplacer $X_i^2$ par son expression en $r$ et $s$, on peut toujours supposer que $X_i$ n'apparaît jamais à une puissance $k\geq 2$.
On élimine ensuite successivement les $X_i$ qui restent en remarquant :
$$
AX_i+B=0~\Rightarrow~ A^2X_i^2-B^2=0.
$$
On obtient un polynôme entier bivarié dont les valeurs de $r$ et $s$ compatibles avec la petite couronne considérée sont racines.
Il peut y avoir d'autres racines (l'élimination des $X_i$ n'est pas bijective) qu'il faudra écarter {\em in fine} en vérifiant l'équation \ref{eq:petite_couronne}.
On peut aussi simplifier le polynôme en supprimant les multiplicités des facteurs, ainsi que les facteurs qui n'ont pas de racine $0<s<r<1$.\\

\begin{table}[hbtp]
  \centering
  \begin{tabular}{lll}
  11111 & $5s^4+20s^3+10s^2-20s+1$ & $0.701$ \\
  1111s & $s^4 - 10s^2 - 8s + 9$ & $0.637$ \\
  111ss & $s^8 - 8s^7 - 44s^6 - 232s^5 - 482s^4 - 24s^3 + 388s^2 - 120s + 9$ & $0.545$ \\
  11s1s & $8s^3 + 3s^2 - 2s - 1$ & $0.533$ \\
  1111 & $s^2 + 2s - 1$ & $0.414$ \\
  11sss & $9s^4 - 12s^3 - 26s^2 - 12s + 9$ & $0.386$ \\
  1s1ss & $s^4 - 28s^3 - 10s^2 + 4s + 1$ & $0.349$ \\
  111s & $2s^2 + 3s - 1$ & $0.280$ \\
  111 & $3s^2 + 6s - 1$ & $0.154$ \\
  11ss & $s^2 - 10s + 1$ & $0.101$ 
  \end{tabular}
  \caption{Couronne, polynôme minimal et valeur approchée de $s$.}
  \label{tab:petites_couronnes_deux_disques}
\end{table}

En procédant ainsi, on obtient un polynôme univarié en $s$ pour les $10$ petites couronnes qui n'ont pas de moyen disque (Tab.~\ref{tab:petites_couronnes_deux_disques}).
Chacune des $10$ petites couronnes sans grand disque donne le même polynôme que la couronne où $r$ a été remplacé par $1$, avec la variable $\tfrac{s}{r}$ au lieu de $s$.
Les $35$ autres petites couronnes donnent un polynôme explicite mais parfois assez complexe (Tab.~\ref{tab:degre_petites_couronnes}).
Par exemple, 11rs donne :
$$
r^2s^4 - 2r^2s^3 - 2rs^4 - 23r^2s^2 - 28rs^3 + s^4 - 24r^2s - 58rs^2 - 2s^3 + 16r^2 - 8rs + s^2.
$$

\begin{table}[hbtp]
  \centering
  \begin{tabular}{lr|lr|lr|lr|lr}
1r1r	&	$2$	&	11rr	&	$4$	&	1rrsr	&	$6$	&	1rsss	&	$8$	&	111rr	&	$12$	\\
1r1s	&	$2$	&	1rss	&	$4$	&	1111r	&	$7$	&	1srrs	&	$8$	&	11rrr	&	$12$	\\
1rsr	&	$2$	&	11r1s	&	$6$	&	11r1r	&	$7$	&	1srss	&	$8$	&	111rs	&	$18$	\\
111r	&	$3$	&	11rs	&	$6$	&	1rs1s	&	$7$	&	1r1rr	&	$10$	&	1rrrs	&	$18$	\\
11r	&	$3$	&	11rsr	&	$6$	&	11srs	&	$8$	&	1rrrr	&	$10$	&	11rss	&	$24$	\\
1rr	&	$3$	&	1rr1s	&	$6$	&	1r1ss	&	$8$	&	1rsrs	&	$10$	&	1rrss	&	$24$	\\
1rrr	&	$3$	&	1rrs	&	$6$	&	1rssr	&	$8$	&	1r1rs	&	$11$	&	11rrs	&	$28$	\\
  \end{tabular}
  \caption{Degré du polynôme en $r$ et $s$ imposé par chaque petite couronne.}
  \label{tab:degre_petites_couronnes}
\end{table}

\section{Moyennes couronnes et polynômes associés}
\label{sec:moyennes}

Un moyen disque peut être entouré d'un nombre arbitrairement grand de petits disques si ceux-ci sont suffisamment petits.
Il y a donc un nombre infini de moyennes couronnes.
Mais les empilements considérés contiennent toujours un petit disque, donc une petite couronne qui, elle, donne une contrainte sur la taille des disques :

\begin{lemma}
  \label{lem:ratio_minimal}
  Le ratio $\tfrac{s}{r}$ est uniformément minoré dans les empilements.
\end{lemma}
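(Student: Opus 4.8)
The plan is to combine the single algebraic constraint carried by every small crown with the monotonicity of the angles established in Section~\ref{sec:petites}. First I would note that the tangency graph of a saturated compact packing triangulates the plane and is therefore connected; since all three sizes occur and the small disks alone cannot be separated from the rest, there is a small disk tangent to a disk that is not small. Its crown is one of the $55$ codings of Tab.~\ref{tab:petites_couronnes}, hence different from $ssssss$, and its angle vector $\vec{k}$ satisfies $P_{\vec{k}}(r,s)=2\pi$ by equation~\ref{eq:petite_couronne}.

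The decisive step is the map $\phi$ replacing every $1$ by an $r$, which organises the columns of Tab.~\ref{tab:petites_couronnes}. Geometrically, the angle subtended at the centre of a fixed small disk by two tangent neighbours grows with the radius of each neighbour (this is the monotonicity "croissant en $r$" of Section~\ref{sec:petites}). Replacing a large neighbour of radius $1$ by a medium one of radius $r<1$ therefore can only decrease each angle met along the crown $w$. Writing $\vec{k}'$ for the angle vector of $\phi(w)$, this yields
$$P_{\vec{k}'}(r,s)\le P_{\vec{k}}(r,s)=2\pi.$$
By the column structure recorded in the caption of Tab.~\ref{tab:petites_couronnes}, the word $\phi(w)$ is one of the ten crowns having no large disk, i.e.\ a column header.

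Next I would invoke scale invariance of the angles: rescaling the medium disk to radius $1$ turns $\phi(w)$ into a two-disk crown with radii $1$ and $s/r$, so that $P_{\vec{k}'}(r,s)=P_{\vec{k}'}(1,s/r)$ depends on $s/r$ alone. This is exactly the setting of Tab.~\ref{tab:petites_couronnes_deux_disques}: through $r\mapsto 1$ with variable $s/r$, the header crown corresponds to one of the ten univariate polynomials there, which vanishes at some $t_0\in(0,1)$. Since every angle but $\widehat{sss}$ is \emph{strictly} decreasing in the central radius and $\phi(w)$ contains such an angle (every header carries a medium disk), the function $t\mapsto P_{\vec{k}'}(1,t)$ is strictly decreasing and meets $2\pi$ only at $t_0$. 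The inequality above then forces $s/r\ge t_0$, and taking the smallest of the ten roots, reached by $11ss$ (namely $5-2\sqrt{6}\approx0.101$), gives a bound independent of the packing, which is the claim.

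The main obstacle I anticipate is not the monotonicity, which is already granted, but making the reduction airtight: one must confirm that $\phi(w)$ is genuinely one of the ten column headers of Tab.~\ref{tab:petites_couronnes} (so that the matching polynomial of Tab.~\ref{tab:petites_couronnes_deux_disques} is available with a single root in $(0,1)$), and that the connectivity argument really produces a small crown distinct from $ssssss$. Both reduce to the combinatorics already displayed in Tab.~\ref{tab:petites_couronnes} and to connectedness of the triangulation, so what remains is verification rather than new estimates.
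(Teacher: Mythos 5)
Your proposal follows essentially the same route as the paper's proof: exhibit a small crown different from ssssss, replace every $1$ by an $r$ to land (via the column structure of Table~\ref{tab:petites_couronnes}) on one of the ten crowns without large disk, and conclude that $\tfrac{s}{r}$ is bounded below by the smallest of the ten ratios of Table~\ref{tab:petites_couronnes_deux_disques}. The only difference is in the comparison step: where the paper argues informally that ``deflating'' the large disks shrinks the perimeter of the crown and hence the enclosed disk, you derive the same inequality from $P_{\vec{k}'}(r,s)\le P_{\vec{k}}(r,s)=2\pi$ together with the strict monotonicity of the angle sum in $\tfrac{s}{r}$ — a more rigorous rendering of the same idea, not a different approach.
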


\begin{proof}
  Considérons un empilement.
  Par définition, il contient des disques de toutes les tailles.
  Il contient donc une petite couronne différente de ssssss.
  Remplacer tous les 1 par des r (sans changer les autres lettres) dans ce codage s'avère donner le codage d'une autre petite couronne (voir Table~\ref{tab:petites_couronnes}).
  De plus, dans cette nouvelle petite couronne, le ratio $\tfrac{s}{r}$ est plus petit que dans la couronne originale.
  En effet, les gros disques ayant été ``dégonflés'' en moyens, le périmètre de la couronne a diminué, donc la taille du disque qu'elle entoure aussi.
  Or il n'y a que $10$ ratios $\tfrac{s}{r}$ possibles pour une couronne de disques de tailles $s<r$ autour d'un disque de taille $s$ (Table~\ref{tab:petites_couronnes_deux_disques}).
\end{proof}

Soit $\alpha$ la minoration de $\tfrac{s}{r}$ donné par une petite couronne de l'empilement considéré.
On a alors un nombre fini de moyennes couronnes possibles.
Plus précisément, soit $M_{\vec{l}}(r,s)$ la fonction définie pour $\vec{l}\in\mathbb{N}^6$ et $1>r>s>\alpha r$ par
$$
M_{\vec{l}}(r,s):=l_1\widehat{1r1}+l_2\widehat{1rr}+l_3\widehat{1rs}+l_4\widehat{rrr}+l_5\widehat{rrs}+l_6\widehat{srs}.
$$
Les angles impliqués sont tous strictement décroissants en $r$, sauf $\widehat{rrr}$ (qui n'apparaît seul que dans la couronne rrrrrr qu'on ignore ici) et tous croissants en $s$.
La fonction $M_{\vec{l}}$ est donc strictement majorée par sa plus grande limite sur la diagonale $r=s$ et strictement minorée par sa plus petite limite sur la droite $\tfrac{s}{r}=\alpha$.
D'un côté on a
$$
\lim_{s\to r}M_{\vec{l}}(r,s)=l_1\widehat{1r1}+(l_2+l_3)\widehat{1rr}+(l_4+l_5+l_6)\frac{\pi}{3},
$$
d'où
$$
\sup_r\lim_{s\to r}M_{\vec{l}}(r,s)=l_1\pi+(l_2+l_3)\frac{\pi}{2}+(l_4+l_5+l_6)\frac{\pi}{3}.
$$
De l'autre côté, la loi des cosinus permet de montrer
$$
\lim_{\tfrac{s}{r}\to\alpha}\cos(\widehat{rrs})=\frac{1}{1+\alpha}
\quad\textrm{et}\quad
\lim_{\tfrac{s}{r}\to\alpha}\cos(\widehat{srs})=1-\frac{2\alpha^2}{(1+\alpha)^2}.
$$
Comme $\widehat{1rs}\geq \widehat{rrs}$, on en déduit
$$
\lim_{\tfrac{s}{r}\to\alpha}M_{\vec{l}}(r,s)\geq (l_1+l_2+l_4)\frac{\pi}{3}+(l_3+l_5)u_\alpha+l_6v_\alpha,
$$
où
$$
u_\alpha:=\arccos\left(\frac{1}{1+\alpha}\right)
\quad\textrm{et}\quad
v_\alpha:=\arccos\left(1-\frac{2\alpha^2}{(1+\alpha)^2}\right).
$$
Une moyenne couronne correspond à une solution de l'équation $M_{\vec{l}}=2\pi$.
L'existence d'une moyenne couronne impose donc
\begin{equation}
l_1+l_2+l_4+\frac{3}{\pi}(l_3u_\alpha+l_5u_\alpha+l_6v_\alpha)< 6< 3l_1+\frac{3}{2}l_2+\frac{3}{2}l_3+l_4+l_5+l_6.
\end{equation}
Comme une moyenne couronne a au plus $5$ disques moyens ou grands, on a aussi
\begin{equation}
  l_1+l_2+l_4+\frac{1}{2}(l_3+l_5)<6.
\end{equation}
Combinée avec des conditions de cycle similaires au cas des petites couronnes, on en déduit une majoration du nombre de moyennes couronnes possibles selon la minoration $\alpha$ de $\tfrac{s}{r}$ donnée par une petite couronne de l'empilement (Tab.~\ref{tab:nombre_moyennes_couronnes}).\\
\begin{table}[hbtp]
  \centering
  \begin{tabular}{c|c|c|c|c|c|c|c|c|c}
    $0.701$ & $0.637$ & $0.545$ & $0.533$ & $0.414$ & $0.386$ & $0.349$ & $0.280$ & $0.154$ & $0.101$\\
    $84$ & $94$ & $130$ & $143$ & $197$ & $241$ & $272$ & $386$ & $889$ & $1654$
  \end{tabular}
  \caption{Majoration du nombre de moyennes couronnes en fonction de $\alpha$.}
  \label{tab:nombre_moyennes_couronnes}
\end{table}

Pour associer ensuite un polynôme en $r$ et $s$ à chaque moyenne couronne, on procède comme pour les petites couronnes.
On développe le cosinus de chaque membre de l'équation $M_{\vec{l}}=2\pi$.
La loi des cosinus permet d'exprimer les cosinus des angles $\widehat{1r1}$, $\widehat{1rr}$, $\widehat{1rs}$, $\widehat{rrr}$, $\widehat{rrs}$ et $\widehat{srs}$ en fonction de $r$ et $s$ (dans cet ordre) :
$$
1-\frac{2}{(1+r)^2},\quad
\frac{r}{1+r},\quad
1-\frac{2s}{(r+s)(1+r)},\quad
\frac{1}{2},\quad
\frac{r}{r+s},\quad
1-\frac{2s^2}{(r+s)^2}.
$$
Le carré des sinus s'en déduit (dans le même ordre) :
$$
\frac{4r(r + 2)}{(r + 1)^4},\quad
\frac{2r + 1}{(r + 1)^2},\quad
\frac{4rs(r + s + 1)}{(r + 1)^2(r + s)^2},\quad
\frac{3}{4},\quad
\frac{s(2r + s)}{(r + s)^2},\quad
\frac{4r(r + 2s)s^2}{(r + s)^4}.
$$
On exprime les sinus eux-mêmes avec des variables auxiliaires :
$$
\frac{2X_7}{(r + 1)^2},\quad
\frac{X_8}{r + 1},\quad
\frac{2X_2}{(r + 1)(r + s)},\quad
\frac{X_6}{2},\quad
\frac{X_4}{r + s},\quad
\frac{2sX_5}{(r + s)^2},
$$
où $X_1,\ldots,X_6$ ont déjà été définies et $X_7$ et $X_8$ ont respectivement pour carré
$$
r(r+2),\quad
2r+1.
$$
L'élimination des $X_i$ donne un polynôme entier bivarié associé à la moyenne couronne considérée.
Comme il y a en généralement plus de disques dans une moyenne couronne que dans une petite (jusqu'à $33$ petits disques), ces polynômes sont plus compliqués que ceux associés aux petites couronnes\footnote{Le calcul de tous les $1654$ polynômes prend 2h 21min sur notre ordinateur et crée un fichier de $36$Mo. Il montre que le degré moyen est $57$, avec un maximum à $416$ pour $\textrm{11rrssssssssssss}=\textrm{11rrs}^{12}$.}.

\section{Rayons et couronnes}
\label{sec:calcul}

Pour déterminer les rayons possibles, il suffit en théorie de considérer chaque couple formée d'une petite couronne et d'une moyenne couronne, de calculer les deux polynômes associés et de trouver les racines de ce système d'équations polynomiales vérifiant $0<s<r<1$.
La résolution d'un tel système est cependant en pratique souvent délicate.\\

Par exemple, la petite couronne 111rr et la moyenne couronne 111rrs donnent deux polynômes de degré respectifs $12$ et $38$ (avec des coefficients valant jusqu'à $10^{14}$ pour le second), et leur résolution exacte avec le logiciel SageMath \cite{sage} nécessite une heure et $21$ minutes sur notre ordinateur de bureau.\\

La résolution d'un système polynomial par un logiciel de calcul formel utilise généralement les {\em bases de Gröbner} pour calculer la variété de l'idéal engendré par les polynômes (quand elle est de dimension $0$).
Mais le calcul de ces bases peut être très coûteux, même pour seulement deux équations quand celles-ci sont de degré élevé.
Les $X_i$ ont d'ailleurs été éliminés des équations (Parties \ref{sec:equation} et \ref{sec:moyennes}) sans recourir aux bases de Gröbner mais par une méthode plus ``manuelle'' (multiplier $AX_i+B$ par $AX_i-B$) qui utilise le fait que $X_i^2$ s'exprime facilement en fonction de $r$ et $s$ (ce que le logiciel de calcul formel, destiné à traiter le cas général, n'utilise sans doute que partiellement).
On s'inspire ici de \cite{sage2} (page 201).\\

Rappelons que le résultant de deux polynômes univariés est un scalaire qui s'annule si et seulement si les deux polynômes ont une racine commune.
Si $P$ et $Q$ sont deux polynômes de $\mathbb{Z}[r,s]$, on peut les voir comme des polynômes en $r$ à coefficients dans $\mathbb{Z}[s]$ : leur résultant est alors un polynôme $\textrm{Res}_r(s)$ qui s'annule en $s_0$ si et seulement si $P(r,s_0)$ et $Q(r,s_0)$ ont une racine $r$ commune.
Symétriquement, échanger $r$ et $s$ donne un polynôme $\textrm{Res}_s(r)$ qui s'annule en $r_0$ si et seulement si $P(r_0,s)$ et $Q(r_0,s)$ ont une racine $s$ commune.
Les couples annulant $P$ et $Q$ sont alors dans le produit cartésien des racines de ces résultants.\\

Le calcul des résultants est rapide : ce sont les déterminants des matrices de Sylvester des polynômes.
Le calcul exact de leurs racines aussi\footnote{Pour un logiciel de calcul formel, {\em calculer} une racine d'un polynôme revient juste à déterminer un intervalle qui ne contient que cette racine, à partir de quoi il peut fournir à l'utilisateur une approximation à la précision voulue de cette racine (via, par exemple, la méthode de Newton).}.
Le produit cartésien nous donne alors de nombreux couples $(r,s)$ {\em candidats} parmi lesquels il faut trouver les solutions des équations d'angles associées aux couronnes.\\

On filtre d'abord ces candidats en trois passes en utilisant l'{\em arithmétique d'intervalles}.
Chaque passe ne garde que les couples $(r,s)$ tels que :
\begin{enumerate}
\item
  $0<\overline{s}$, $\underline{s}<\overline{r}$ et $\underline{r}<1$, où $[\underline{x},\overline{x}]$ est l'intervalle représentant $x$ ;
\item
  $0$ est dans les intervalles représentant $2\pi-P_{\vec{k}}$ et $2\pi-M_{\vec{l}}$, où $P_{\vec{k}}$ et $M_{\vec{l}}$ correspondent à la petite et à la moyenne couronnes ayant donné $(r,s)$ ;
\item
  il existe $\vec{n}\in\mathbb{N}^6$ tel que $0$ soit dans l'intervalle représentant $2\pi-G_{\vec{n}}$, où
  $$
  G_{\vec{n}}(r,s):=n_1\widehat{111}+n_2\widehat{11r}+n_3\widehat{11s}+n_4\widehat{r1r}+n_5\widehat{r1s}+n_6\widehat{s1s}.
  $$
\end{enumerate}
En d'autres termes, la première passe élimine {\em des} couples hors du domaine, la seconde {\em des} couples qui n'admettent pas de petite ou moyenne couronne, la dernière {\em des} couples qui n'admettent pas de grande couronne\footnote{Les valeurs de $r$ et $s$ permettent de borner la norme des vecteurs $\vec{n}$ à rechercher}.
Plus la précision sur les intervalles est grande et plus le filtrage est efficace, mais plus il est lent.
Une quatrième et dernière passe vérifie alors que les candidats restants sont {\em réellement} des racines des équations de couronnes.
On procède pour chaque couronne comme pour calculer son polynôme associé (Partie~\ref{sec:equation}), avec deux ajouts :
\begin{enumerate}
\item
  Lors de chaque multiplication par $AX_i-B$, on vérifie que ce terme n'est pas nul.
  On essaie d'abord en arithmétique d'intervalle (plus rapide), et seulement à défaut\footnote{Cas qui ne s'est présenté que pour le couple petite/moyenne couronnes 1r1r/11r1s.} avec les valeurs exactes de $r$ et $s$.
\item
  Lorsque les $X_i$ ont tous été éliminés, on vérifie l'équation obtenue avec les valeurs exactes de $r$ et $s$.\\
\end{enumerate}

On peut alors, pour chaque couple $(r,s)$ retenu, procéder comme dans les troisième et dernière passes pour trouver {\em toutes} les couronnes (petites, moyennes et grandes) compatibles.
Ceci afin de pouvoir ensuite déterminer (de façon combinatoire) les empilements possibles.\\

Reprenons l'exemple de la petite couronne 111rr et de la moyenne 111rrs.
Les résultants sont deux polynômes de degré $336$, chacun ayant $85$ racines réelles.
Il y a donc $7225$ couples candidats.
La première passe en garde $45$.
La deuxième n'en laisse qu'un seul.
La troisième l'élimine.
Aucune vérification finale n'est donc nécessaire ! 
Le tout en environ $15$ secondes sur un ordinateur de bureau avec $53$ bits de précision pour les intervalles.\\

Certains cas restent néanmoins impraticable.
Par exemple, la petite couronne 11rrs et la moyenne couronne $\textrm{11rrssssssssssss}=\textrm{11rrs}^{12}$ donnent un polynôme de degré $28$ et un de degré $416$ avec des coefficients ayant jusqu'à $155$ chiffres.
Rien qu'obtenir ce polynôme de degré $416$ demande $6$ minutes de calcul, et le calcul du résultant excède la capacité mémoire de notre ordinateur.\\

Surtout, si l'on croise le nombre de petites couronnes selon la minoration du ratio $\tfrac{s}{r}$ associée (Tab.~\ref{tab:petites_couronnes}) et le nombre de moyennes couronnes pour cette minoration (Tab.~\ref{tab:nombre_moyennes_couronnes}), on obtient un total de $16805$ couples de couronnes !
L'approche purement calculatoire précédente semble donc vouée à l'échec.
Aussi l'avons nous complétée par une approche combinatoire.
Le principe est d'éliminer autant de cas que possible en montrant que la combinatoire des petites et moyennes couronnes exclue la possibilité d'un empilement, indépendamment des valeurs de $r$ et $s$ (et donc des grandes couronnes).
Nous avons pour cela partitionné les empilements en trois types, successivement étudiés dans les trois parties suivantes.

\section{Deux phases}

Un empilement est dit admettre {\em deux phases} s'il n'y a pas de contact entre petits et moyens disques.
Les grands disques font ``tampon'' entre les petits et les moyens.
Une petite couronne ne contenant pas de moyen disque, elle donne une des $10$ équations en $s$ données Tab.~\ref{tab:petites_couronnes_deux_disques}.
De même pour une moyenne couronne, qui ne contient pas de petit disque, en remplaçant $s$ par $r$ dans les équations Tab.~\ref{tab:petites_couronnes_deux_disques}.
Comme $s<r$, il y a $C_{10}^2=45$ couples $(r,s)$ candidats.\\

Parmi ces $45$ candidats, $18$ admettent un empilement périodique (voir Annexe \ref{sec:empilements}).
Ce sont exactement ceux qui ont une grande couronne qui contient à la fois un petit et un moyen disque et permet ainsi de lier les deux phases.
Montrons que l'absence d'une telle couronne dans les $27$ cas restants interdit l'existence d'un empilement (qui a, par définition, des disques des trois tailles).

\begin{lemma}
  \label{lem:biphase1}
  Si aucune grande couronne d'un candidat à deux phases
  \begin{enumerate}
  \item ne contient trois grands disques consécutifs et un moyen disque ;
  \item ne contient à la fois un petit et un moyen disque.
  \end{enumerate}
  Alors aucun empilement n'est possible.
\end{lemma}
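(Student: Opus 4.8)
Le plan est de raisonner par l'absurde : je supposerais qu'un empilement existe pour un tel candidat et j'exploiterais la connexité de son graphe de contact (celui-ci, étant trianglé et saturé, est connexe) pour produire deux grands disques tangents contredisant l'hypothèse~1. La condition~2 interdisant à une grande couronne de contenir à la fois un petit et un moyen disque, je partitionnerais l'ensemble des disques en deux parties, chaque disque appartenant à exactement l'une d'elles : $A$, formée des petits disques et des grands disques dont la couronne ne contient aucun moyen disque, et $B$, formée des moyens disques et des grands disques dont la couronne contient un moyen disque. Par la condition~2, un grand disque de $B$ ne touche alors aucun petit disque. Comme l'empilement contient les trois tailles, $A$ et $B$ sont tous deux non vides.

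Le graphe de contact étant connexe, il existerait une arête reliant un disque de $A$ à un disque de $B$, et le point central de la preuve est de montrer qu'une telle arête ne peut relier que deux grands disques. En effet, un petit et un moyen disque ne sont jamais tangents (hypothèse de deux phases) ; un petit disque de $A$ ne peut toucher un grand disque de $B$, sans quoi ce dernier aurait un petit disque dans sa couronne, ce qu'exclut la condition~2 pour un disque de $B$ ; et un grand disque de $A$ ne peut toucher un moyen disque, sans quoi il appartiendrait à $B$. L'arête relie donc un grand disque $L\in A$ et un grand disque $L'\in B$, la couronne de $L$ ne contenant aucun moyen disque et celle de $L'$ en contenant un. C'est cette étape qui me semble être l'obstacle principal, car elle seule mobilise conjointement l'hypothèse de deux phases et la condition~2 ; elle absorbe au passage le cas des grands disques ``purs'' (dont la couronne est formée de grands disques seulement), placés dans $A$ et traités par la seule absence de moyen disque dans leur couronne.

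Il ne resterait plus qu'une contradiction locale. L'empilement triangulant le plan, l'arête $LL'$ est partagée par exactement deux triangles ; en notant $X$ et $Y$ les sommets opposés, $X$ et $Y$ sont tangents à la fois à $L$ et à $L'$, avec $X\neq Y$. La couronne de $L'$ ne contenant aucun petit disque, ni $X$ ni $Y$ n'est petit ; la couronne de $L$ ne contenant aucun moyen disque, ni $X$ ni $Y$ n'est moyen. Donc $X$ et $Y$ sont de grands disques. Or, dans la couronne de $L'$, les deux voisins de $L$ sont précisément $X$ et $Y$ : cette couronne contiendrait ainsi trois grands disques consécutifs $X,L,Y$, en plus d'un moyen disque, ce qui contredit l'hypothèse~1 et achève la preuve.
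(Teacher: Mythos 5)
Votre preuve est correcte et suit essentiellement la même démarche que l'article : votre ensemble $B$ coïncide exactement avec l'ensemble des disques moyens et des disques $1_r$ (grands disques touchant un moyen) de l'article, et montrer qu'aucune arête ne relie $A$ à $B$ est une reformulation de la propagation de voisinage qu'effectue l'article à partir d'un disque moyen. Votre contradiction locale via les deux triangles portés par l'arête $LL'$ n'est que la contraposée de l'argument de l'article, avec les mêmes objets et les mêmes rôles pour les deux hypothèses : l'hypothèse 2 exclut les petits disques de la couronne de $L'$, et l'hypothèse 1, appliquée à cette même couronne, interdit que les deux voisins de $L$ y soient grands.
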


\begin{proof}
  Supposons qu'un tel empilement existe et obtenons une contradiction.
  Appelons disque $1_r$ un grand disque en contact avec au moins un moyen disque.
  Considérons un grand disque $D$ dans la couronne d'un disque $1_r$.
  Comme cette couronne contient un moyen disque (par définition d'un disque $1_r$), la première hypothèse assure que $D$ a un voisin qui n'est pas un grand disque.
  La seconde hypothèse assure que ce voisin est forcément moyen, et donc que $D$ est lui-même un disque $1_r$.
  Ainsi, les seuls voisins d'un disque $1_r$ sont des disques moyens ou $1_r$.
  De même, les seuls voisins d'un moyen disque sont moyens ou $1_r$ (il ne peuvent pas être petits par définition d'un empilement à deux phases, et s'ils sont grands ils sont $1_r$ par définition d'un disque $1_r$).
  Considérons maintenant un moyen disque de l'empilement.
  Ses voisins sont des disques moyens ou $1_r$.
  Les voisins de ses voisins aussi d'après ce qu'on vient de voir, et ainsi de suite.
  L'empilement ne contient pas de petit disque : contradiction.
\end{proof}

\begin{figure}[hbtp]
  \includegraphics[width=\textwidth]{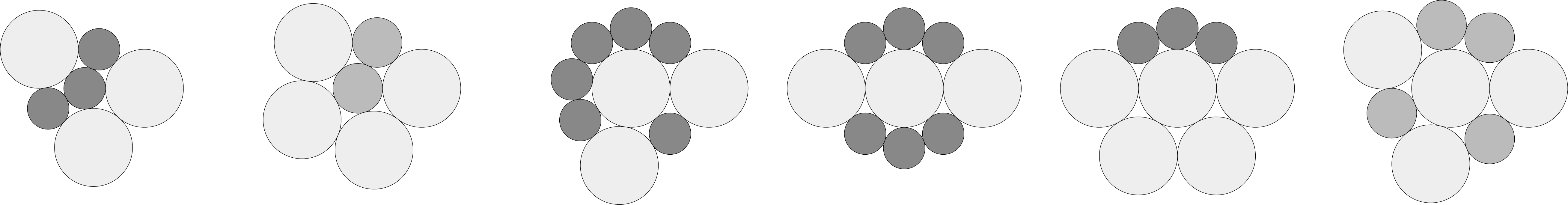}
  \caption{Un des $24$ candidats à deux phases, représenté par ses couronnes, qui n'admet pas d'empilement d'après le lemme~\ref{lem:biphase1}.}
  \label{fig:biphase_impossible_1}
\end{figure}

Le lemme~\ref{lem:biphase1} reste valable en échangeant partout ``moyen'' et ``petit'' (dans l'énoncé et dans la preuve).
Il élimine $24$ des $27$ candidats restants (Fig.~\ref{fig:biphase_impossible_1}).
Il en reste trois (Fig.~\ref{fig:biphase_impossible_2}), éliminés en allant un cran plus loin que dans le lemme~\ref{lem:biphase1} :\\

\begin{figure}[hbtp]
  \includegraphics[width=\textwidth]{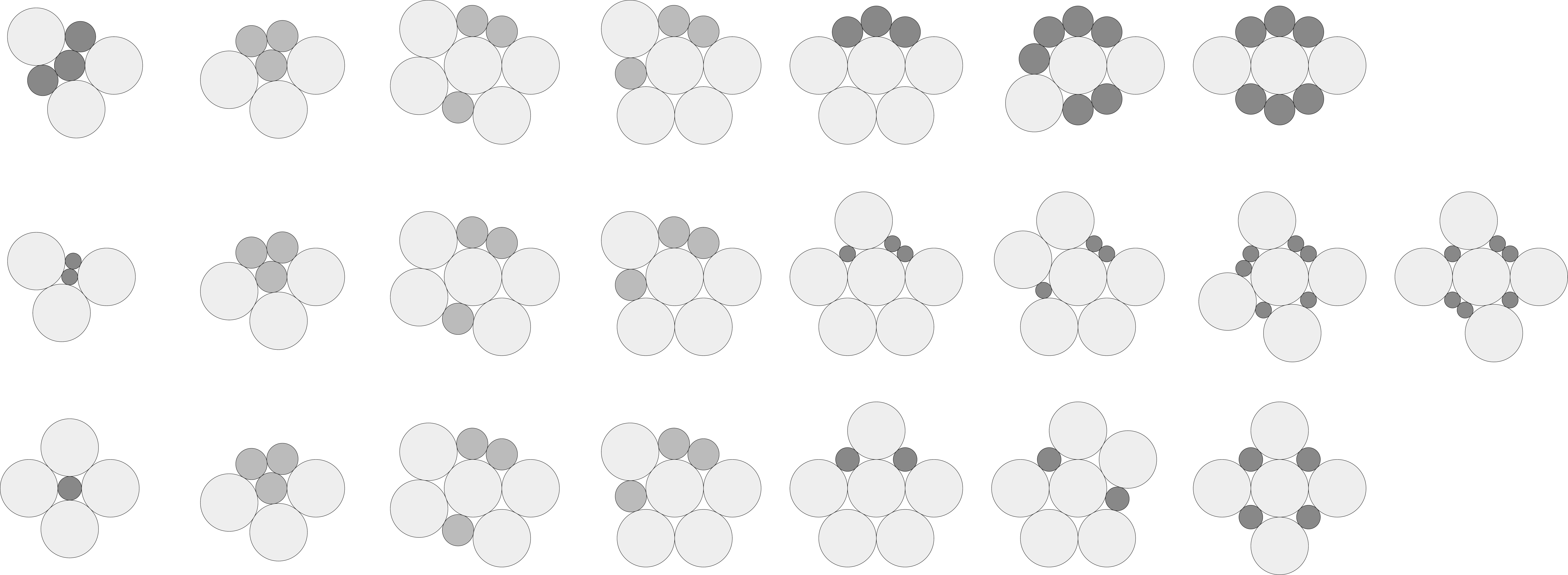}
  \caption{Trois candidats à deux phases, représentés ligne par ligne par leurs couronnes, qui n'admettent pas d'empilement d'après le lemme~\ref{lem:biphase2}.}
  \label{fig:biphase_impossible_2}
\end{figure}

\begin{lemma}
  \label{lem:biphase2}
  Aucun des candidats représentés Figure~\ref{fig:biphase_impossible_2} ne permet d'empilement.
\end{lemma}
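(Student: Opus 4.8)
Le plan est de reprendre et de raffiner l'argument de propagation du lemme~\ref{lem:biphase1}, en traitant séparément chacun des trois candidats de la figure~\ref{fig:biphase_impossible_2}. Ces trois candidats font partie des $27$ cas restants : aucune de leurs grandes couronnes ne contient à la fois un petit et un moyen disque, de sorte que la seconde hypothèse du lemme~\ref{lem:biphase1} (et de sa version symétrique) est satisfaite. Comme ni le lemme~\ref{lem:biphase1} ni sa version symétrique ne les éliminent, c'est la première hypothèse qui est, dans les deux cas, mise en défaut : chacun possède une grande couronne contenant trois grands disques consécutifs et un moyen disque, ainsi qu'une grande couronne contenant trois grands disques consécutifs et un petit disque. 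C'est exactement cette configuration qui bloque l'argument du lemme~\ref{lem:biphase1}, car le grand disque central d'un tel triplet peut n'avoir, dans la couronne d'un disque $1_r$, que des voisins grands, et échappe donc à la conclusion ``$D$ est un disque $1_r$''.

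Pour chacun des trois candidats, je calculerais d'abord la liste finie de toutes les grandes couronnes compatibles avec les valeurs de $r$ et $s$, comme lors de la troisième passe de filtrage de la partie~\ref{sec:calcul} : on dispose ainsi de l'ensemble exact des configurations réalisables autour d'un grand disque. Je reprendrais ensuite la notion de disque $1_r$ (grand disque tangent à au moins un moyen disque). L'argument du lemme~\ref{lem:biphase1} montre toujours que, dans la couronne d'un disque $1_r$, tout grand disque possédant un voisin non grand est lui-même un disque $1_r$ ; seul le grand disque central d'un triplet de grands disques consécutifs échappe à cette conclusion. Le ``cran supplémentaire'' consiste à examiner la couronne propre d'un tel disque central $D$ et à confronter la liste des grandes couronnes possibles aux contraintes locales imposées par son voisinage (trois grands voisins consécutifs, absence de contact entre petit et moyen disque). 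On montrerait ainsi que $D$ est malgré tout forcément tangent à un moyen disque, donc lui aussi un disque $1_r$, ce qui rétablit la propagation.

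On conclurait alors comme au lemme~\ref{lem:biphase1} : les seuls voisins d'un disque $1_r$ ou d'un moyen disque étant des disques moyens ou $1_r$, la composante connexe d'un moyen disque pour la relation de tangence ne contient aucun petit disque. L'empilement devant contenir des disques des trois tailles et son graphe de tangence étant connexe, on aboutit à une contradiction.

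La principale difficulté résidera dans l'analyse des grands disques ``intérieurs'' flanqués de deux grands voisins : il faudra garantir, candidat par candidat, que leur couronne propre impose un contact avec un moyen disque et ne peut servir de pont vers la phase des petits disques. C'est là que la liste explicite des grandes couronnes est indispensable : faute de pouvoir conclure au voisin immédiat, il se peut qu'il faille examiner un voisin plus loin, ce qui est précisément le sens du ``cran supplémentaire'' par rapport au lemme~\ref{lem:biphase1}.
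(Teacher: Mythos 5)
Votre plan suit la même stratégie générale que l'article (raffiner la propagation du lemme~\ref{lem:biphase1} en analysant le disque central d'un triplet de grands disques consécutifs dans une couronne contenant un moyen disque), mais il comporte une lacune réelle à l'étape clé. Vous affirmez que l'examen de la couronne propre d'un tel disque central $D$, confrontée à la liste des grandes couronnes compatibles, montrerait que ``$D$ est malgré tout forcément tangent à un moyen disque, donc lui aussi un disque $1_r$''. Cette affirmation ne peut pas être établie : la couronne hexagonale $111111$ (six grands disques autour d'un grand) est réalisable pour toutes les valeurs de $r$ et $s$ --- elle n'impose aucune contrainte algébrique --- et figure donc toujours parmi les grandes couronnes compatibles de n'importe quel candidat. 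Aucune analyse locale ne peut donc exclure que $D$ n'ait que des grands voisins ; c'est précisément la configuration que la preuve de l'article construit explicitement (Fig.~\ref{fig:biphase_impossible_3}). Par conséquent, l'invariant de propagation sur lequel repose votre conclusion (``les seuls voisins d'un disque $1_r$ ou d'un moyen disque sont des disques moyens ou $1_r$'') est faux tel quel.

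La preuve de l'article contourne l'obstacle autrement : si le disque central \textbf{a} n'a aucun moyen voisin, elle ne cherche pas à le rendre $1_r$, mais montre de proche en proche --- en utilisant le fait que les couronnes des voisins ont au plus trois grands disques consécutifs --- que \textbf{a} est alors encerclé par six grands disques qui, eux, sont tous $1_r$ (chacun acquiert un moyen voisin au cours de la construction). L'ensemble formé des moyens disques, des disques $1_r$ et des grands disques encerclés par des disques $1_r$ est alors clos par voisinage et ne touche aucun petit disque : un grand disque touchant un petit ne peut être ni $1_r$ (sa couronne contiendrait à la fois un petit et un moyen, ce qui est exclu pour ces candidats) ni encerclé (tous ses voisins seraient grands), d'où la contradiction. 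Votre plan pourrait être réparé en remplaçant votre étape clé par cette alternative plus faible (``$D$ est $1_r$, ou bien encerclé par six disques $1_r$'') et en élargissant l'invariant de propagation en conséquence ; tel qu'il est rédigé, il échoue au point exact où le ``cran supplémentaire'' est nécessaire.
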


\begin{proof}
  Supposons qu'un empilement existe.
  Considérons la seule grande couronne qui contienne un moyen disque et trois grands consécutifs (en quatrième position Fig.~\ref{fig:biphase_impossible_2}).
  Si elle n'apparaît pas dans l'empilement, alors le lemme~\ref{lem:biphase1} s'applique et donne une contradiction.
  Sinon, soit {\bf a} le grand disque de cette couronne qui a deux grands voisins (Fig.~\ref{fig:biphase_impossible_3}, premier dessin).
  Supposons qu'il n'ait aucun moyen voisin.
  Ses deux grands voisins dans la couronne lui imposent deux grands voisins {\bf b} et {\bf c} (Fig.~\ref{fig:biphase_impossible_3}, deuxième dessin).
  Comme ces mêmes voisins ayant au plus trois grands disques consécutifs dans leur couronne, ils imposent à {\bf b} et {\bf c} des moyens voisins {\bf d}  et {\bf e} (Fig.~\ref{fig:biphase_impossible_3}, troisième dessin).
  Par ailleurs, {\bf a} ayant maintenant $5$ grands voisins, il a forcément un sixième et dernier grand voisin {\bf f} (Fig.~\ref{fig:biphase_impossible_3}, troisième dessin).
  Enfin, {\bf b} et {\bf c} ayant eux aussi au plus trois grands disques consécutifs dans leur couronne, ils imposent à {\bf f} des moyens voisins {\bf g} et {\bf h} (Fig.~\ref{fig:biphase_impossible_3}, dernier dessin).
  Ainsi, les seuls grands disques sans moyen voisin de l'empilement sont encerclés par des grands disques avec un moyen voisin.
  L'empilement ne contient donc pas de petit disque : contradiction.
\end{proof}

\begin{figure}[hbtp]
  \centering
  \includegraphics[width=1\textwidth]{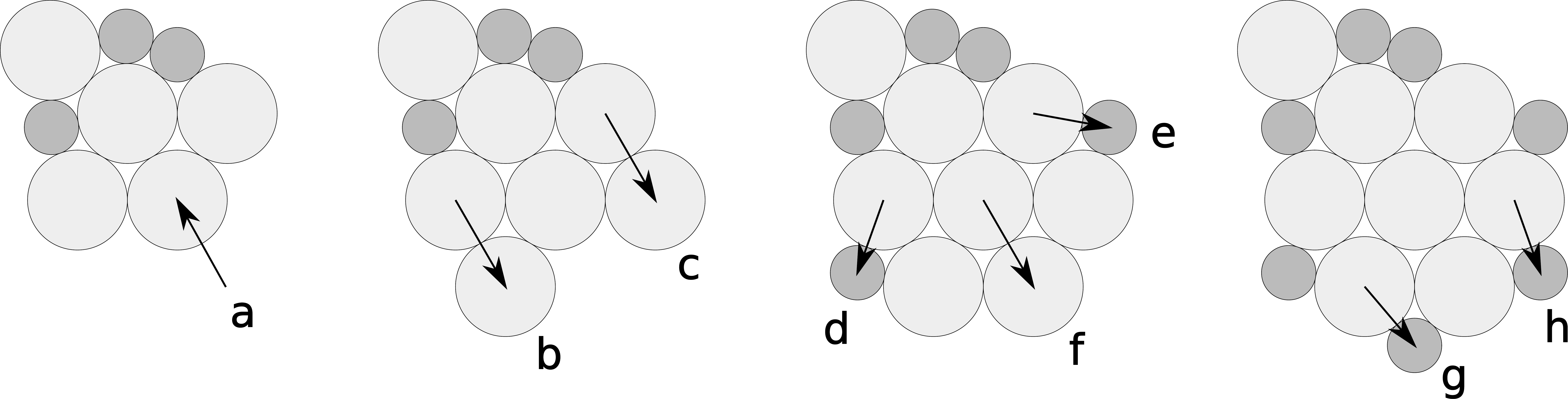}
  \caption{Dans un empilement avec la couronne de gauche, un grand disque touche un moyen ou est encerclé par $6$ grands disques qui touchent un moyen.}
  \label{fig:biphase_impossible_3}
\end{figure}

\section{Deux petites couronnes}

On cherche ici les empilements où peuvent apparaître (au moins) deux petites couronnes différentes (autres que ssssss).
L'intérêt est double.
D'une part, comme il n'y a que $55$ petites couronnes, il y a au plus $C_{55}^2=1485$ couples à considérer (au lieu de $16688$).
D'autre part, les petites couronnes correspondent à des équations généralement beaucoup plus simples que les moyennes.\\

Si les deux petites couronnes ne contiennent que des petits et grands disques, alors chacune caractérise une valeur différente de $s$.
On élimine donc ces paires, tout comme celles qui ne contiennent que des petits et moyens disques (valeur différente de $\tfrac{s}{r}$).
Il reste $1395$ systèmes de deux équations en $r$ et $s$.\\

On calcule les résultants et leurs racines comme détaillé Partie~\ref{sec:calcul}.
Il y a trois paires pour lesquelles le calcul des racines par SageMath est problématique : l'exécution s'arrête pour rrrs/11rss et rrss/1111r en renvoyant respectivement \verb+AssertionError+ et \verb+RuntimeError('maximum recursion depth exceeded')+, quant à 1rrss/1111r, nous avons interrompu l'exécution au bout de $4$ heures.
Dans ces trois cas, les racines sont (facilement) calculées en arithmétique d'intervalles : on obtient un ensemble $\mathcal{C}_{\textrm{err}}$ de $2276$ couples $(r,s)$ candidats.
Les $1392$ autres cas ne posent pas de problème et donnent un ensemble $\mathcal{C}$ de $185813$ couples $(r,s)$ candidats.\\

La première passe, sur le domaine, réduit $\mathcal{C}_{\textrm{err}}$ à $179$ couples et $\mathcal{C}$ à $1394$ couples.
La deuxième passe est ici effectuée sur les équations d'angles des deux petites couronnes, et non sur celles d'une petite et d'une moyenne couronne comme dans le cas général exposé Partie~\ref{sec:calcul}.
Elle réduit $\mathcal{C}_{\textrm{err}}$ à $0$ couple, ce qui règle le problème des trois erreurs ci-dessus, et $\mathcal{C}$ à $312$ couples.
La troisième passe réduit  $\mathcal{C}$ à $37$ couples, que la dernière passe valide tous (avec toutes leurs couronnes).
Le tout (résultants, racines et les quatre passes) en moins de $2$ minutes sur notre ordinateur de bureau.\\

Ces $37$ couples peuvent donc former des petites, moyennes et grandes couronnes, mais il reste à vérifier qu'ils admettent un empilement.
Ils s'avèrent tous partager une propriété bien particulière : il n'y a jamais de moyen disque dans leurs grandes couronnes.
Ceci permet de montrer :

\begin{lemma}
  \label{lem:2pc}
  Si aucune petite couronne d'un candidat à deux petites couronnes
  \begin{enumerate}
  \item ne contient trois petits disques consécutifs et un grand disque ;
  \item ne contient à la fois un moyen et un grand disque.
  \end{enumerate}
  Alors aucun empilement n'est possible.
\end{lemma}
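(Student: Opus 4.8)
The plan is to re-run, under a cyclic relabelling of the three disk sizes, the argument of Lemme~\ref{lem:biphase1}. The relabelling sends small to medium, medium to large and large to small; under it, the roles played there by large crowns, by medium disks and by the \emph{two phases} hypothesis (no small--medium contact) are played here by small crowns, by large disks and by the property recalled just before the statement, namely that no large disk of such a candidate touches a medium disk (equivalently, no medium disk sits in a large crown). Matching the two enumerated conditions of the present statement against those of Lemme~\ref{lem:biphase1} confirms that this is the correct correspondence.

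First I would argue by contradiction: suppose an empilement exists, so that disks of all three sizes occur. In the spirit of Lemme~\ref{lem:biphase1}, call $s_1$ a small disk tangent to at least one large disk, the subscript recording the size it touches. The crown of an $s_1$ disk is a small crown containing a large disk, so by the second hypothesis it has no medium disk, and by the first it has no three consecutive small disks. Hence any small disk $D$ lying in this crown has a crown-neighbour that is not small, which must then be large, and $D$ is therefore itself an $s_1$ disk. Thus every neighbour of an $s_1$ disk is either large or again an $s_1$ disk.

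Next I would treat large disks: a large disk touches no medium disk, by the property above, and every small disk tangent to it is an $s_1$ disk by definition, so its neighbours are again large disks or $s_1$ disks. The class of large and $s_1$ disks is therefore closed under passing to neighbours. Since the contact graph of a saturated compact packing is connected, starting from any large disk (one exists) and propagating reaches only large and $s_1$ disks; the empilement then contains no medium disk, contradicting that it has disks of all three sizes.

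The combinatorial propagation is routine once the relabelling is fixed; the delicate point is to check that the two local conditions interact correctly inside a single crown — that ``no three consecutive small disks'' and ``no medium disk'' together force every small disk of an $s_1$ crown to have a large tangent neighbour — and to confirm that the externally supplied property ``no large disk touches a medium disk'' is exactly what substitutes for the two-phases assumption of Lemme~\ref{lem:biphase1}. Without that last property the step on large disks fails, since conditions~1 and~2 alone do not preclude a large disk from touching a medium disk.
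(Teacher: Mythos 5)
Your proof is correct and is essentially the paper's own argument: the paper likewise defines $s_1$ disks (small disks touching a large one), uses the two hypotheses to show every neighbour of an $s_1$ disk is large or $s_1$, invokes the externally verified fact that the $37$ candidates' large crowns contain no medium disk to close the class of large and $s_1$ disks under adjacency, and propagates from a large disk to contradict the presence of medium disks. Your framing as a cyclic relabelling of Lemme~\ref{lem:biphase1} and your explicit mention of connectivity of the contact graph are accurate glosses on what the paper does implicitly, not a different route.
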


\begin{proof}
  Supposons qu'un tel empilement existe et obtenons une contradiction.
  Appelons disque $s_1$ un petit disque en contact avec au moins un grand disque.
  Considérons un petit disque $d$ dans la couronne d'un disque $s_1$.
  Comme cette couronne contient un grand disque (par définition d'un disque $s_1$), la première hypothèse assure que $d$ a un voisin qui n'est pas un petit disque.
  La seconde hypothèse assure que ce voisin est forcément grand, et donc que $d$ est lui-même un disque $s_1$.
  Ainsi, les seuls voisins d'un disque $s_1$ sont des disques grands ou $s_1$.
  De même, les seuls voisins d'un grand disque sont grands ou $s_1$ (il ne peuvent pas être moyen car les grandes couronnes des $37$ candidats ne contiennent jamais de moyen disque, et s'ils sont petits ils sont $s_1$ par définition d'un disque $s_1$).
  Considérons maintenant un grand disque de l'empilement.
  Ses voisins sont des disques grands ou $s_1$.
  Les voisins de ses voisins de même, et ainsi de suite.
  L'empilement ne contient donc pas de moyen disque : contradiction.
\end{proof}

\begin{figure}[hbtp]
  \centering
  \includegraphics[width=\textwidth]{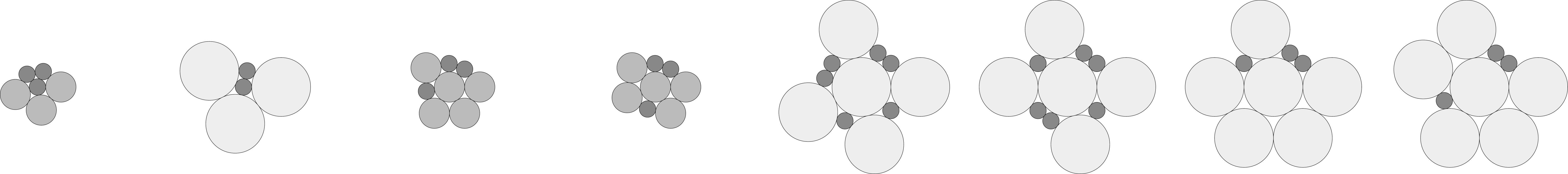}
  \caption{Un candidat à deux petites couronnes qui n'a pas d'empilement.}
  \label{fig:2pc_impossible}
\end{figure}

\begin{figure}[hbtp]
  \centering
  \includegraphics[width=0.83\textwidth]{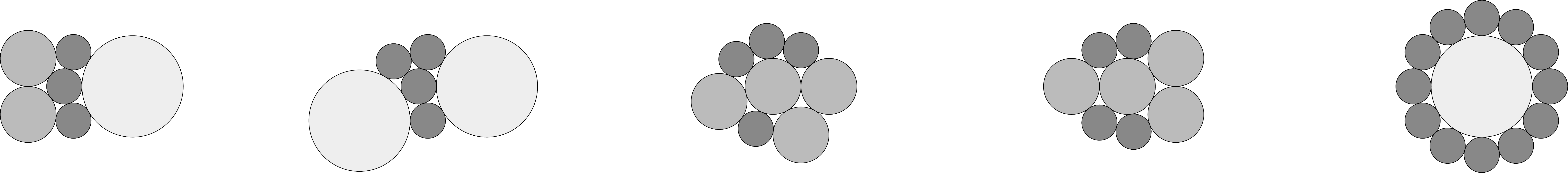}
  \caption{Le seul candidat à deux petites couronnes admettant un empilement.}
  \label{fig:2pc_possible}
\end{figure}

Le lemme~\ref{lem:2pc} reste valable en échangeant partout ``moyen'' et ``grand'' (dans l'énoncé et dans la preuve).
Il élimine tous les candidats (Fig.~\ref{fig:2pc_impossible}), sauf le cas des petites couronnes 1srrs et 1s1ss (Fig.~\ref{fig:2pc_possible}).
Ce cas admet en effet un empilement (périodique, voir Appendix~\ref{sec:empilements}), bien qu'une seule petite couronne soit utilisable :

\begin{proposition}
  Pour $r$ et $s$ compatibles avec les deux petites couronnes 1srrs et 1s1ss, il n'existe pas d'empilement contenant la petite couronne 1s1ss.
\end{proposition}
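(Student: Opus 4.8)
Le plan est de montrer que la couronne 1s1ss ne peut pas exister en prouvant qu'elle se ``propage'' : sa présence force un voisinage uniquement composé de petits et de grands disques, qui s'étend à tout l'empilement et exclut donc les disques moyens, contredisant la définition même d'un empilement. J'utiliserais librement que les seules petites couronnes compatibles avec ces valeurs de $r$ et $s$ sont ssssss, 1srrs et 1s1ss (calcul de la partie~\ref{sec:calcul}), et le fait que, pour un empilement compact saturé, le graphe de contacts est connexe.

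Je commencerais par exploiter les valeurs exactes des angles. L'équation d'angle de 1s1ss s'écrit $4\widehat{1ss}+\widehat{sss}=2\pi$ avec $\widehat{sss}=\tfrac{\pi}{3}$, d'où $\widehat{1ss}=\tfrac{5\pi}{12}$. Dans le triangle isocèle formé par un grand disque et deux petits disques tangents entre eux, les angles à la base valent chacun $\widehat{1ss}$, si bien que l'angle au sommet, pris au centre du grand disque, vaut $\widehat{s1s}=\pi-2\widehat{1ss}=\tfrac{\pi}{6}$. C'est ce fait — que $\widehat{s1s}$ divise exactement $2\pi$ en douze parts — qui sera le ressort de toute la preuve.

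J'examinerais ensuite, pour un petit disque $c$ de couronne 1s1ss, ses trois petits voisins. Pour chacun, les deux triangles accolés le long de l'arête le reliant à $c$ déterminent l'ordre cyclique de ses voisins : le petit voisin situé entre les deux grands disques présente le motif $1s1$, et les deux autres présentent un motif $ss$ (deux petits disques adjacents) accompagné d'un grand. Or 1srrs ne contient ni $1s1$ ni $ss$ (ses deux petits disques ne sont jamais adjacents) et ssssss ne contient pas de grand disque ; chacun de ces petits voisins a donc nécessairement lui aussi la couronne 1s1ss. Par récurrence, tout petit disque du composant connexe de $c$ a la couronne 1s1ss, et ne touche donc aucun disque moyen.

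Le point le plus délicat sera la fermeture de l'argument sur les grands disques. Pour un grand disque $A$ touchant un petit disque de couronne 1s1ss, je montrerais que \emph{tous} ses voisins sont petits : dans la couronne 1s1ss d'un petit voisin $w$ de $A$, le grand disque $A$ (l'une des deux lettres $1$) est flanqué de deux petits disques, qui sont précisément les deux disques entourant $w$ autour de $A$ ; ceux-ci sont donc petits, et chacun touche à son tour deux grands disques séparés par un petit, présentant le motif $1s1$, donc de couronne 1s1ss. En tournant autour de $A$ depuis $c$, chaque petit voisin force ainsi le suivant à être petit de couronne 1s1ss ; comme chaque couple consécutif sous-tend $\widehat{s1s}=\tfrac{\pi}{6}$ en $A$, la couronne de $A$ se referme après exactement douze petits disques, soit $\textrm{s}^{12}$, sans aucun disque moyen. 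Le composant connexe de $c$ ne contient alors que des petits disques (couronne 1s1ss) et des grands disques (couronne $\textrm{s}^{12}$) ; par connexité du graphe de contacts, l'empilement entier serait dépourvu de disque moyen, ce qui est absurde. La difficulté tient entièrement à cette étape de fermeture autour des grands disques, qui repose sur l'égalité exacte $\widehat{s1s}=\tfrac{\pi}{6}$ ; tout le reste n'est que combinatoire de couronnes appuyée sur l'unicité de la petite couronne contenant un motif $1s1$ ou $ss$.
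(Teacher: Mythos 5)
Votre démonstration est correcte et suit le même schéma global que celle de l'article : montrer que la présence de la couronne 1s1ss se propage de proche en proche, interdit tout disque moyen dans la composante connexe, et contredit donc la définition d'un empilement. Elle s'en écarte cependant sur deux points substantiels. D'une part, l'article invoque la liste \emph{calculée} de toutes les couronnes compatibles (en particulier le fait que l'unique grande couronne est $\textrm{s}^{12}$ et que les moyennes sont rrsssrs et rrssrss), tandis que vous n'utilisez que la liste des trois petites couronnes et retrouvez la structure autour des grands disques par le calcul exact $\widehat{1ss}=\tfrac{5\pi}{12}$, d'où $\widehat{s1s}=\tfrac{\pi}{6}$ ; votre preuve est donc plus autonome (et le décompte exact de douze disques n'est même pas indispensable : la propagation le long du cycle fini qu'est la couronne de $A$ suffit à montrer qu'elle ne contient que des petits disques). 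D'autre part, et c'est le point le plus intéressant, vos motifs discriminants sont mieux choisis que ceux de l'article. L'induction de l'article repose sur l'invariant ``grand, ou petit avec un facteur 1ss, 1s1 ou s1s'', puis affirme qu'une petite couronne contenant l'un de ces facteurs ``ne peut être que 1s1ss''. Or, lue cycliquement, la couronne 1srrs contient bel et bien le facteur s1s (son unique grand disque y est flanqué des deux petits) : cette étape de l'article est donc mal justifiée telle quelle. Vous, en revanche, n'utilisez que les motifs 1s1 et ``deux s adjacents accompagnés d'un 1'', qui excluent réellement 1srrs (un seul grand disque, jamais deux petits adjacents) et ssssss (aucun grand disque) ; votre propagation répare ainsi, en fait, le point délicat de la preuve originale.

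Un seul détail est à rectifier dans votre fermeture autour d'un grand disque $A$ : vous affirmez que \emph{chacun} des deux petits disques flanquant $A$ dans la couronne 1s1ss de $w$ ``touche deux grands disques séparés par un petit'' (motif 1s1). C'est vrai pour celui des deux qui, dans la couronne de $w$, est situé entre les deux lettres 1, mais l'autre ne présente localement que le motif ss1 (les deux petits adjacents $w$ et son voisin, puis $A$). La conclusion est inchangée, puisque ss1 exclut également 1srrs et ssssss ; c'est exactement l'argument que vous donnez au paragraphe précédent pour les voisins de $c$, et il suffit de l'invoquer ici aussi.
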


\begin{proof}
  Les deux petites couronnes caractérisent $r$ et $s$, donc toutes les couronnes.
  Il n'y a que ces deux petites couronnes, les moyennes couronnes rrsssrs et rrssrss (même vecteur d'angles), et la grande couronne $\textrm{s}^{12}$.
  Supposons qu'un empilement contienne la petite couronne 1s1ss et considérons le centre de cette couronne.
  On montre par induction sur $k$ que les disques à distance $k$ de ce centre sont soit des grands disques, soit des petits disques qui ont un facteur 1ss, 1s1 ou s1s dans leur couronne.
  C'est vrai pour $k=0$.
  Supposons que ce soit vrai pour $k>0$ et considérons un disque à distance $k+1$.
  Il est dans une couronne d'un disque à distance $k$, donc soit une grande couronne, soit une petite couronne qui, par hypothèse d'induction, contient 1ss, 1s1 ou s1s : ça ne peut être que 1s1ss.
  C'est donc un disque petit ou grand, et comme tout petit disque dans la grande couronne $\textrm{s}^{12}$ ou la petite couronne 1s1ss a un facteur 1ss, 1s1 ou s1s dans sa couronne, l'hypothèse d'induction est vérifiée pour $k+1$.
  On en déduit qu'il ne peut y avoir de moyen disque dans cet empilement, ce qui contredit la définition d'un empilement.
\end{proof}

\section{Une petite couronne}

Le dernier type d'empilement sont ceux qui n'ont qu'une phase et qu'une petite couronne (autre que ssssss).
Chacun contient donc (au moins) une moyenne couronne avec un petit disque (sinon il y a deux phases), et tous les petits disques de cette moyenne couronne ont la même couronne.
Aussi supposera-t-on toujours que, dans les couples petite/moyenne couronnes utilisés pour calculer $(r,s)$, la moyenne couronne contient un petit disque.
L'intérêt est que cela impose une contrainte combinatoire simple mais forte sur les couples considérés.\\

Considérons, par exemple, le cas de la petite couronne 11rrs et de la moyenne couronne $\textrm{11rrs}^{12}$ (où le simple calcul du résultant posait problème Partie~\ref{sec:calcul}).
Si les centres s et r de ces deux couronnes se touchent, alors il existe un facteur xry de la petite\footnote{C'est-à-dire un facteur d'une des codages de la couronne.} et un facteur xsy de la moyenne, où x et y sont les deux disques qui touchent à la fois s et r.
Ici, le facteur sss de $\textrm{11rrs}^{12}$ impose srs dans 11rrs : ce couple peut être éliminé sans se lancer dans les calculs décrits Partie~\ref{sec:calcul} !\\

Les couples considérés sont cependant des solutions (potentielles) des é\-qua\-tions $P_{\vec{k}}=2\pi$ et $M_{\vec{l}}=2\pi$, c'est-à-dire des vecteurs d'angles et non leur codage par un mot sur $\{1,r,s\}$.
Les vecteurs d'angles des petites couronnes n'admettent jamais qu'un seul codage, mais ceux des moyennes en ont généralement plusieurs.
Dans l'exemple précédent, $\textrm{11rrs}^{12}$ est l'unique codage de $(1,1,1,1,1,11)$, mais le vecteur $(0, 0, 4, 0, 6, 10)$, par exemple, admet $1022$ codages différents.
Il faut vérifier qu'au moins un des codages vérifie cette contrainte.\\

Formellement, on dit qu'un petit vecteur d'angles $\vec{k}$ {\em couvre} un moyen vecteur d'angles $\vec{l}$ s'il existe un codage de $\vec{l}$ tel que pour tout facteur xsy de ce codage, le codage de $\vec{k}$ contient $xry$.
On dit aussi que $\vec{k}$ {\em précouvre} $\vec{l}$ s'il existe un codage de $\vec{l}$ tel que pour tout facteur xs de ce codage, le codage de $\vec{k}$ contient $xr$.
Cette condition est plus faible, mais elle se vérifie directement sur les vecteurs d'angles (qui comptent justement les facteurs de taille $2$), donc plus rapidement.\\

Considérons les $16805$ couples petit/moyen vecteurs d'angles candidats.
Ne garder que ceux où la la petite couronne contient un moyen disque et la moyenne un petit réduit à $12265$ couples.
La vérification de précouverture réduit à $2889$ couples en quelques millisecondes, puis celles de couverture à $803$ couples en moins de $2$ minutes\footnote{Mais en plus de $15$ minutes si on ne vérifie pas la précouverture avant.}.
Ces $803$ couples correspondent à $192$ moyennes couronnes différentes, au lieu des $1654$ initiales, et leurs polynômes sont de plus souvent parmi les plus simples\footnote{Le calcul de ces $192$ polynômes prend 1min 51s sur notre ordinateur et crée un fichier de $256$Ko. Il montre que le degré moyen est $14$, avec un maximum à $80$ pour 11rrsrss.}.\\

Le calcul des résultants donne un ensemble $\mathcal{C}$ de $469808$ couples $(r,s)$ candidats en moins de $4$ minutes.
Ici encore, il y a quelques cas problématiques : $8$ erreurs d'exécution et $15$ couples de couronnes qui ne caractérisent pas un nombre fini de valeurs de $r$ et $s$.
Les $8$ erreurs sont traitées en arithmétique d'intervalles et donnent un ensemble $\mathcal{C}_{\textrm{err}}$ de $43448$ couples $(r,s)$ candidats (environ $1$ minutes).
Parmi les $15$ cas qui ne caractérisent pas $r$ et $s$ ({\em i.e.}, le système d'équations associées aux couronnes n'est pas de dimension $0$), $13$ sont formés de petite et moyenne couronnes qui n'ont pas de grand disque.
Ces seules couronnes ne permettent pas un empilement avec trois tailles de disques : il en faut donc une autre, {\em i.e.}, on retombe dans un des autres cas.
Les $2$ cas restants sont 1rr/1r1srs et 11r/111s1s.
Le premier est impossible car il correspond à une moyenne couronne 1r1r avec des s ajoutés dans les interstices, or une telle moyenne couronne est impossible.
Le second est en fait un sous cas de 1rr/1s1s1s1s, où certains trous entre deux grands disques et un moyen ne sont pas remplis par un petit disque.\\

La première passe, sur le domaine, réduit $\mathcal{C}$ à $57017$ couples et $\mathcal{C}_{\textrm{err}}$ à $5875$ couples.
La deuxième passe, sur les petites et moyennes couronnes, réduit $\mathcal{C}$ à $601$ couples et $\mathcal{C}_{\textrm{err}}$ à $6$ couples.
La troisième passe, sur les grandes couronnes, réduit  $\mathcal{C}$ à $175$ couples et $\mathcal{C}_{\textrm{err}}$ à $1$ couple.
Le tout en $4$ minutes environ.
Le dernier couple de $\mathcal{C}_{\textrm{err}}$ qui résiste (\verb+AssertionError+) correspond à la paire petite/moyenne couronnes 1rr1s/11rrs.
Résoudre les deux équations bivariées associées (via les bases de Gröbner) donne $29$ couples en 45 minutes\footnote{Ce qui montre l'intérêt d'avoir utilisé le résultant pour les $802$ autres couples !}, réduit à $1$ par les trois passes.
Il reste donc $176$ candidats, que la dernière passe valide tous (et leurs couronnes) en moins de $4$ minutes. 
Il reste à déterminer quand un empilement est possible.

\begin{lemma}
  \label{lem:1pc_impossible_1}
  Si un empilement contient une petite couronne 1rss, 11rss, 1rrss ou 1srss, alors il contient une seconde petite couronne autre que ssssss.
\end{lemma}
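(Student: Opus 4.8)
The plan is to argue purely combinatorially, using only the elementary ``neighbour'' relation between crowns together with the observation that each of the four listed codings contains the factor \texttt{rss1} (a medium disk, then two small disks, then a large disk, in succession). The relation I have in mind is the one already exploited in the definition of \emph{couvre}/\emph{précouvre}: if a disk $Y$ lies in the crown of $X$ with neighbours $P$ and $Q$ there, then $X$ lies in the crown of $Y$ with neighbours exactly $P$ and $Q$, so that $Y$'s crown contains the factor $PXQ$. This is the only geometric input; everything else is bookkeeping on words over $\{1,r,s\}$.

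First I would fix a small disk $d$ whose crown is one of the four words $W$, and read off the consecutive disks $B$ (radius $r$), $C$, $E$ (radii $s$), $A$ (radius $1$) realising the factor \texttt{rss1}. Applying the neighbour fact to $C$ shows that the crown of $C$ contains $B\,d\,E$, i.e. the factor \texttt{rss}. If this crown is \emph{not} $W$ we are immediately done: it differs from \texttt{ssssss} (which contains no \texttt{r}) and from $W$, so it is the required second small crown. The real work is therefore confined to the case where $C$ also has crown $W$.

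In that case I would compose the neighbour relation once more. Each of the four words contains a \emph{single} occurrence of \texttt{ss}, always preceded by \texttt{r} and followed by \texttt{1} (this is exactly the \texttt{rss1} block), so the copy of $W$ sitting at $C$ forces a large disk $A'$ immediately after $E$. Recording the neighbours of $d$ and of $C$ inside the crown of $E$ — namely $\{C,A\}$ and $\{d,A'\}$ — and chaining them yields, around $E$, the run $A\,d\,C\,A'$, that is the factor \texttt{1ss1}. Since the unique \texttt{ss} of every $W$ is preceded by \texttt{r}, none of the four words contains \texttt{1ss} at all, a fortiori not \texttt{1ss1}; hence $E$'s crown is neither $W$ nor \texttt{ssssss}, and is again the second small crown sought.

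The main obstacle, I anticipate, is precisely that one-step propagation does \emph{not} suffice here: for all four words, every length-three factor forced at a neighbouring small disk already occurs in $W$ itself — which is presumably why exactly these four codings are grouped in this lemma — so after a single use of the neighbour fact one has gained nothing. The key idea will be to look one disk further, at the neighbour $E$ of the neighbour $C$, in order to manufacture the genuinely new length-four factor \texttt{1ss1}. A small degenerate point to dispose of is the coincidence $A=A'$: in that situation $E$ simply has the three-disk crown \texttt{1ss}, which is still neither $W$ nor \texttt{ssssss}, so the conclusion stands either way. Beyond that, only a finite inspection of the four words remains, checking once and for all that \texttt{rss} and \texttt{1ss1} are not among their factors.
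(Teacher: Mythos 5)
Your proposal is correct and takes essentially the same route as the paper's proof: the paper also fixes a small disk with crown $W$, notes there is only one way to arrange the crown of a neighbouring small disk (forced to be $W$ since it contains an r), and then exhibits in a second small disk's crown a factor occurring neither in the four words nor in ssssss --- the paper uses 1ss1 for the case 1rss and rssr for the other three, whereas you obtain 1ss1 uniformly from the rss1 block. Two wording slips that do not affect validity: crown factors are read up to reversal, so the four words \emph{do} contain 1ss (the reverse of ss1), and your conclusion survives only because 1ss1 is a palindrome and genuinely absent in both reading directions; likewise your closing sentence should say that rss \emph{is} a factor of the four words (which is precisely why the case where $C$ has crown $W$ must be handled) and that only 1ss1 is not.
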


\begin{proof}
  La preuve ne repose pas sur les valeurs de $r$ ou $s$.
  Les quatre cas, similaires, sont illustrés Fig.~\ref{fig:1pc_impossible_1}.
  On considère un s (légèrement grisé) et sa petite couronne.
  On considère un des s de cette couronne (moyennement grisé) : il n'y a qu'une seule façon de disposer sa couronne.
  Un des s (fortement grisé) a alors dans sa couronne un facteur (1ss1 dans le premier cas, rssr dans les trois autres) qui n'apparaît ni dans une de ces quatre petites couronnes ni dans ssssss.
\end{proof}

\begin{figure}[hbtp]
  \centering
  \includegraphics[width=0.9\textwidth]{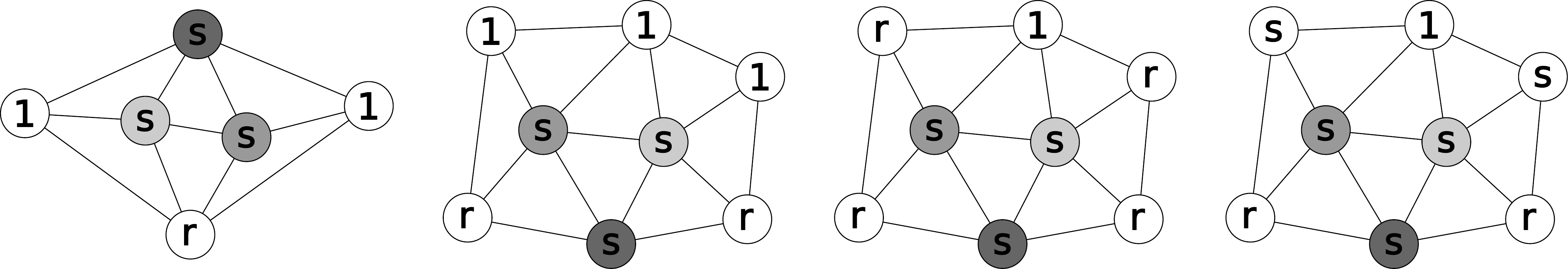}
  \caption{Quatre petites couronnes qui imposent deux petites couronnes.}
  \label{fig:1pc_impossible_1}
\end{figure}

Le lemme~\ref{lem:1pc_impossible_1} élimine $24$ couples petit/moyenne couronne qui ne permettent pas d'empilement.
Le lemme \ref{lem:1pc_impossible_2} ci-dessous en élimine encore $7$, et les $145$ restants admettent tous un empilement périodique (Annexe~\ref{sec:empilements}).

\begin{lemma}
  \label{lem:1pc_impossible_2}
  Il n'y a pas d'empilement avec des couples petite/moyenne couronnes 1rsrs/1rr1ss, 11rr/11rrs, 1rr1s/11rrs, rrrrr/1rsrsr, rrrrs/11rssr, rrrss/11rssr et rrrs/11rssr.
\end{lemma}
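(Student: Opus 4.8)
The plan is to dispatch the seven pairs one at a time, following the propagation technique already used in Lemmes~\ref{lem:biphase1} and~\ref{lem:2pc}. Since each of these pairs was retained precisely because it characterizes finitely many $(r,s)$, the first step is, for each pair, to fix the corresponding radii and enumerate \emph{all} compatible coronas --- small, medium and large --- exactly as in the third and fourth passes of Section~\ref{sec:calcul}. This produces a finite combinatorial palette out of which any packing with these radii would have to be built, and which the hypotheses of this section make manageable: apart from $ssssss$ there is a single usable small corona, and every packing must contain a medium corona carrying a small disk.

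Next I would assume a packing exists and exploit two constraints. The first is edge-compatibility: whenever a disk $x$ touches a disk $y$, the two disks flanking $y$ in the corona of $x$ must agree with the two flanking $x$ in the corona of $y$. The second is the structural hypothesis of this section, namely that the only small corona available (apart from $ssssss$) is the one appearing in the pair. Starting from a forced occurrence of this small corona --- or from the medium corona that must contain a small disk --- I would trace the coronas these impose on their neighbours, then on neighbours of neighbours, and iterate, drawing the forced local configuration as in the proof of Lemme~\ref{lem:biphase2}. The aim is to reach one of two contradictions: either two adjacent disks whose coronas cannot be glued along their common edge, or a propagation that confines the packing to only two disk sizes, contradicting the requirement that all three sizes occur.

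For the six pairs whose medium corona is $11rrs$ or $11rssr$, the small disks sit in very rigid local patterns, so I expect the propagation to show --- in the manner of Lemme~\ref{lem:2pc} --- that a small disk can never be reached starting from a region of large and medium disks; the remaining pair $1rsrs/1rr1ss$ should yield to the same bookkeeping. The main obstacle is this case analysis itself: for each pair one must read off, from the full list of compatible coronas computed in the first step, exactly which coronas can be adjacent, and then identify the correct propagating invariant together with either the precise local clash or the disk size that gets excluded. This is routine in spirit but delicate in execution, since a single overlooked compatible corona could open an escape route for the missing size and invalidate the argument.
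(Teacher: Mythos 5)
Your overall strategy coincides with the paper's: for each pair, fix the $(r,s)$ it characterizes, enumerate every compatible small, medium and large corona (as in the third and fourth passes of Section~\ref{sec:calcul}), and then derive a contradiction either from an impossible gluing of adjacent coronas or from a propagation argument showing that one disk size can never occur. So the approach is the right one.

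The gap is that you execute none of it: the entire content of the lemma \emph{is} the seven case analyses, and your proposal leaves each of them at the level of ``I expect the propagation to show\ldots''. No list of compatible coronas is actually computed, no invariant exhibited, no clash identified. Concretely, the paper's contradictions are: for 1rsrs/1rr1ss, the r lying between two s's of the small corona touches three small disks and admits no corona at all; for 11rr/11rrs and 1rr1s/11rrs, a forced factor (1rr1, resp.\ srrs) that no available corona contains; for rrrrr/1rsrsr, a step-by-step forced placement around the small corona that leaves the last r without a corona. Moreover, for the three pairs with medium corona 11rssr the argument is genuinely two-staged, because the radii there are compatible with \emph{other} medium coronas (e.g.\ rsrsrss and 1111r for rrrrs) and with large coronas: one must first show that a packing avoiding 11rssr only reaches large disks and medium disks touching a large one (hence contains no small disk), and only then kill 11rssr itself by a local clash. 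Note also that your expectation for the two 11rrs pairs is off: there no other medium corona is compatible, so 11rrs must occur and the contradiction is purely local, not a ``small disks unreachable'' propagation. You yourself flag the case analysis as the main obstacle and observe that a single overlooked corona would invalidate everything; that is precisely why what you have is a correct plan for a proof rather than a proof.
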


\begin{proof}
  On traite les cas un par un.\\
  {\bf 1rsrs/1rr1ss.}
  Les valeurs de $r$ et $s$ ne sont compatibles avec aucune autre moyenne couronne.
  Le r entre deux s de la petite couronne, qui touche trois s (ses voisins et le centre) n'a pas de couronne.\\
  {\bf 11rr/11rrs.}
  Les valeurs de $r$ et $s$ ne sont compatibles avec aucune autre moyenne couronne.
  Dans la moyenne couronne, la couronne du r avec un voisin s impose 1rr1 dans la couronne de son voisin r : c'est impossible.\\
  {\bf 1rr1s/11rrs.}
  Les valeurs de $r$ et $s$ ne sont compatibles avec aucune autre moyenne couronne.
  Dans la moyenne couronne, la couronne du r voisin de 1 impose le facteur srrs dans la couronne du r voisin de s.
  C'est impossible car la moyenne couronne n'a pas de facteur srrs.\\
  {\bf rrrrr/1rsrsr.}
  Les valeurs de $r$ et $s$ ne sont compatibles avec aucune autre moyenne couronne.
  Il y a deux façons symétriques de placer la couronne d'un r dans une petite couronne.
  Ce choix impose alors, de proche en proche, un placement unique des couronnes des autres r de la petite couronne.
  Il n'y a pas de couronne possible pour le dernier r.\\
  {\bf rrrrs/11rssr.}
  Les valeurs de $r$ et $s$ sont compatibles avec deux autres moyennes couronnes, rsrsrss et 1111r, ainsi qu'avec la grande couronne, 1r1r1rr.
  Appelons disque $r_1$ un moyen disque qui touche au moins un grand disque.
  Si un empilement ne contient pas la moyenne couronne 11rssr, alors en partant d'un disque $1$ les autres couronnes ne permettent d'atteindre que des disques $1$ ou $r_1$, donc il n'y a pas de petit disque dans l'empilement : contradiction.
  Considérons donc une moyenne couronne 11rssr.
  Soit deux disques r et s voisins dans cette couronne.
  Il y a une seule façon de placer la couronne du r et elle impose au s un facteur sss dans sa propre couronne, ce que ne permet pas la petite couronne.\\
  {\bf rrrss/11rssr.}
  Les valeurs de $r$ et $s$ sont compatibles avec les moyennes couronnes rrsrrss, rrrsrss et 111rr, ainsi qu'avec les grandes couronnes 11r11rr et 111r1rr.
  Le même argument que pour rrrrs/rsrsrss impose la présence d'une moyenne couronne 11rssr.
  Soit deux disques r et s voisins dans cette couronne.
  Il y a une seule façon de placer la couronne du r et elle impose au s un facteur srrs dans sa propre couronne, ce que ne permet pas la petite couronne.\\
  {\bf rrrs/11rssr.}
  Les valeurs de $r$ et $s$ sont compatibles avec les moyennes couronnes rsrsrsrsss, rsrsrssrss, rsrssrsrss, rrrrsrss, rrrsrrss, rrrsrss, 1rsrsssr, 1rssrssr et 111r.
  La petite couronne interdit trois s consécutifs dans une couronne, ce qui élimine 1rsrsssr (et rsrsrsrsss).
  Le même argument que pour rrrrs/rsrsrss impose la présence d'une moyenne couronne 1rssrssr ou 11rssr.
  Considérons une de ces couronnes (l'autre cas est similaire).
  La petite couronne force les paires de s voisins dans cette couronne a être encerclés par quatre r.
  Considérons un des deux r qui, dans cette couronne, est entre 1 et s : il a alors un facteur 1rsr dans sa couronne, ce qui est impossible.
\end{proof}

\section{Conclusion}

Il y a donc $18$ couples $(r,s)$ qui permettent un empilement à deux phases, un seul qui permet deux petites couronnes et admet un empilement (qui ne contient qu'une de ces petites couronnes) et $145$ qui ne permettent qu'une seule petite couronne et admettent un empilement avec une seule phase.
Au total, il y a donc $164$ couples $(r,s)$ qui permettent un empilement : le théorème~\ref{th:164} est démontré.
La figure~\ref{fig:repartition_r_sr} illustre la répartition des rayons (les numéros correspondent à ceux de l'annexe~\ref{sec:empilements}) et la table~\ref{tab:degre_rayons} celle de leur degré algébrique.\\

\begin{figure}[hbtp]
  \centering
  \includegraphics[width=0.85\textwidth]{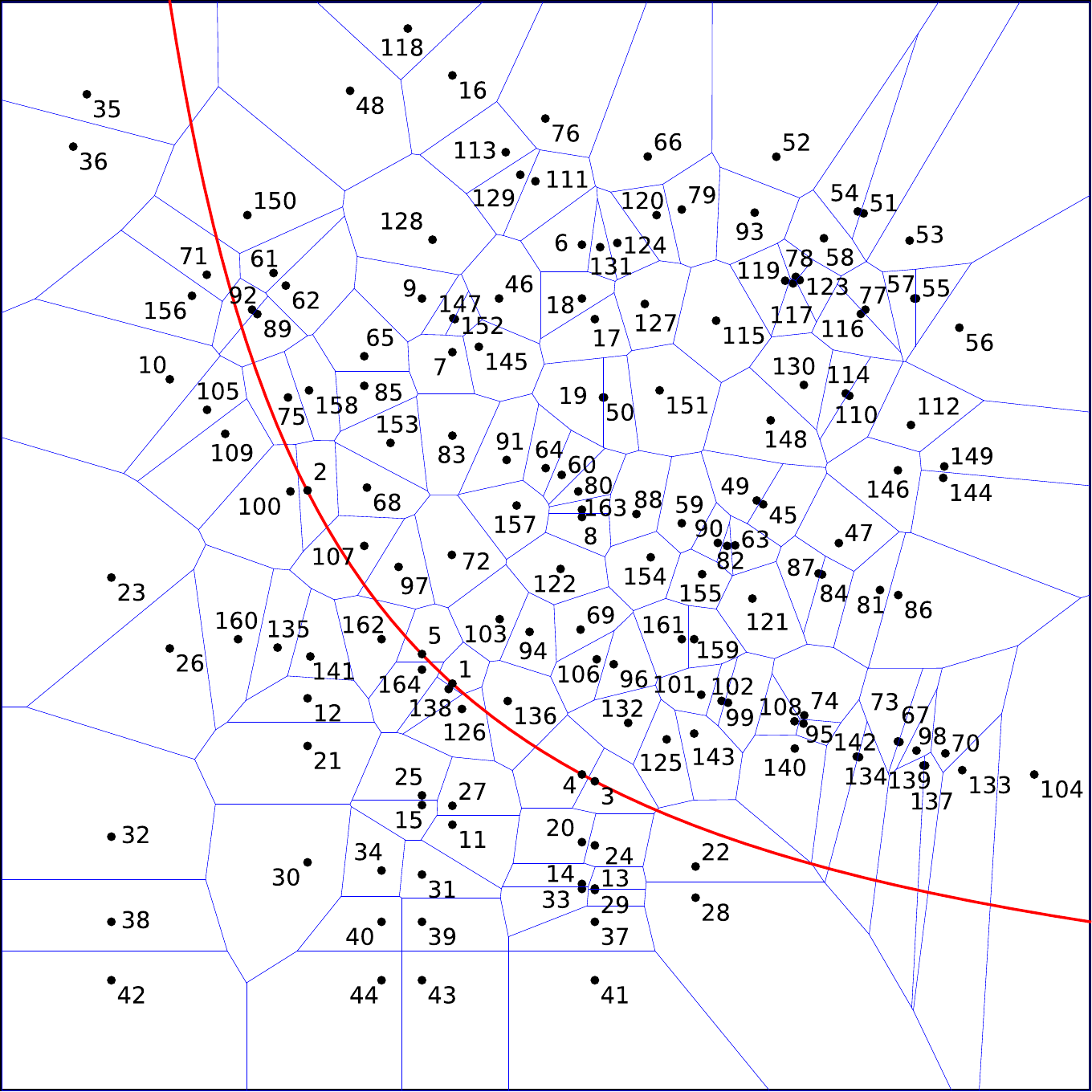}
  \label{fig:repartition_r_sr}
  \caption{
    Répartition des $164$ couples $(r,s)$ qui permettent un empilement, avec $r$ en abscisse et $\tfrac{s}{r}$ en ordonnée.
    Les couples sous la courbe rouge sont {\em interstitiels} : le petit disque tient dans le trou entre trois grands disques.
    Les cellules de Voronoï donnent une idée de la proximité entre couples.
  }
\end{figure}

\begin{table}[hbtp]
  \centering
  \resizebox{\textwidth}{!}{  
  \begin{tabular}{c|c|c|c|c|c|c|c|c|c|c|c|c|c|c|c|c}
    $1$ & $2$ & $3$ & $4$ & $5$ & $6$ & $7$ & $8$ & $9$ & $10$ & $11$ & $12$ & $14$ & $16$ & $20$ & $22$ & $24$\\
    $10$ & $52$ & $29$ & $79$ & $20$ & $30$ & $4$ & $47$ & $2$ & $10$ & $6$ & $18$ & $2$ & $9$ & $4$ & $2$ & $4$
  \end{tabular}}
  \caption{Nombre de rayons $r$ et $s$ en fonction de leur degré algébrique.}
  \label{tab:degre_rayons}
\end{table}


\newpage
\appendix

\section{Empilements}
\label{sec:empilements}

On présente ici un exemple d'empilement pour chacun des $164$ couple $(r,s)$ possibles.
Ils sont tous périodiques et un cadre indique sur chaque dessin un domaine fondamental.
On donne pour chacun les codages d'une petite et d'une moyenne couronnes (à partir desquelles le couple $(r,s)$ se déduit).
Les $18$ premiers sont ceux à deux phases, le $19^\textrm{ème}$ correspond au couple $(r,s)$ qui permet deux petites couronnes (dont une seule utilisable dans un empilement), les $145$ suivant n'admettent qu'une petite couronne et n'ont pas deux phases.\\

\noindent
\begin{tabular}{lll}
  1\hfill 111 / 1111 & 2\hfill 111 / 111r & 3\hfill 111 / 111rr\\
  \includegraphics[width=0.3\textwidth]{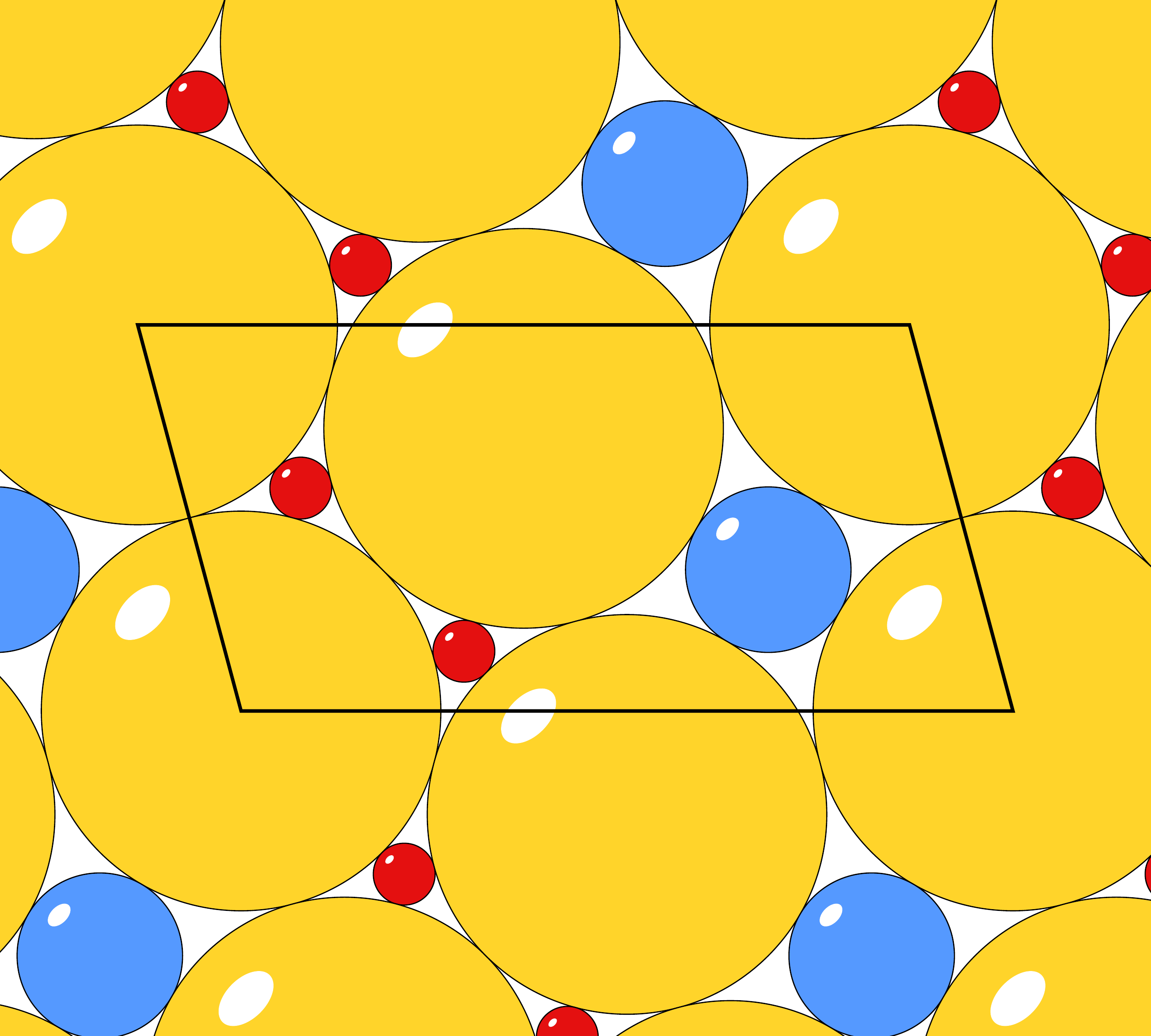} &
  \includegraphics[width=0.3\textwidth]{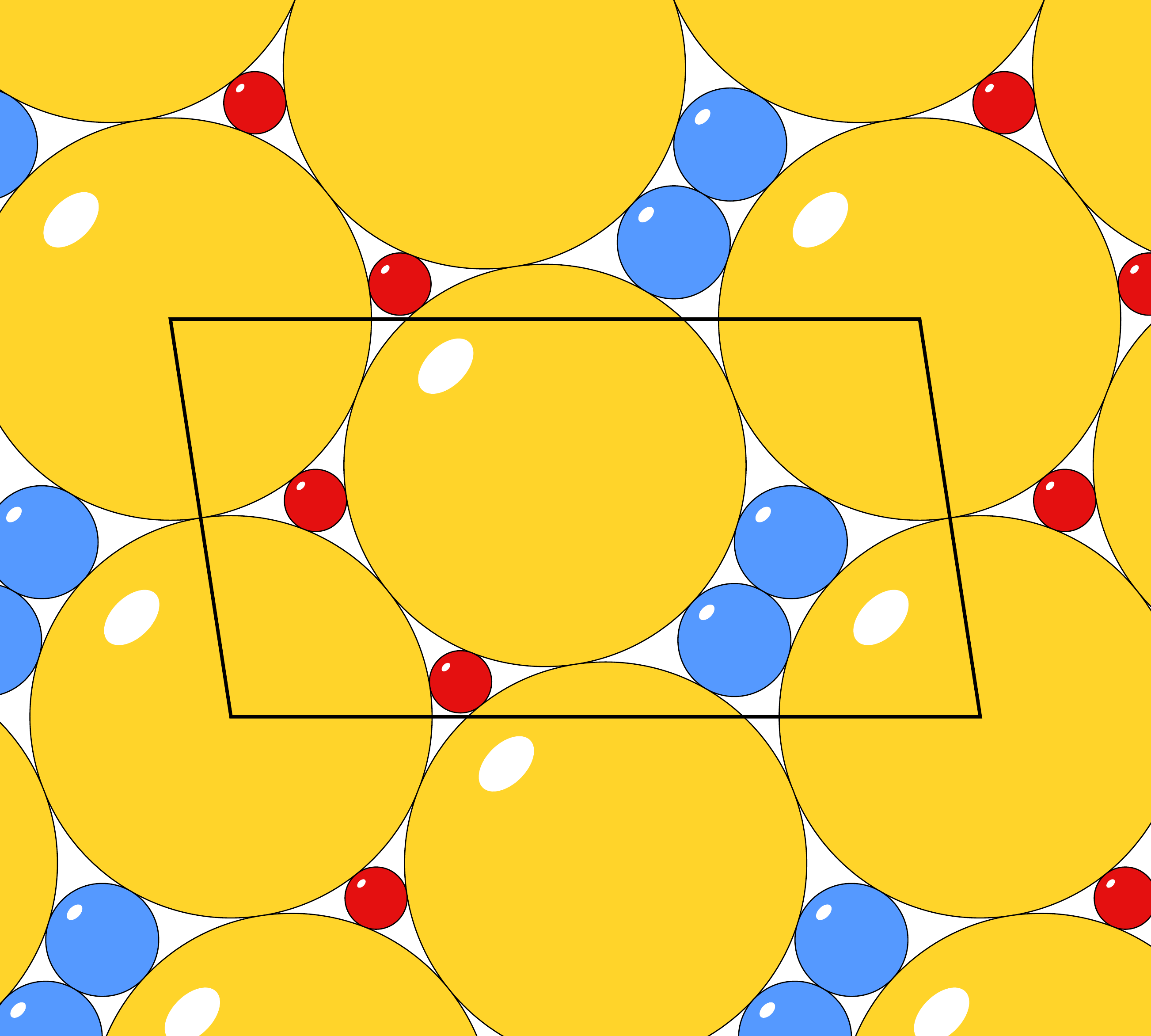} &
  \includegraphics[width=0.3\textwidth]{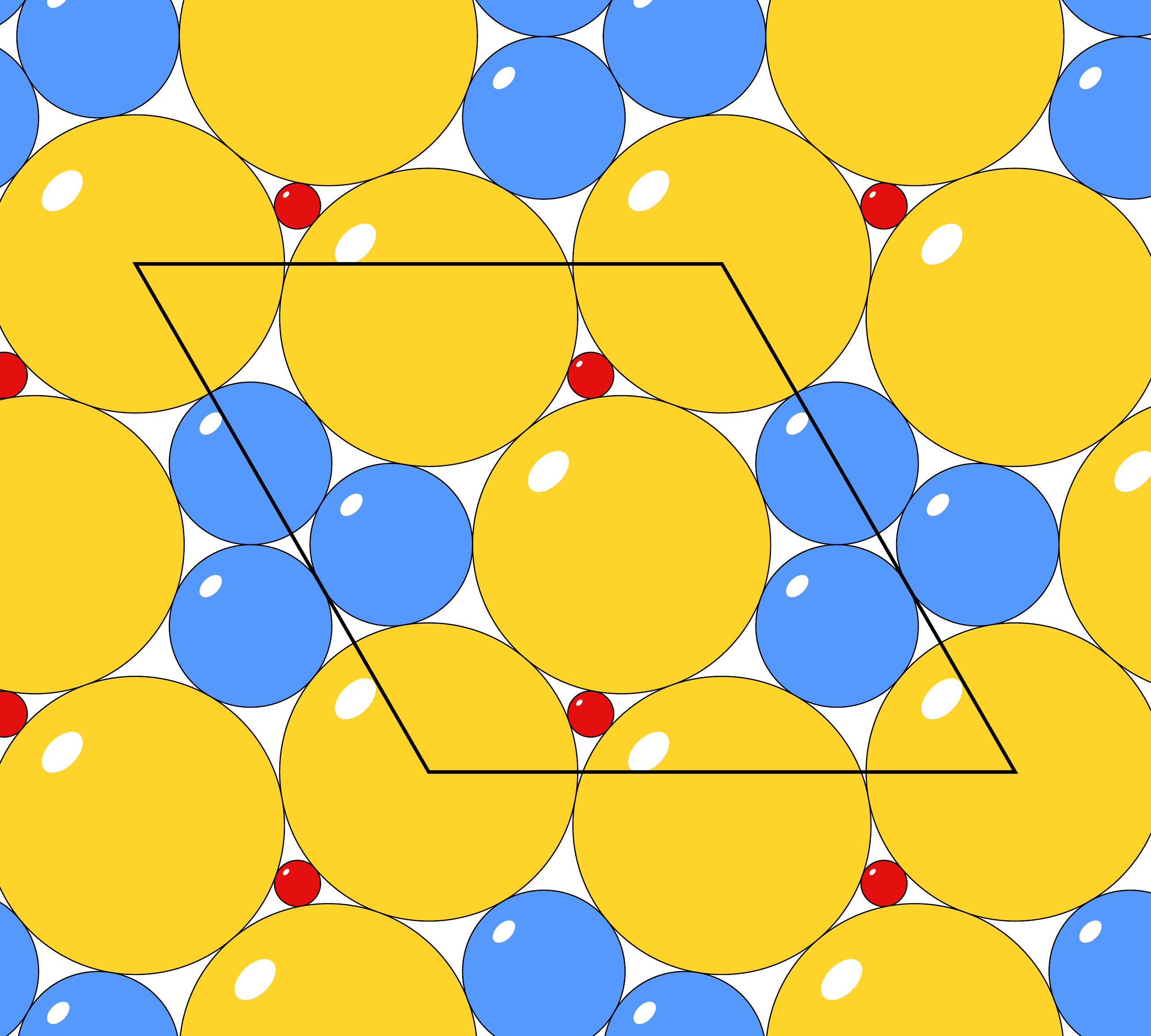}
\end{tabular}
\noindent
\begin{tabular}{lll}
  4\hfill 111 / 11r1r & 5\hfill 111 / 11rrr & 6\hfill 1111 / 11r1r\\
  \includegraphics[width=0.3\textwidth]{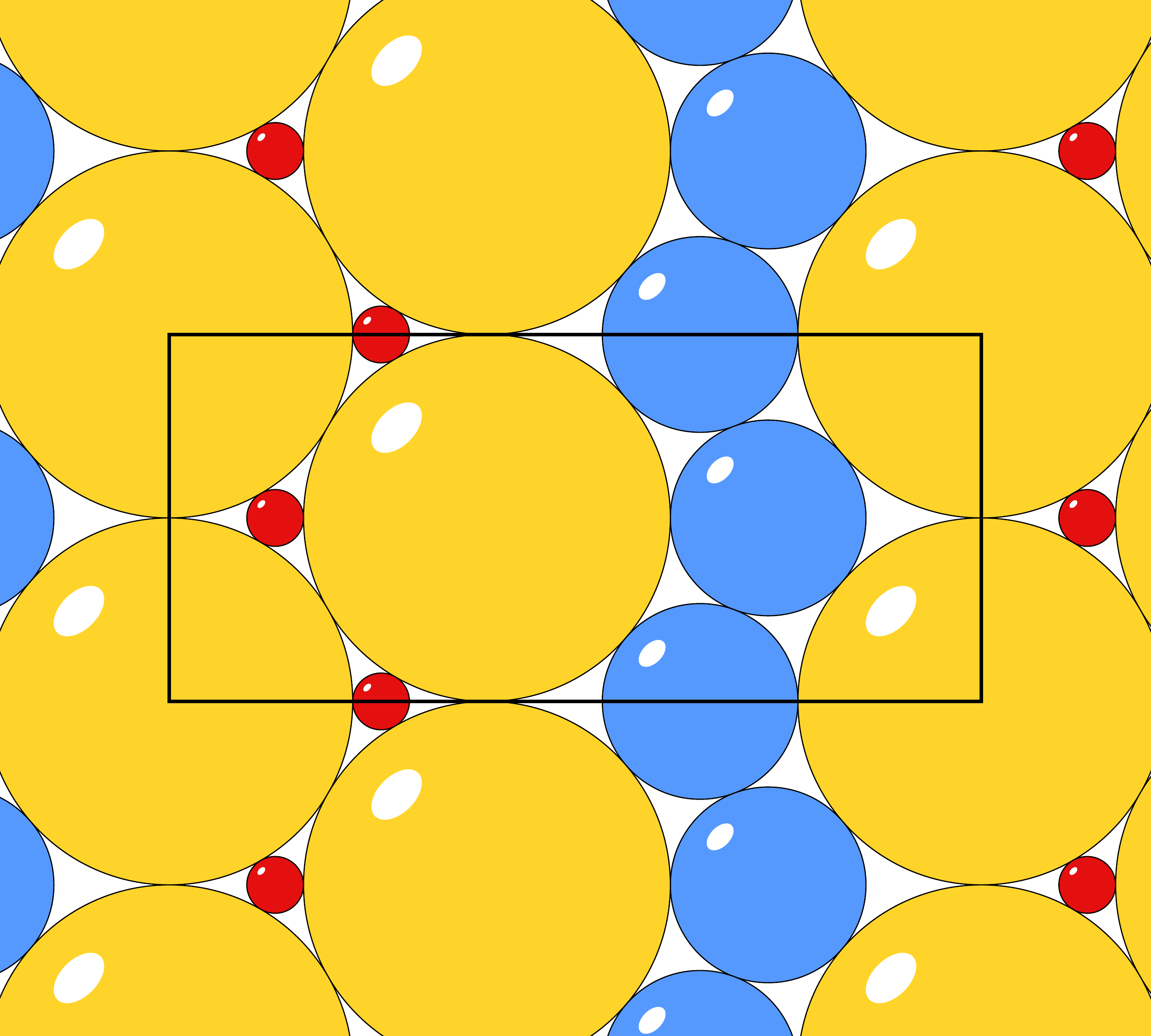} &
  \includegraphics[width=0.3\textwidth]{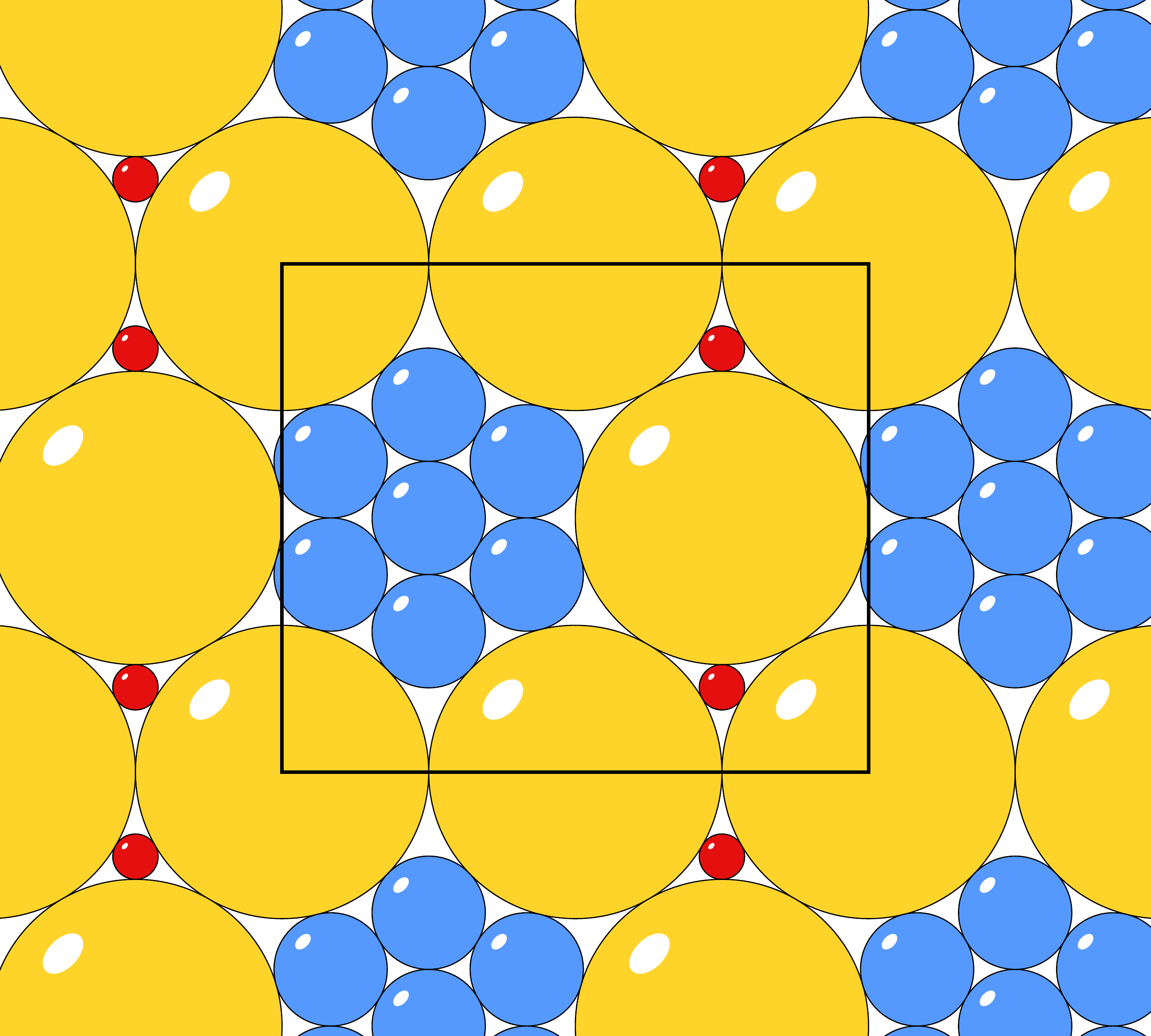} &
  \includegraphics[width=0.3\textwidth]{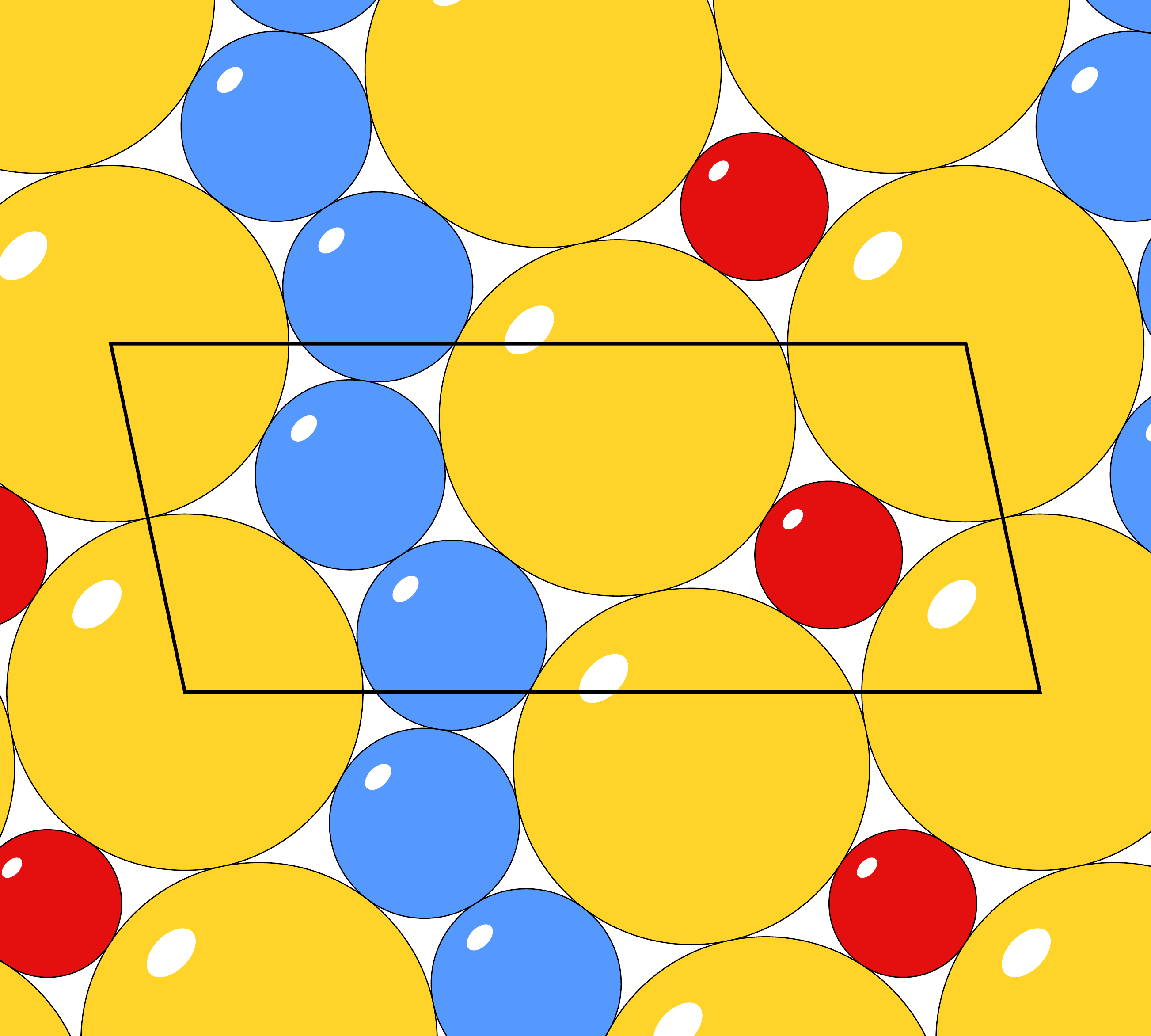}
\end{tabular}
\noindent
\begin{tabular}{lll}
  7\hfill 111s / 1111 & 8\hfill 111s / 11r1r & 9\hfill 111s / 11rrr\\
  \includegraphics[width=0.3\textwidth]{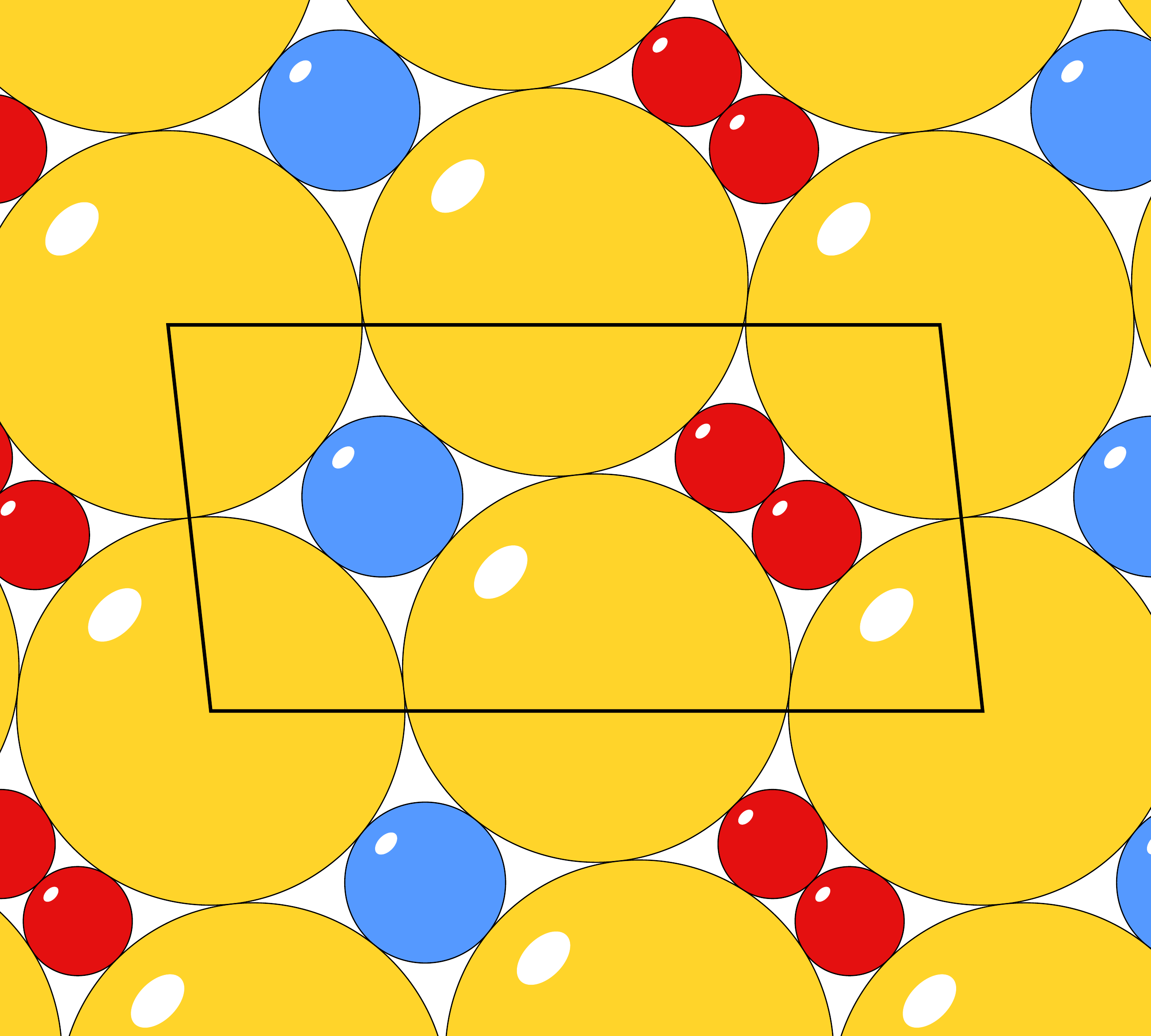} &
  \includegraphics[width=0.3\textwidth]{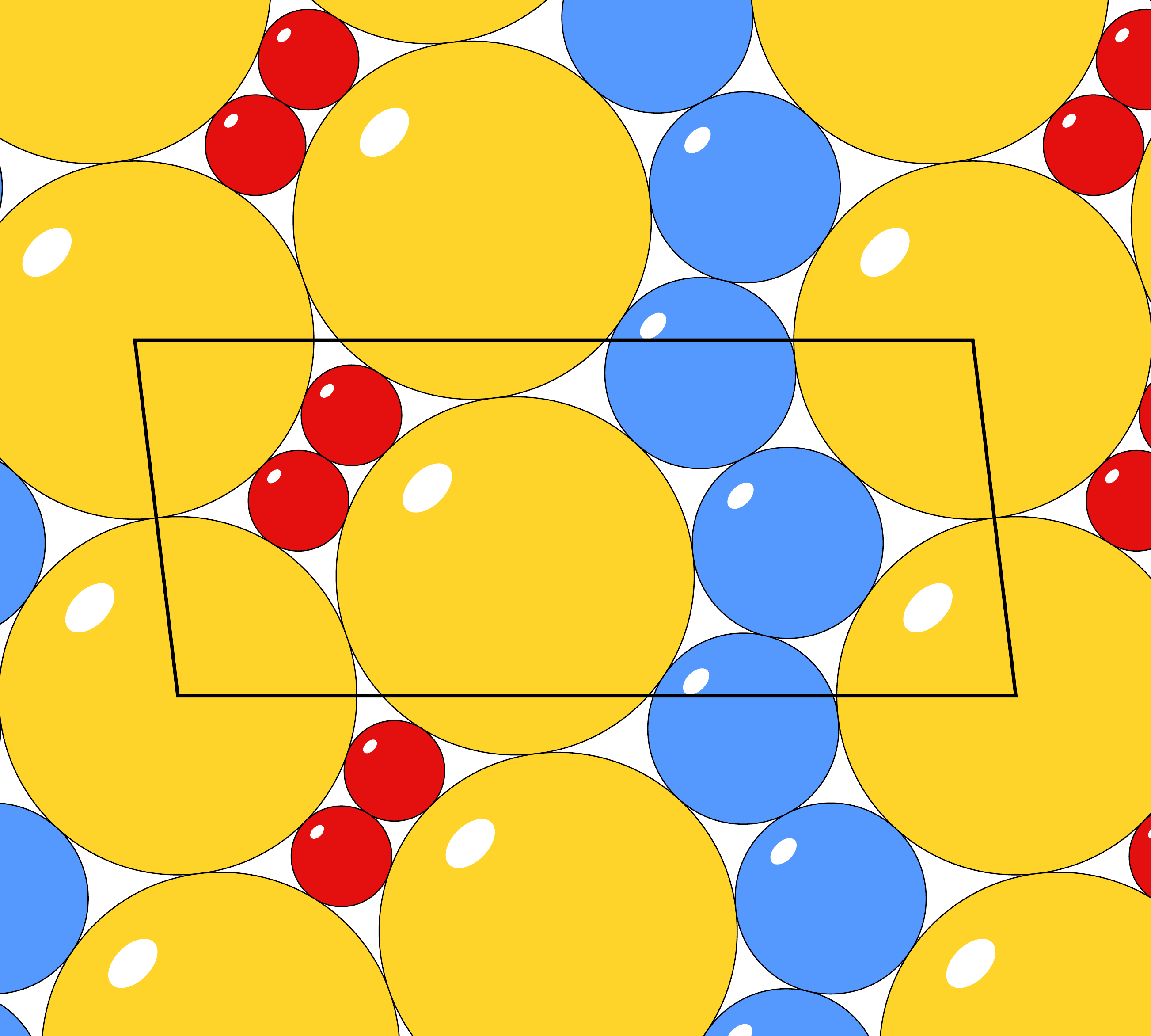} &
  \includegraphics[width=0.3\textwidth]{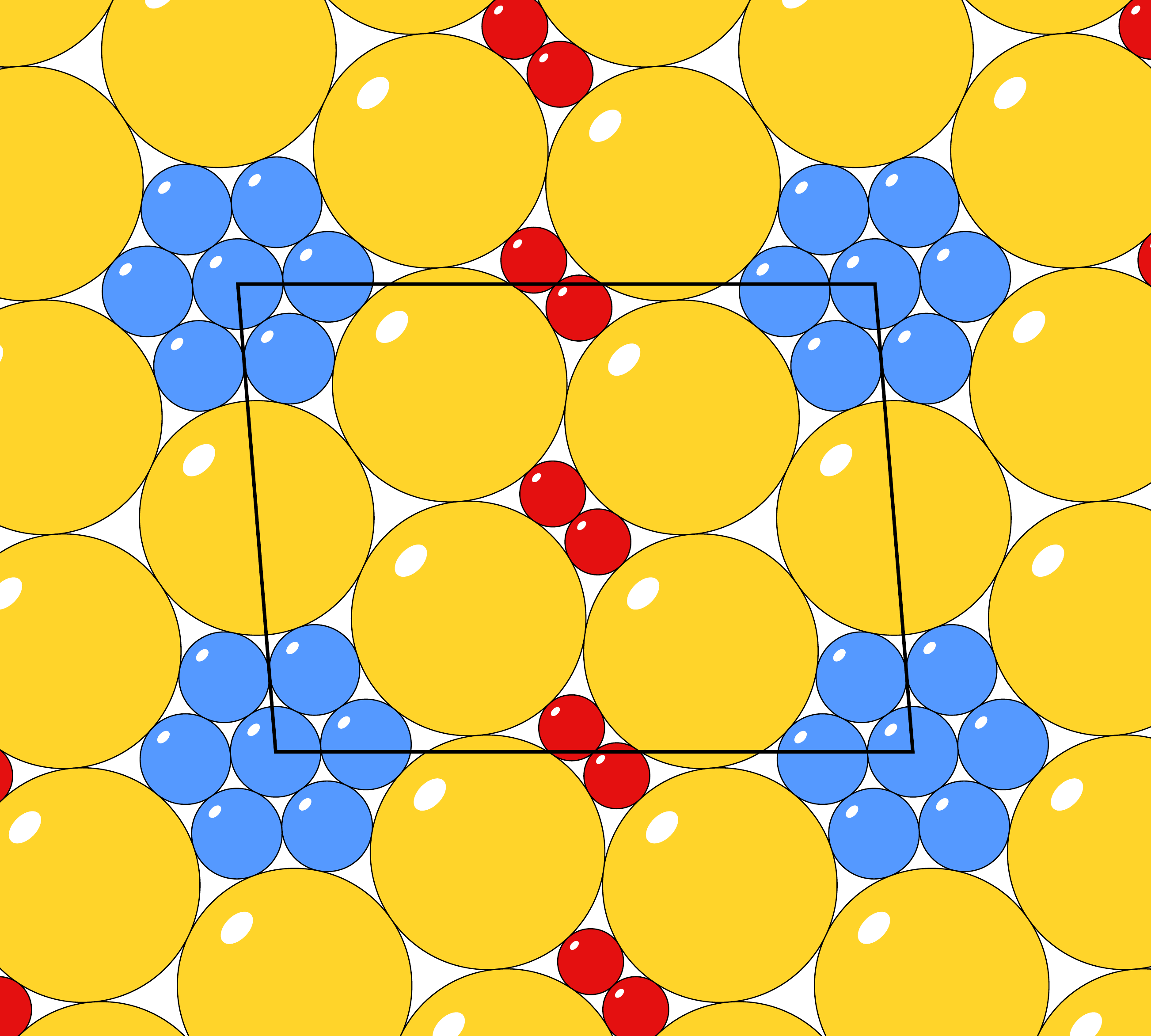}
\end{tabular}
\noindent
\begin{tabular}{lll}
  10\hfill 11ss / 111 & 11\hfill 11ss / 1111 & 12\hfill 11ss / 111r\\
  \includegraphics[width=0.3\textwidth]{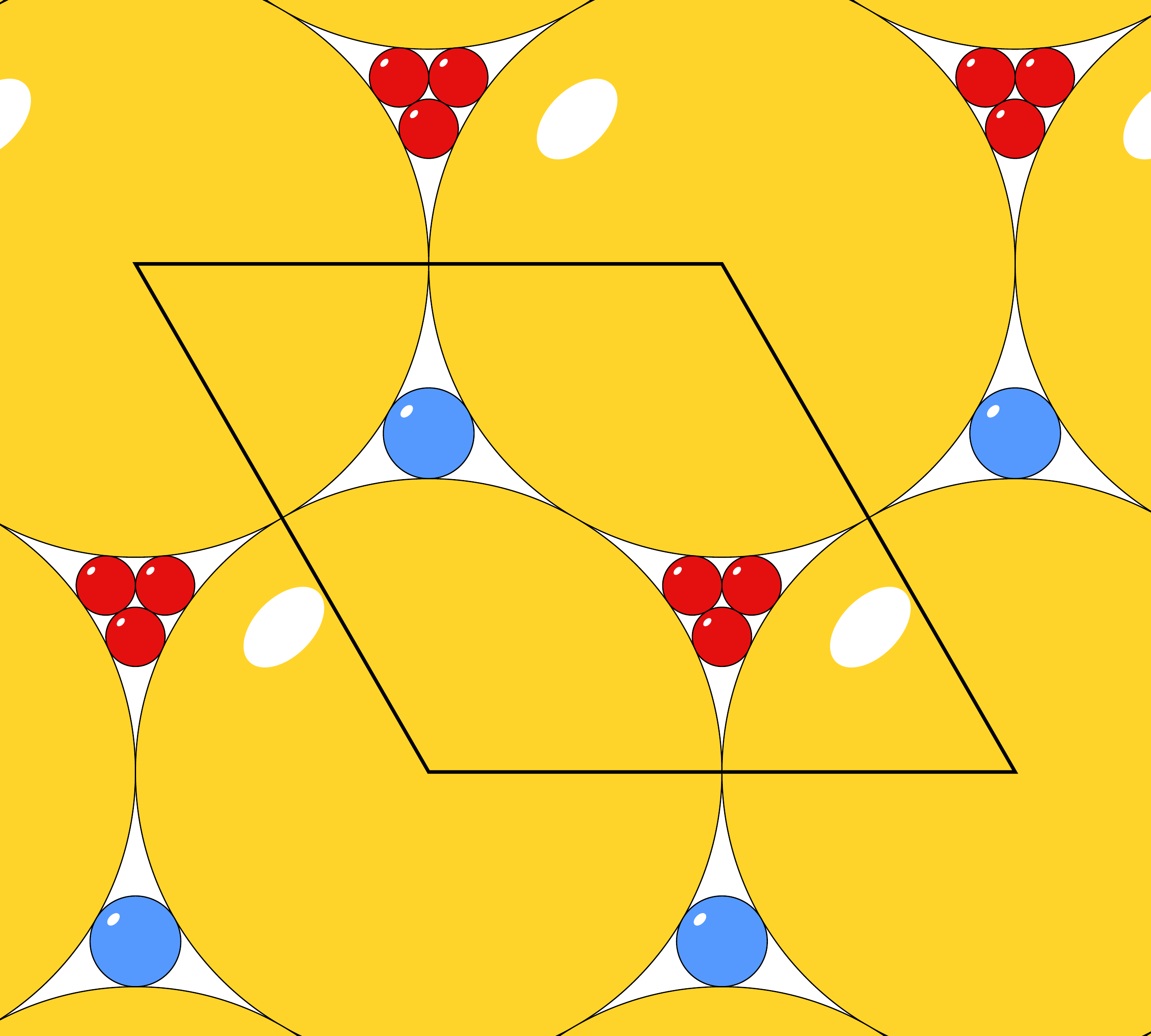} &
  \includegraphics[width=0.3\textwidth]{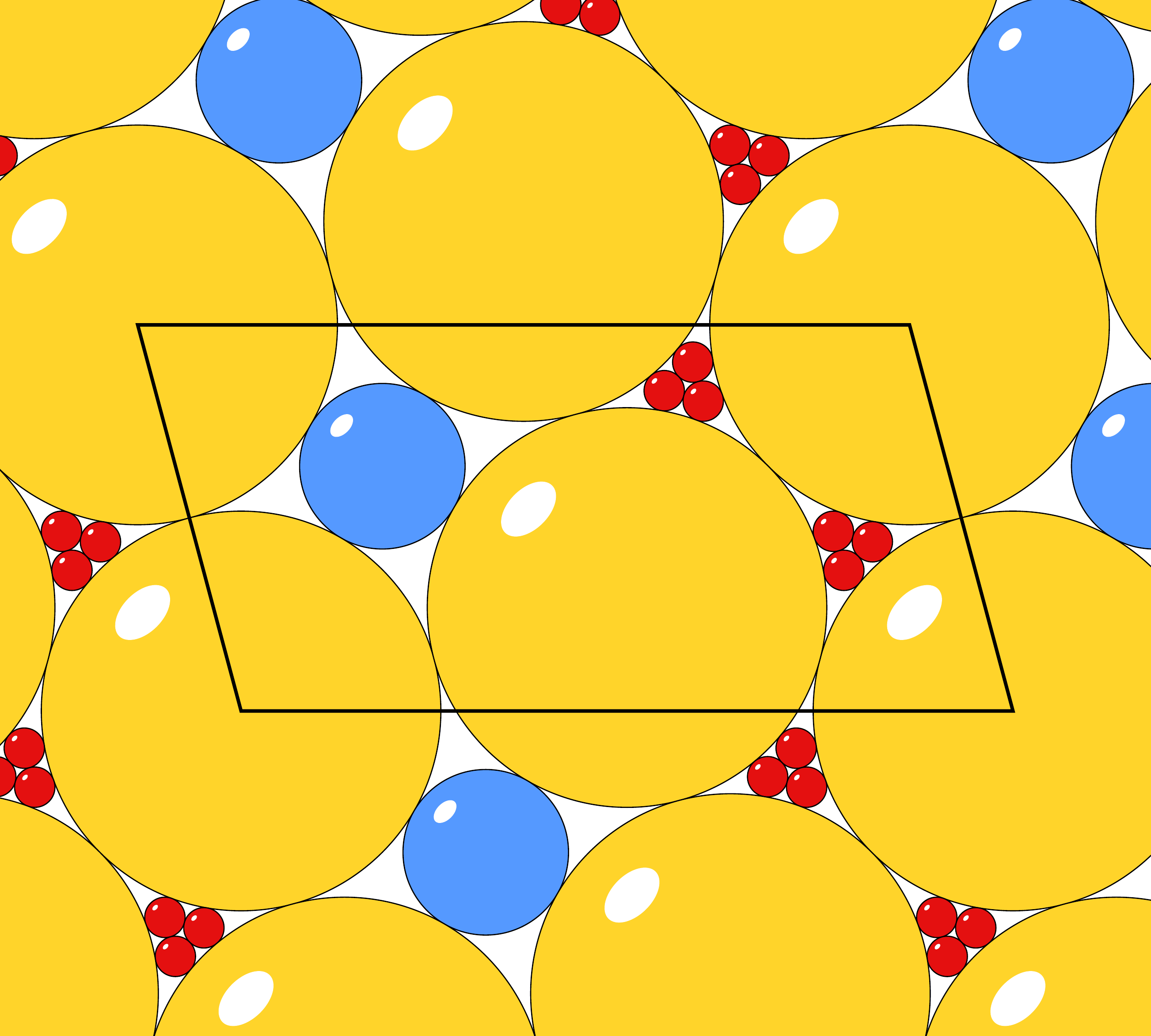} &
  \includegraphics[width=0.3\textwidth]{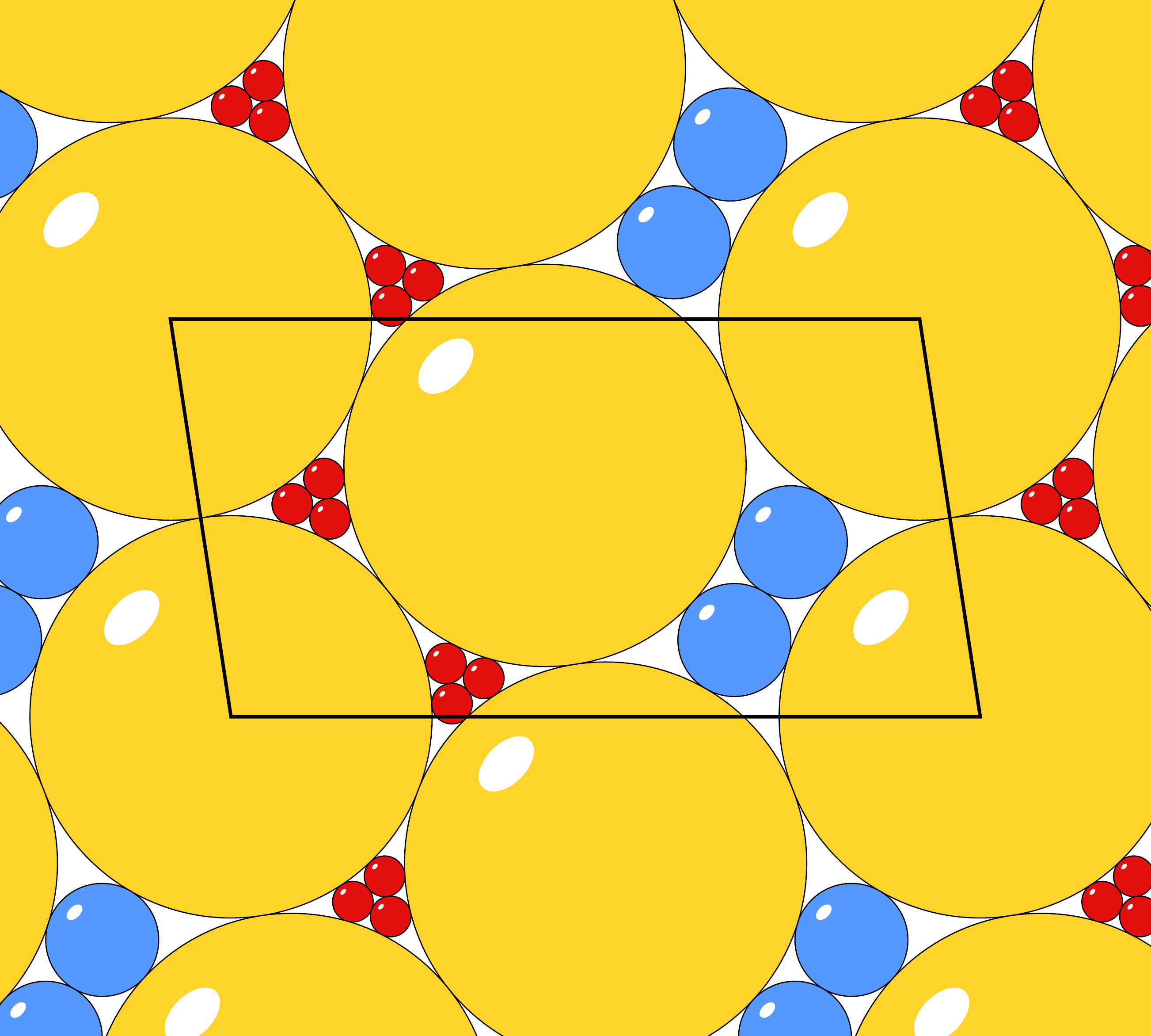}
\end{tabular}
\noindent
\begin{tabular}{lll}
  13\hfill 11ss / 111rr & 14\hfill 11ss / 11r1r & 15\hfill 11ss / 11rrr\\
  \includegraphics[width=0.3\textwidth]{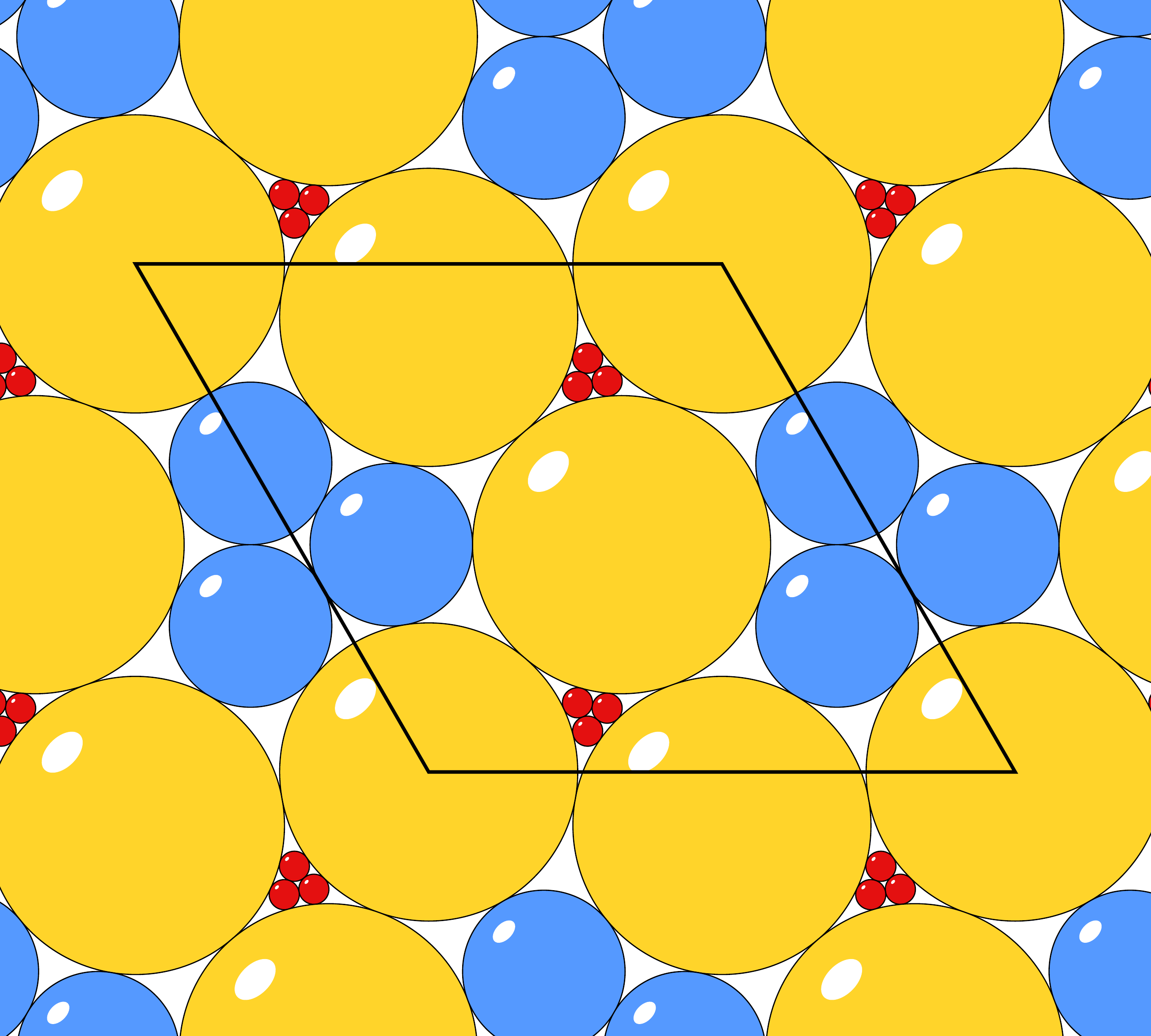} &
  \includegraphics[width=0.3\textwidth]{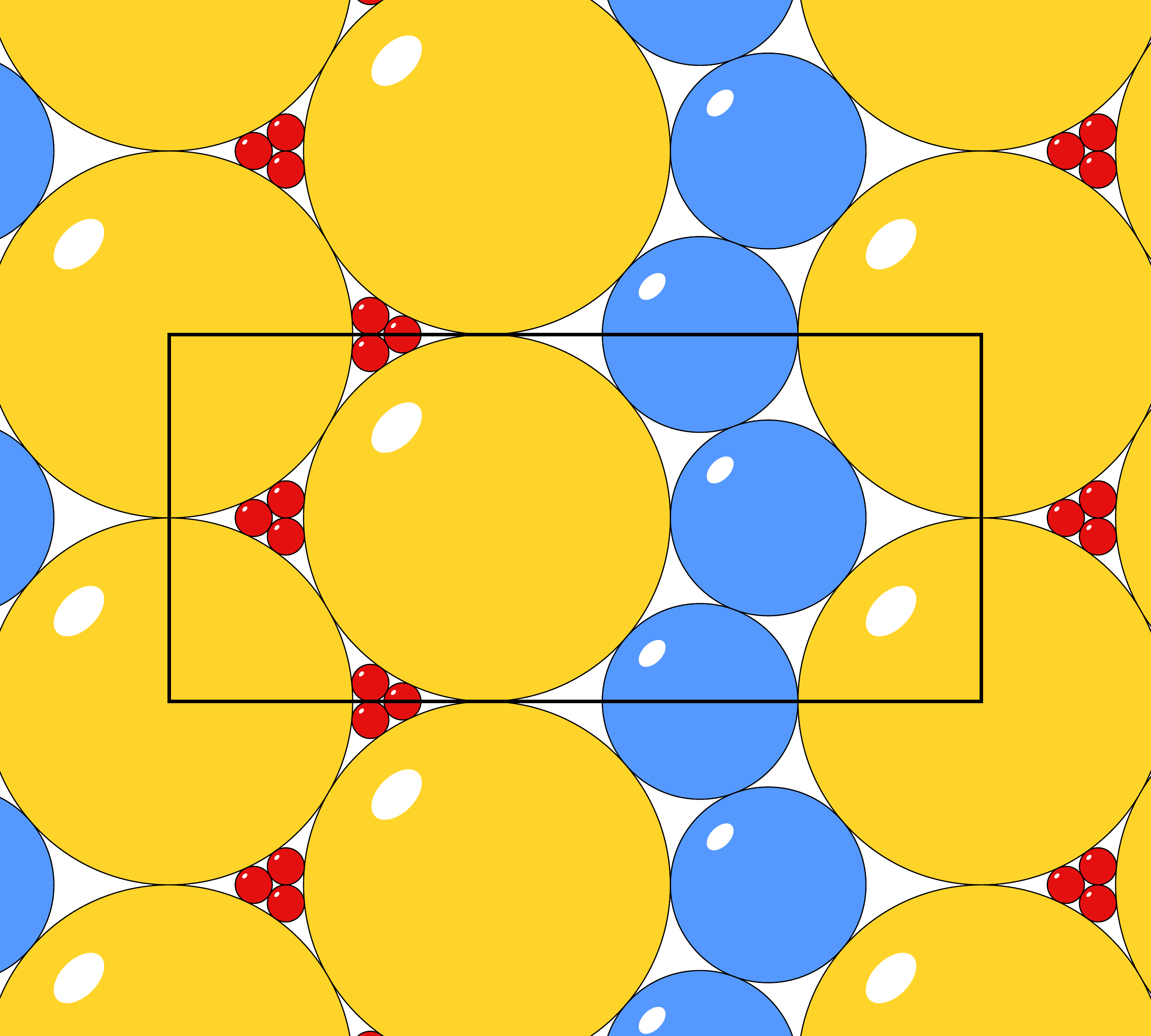} &
  \includegraphics[width=0.3\textwidth]{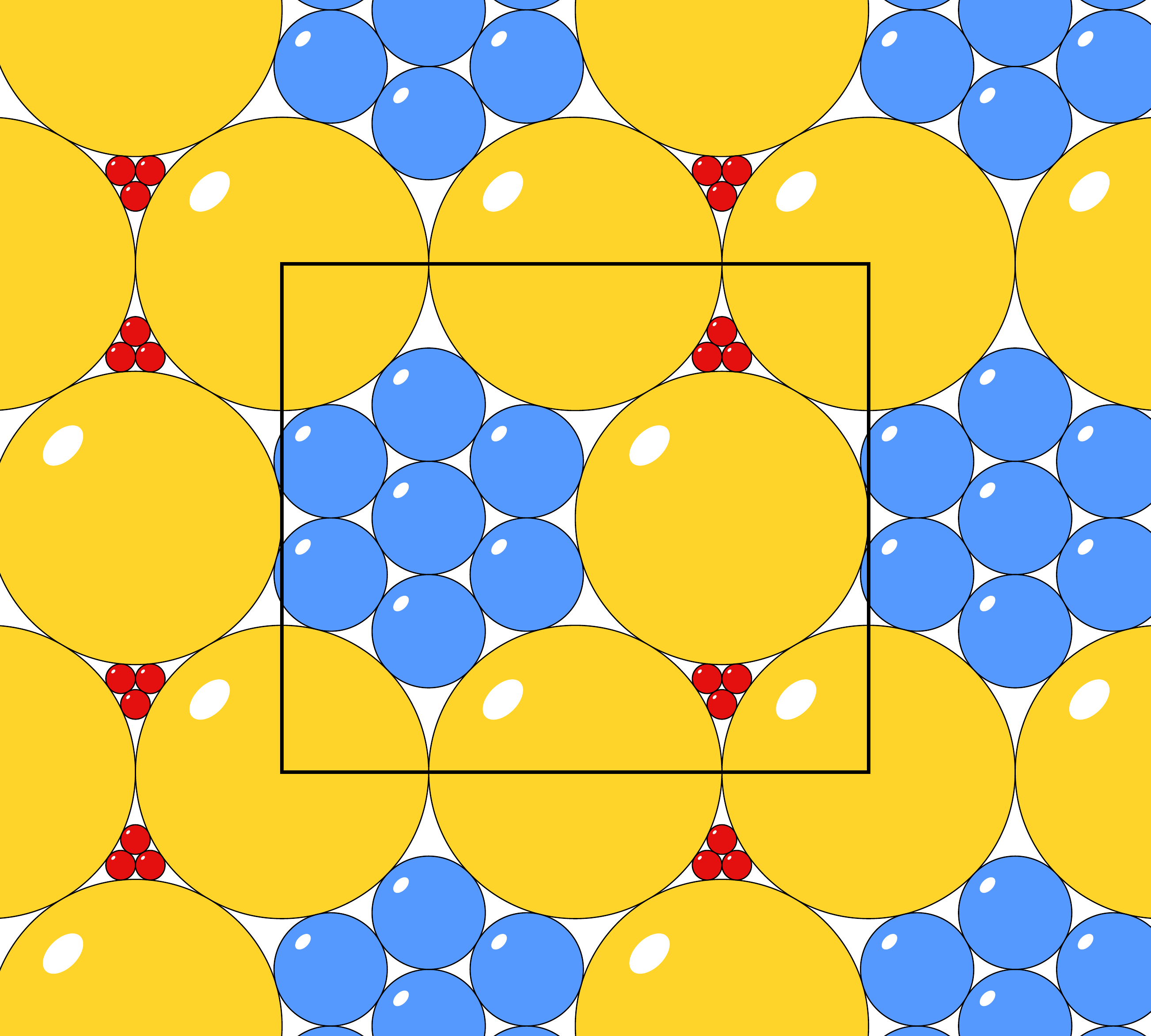}
\end{tabular}
\noindent
\begin{tabular}{lll}
  16\hfill 11sss / 1111 & 17\hfill 11sss / 111rr & 18\hfill 11sss / 11r1r\\
  \includegraphics[width=0.3\textwidth]{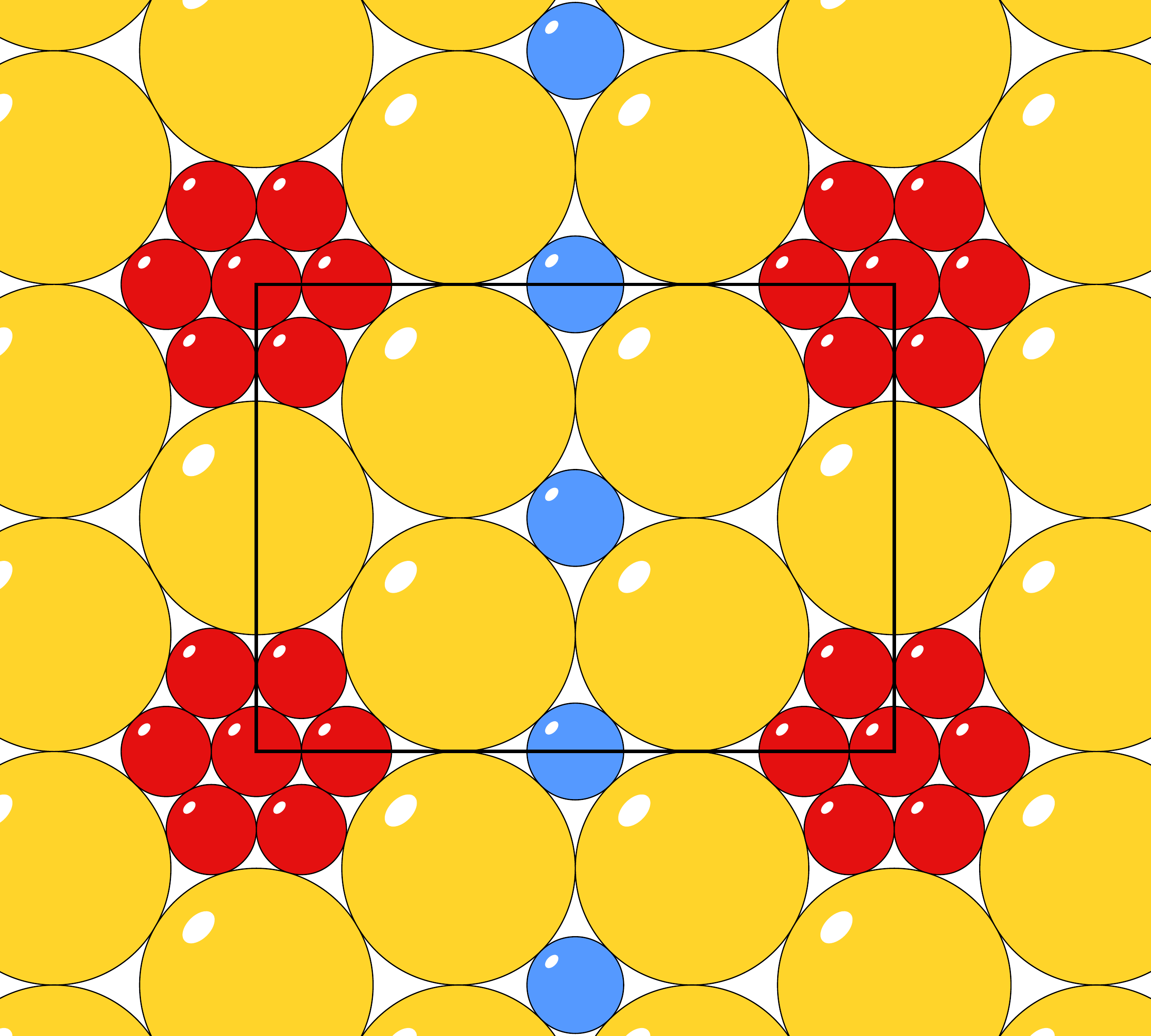} &
  \includegraphics[width=0.3\textwidth]{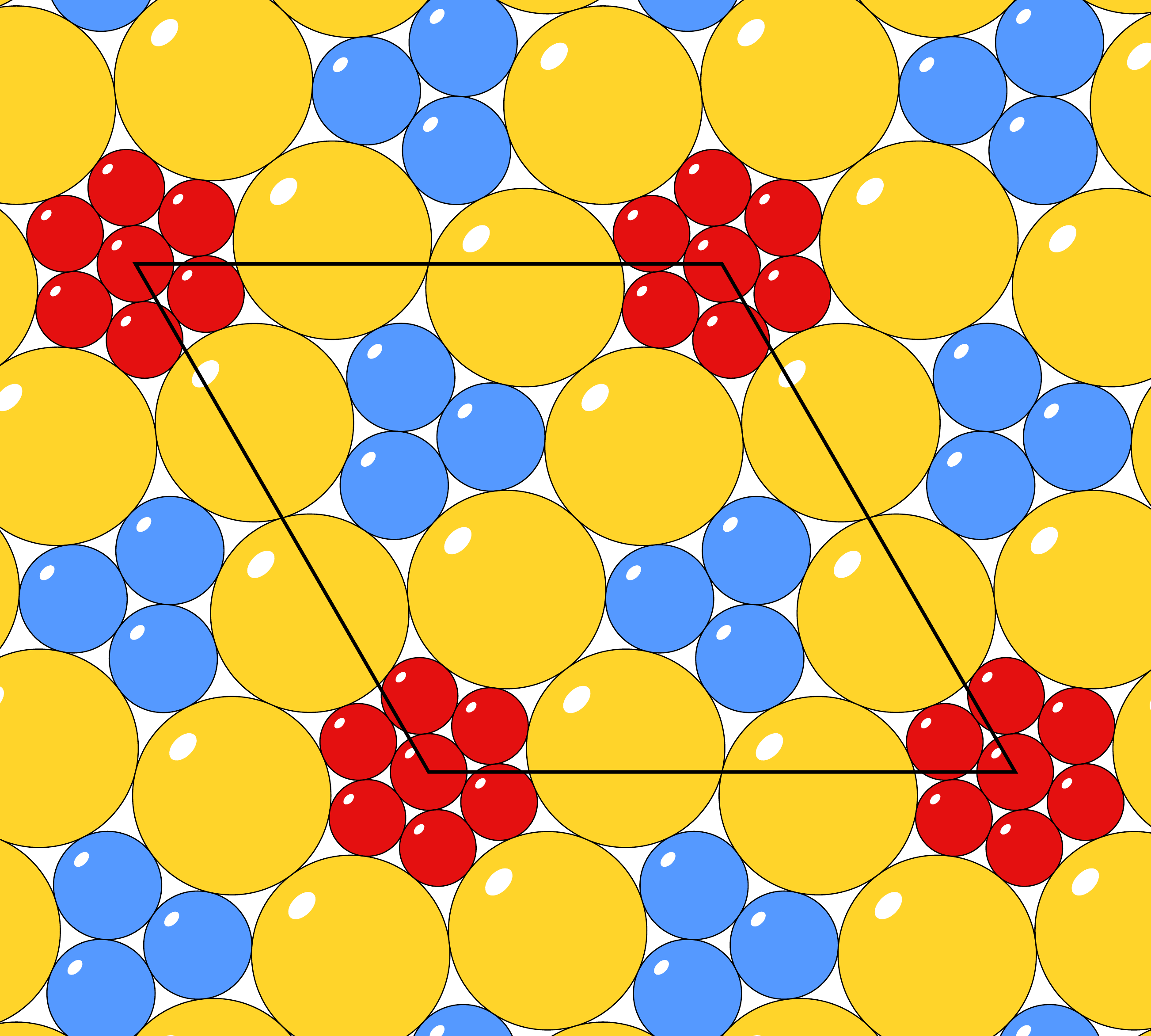} &
  \includegraphics[width=0.3\textwidth]{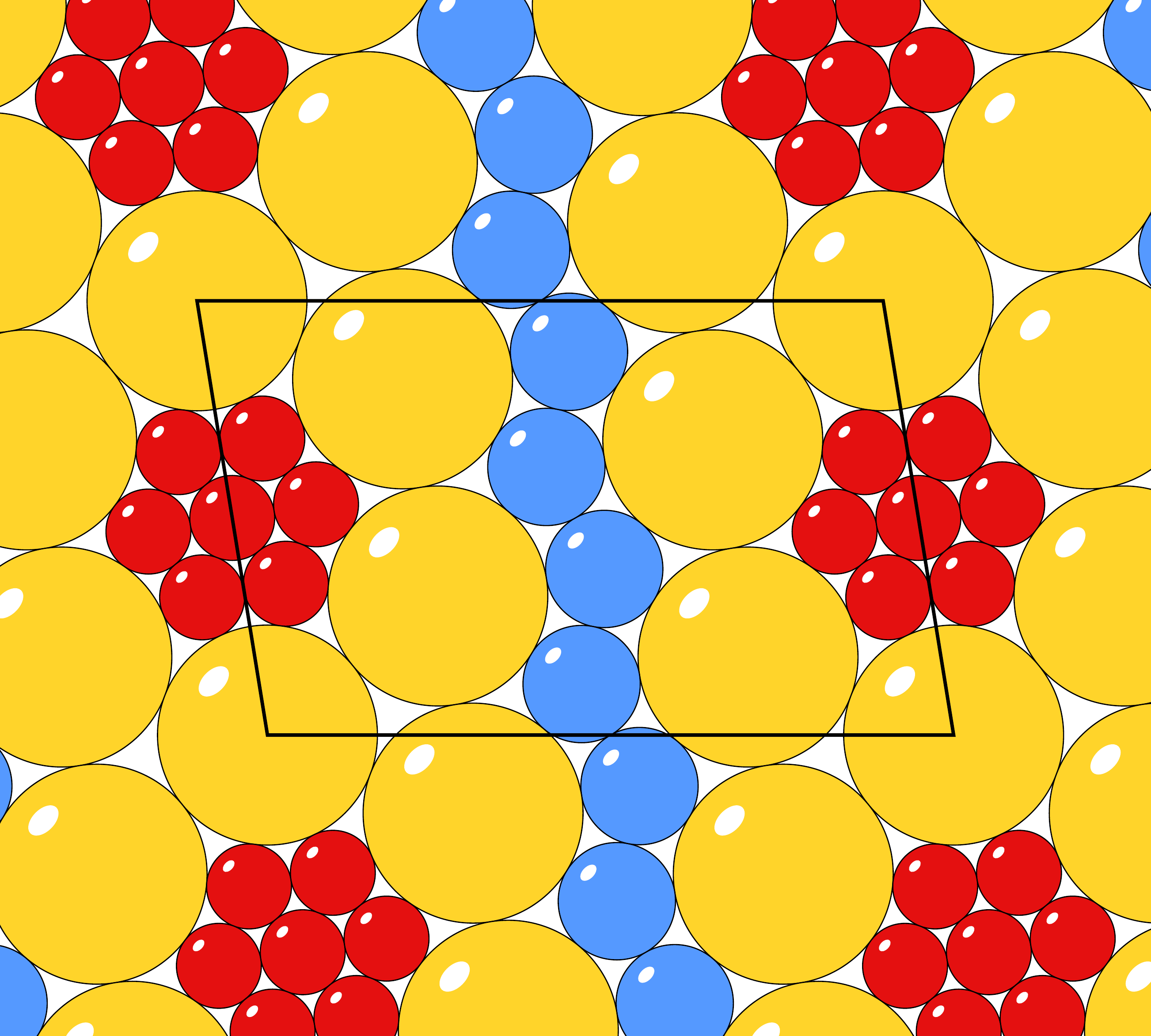}
\end{tabular}
\noindent
\begin{tabular}{lll}
  19\hfill 1srrs / rrsrsss & 20\hfill 11r / 1r1s1r & 21\hfill 11r / 1r1s1s\\
  \includegraphics[width=0.3\textwidth]{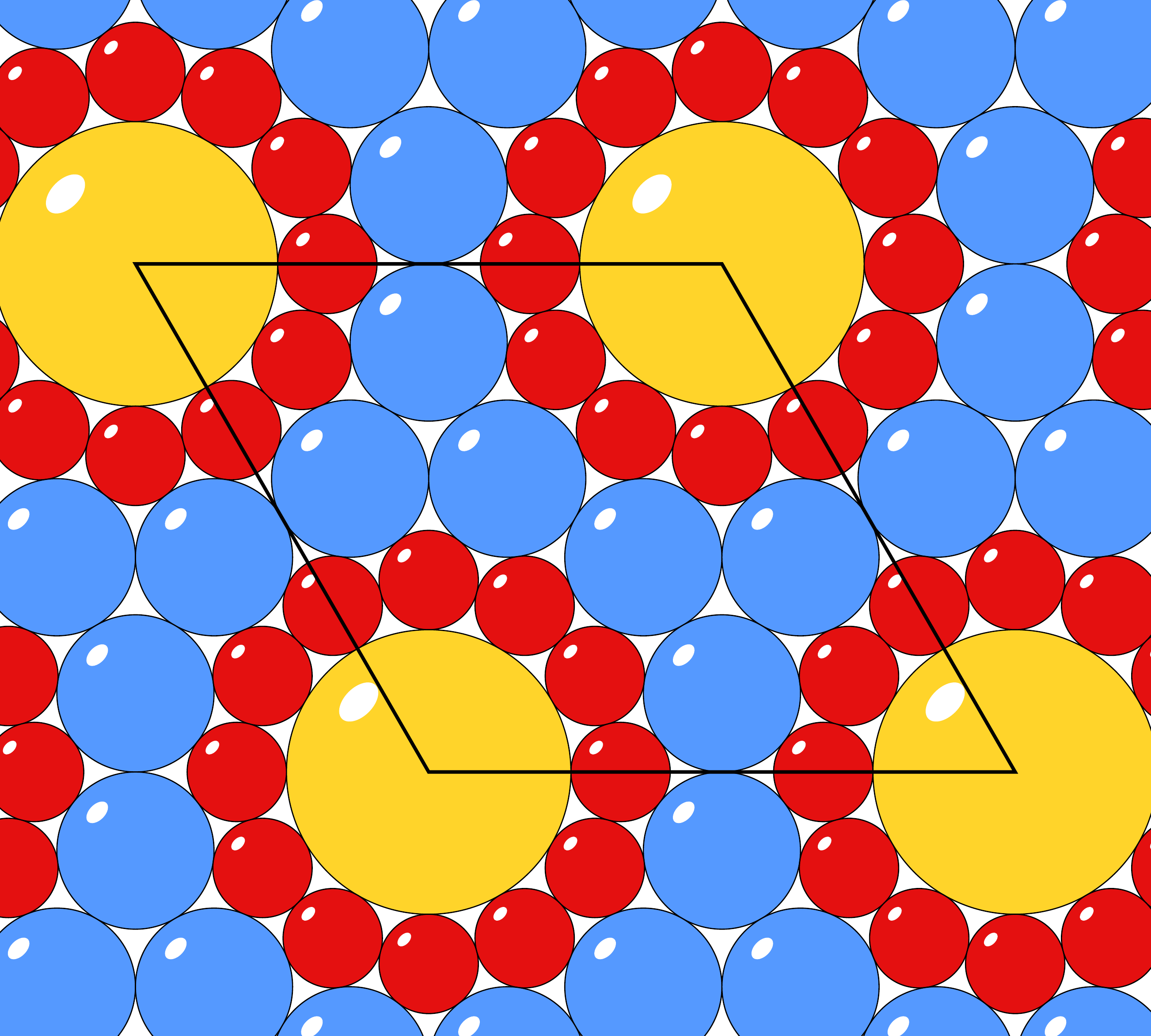} &
  \includegraphics[width=0.3\textwidth]{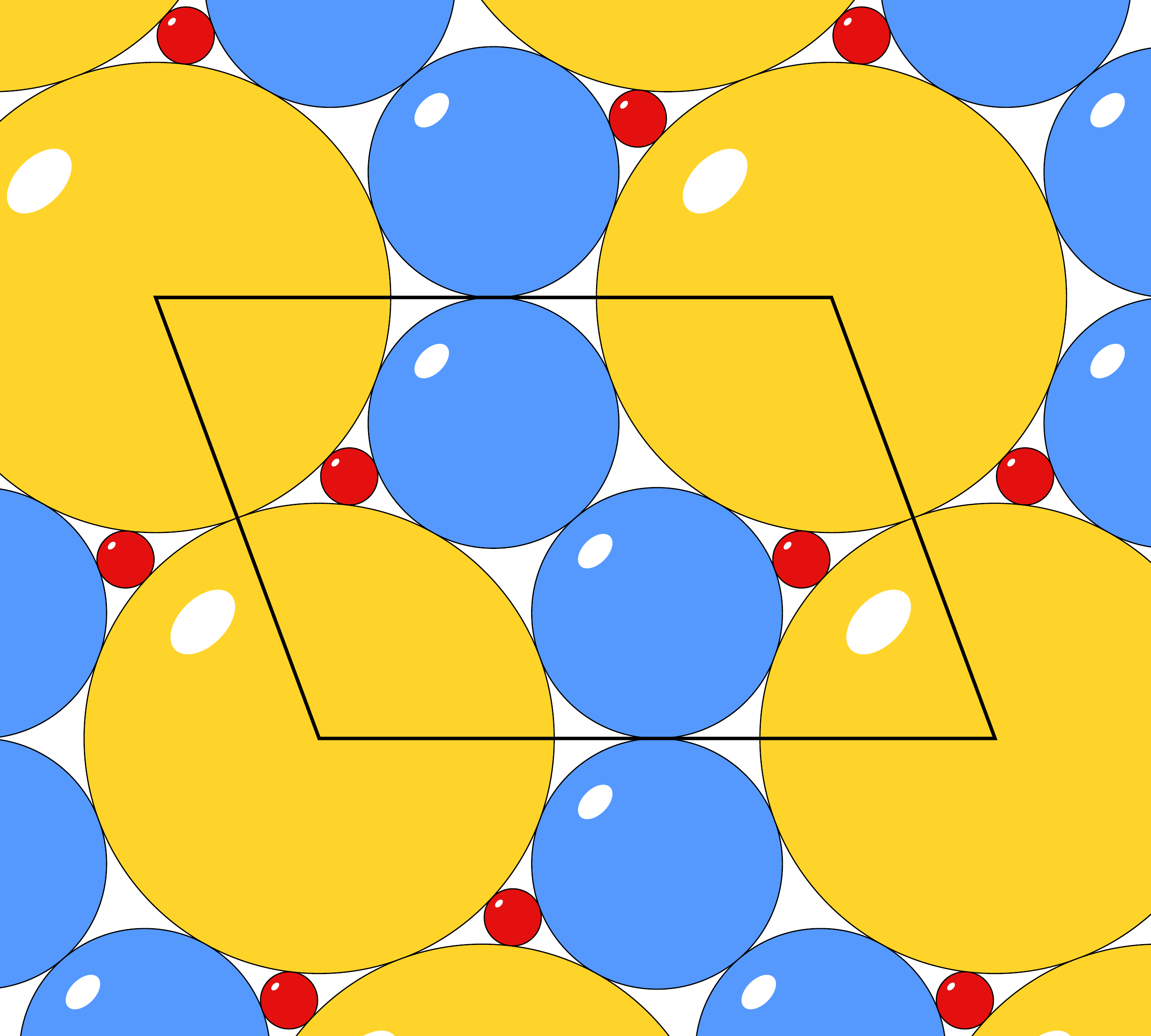} &
  \includegraphics[width=0.3\textwidth]{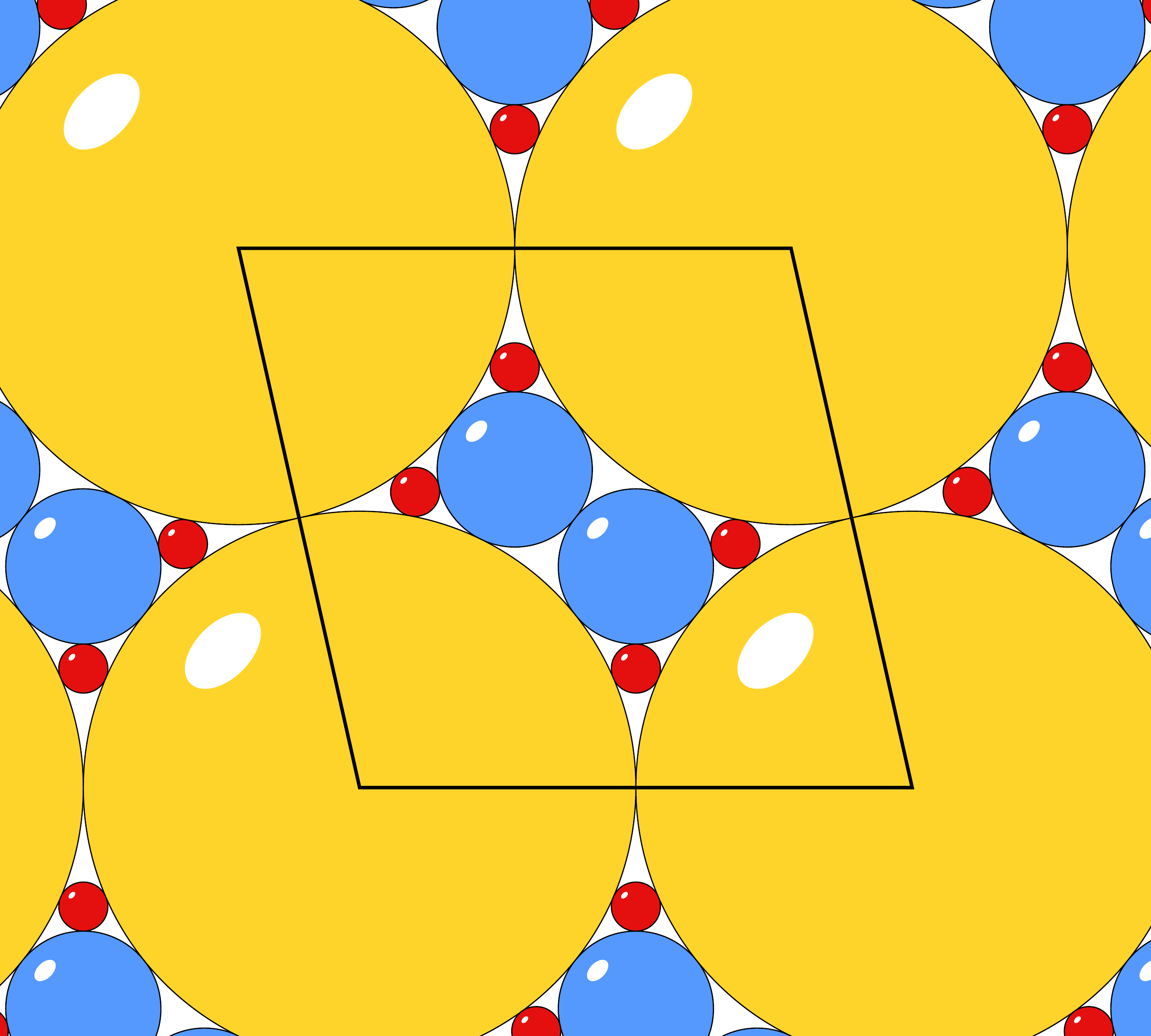}
\end{tabular}
\noindent
\begin{tabular}{lll}
  22\hfill 11r / 1r1s1s1s & 23\hfill 11r / 1rr1s & 24\hfill 11r / 1rr1s1s\\
  \includegraphics[width=0.3\textwidth]{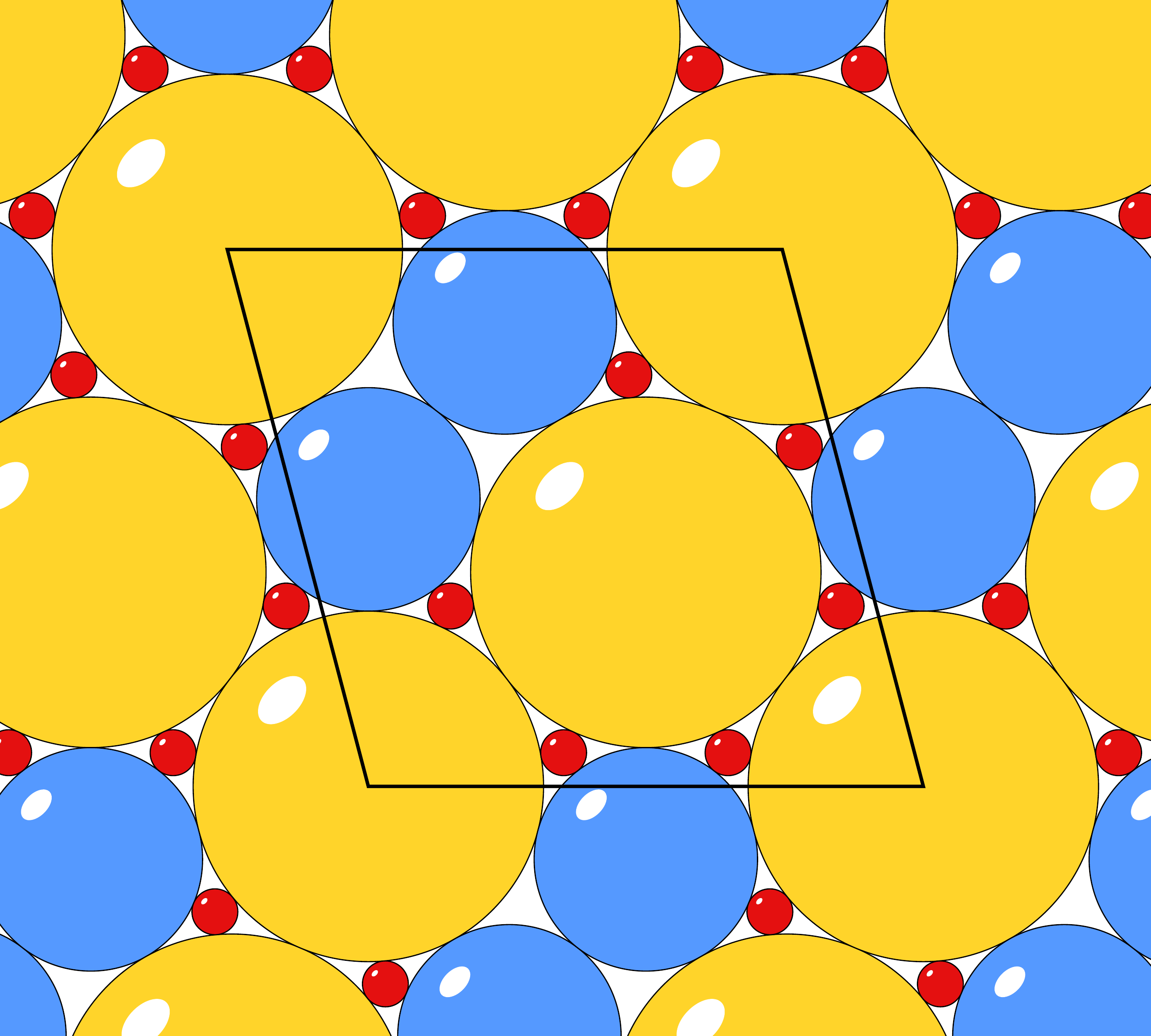} &
  \includegraphics[width=0.3\textwidth]{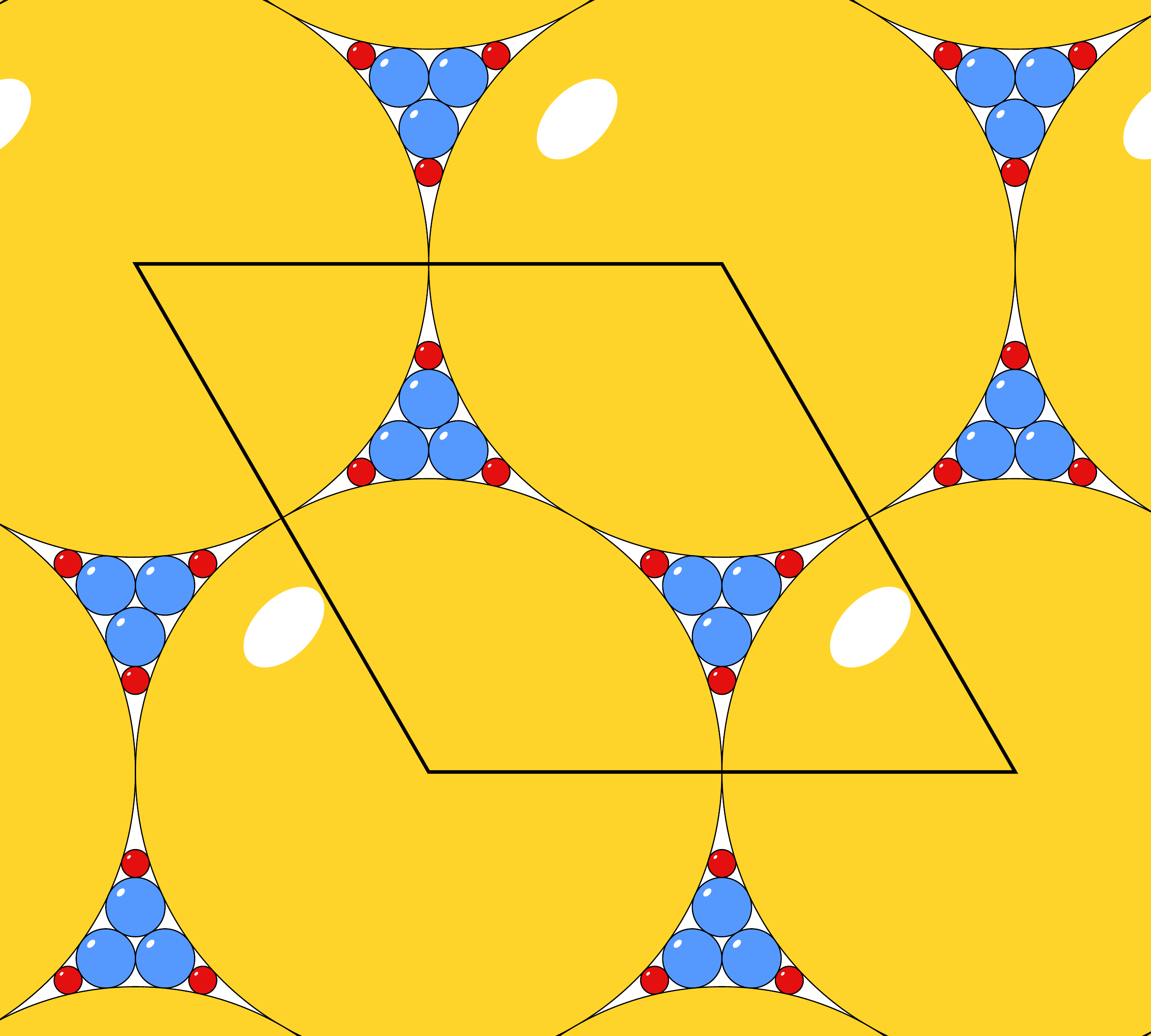} &
  \includegraphics[width=0.3\textwidth]{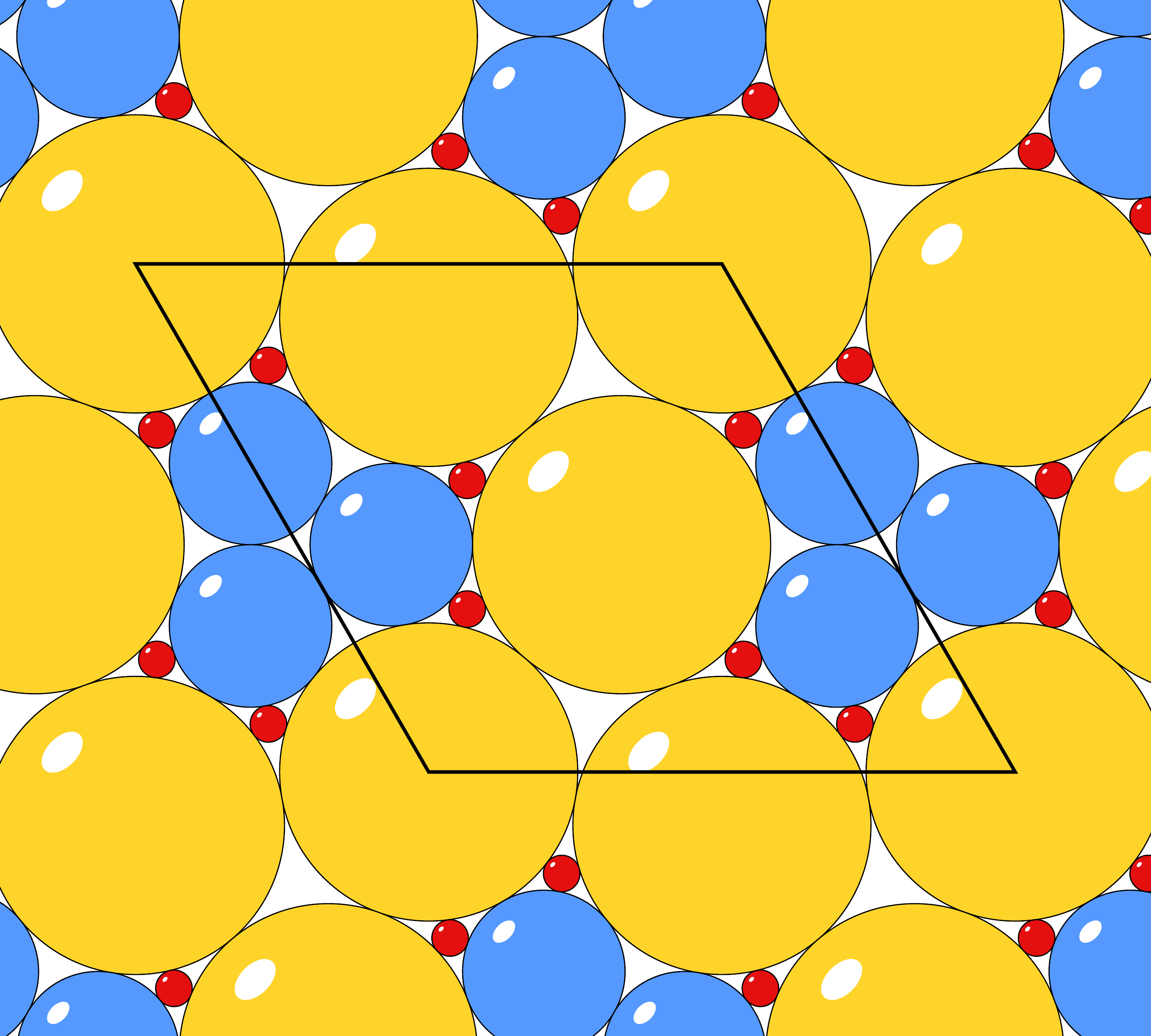}
\end{tabular}
\noindent
\begin{tabular}{lll}
  25\hfill 11r / 1rrr1s & 26\hfill 11r / 1s1s1s & 27\hfill 11r / 1s1s1s1s\\
  \includegraphics[width=0.3\textwidth]{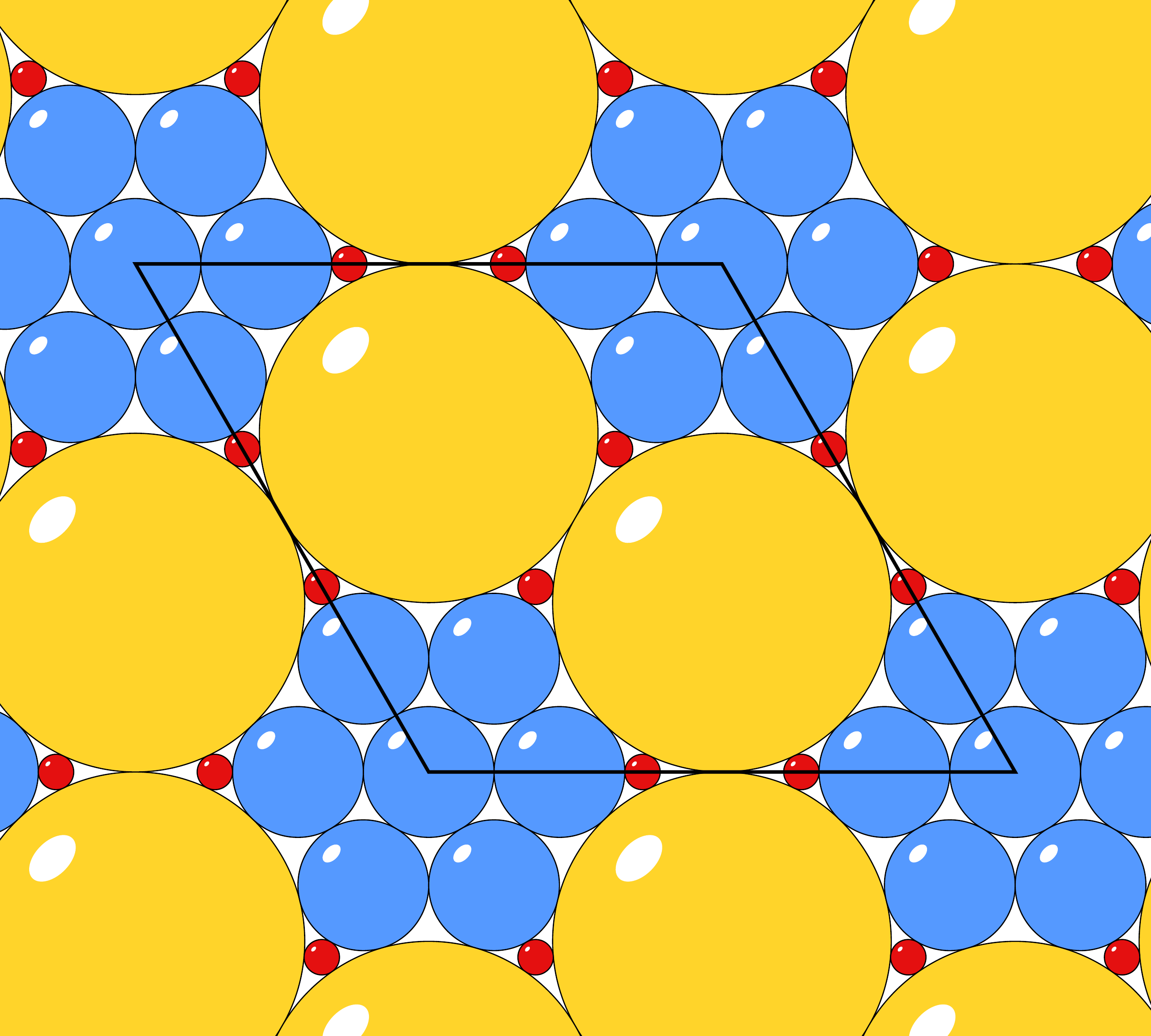} &
  \includegraphics[width=0.3\textwidth]{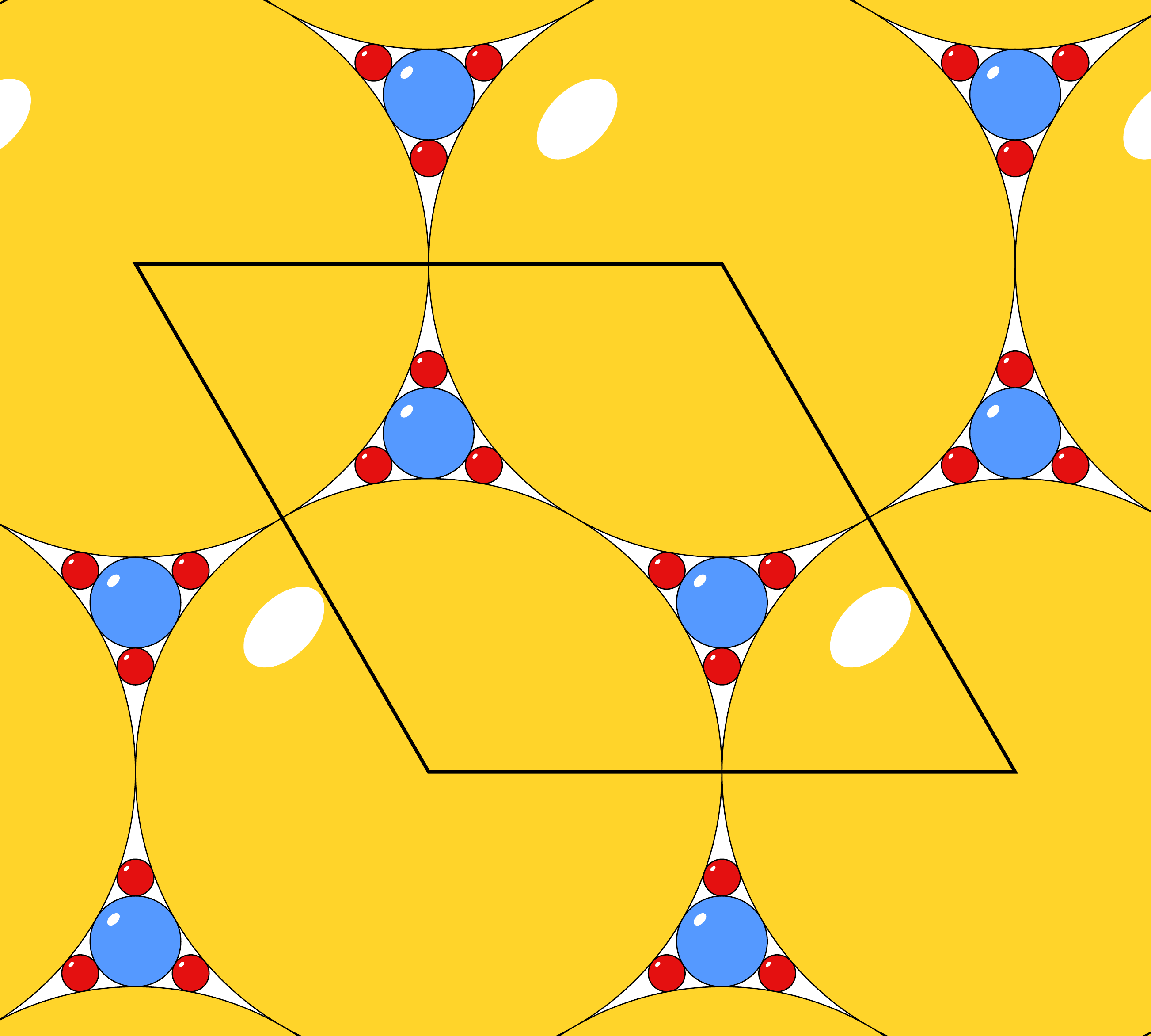} &
  \includegraphics[width=0.3\textwidth]{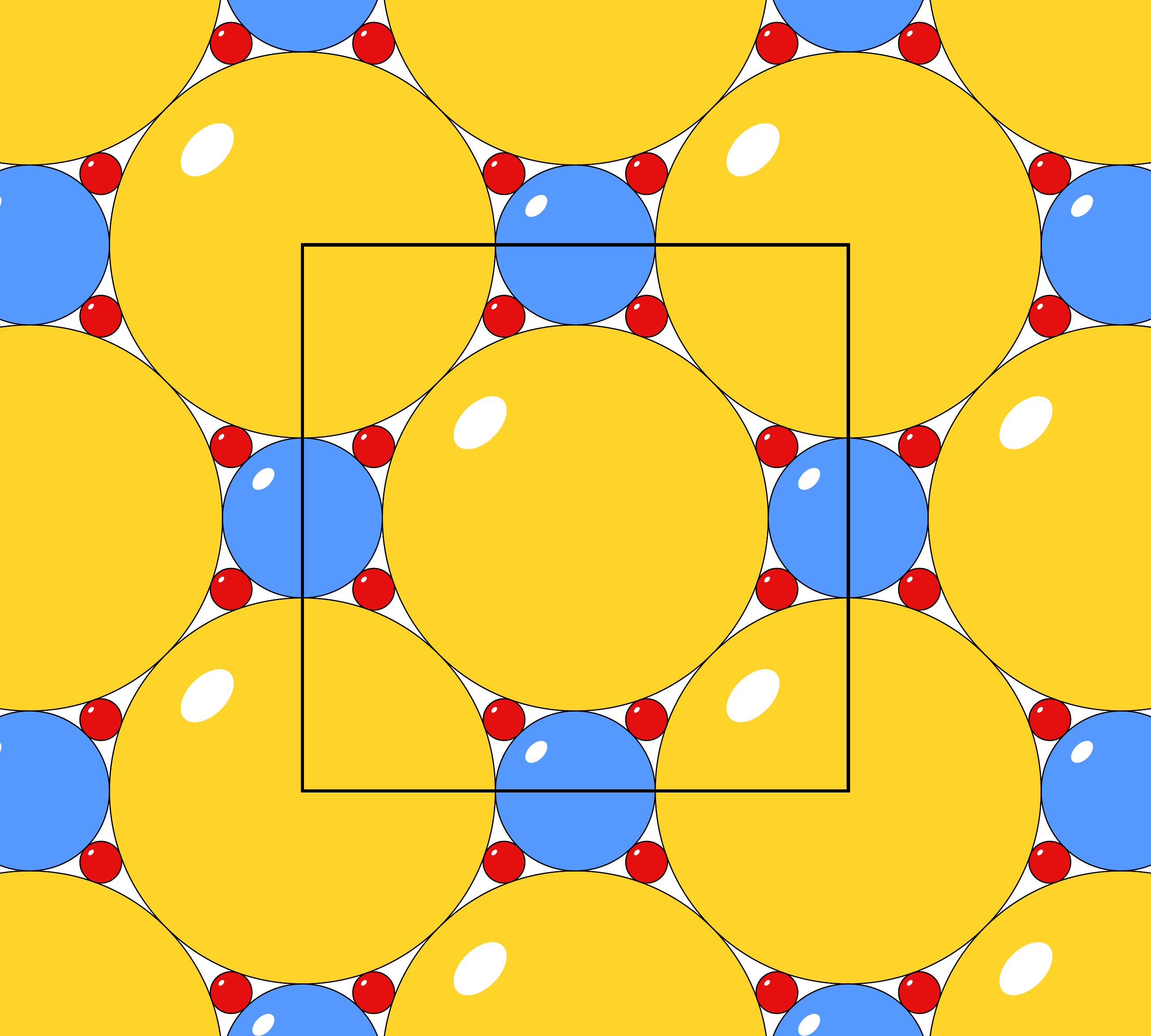}
\end{tabular}
\noindent
\begin{tabular}{lll}
  28\hfill 1rr / 1111srs & 29\hfill 1rr / 111srrs & 30\hfill 1rr / 111srs\\
  \includegraphics[width=0.3\textwidth]{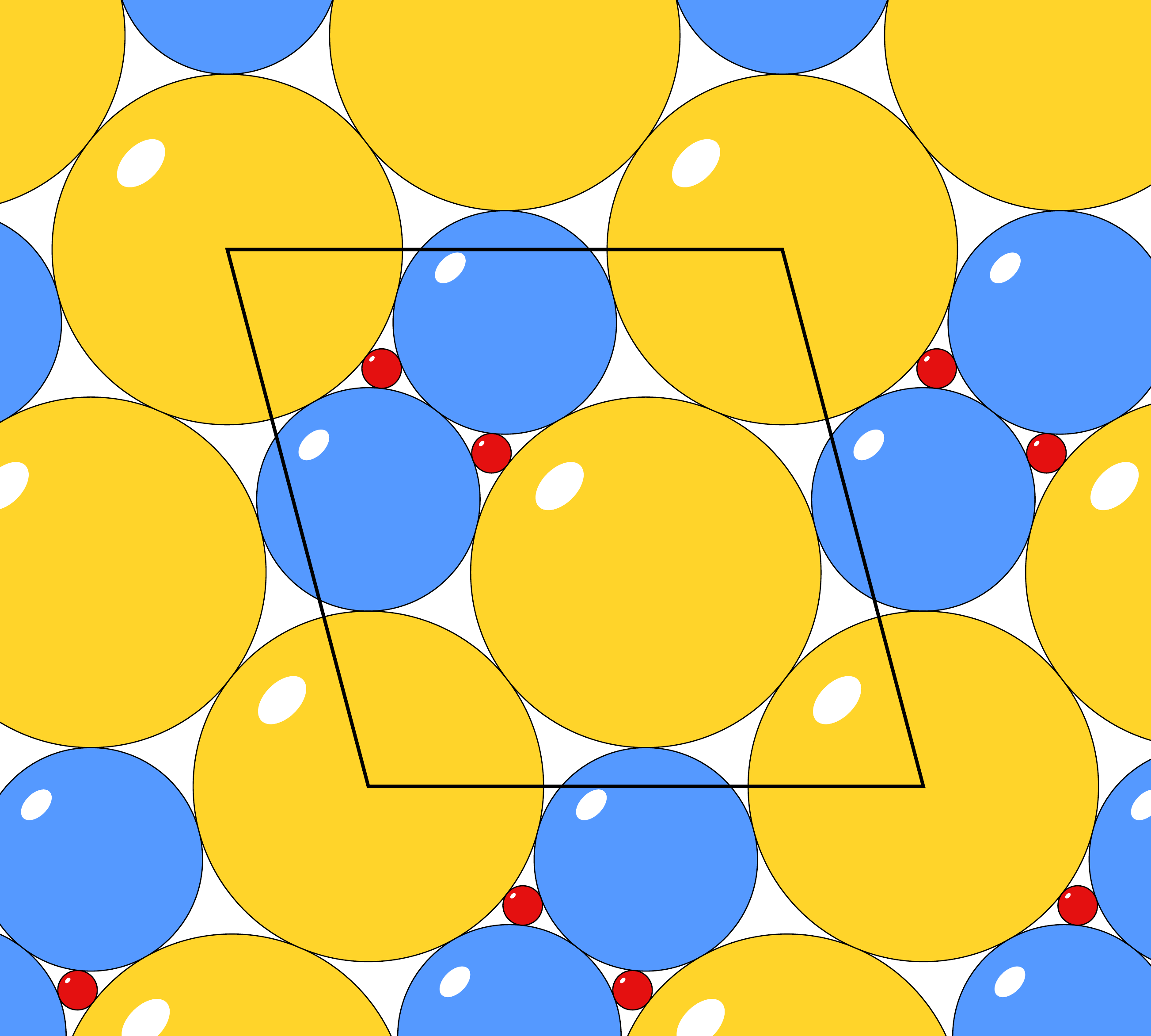} &
  \includegraphics[width=0.3\textwidth]{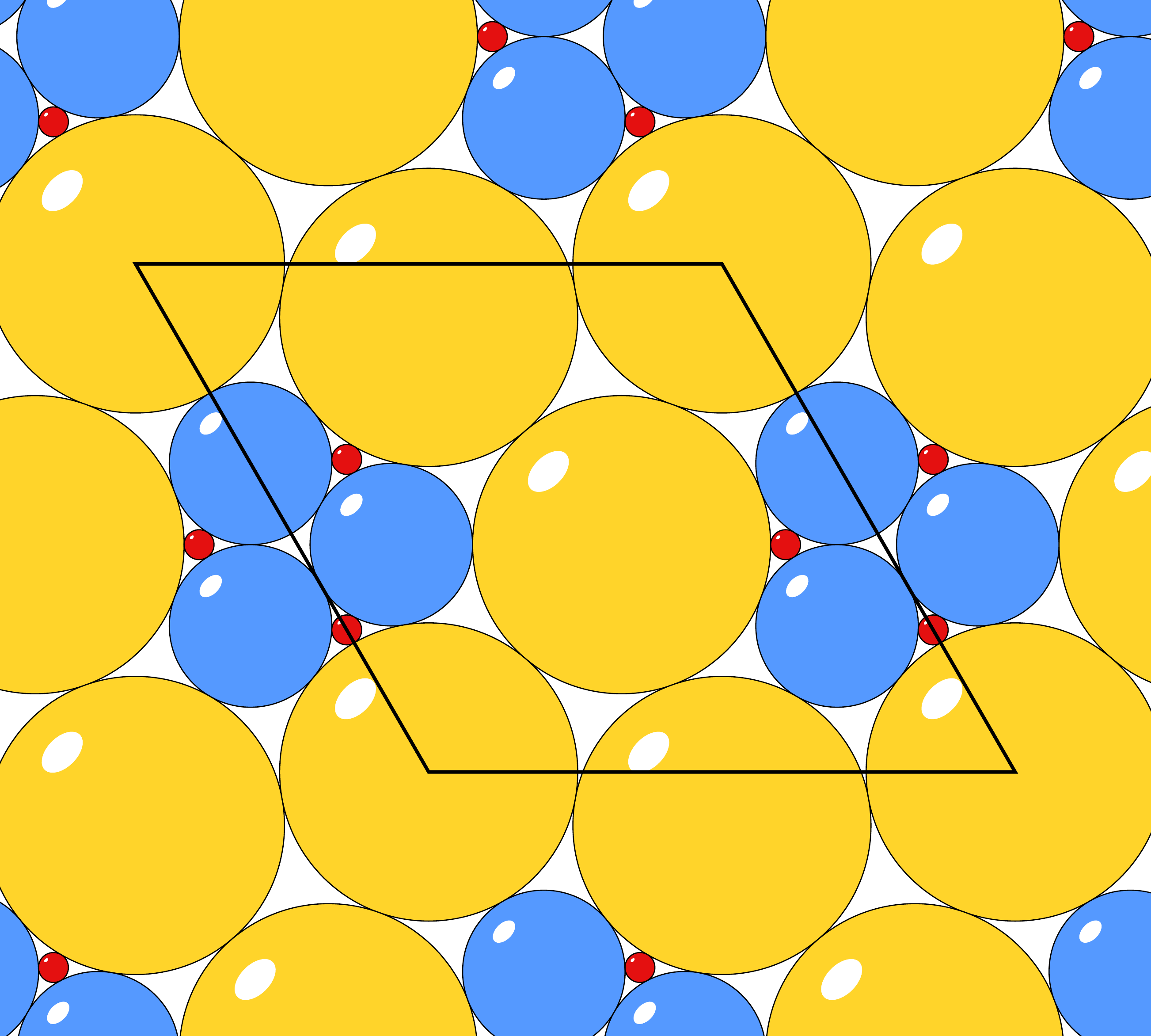} &
  \includegraphics[width=0.3\textwidth]{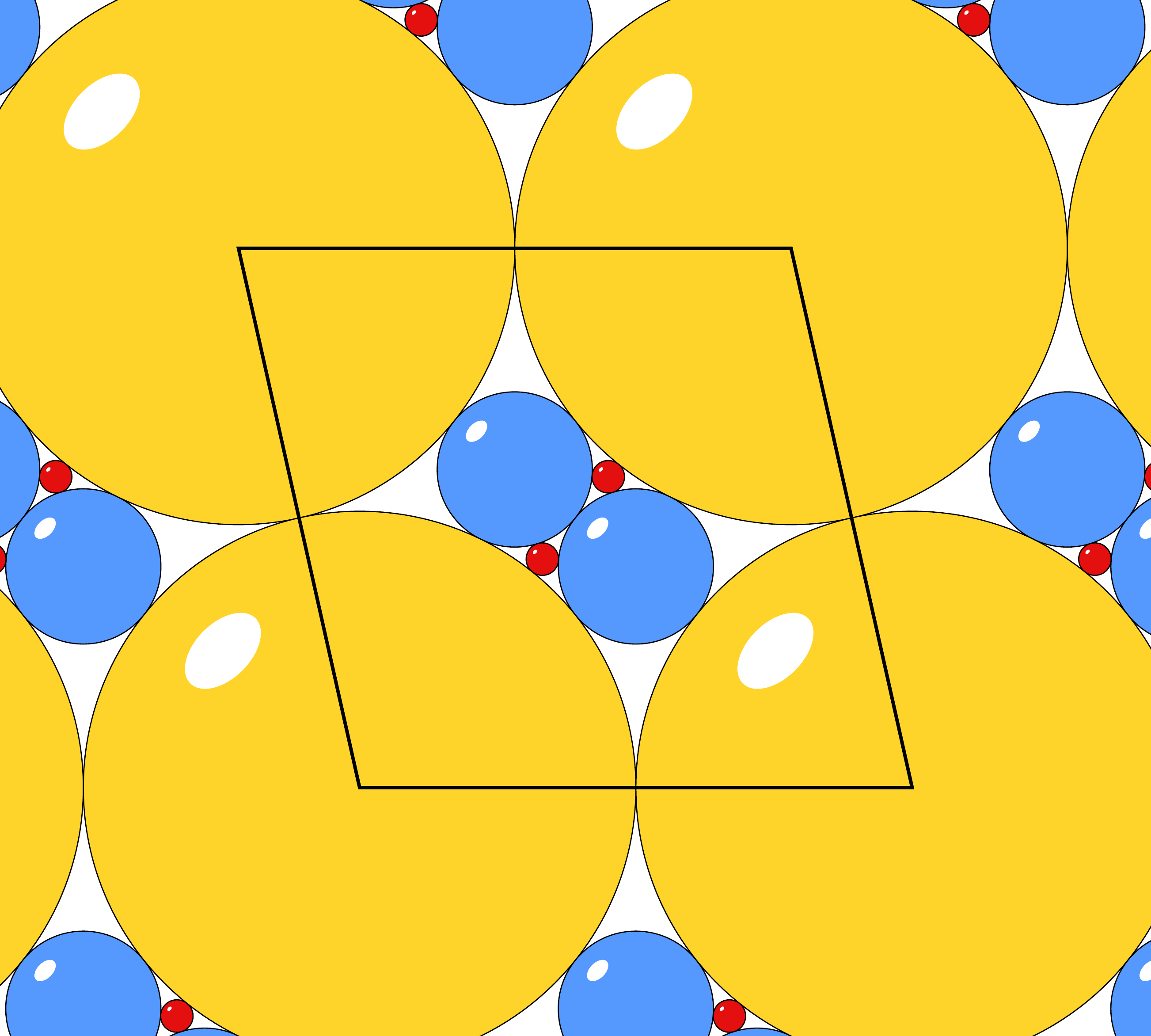}
\end{tabular}
\noindent
\begin{tabular}{lll}
  31\hfill 1rr / 11srrrs & 32\hfill 1rr / 11srrs & 33\hfill 1rr / 11srs1srs\\
  \includegraphics[width=0.3\textwidth]{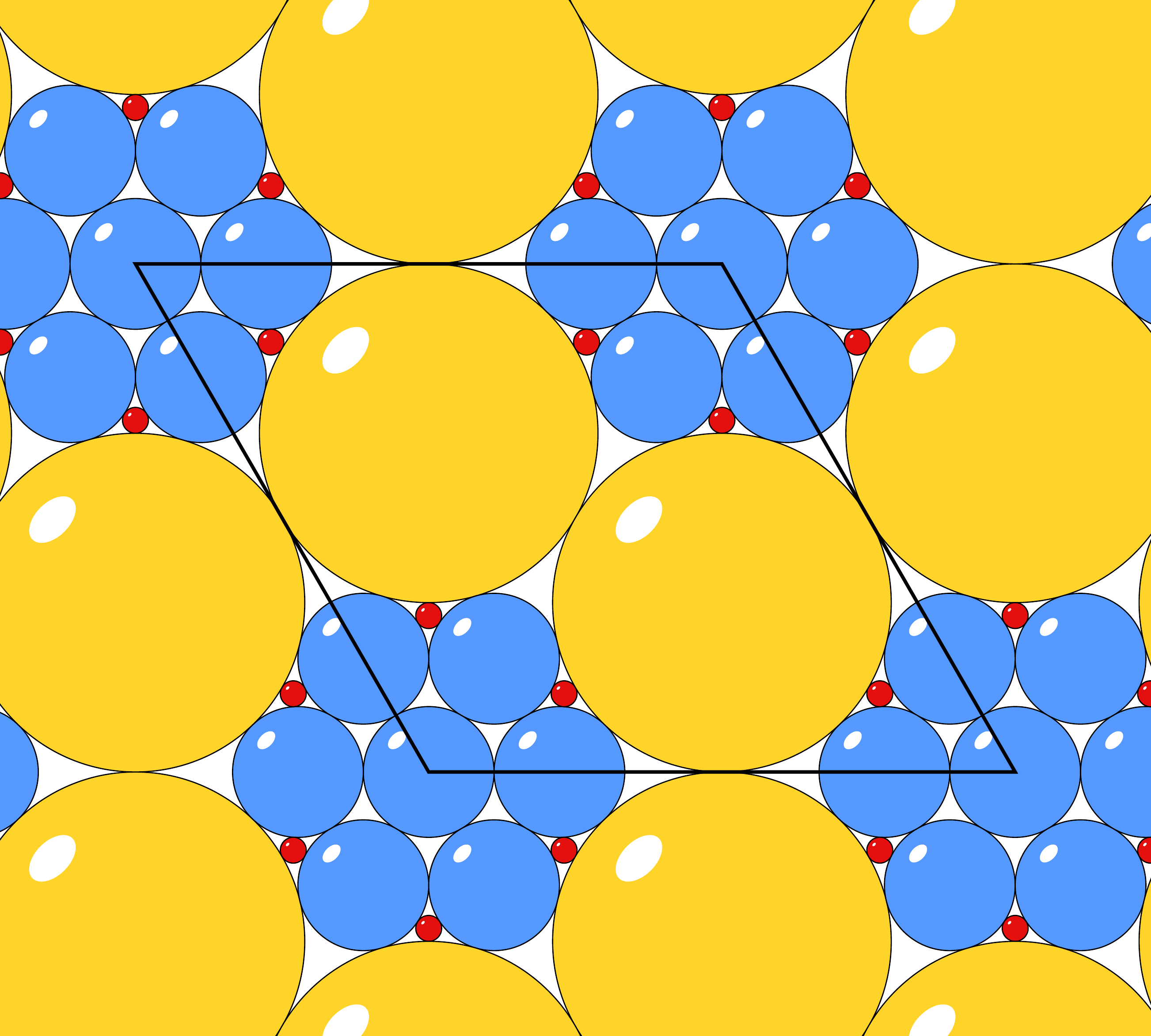} &
  \includegraphics[width=0.3\textwidth]{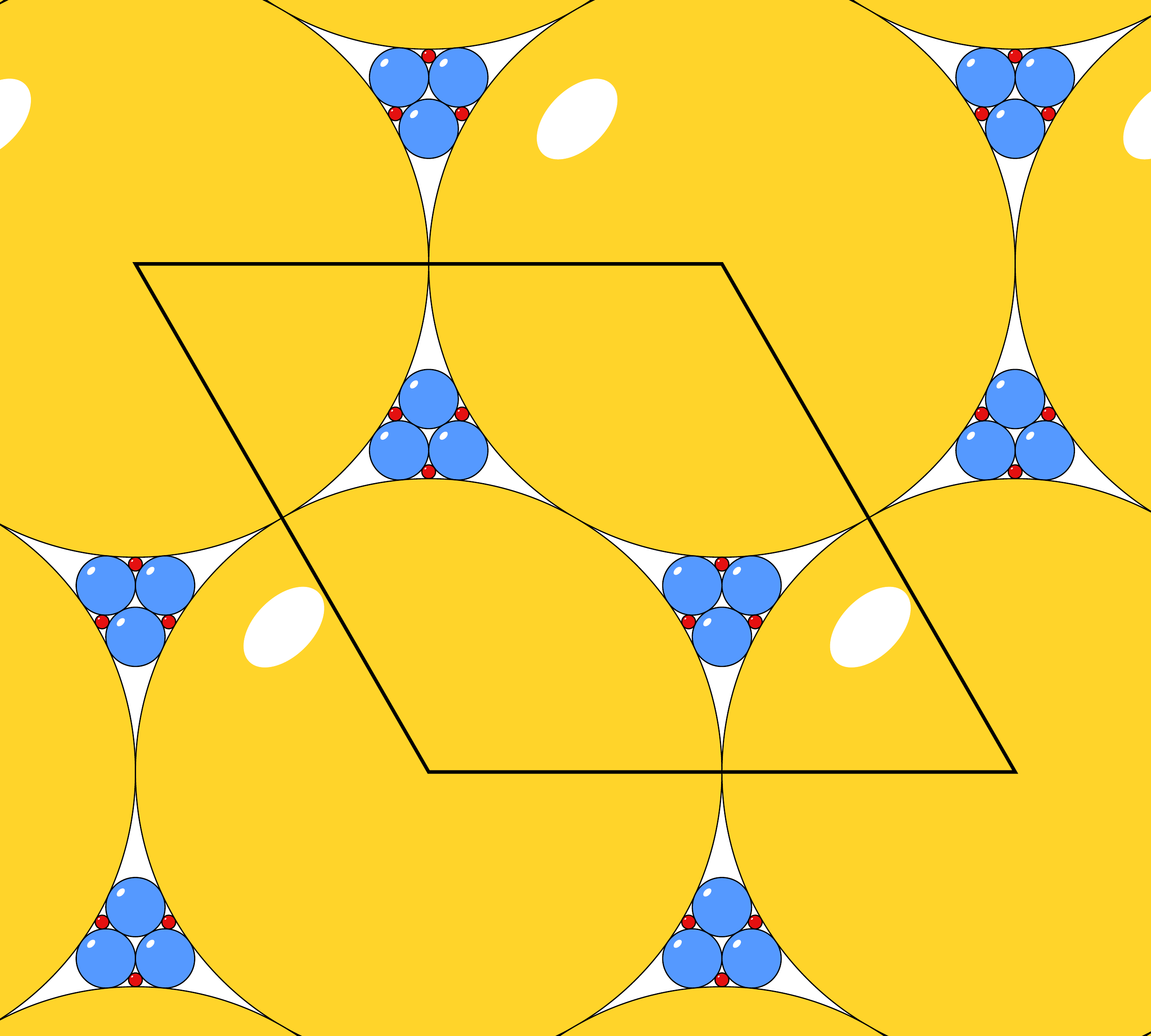} &
  \includegraphics[width=0.3\textwidth]{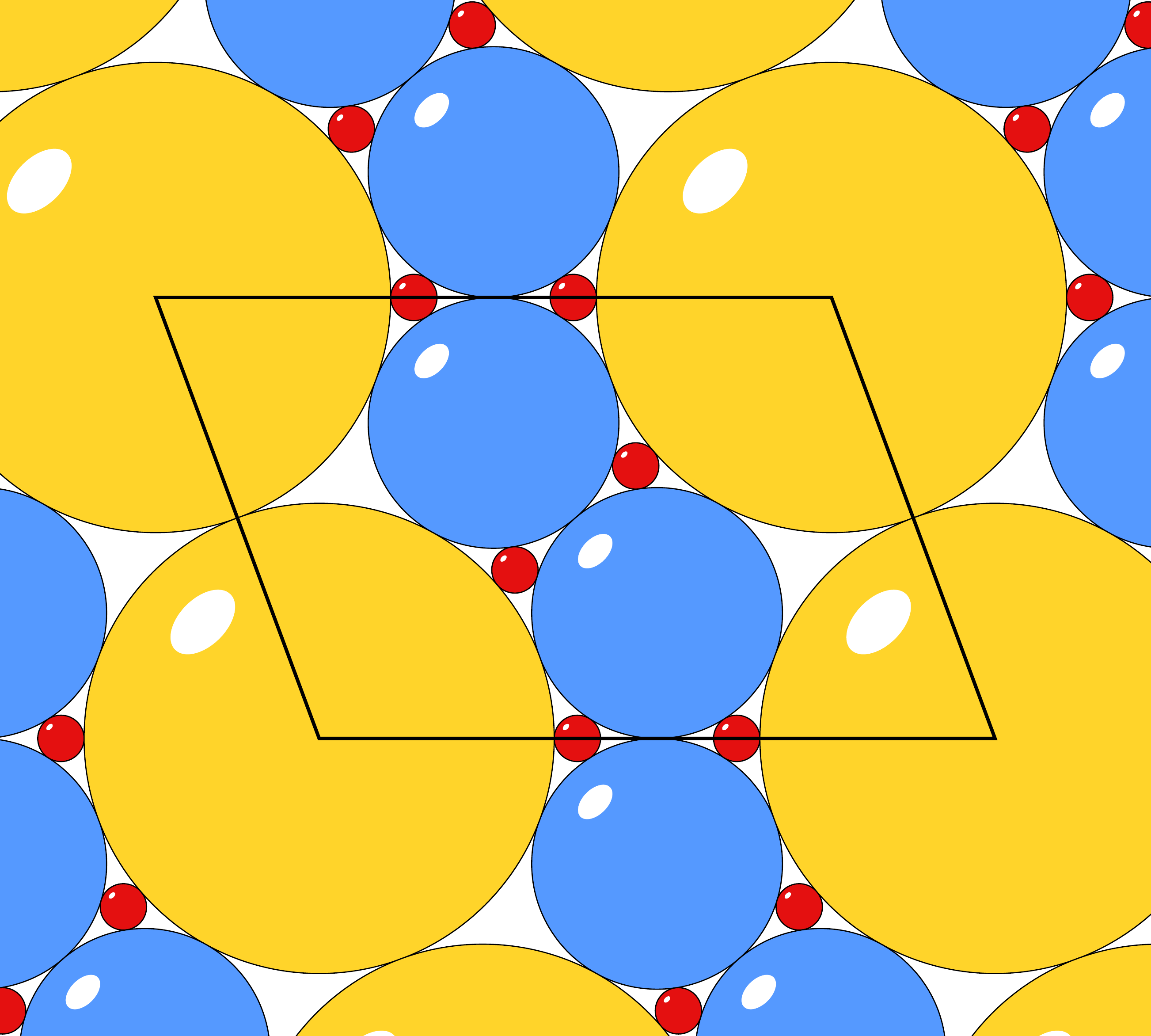}
\end{tabular}
\noindent
\begin{tabular}{lll}
  34\hfill 1rr / 1srrs1srs & 35\hfill 1rssr / 11ss & 36\hfill 1rsss / 11ss\\
  \includegraphics[width=0.3\textwidth]{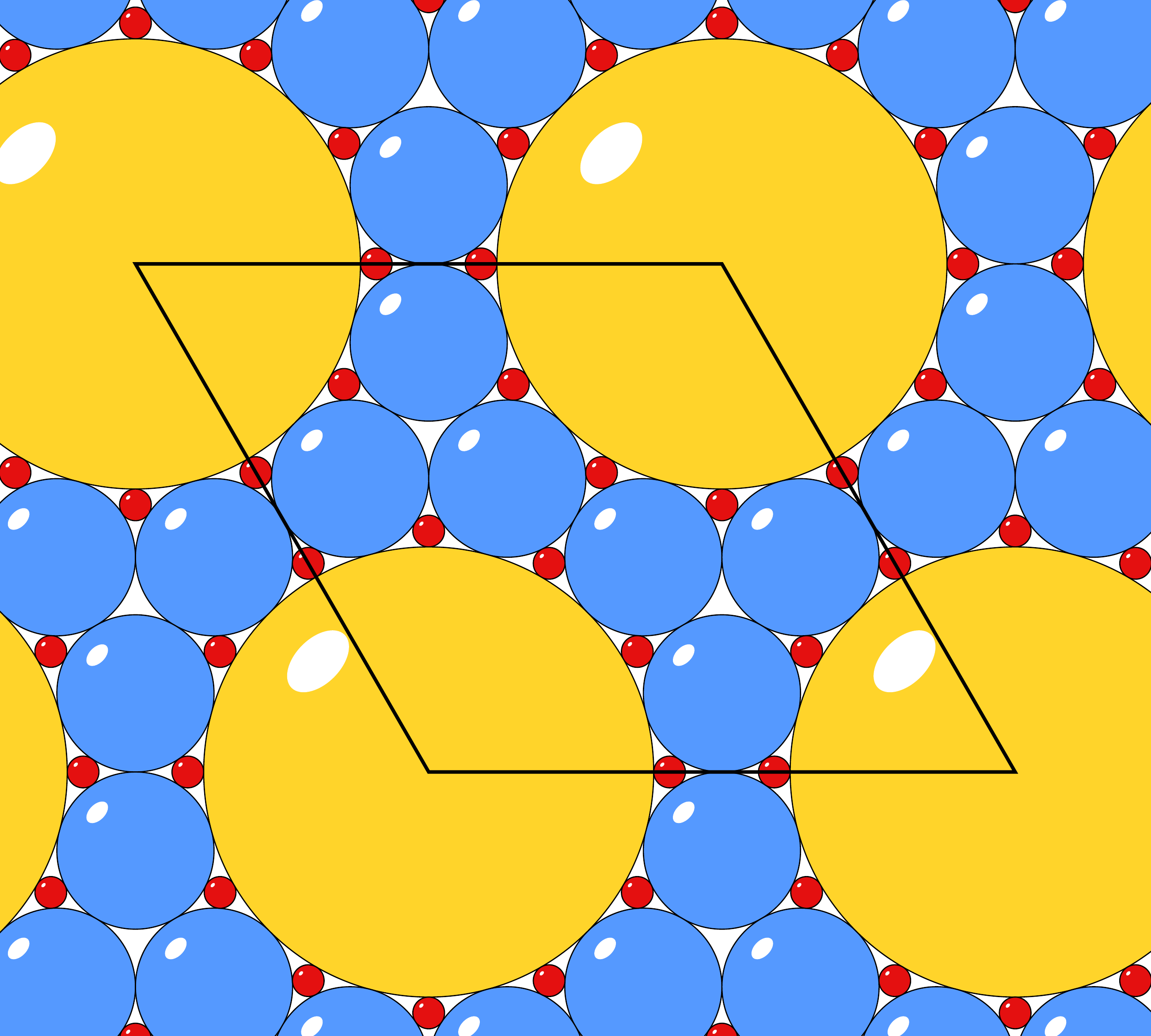} &
  \includegraphics[width=0.3\textwidth]{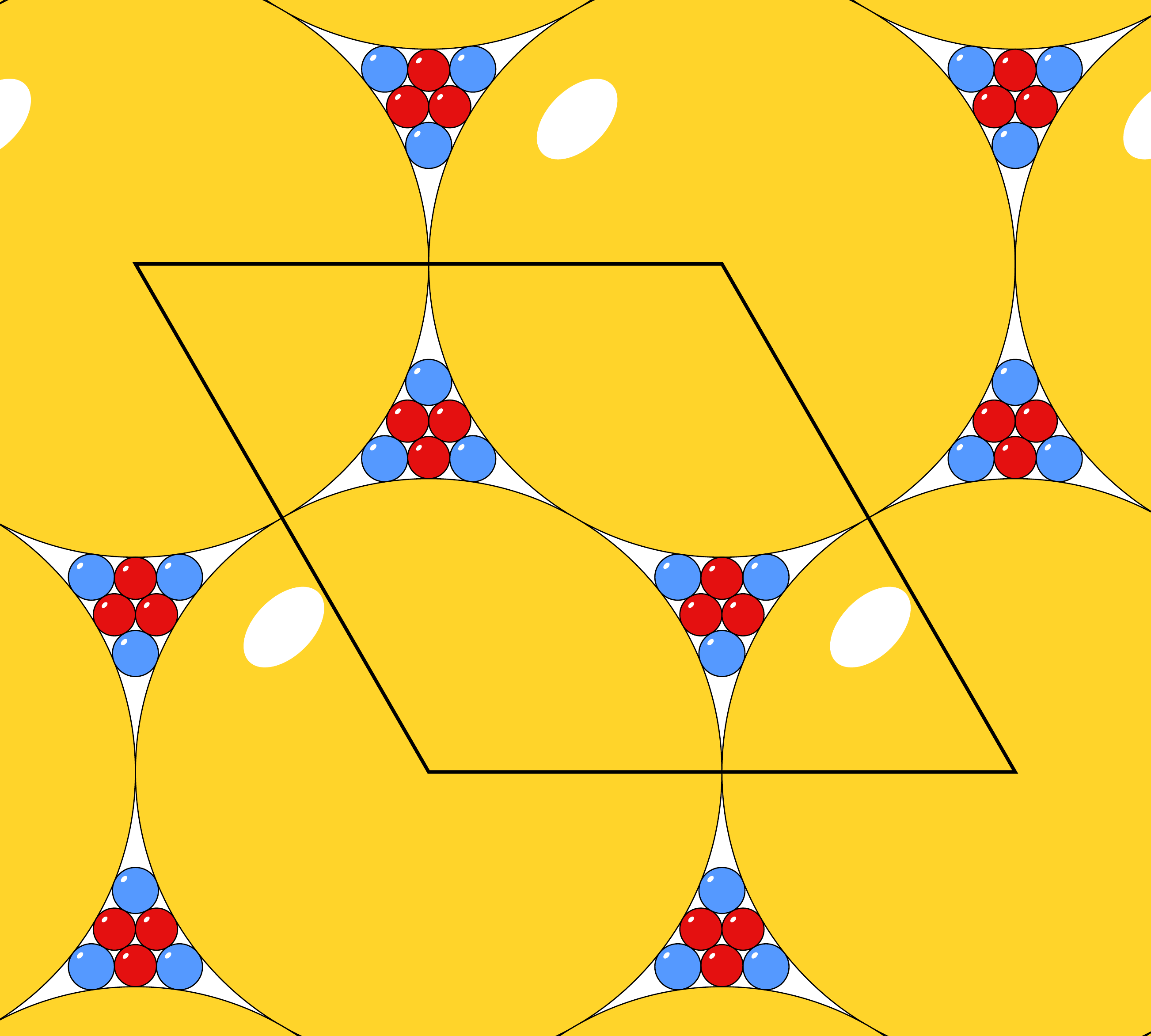} &
  \includegraphics[width=0.3\textwidth]{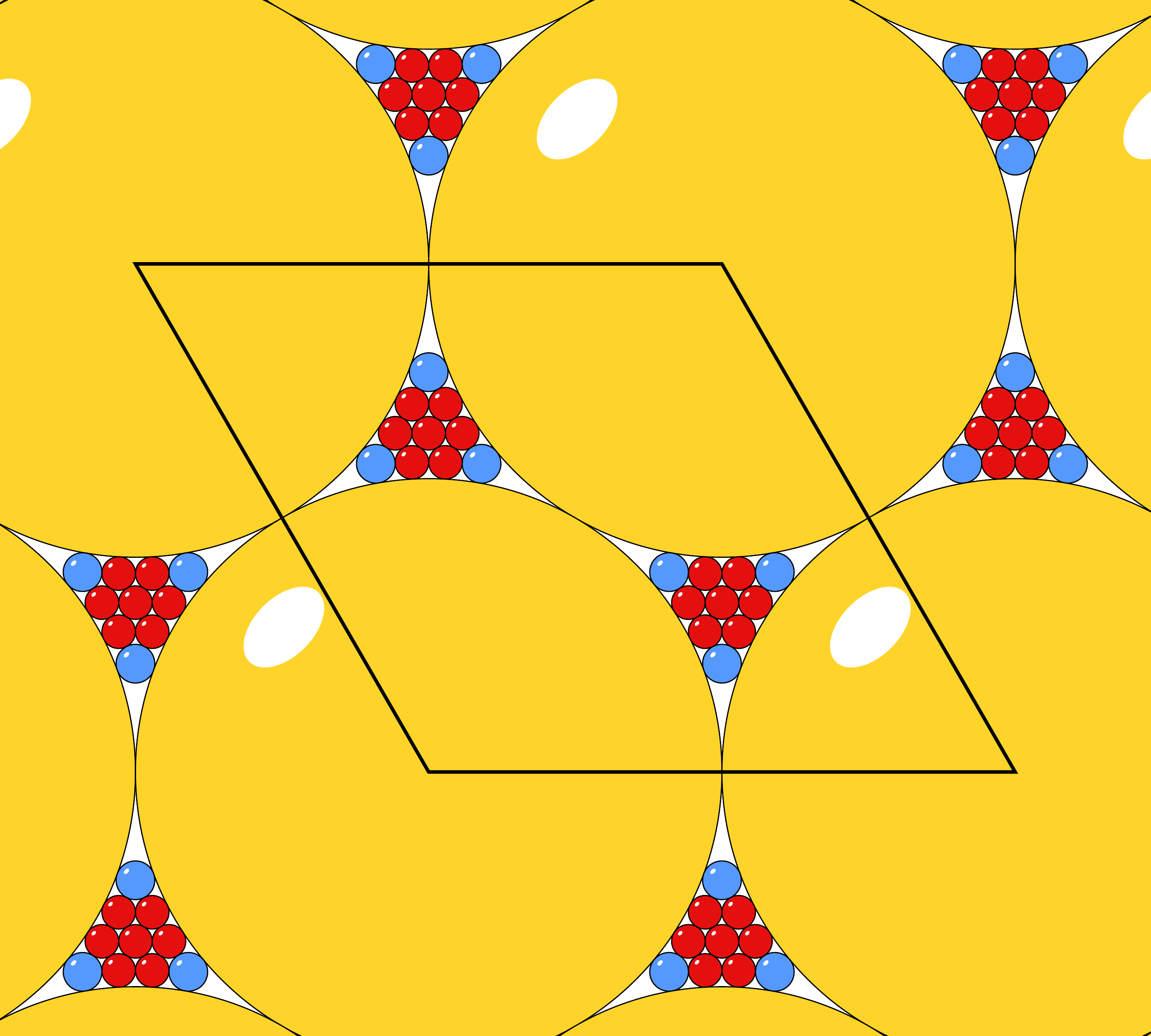}
\end{tabular}
\noindent
\begin{tabular}{lll}
  37\hfill rrr / 111rsr & 38\hfill rrr / 11rsr & 39\hfill rrr / 11rsrsr\\
  \includegraphics[width=0.3\textwidth]{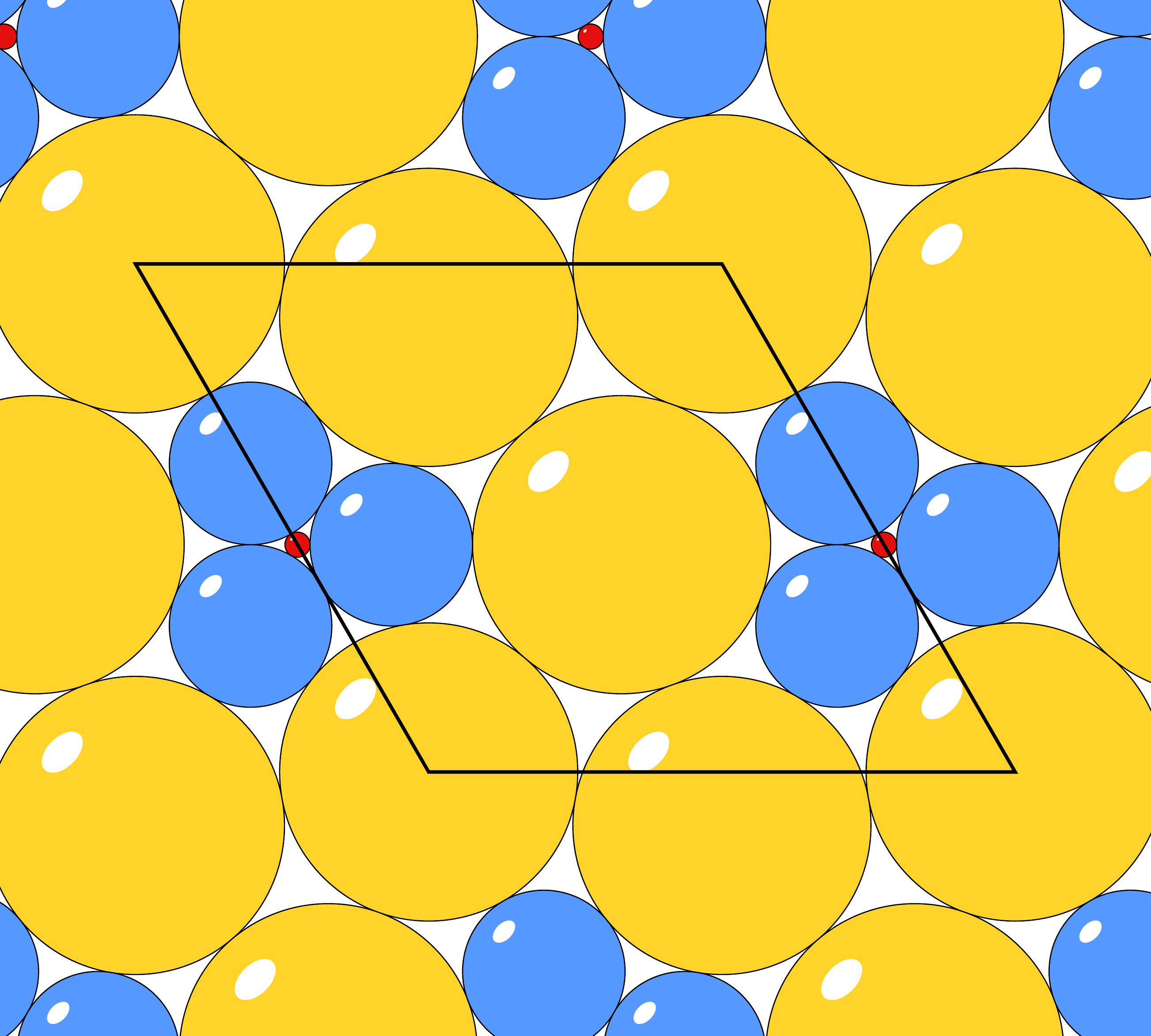} &
  \includegraphics[width=0.3\textwidth]{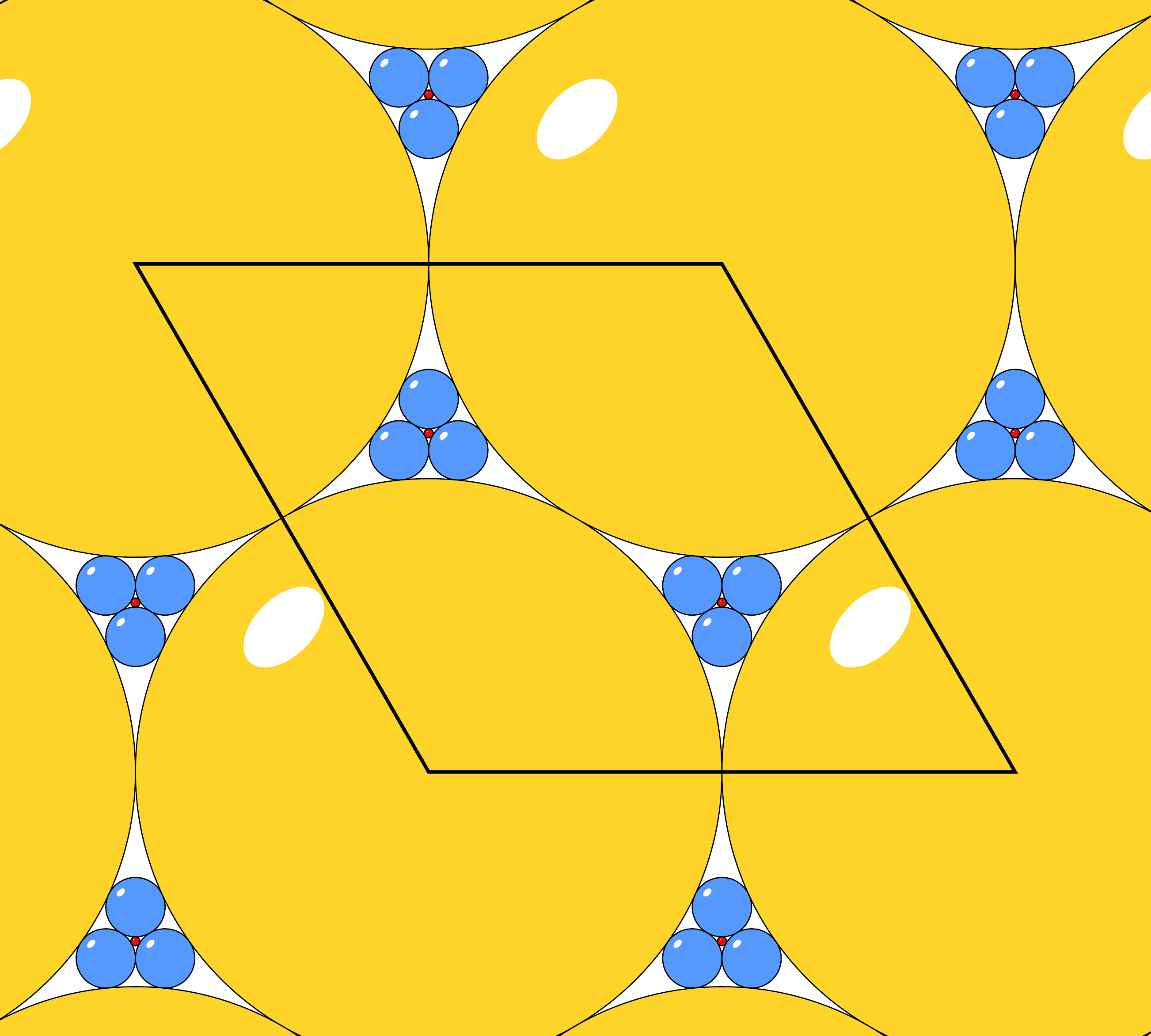} &
  \includegraphics[width=0.3\textwidth]{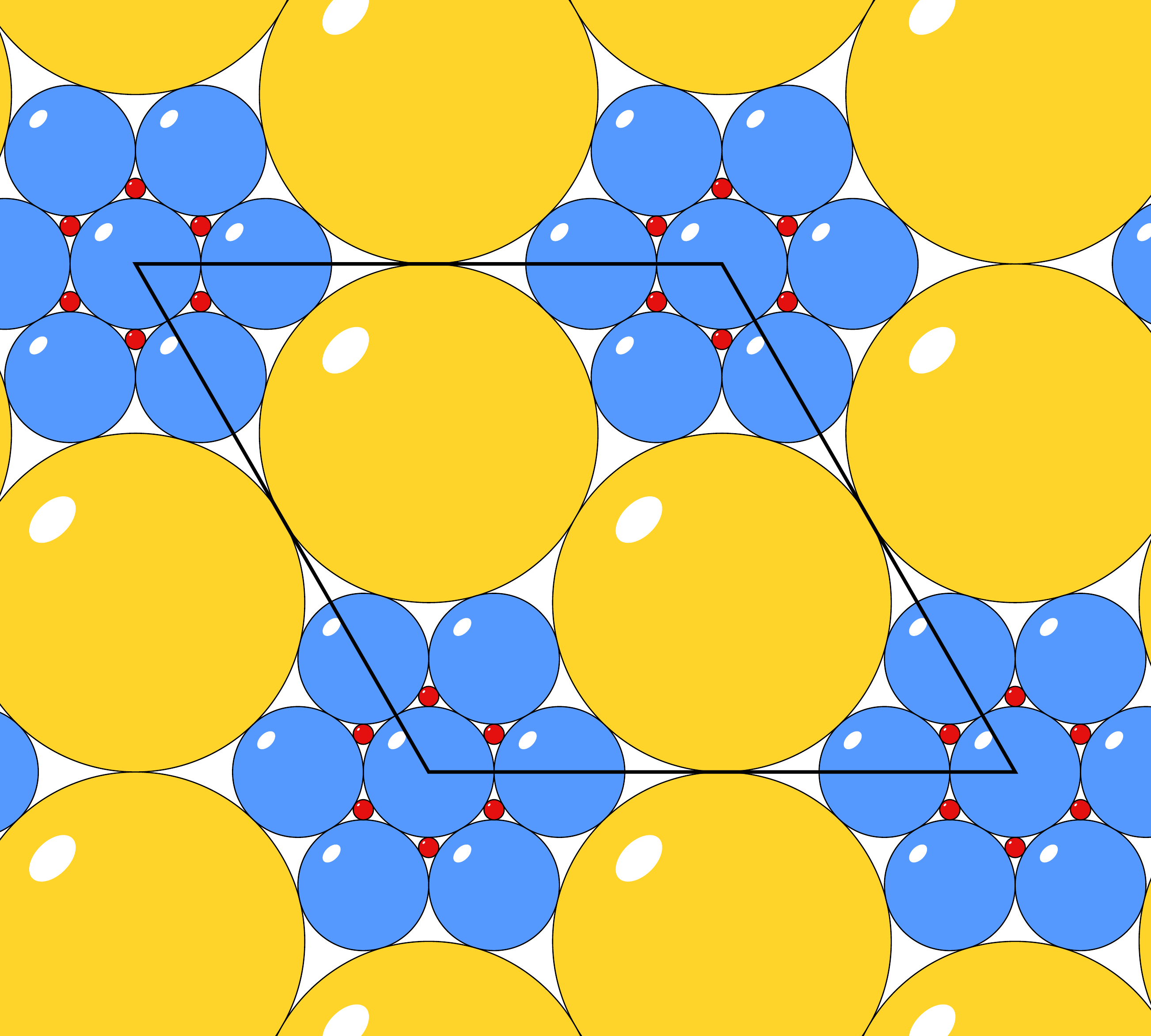}
\end{tabular}
\noindent
\begin{tabular}{lll}
  40\hfill rrr / 1r1rsr & 41\hfill rrss / 111rssr & 42\hfill rrss / 11rssr\\
  \includegraphics[width=0.3\textwidth]{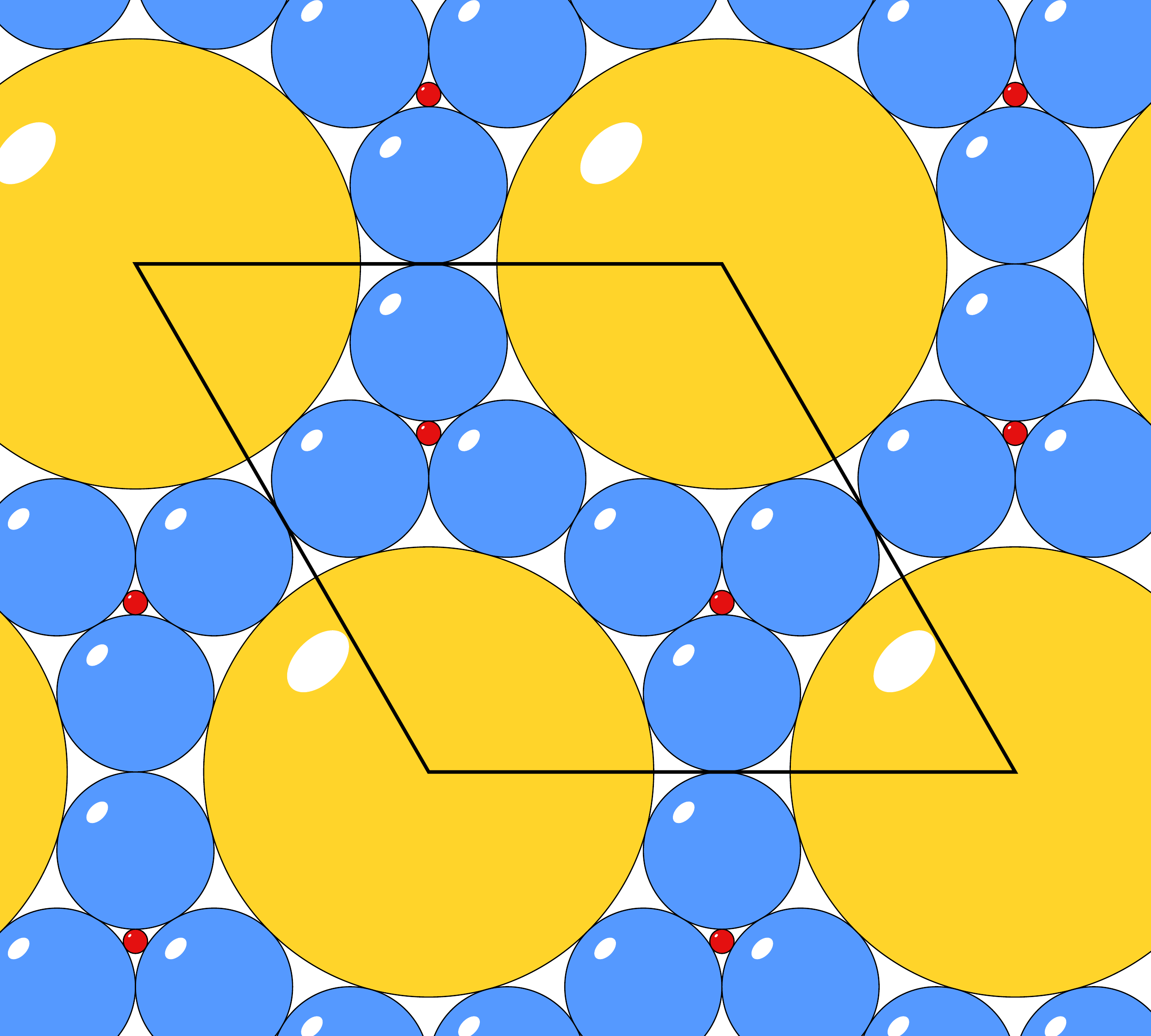} &
  \includegraphics[width=0.3\textwidth]{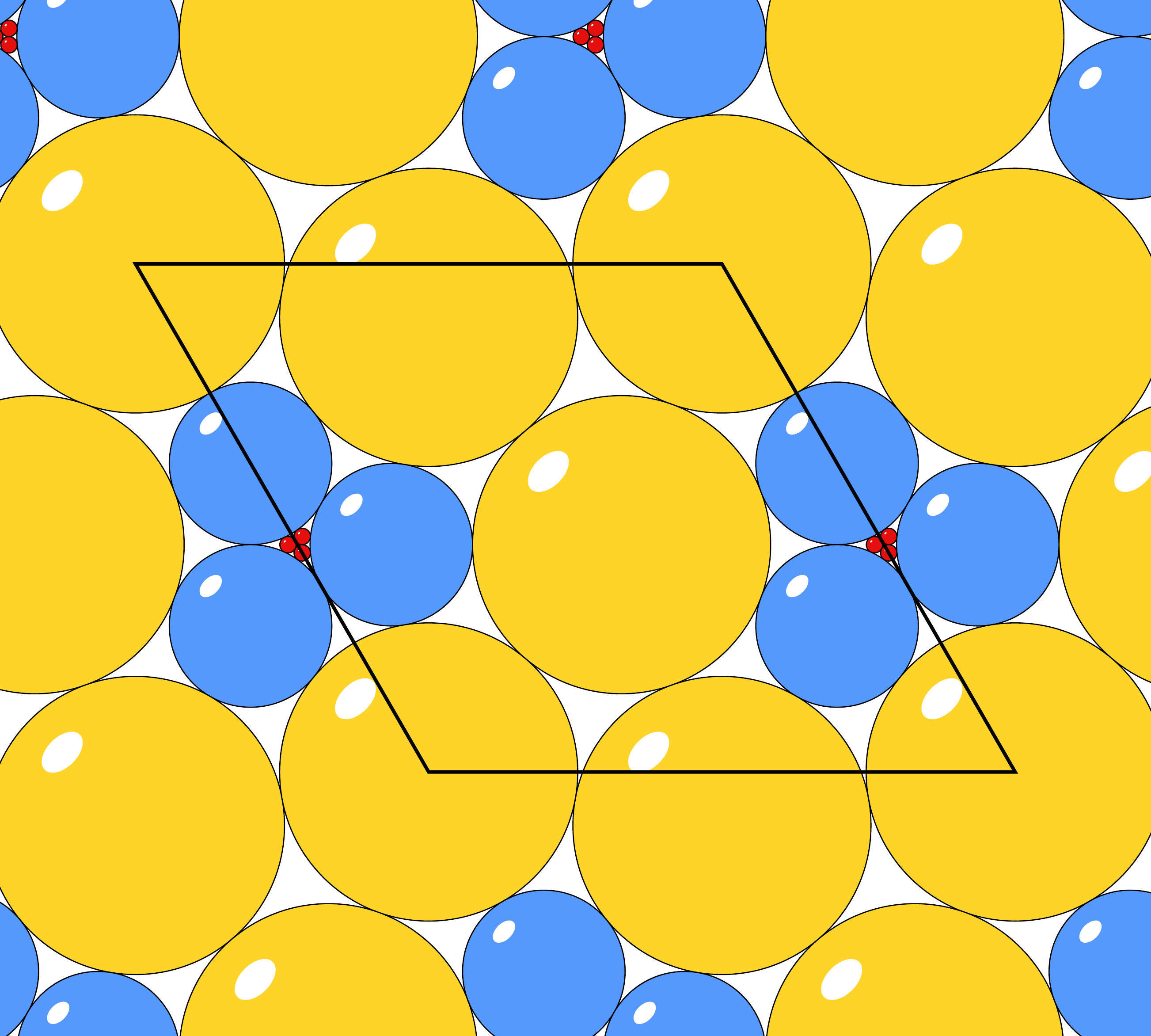} &
  \includegraphics[width=0.3\textwidth]{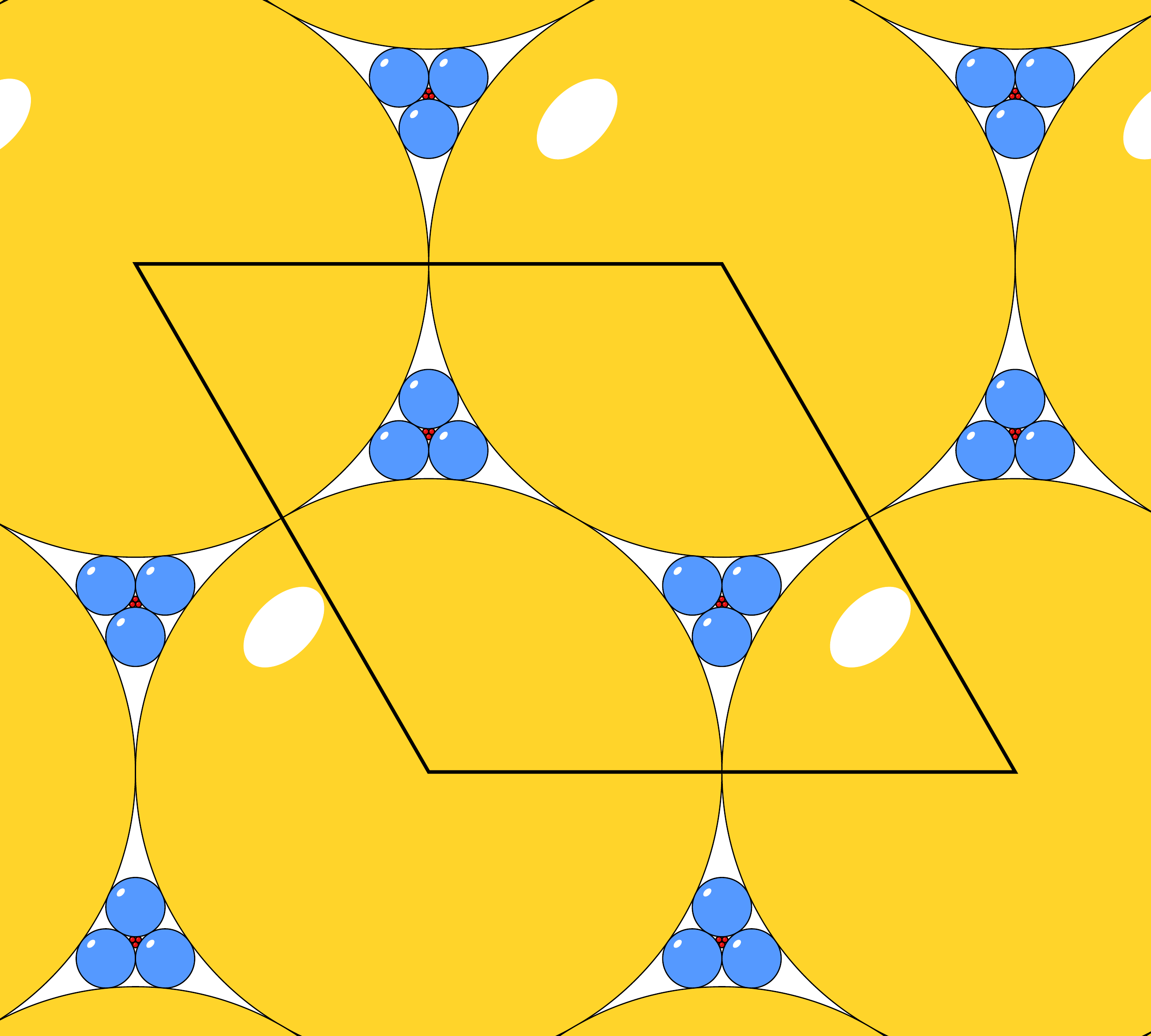}
\end{tabular}
\noindent
\begin{tabular}{lll}
  43\hfill rrss / 11rssrssr & 44\hfill rrss / 1r1rssr & 45\hfill 111r / 111s1s\\
  \includegraphics[width=0.3\textwidth]{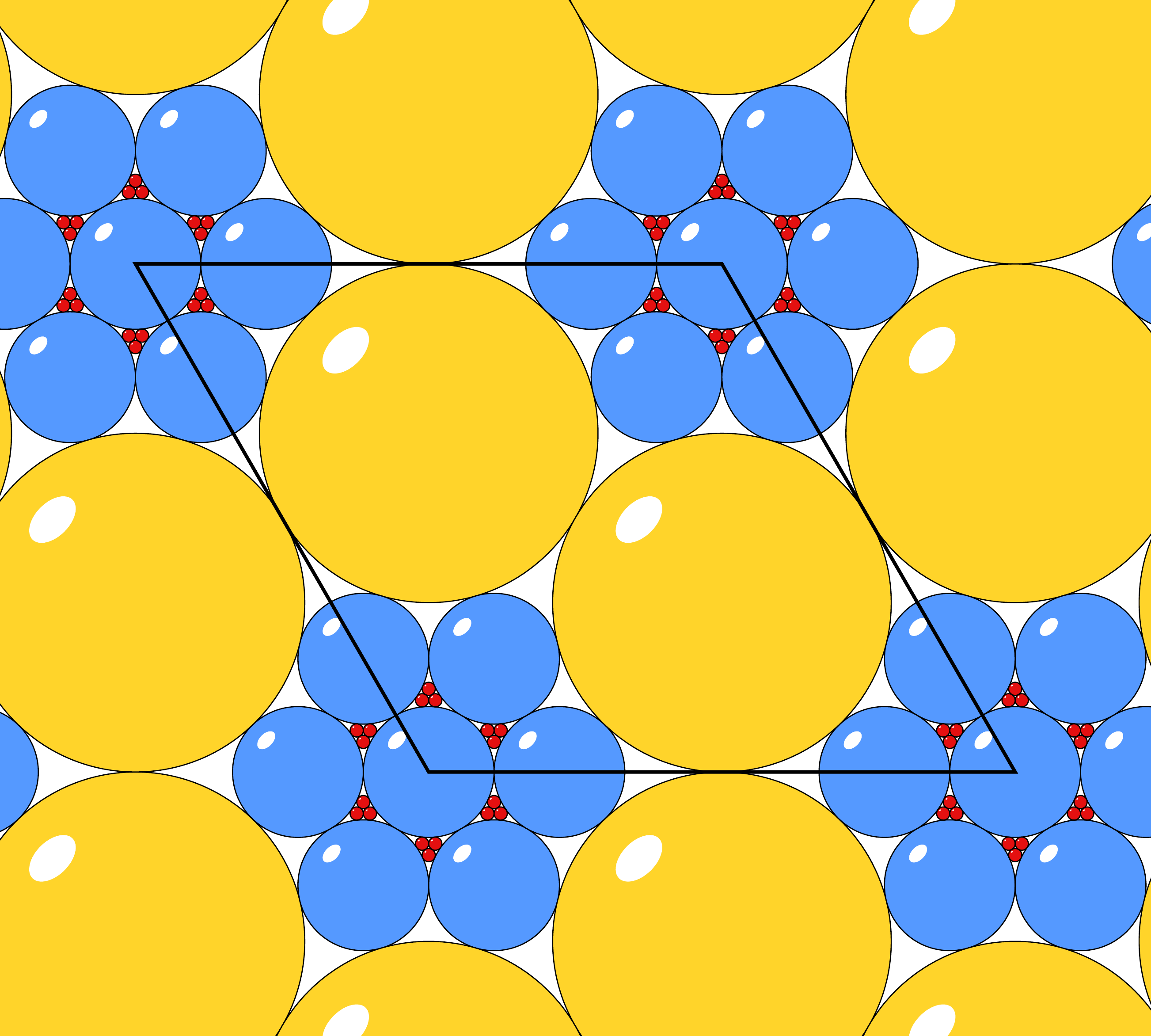} &
  \includegraphics[width=0.3\textwidth]{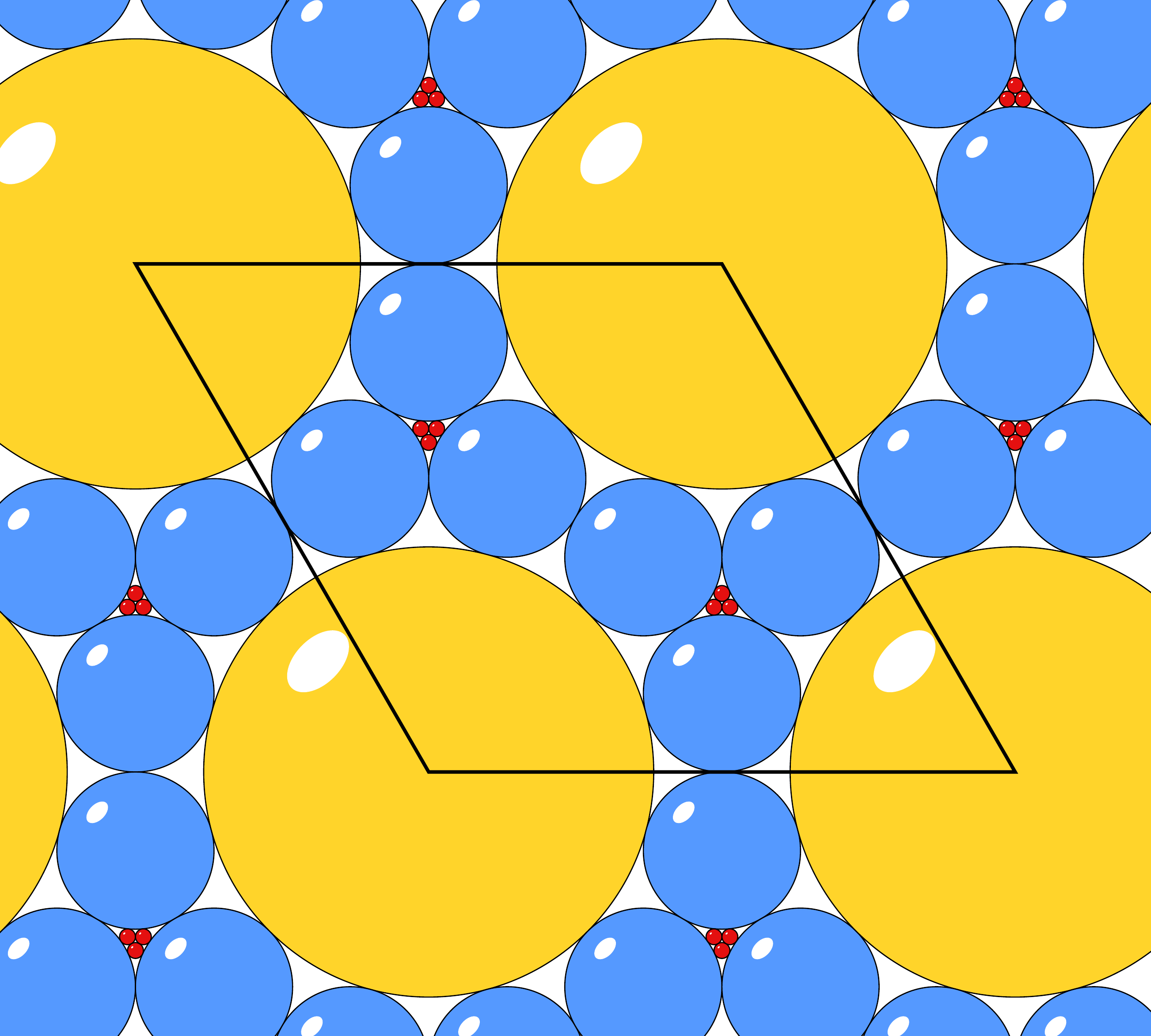} &
  \includegraphics[width=0.3\textwidth]{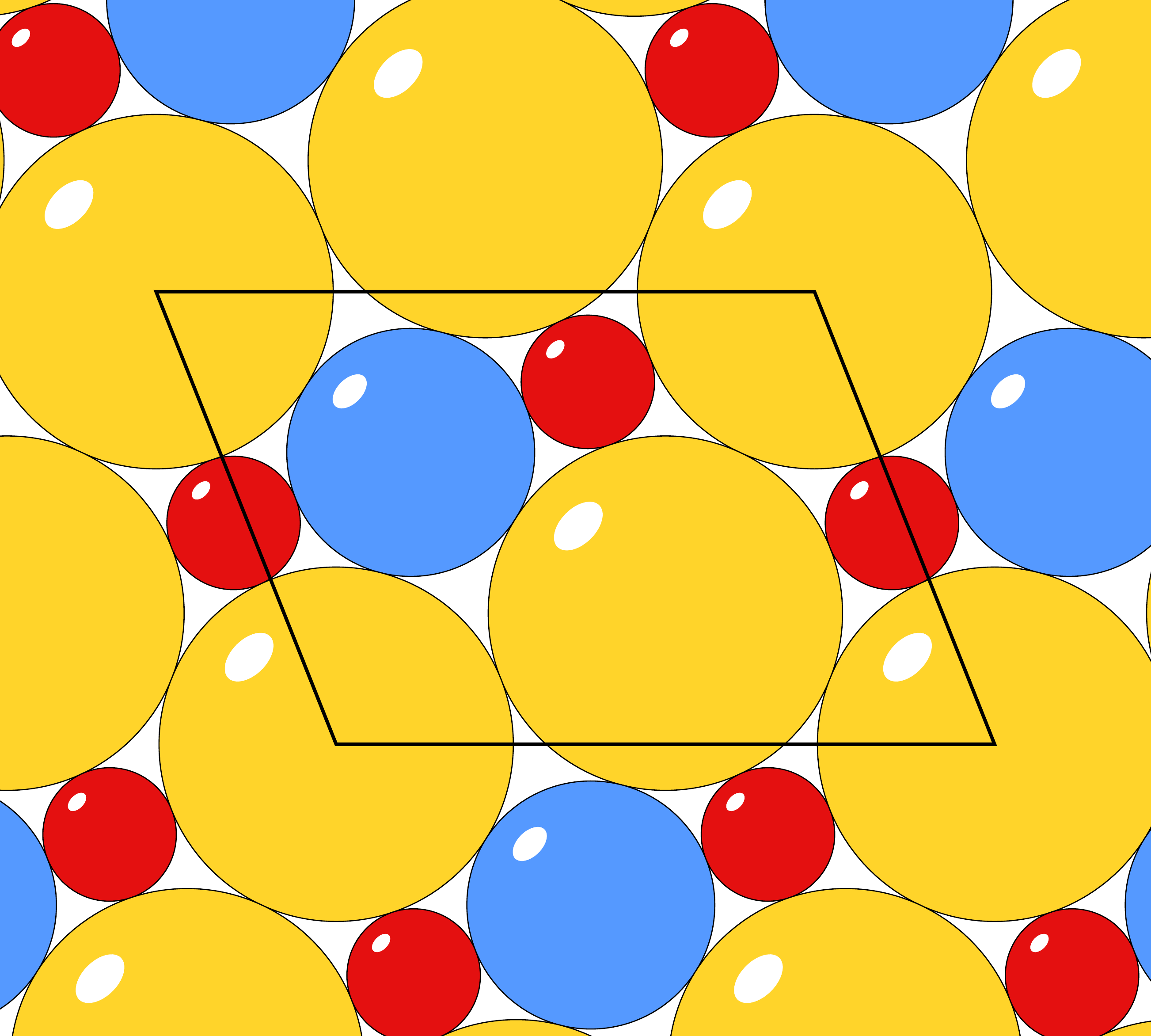}
\end{tabular}
\noindent
\begin{tabular}{lll}
  46\hfill 111r / 11r1s & 47\hfill 111r / 1r1r1s & 48\hfill 111r / 1rr1s\\
  \includegraphics[width=0.3\textwidth]{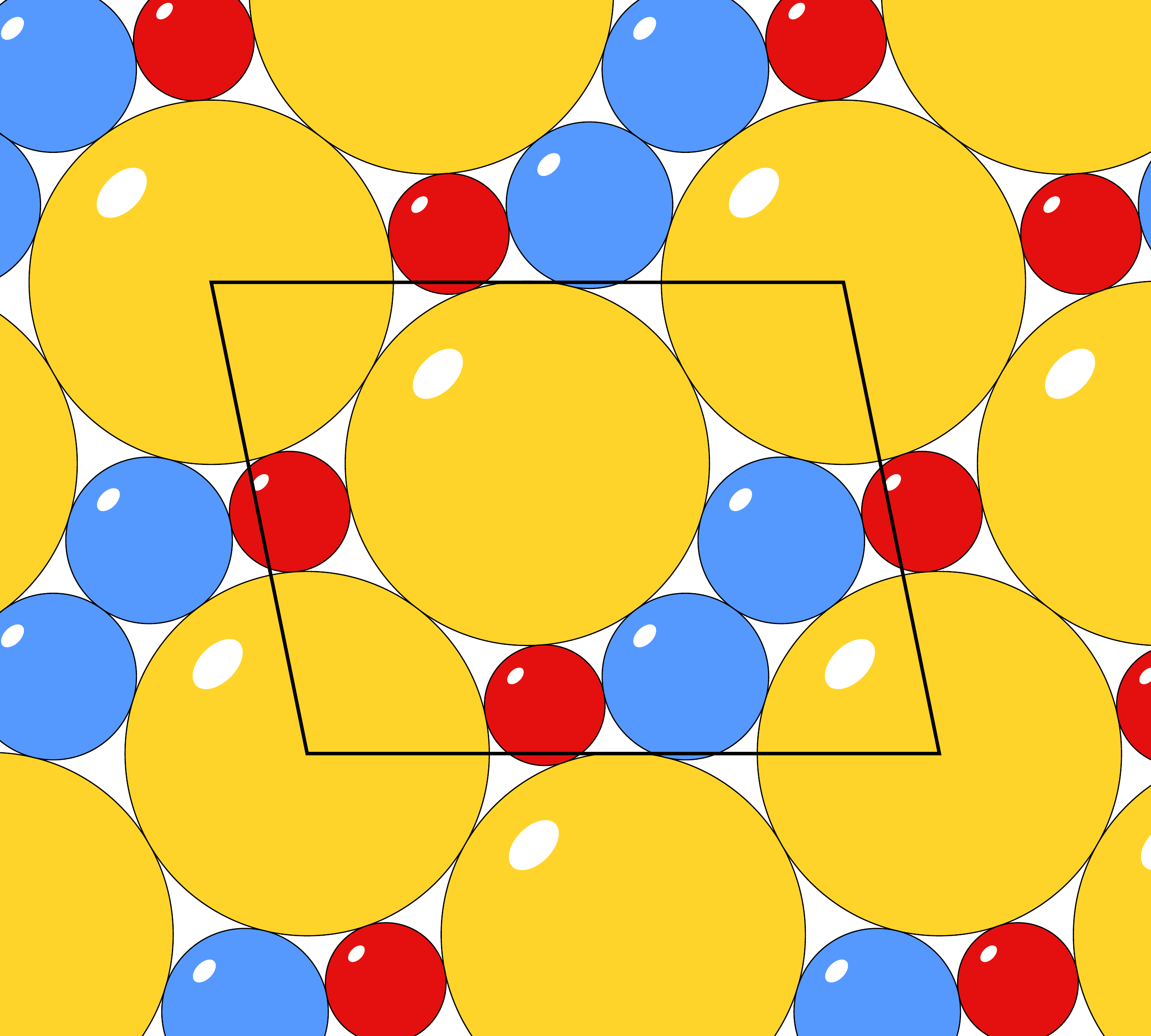} &
  \includegraphics[width=0.3\textwidth]{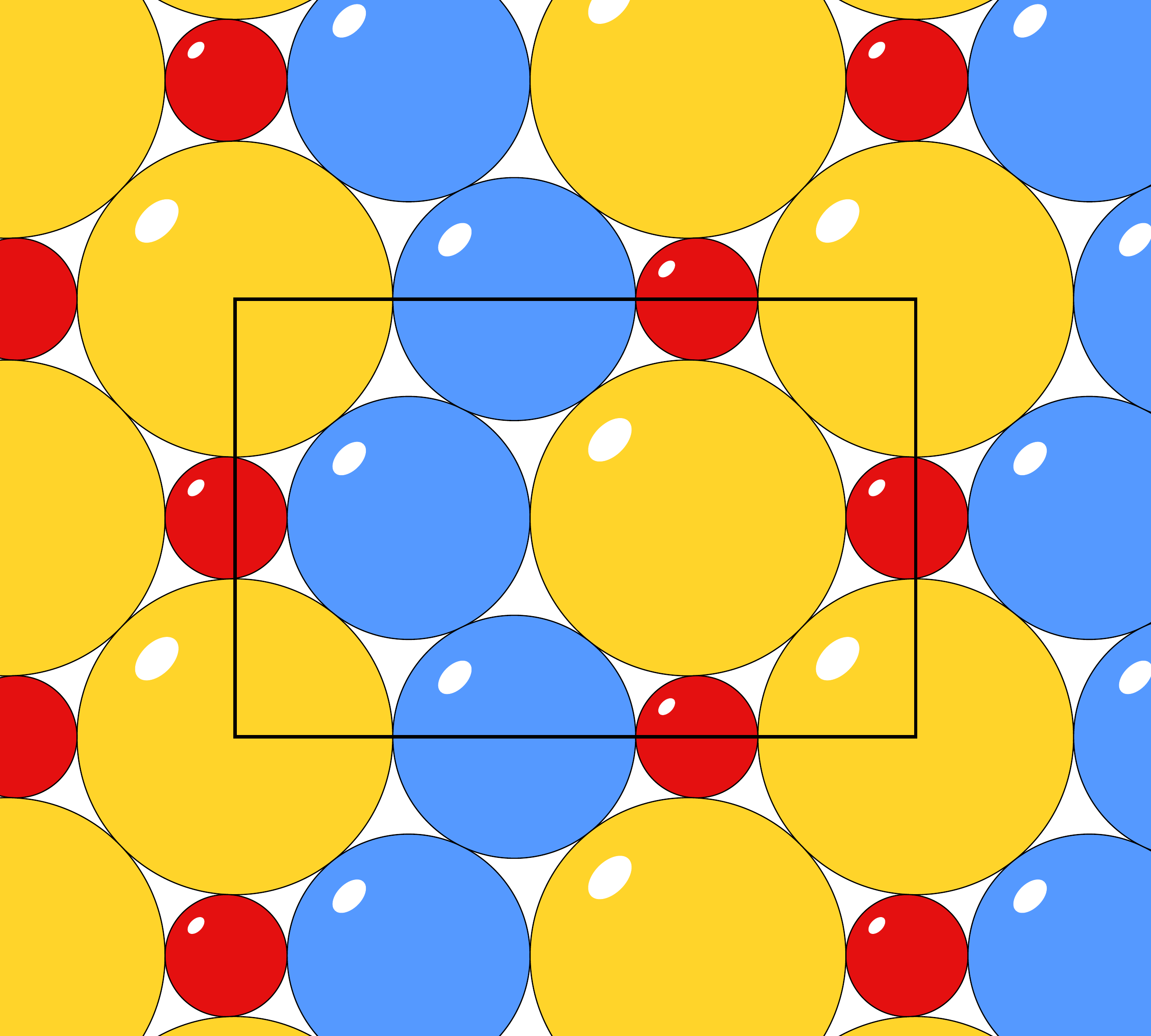} &
  \includegraphics[width=0.3\textwidth]{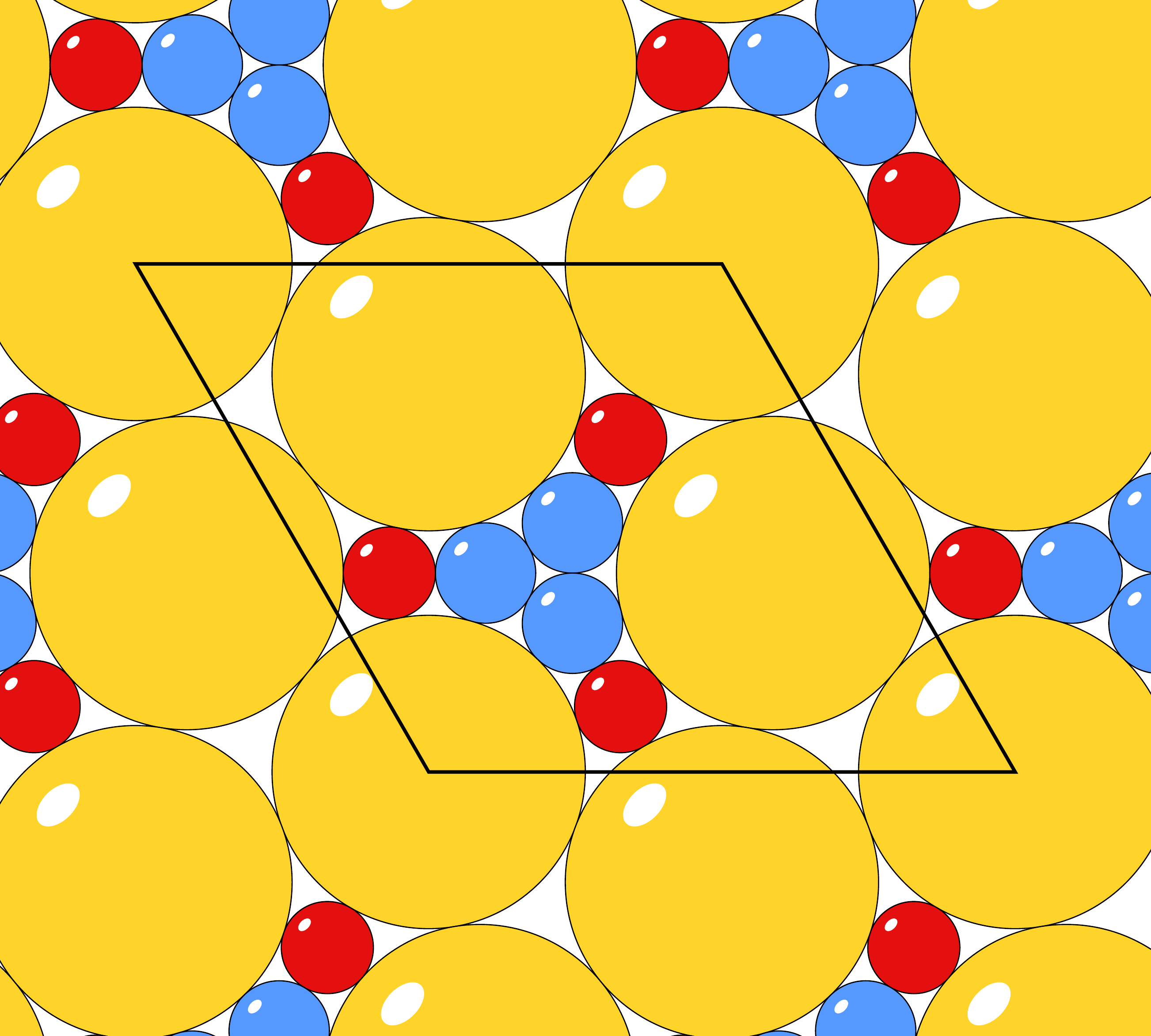}
\end{tabular}
\noindent
\begin{tabular}{lll}
  49\hfill 111r / 1rrr1s & 50\hfill 111r / 1s1s1s & 51\hfill 111rr / 1rrrrs\\
  \includegraphics[width=0.3\textwidth]{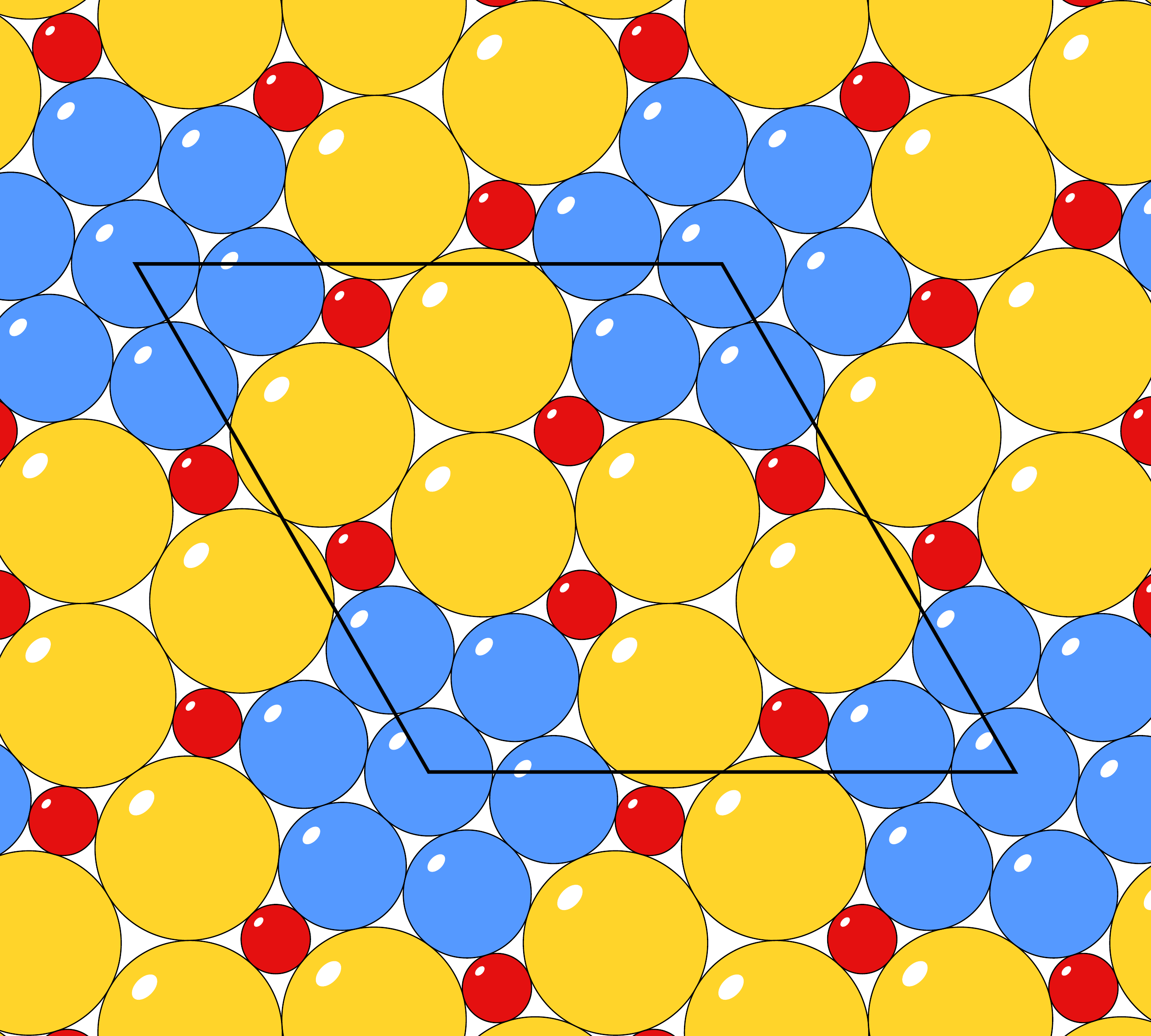} &
  \includegraphics[width=0.3\textwidth]{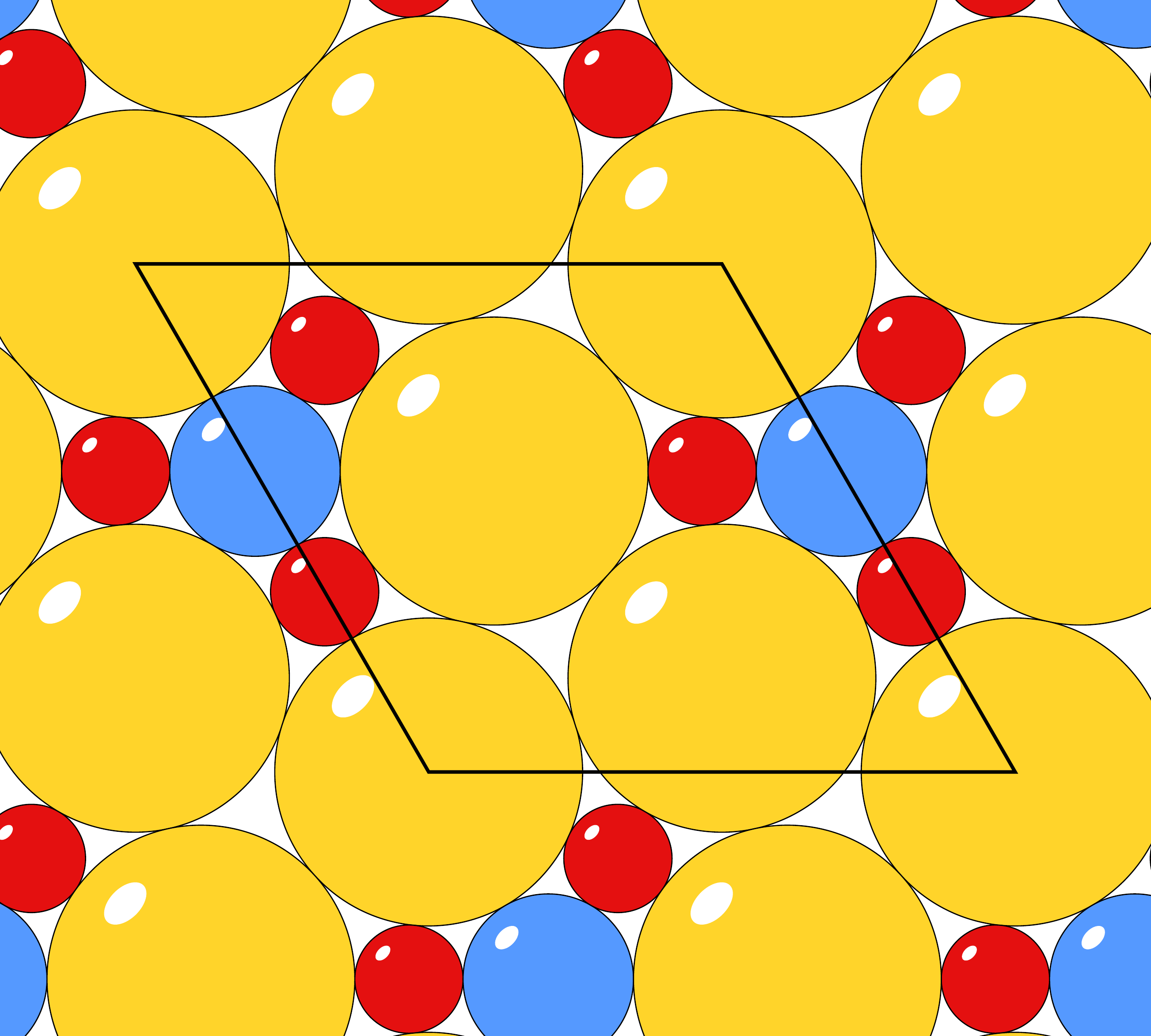} &
  \includegraphics[width=0.3\textwidth]{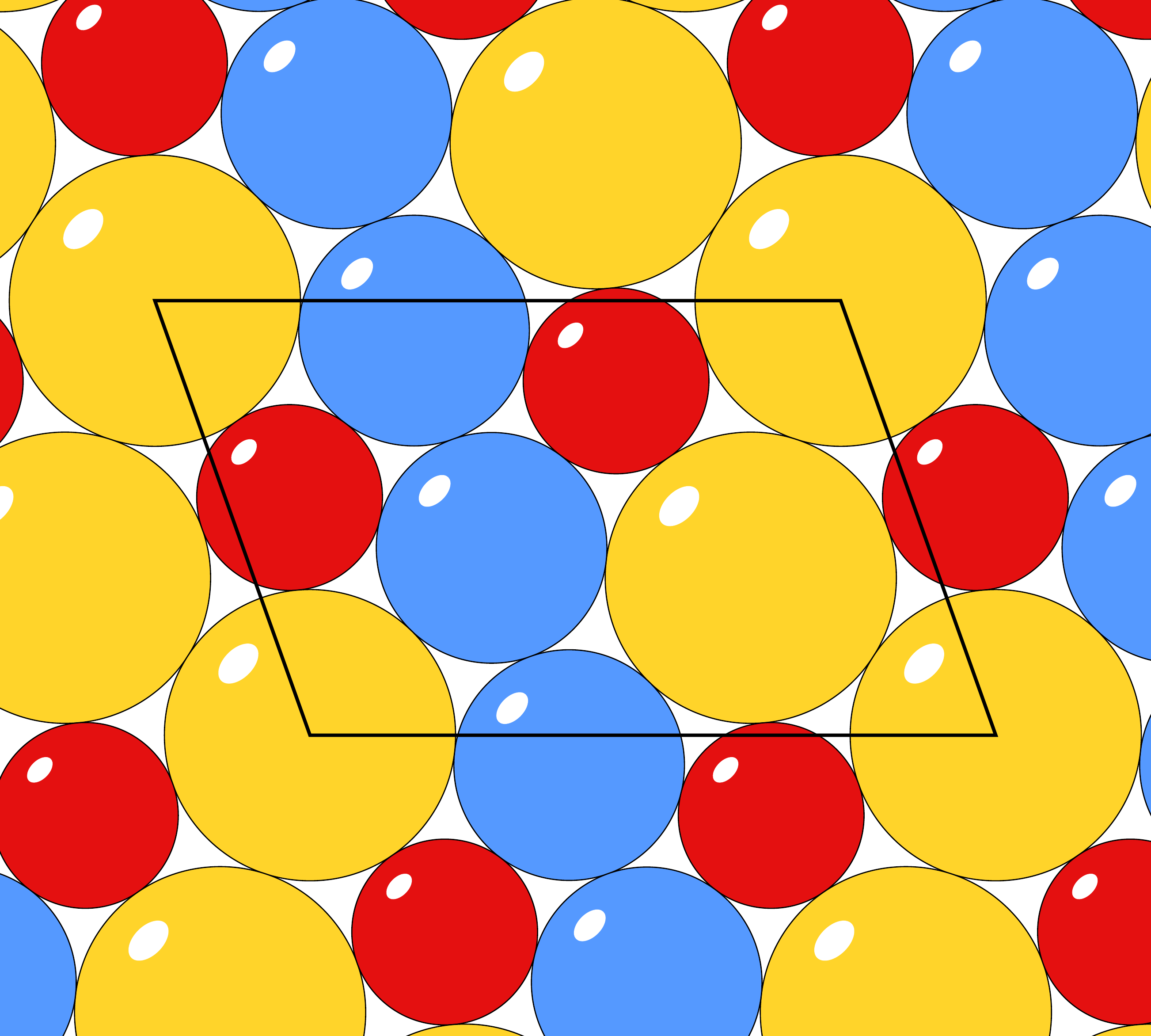}
\end{tabular}
\noindent
\begin{tabular}{lll}
  52\hfill 111rr / 1srrrs & 53\hfill 11r1r / 1r1s1s & 54\hfill 11r1r / 1s1s1s\\
  \includegraphics[width=0.3\textwidth]{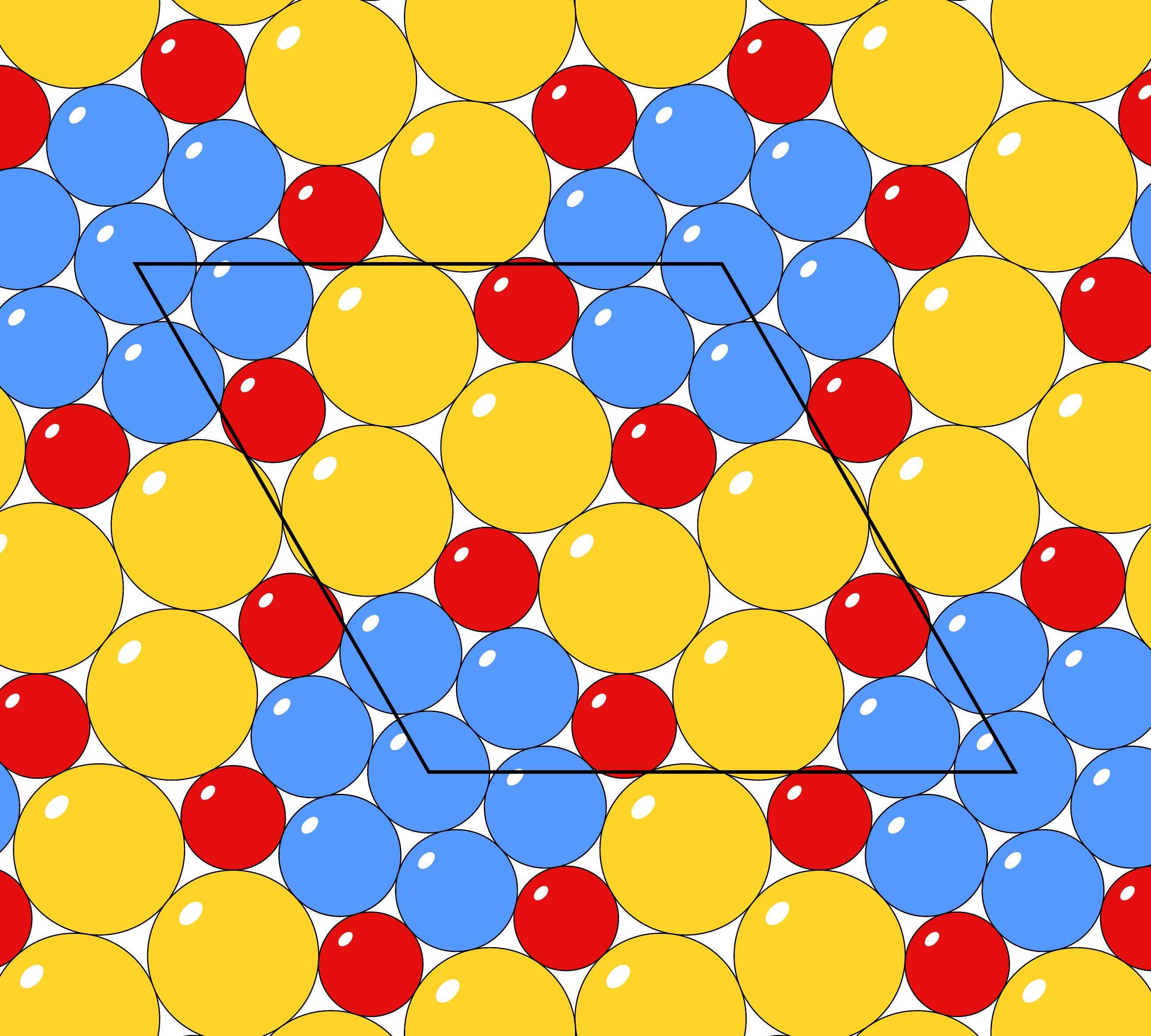} &
  \includegraphics[width=0.3\textwidth]{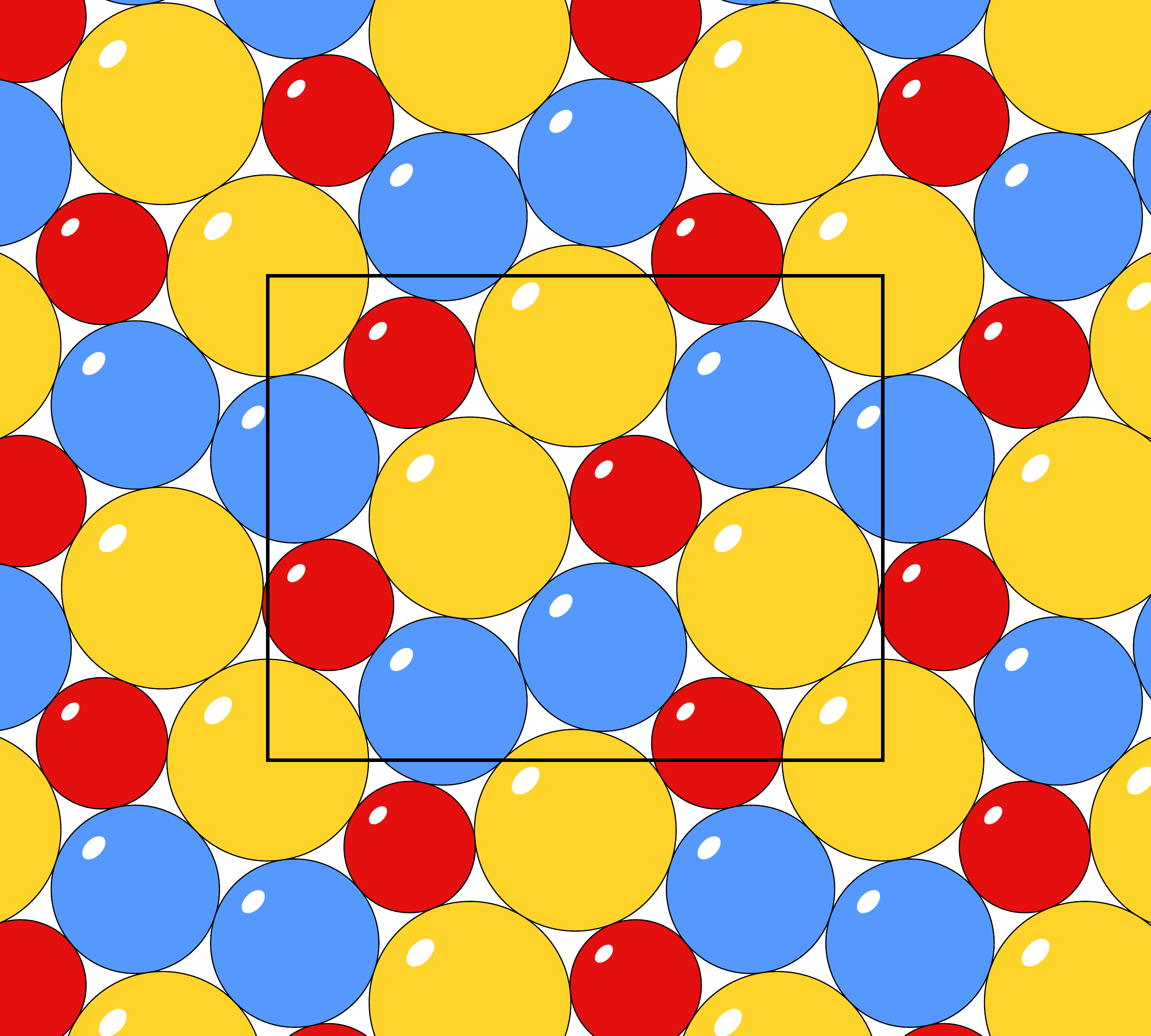} &
  \includegraphics[width=0.3\textwidth]{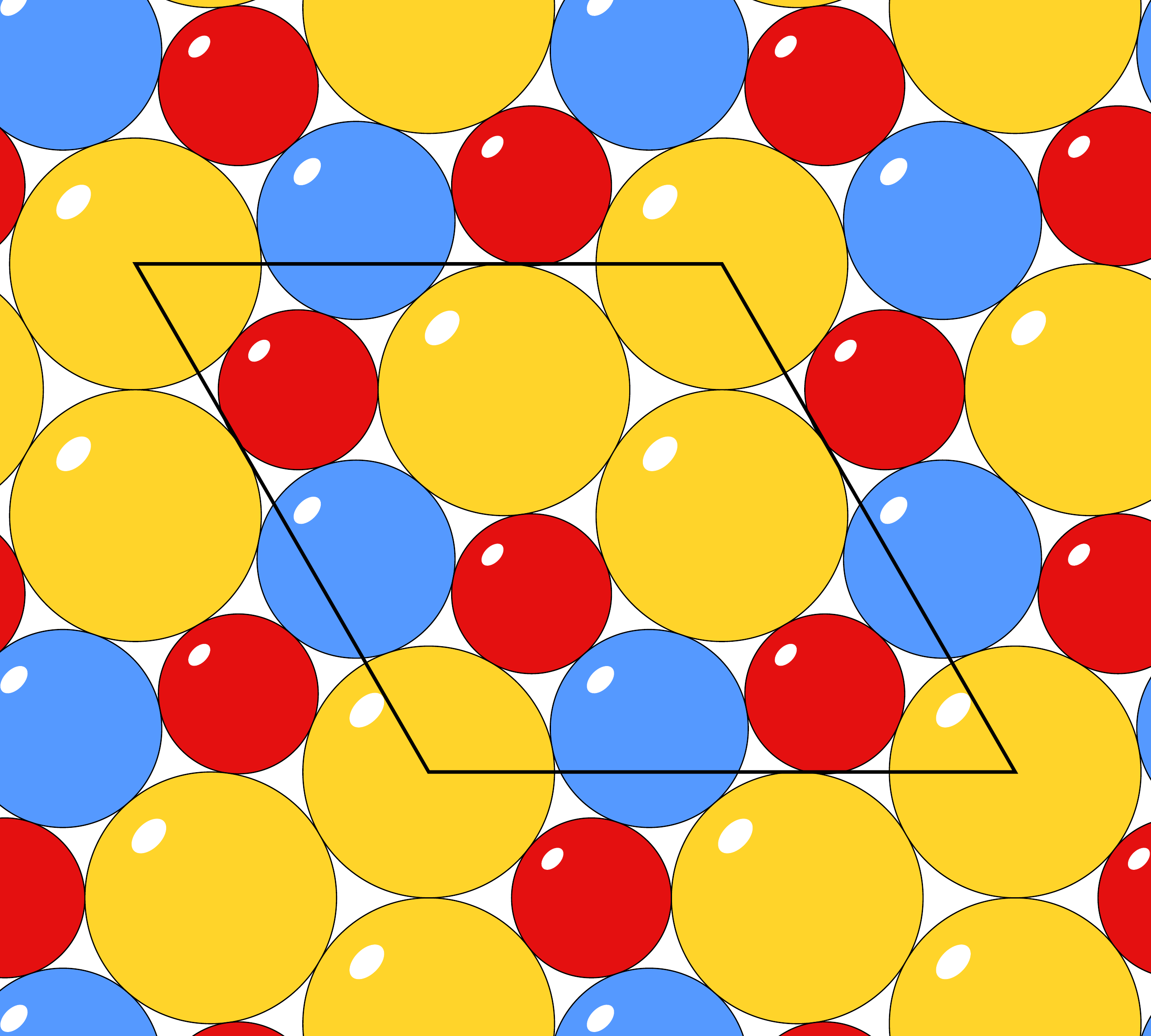}
\end{tabular}
\noindent
\begin{tabular}{lll}
  55\hfill 11r1s / 111s1s & 56\hfill 11r1s / 1r1r1s & 57\hfill 11r1s / 1rrr1s\\
  \includegraphics[width=0.3\textwidth]{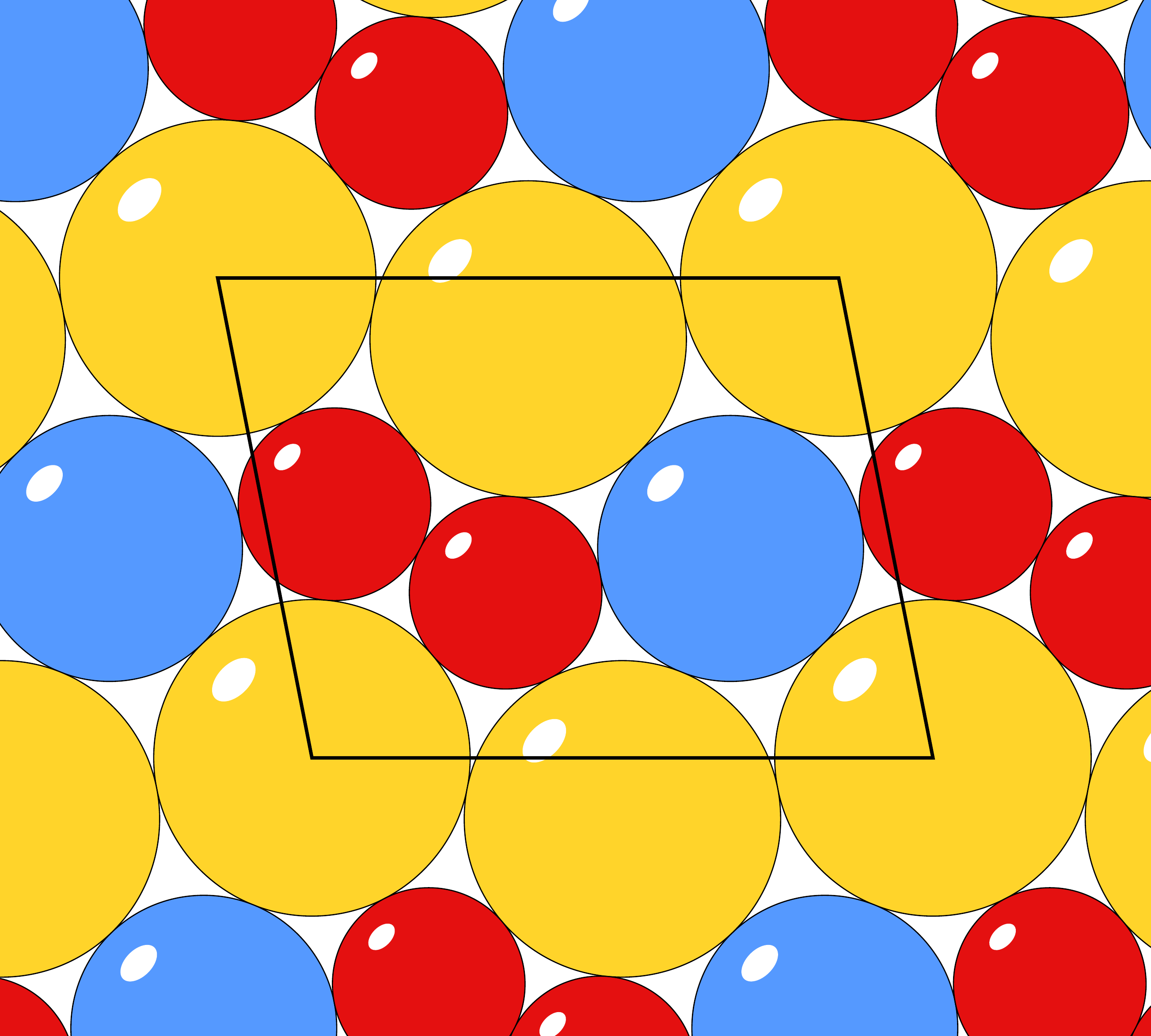} &
  \includegraphics[width=0.3\textwidth]{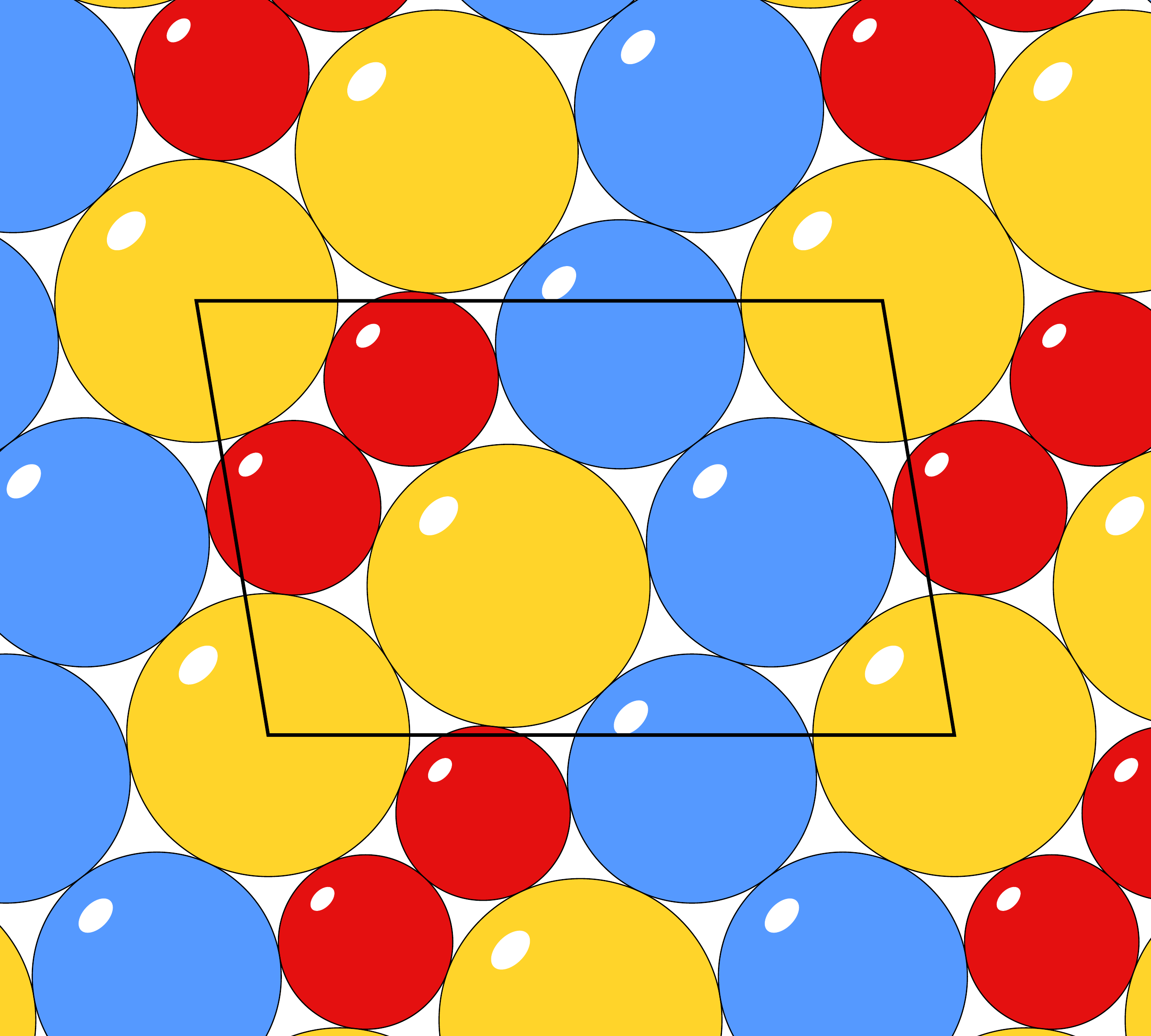} &
  \includegraphics[width=0.3\textwidth]{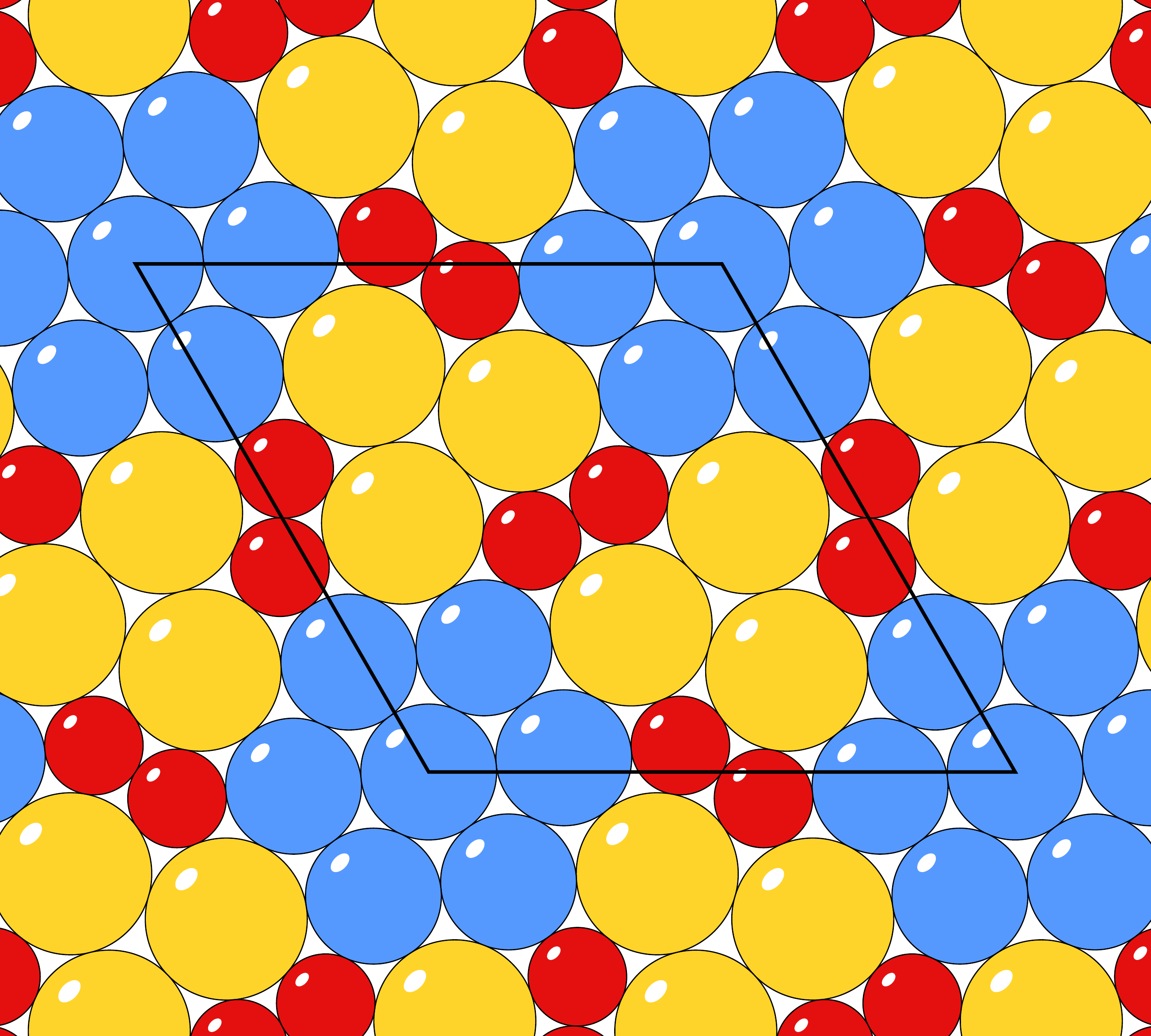}
\end{tabular}
\noindent
\begin{tabular}{lll}
  58\hfill 11r1s / 1s1s1s & 59\hfill 11rr / 111srs & 60\hfill 11rr / 11srrs\\
  \includegraphics[width=0.3\textwidth]{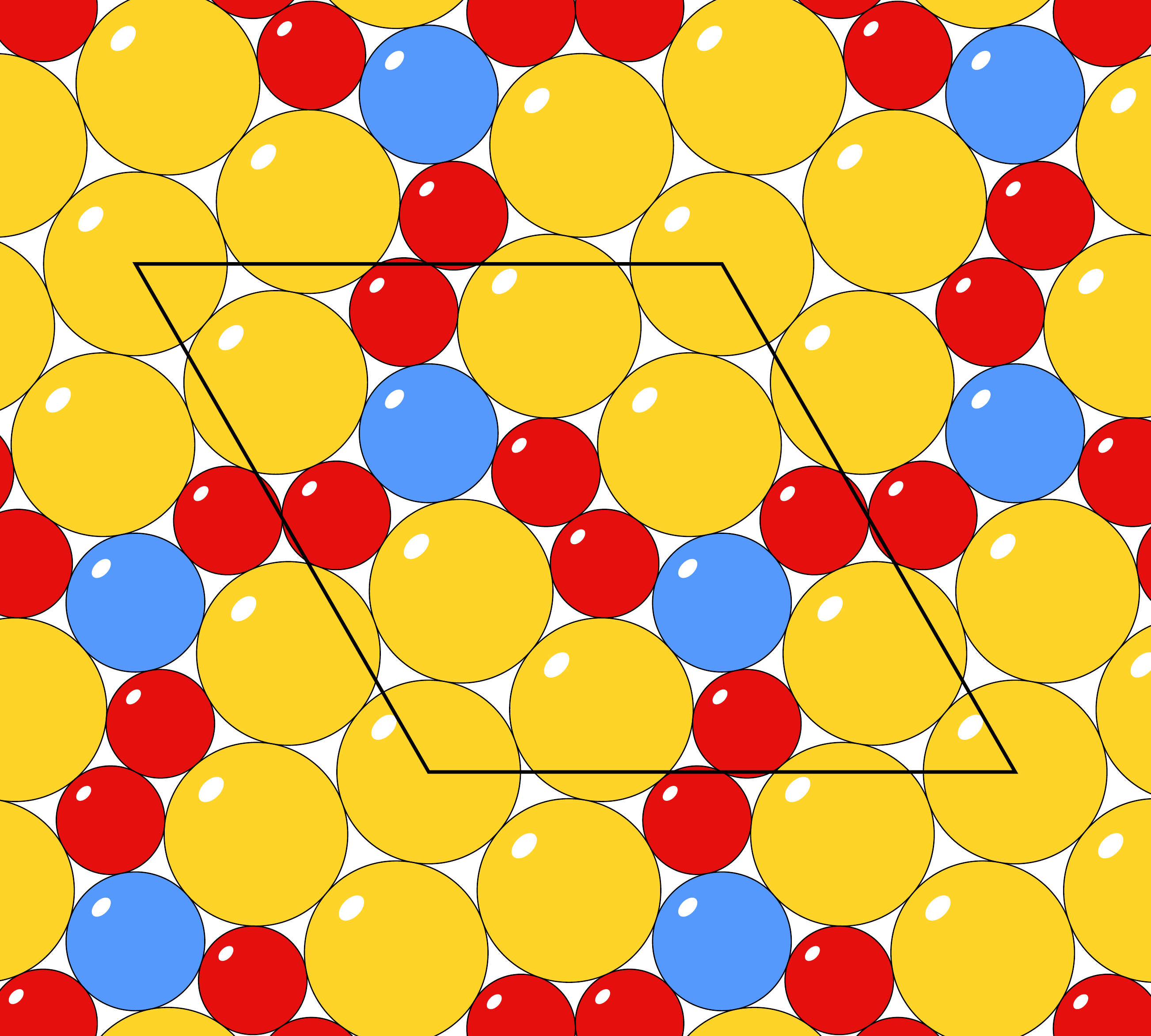} &
  \includegraphics[width=0.3\textwidth]{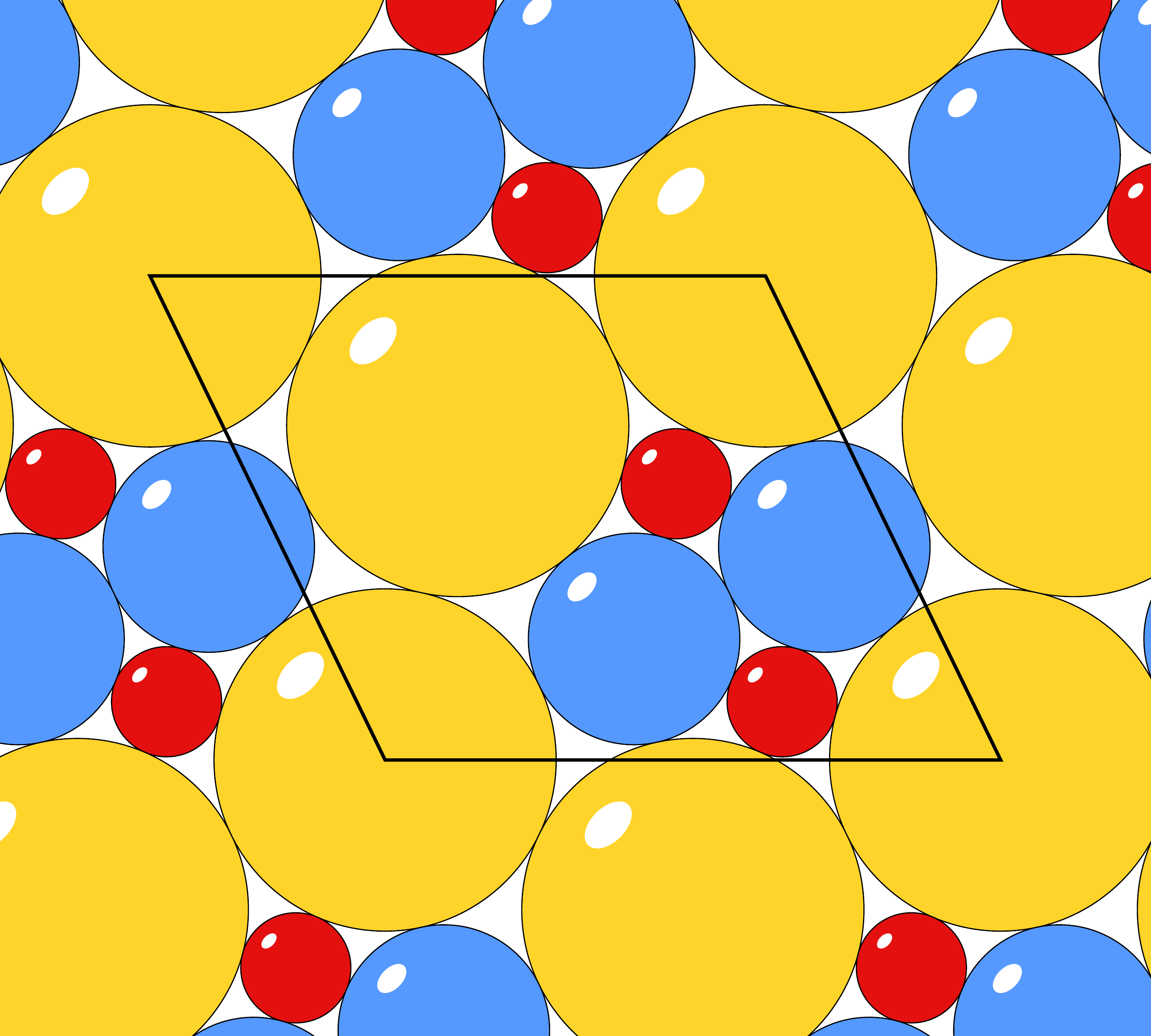} &
  \includegraphics[width=0.3\textwidth]{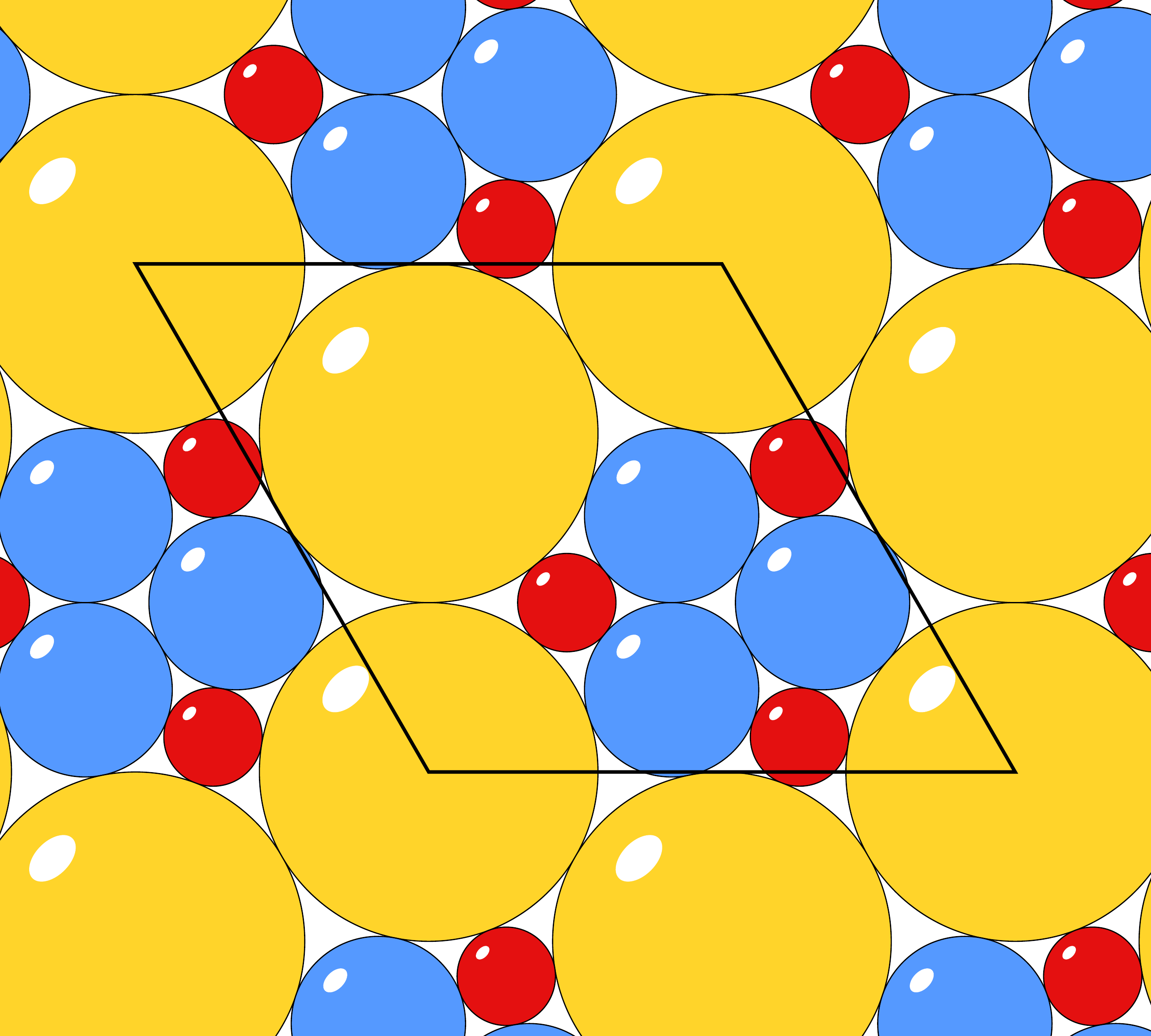}
\end{tabular}
\noindent
\begin{tabular}{lll}
  61\hfill 11rr / 11srs & 62\hfill 11rr / 1r1rs & 63\hfill 11rr / 1rr1rs\\
  \includegraphics[width=0.3\textwidth]{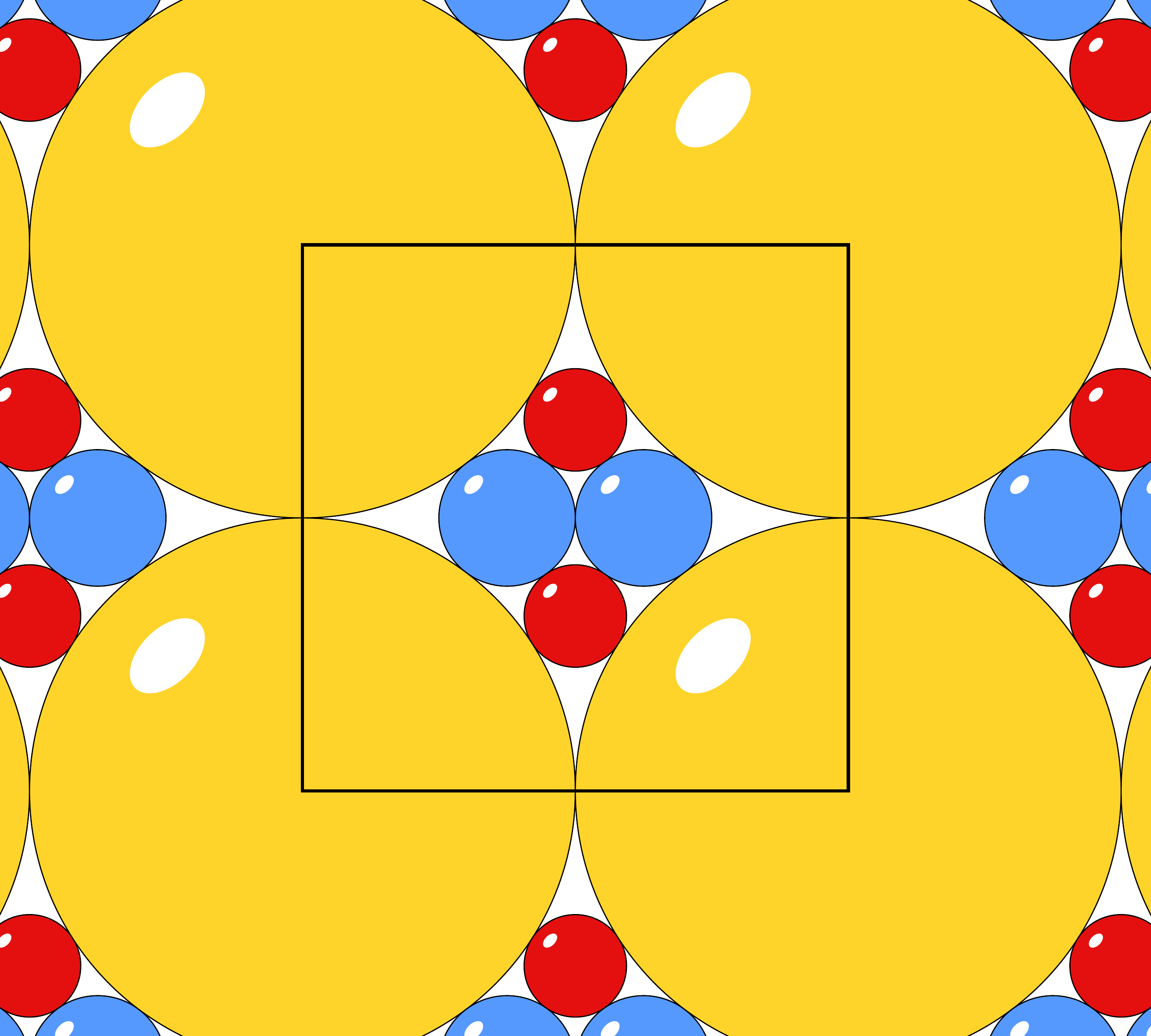} &
  \includegraphics[width=0.3\textwidth]{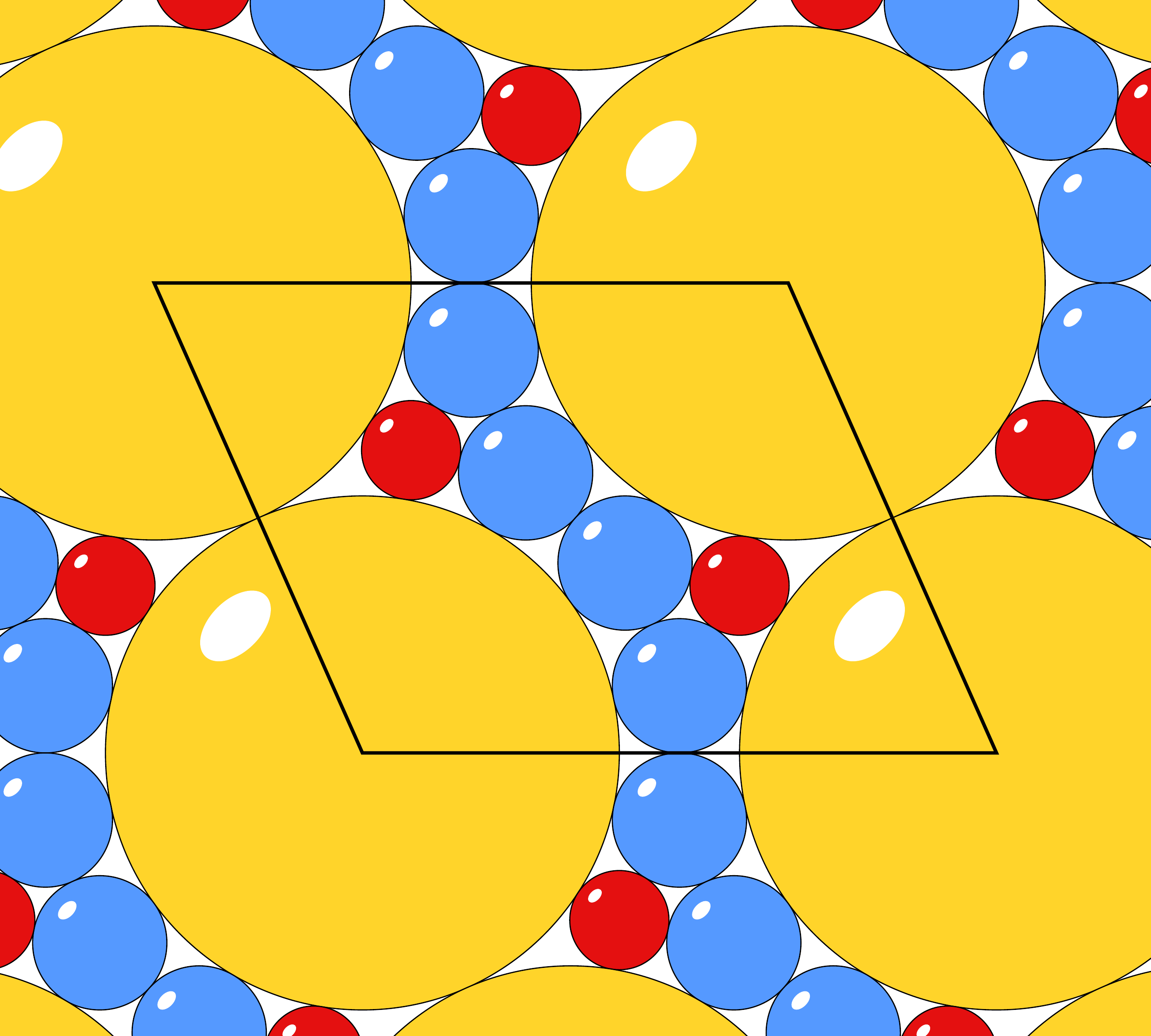} &
  \includegraphics[width=0.3\textwidth]{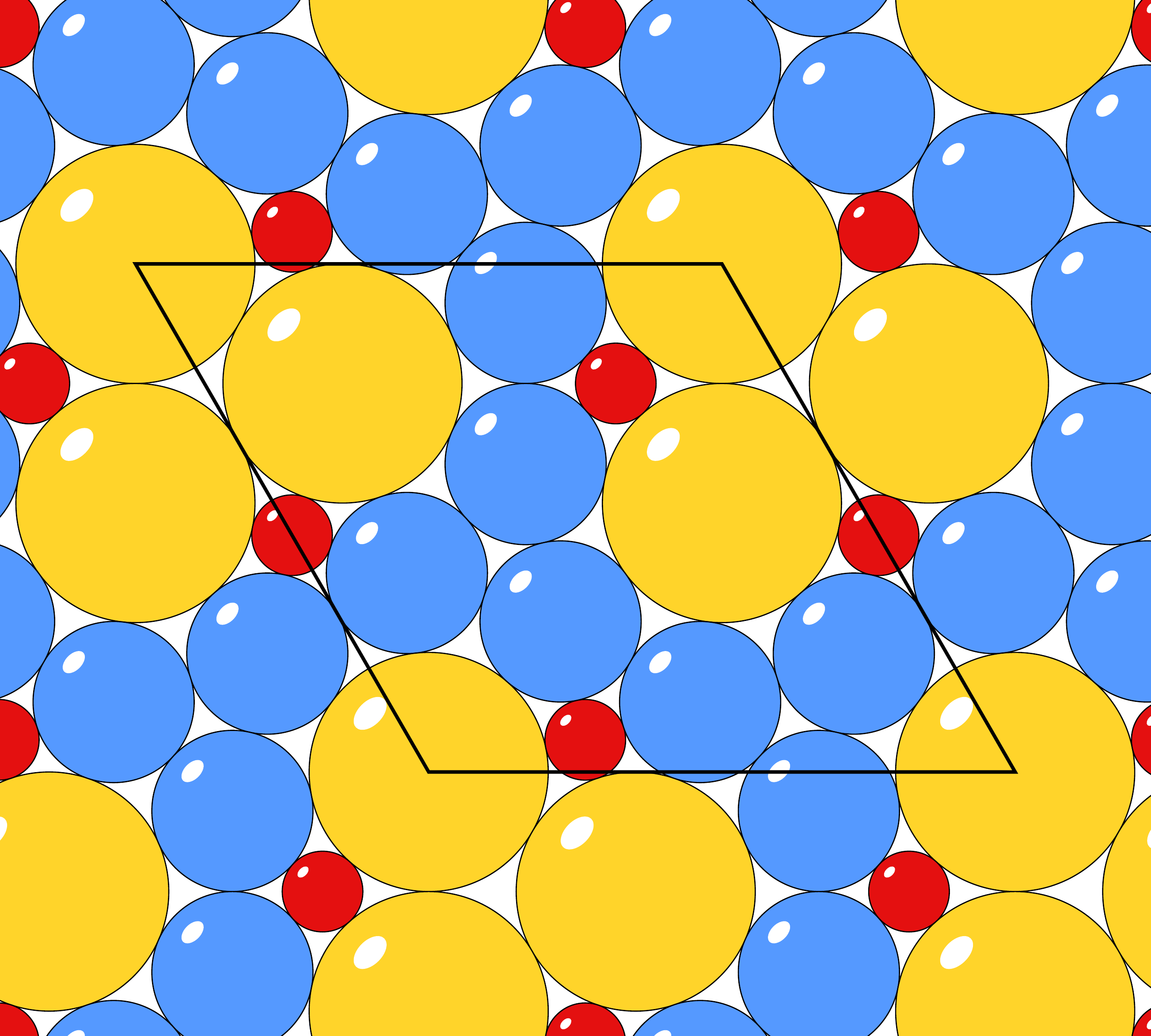}
\end{tabular}
\noindent
\begin{tabular}{lll}
  64\hfill 11rr / 1rrrrs & 65\hfill 11rr / 1srrrs & 66\hfill 11rrr / 1srsrs\\
  \includegraphics[width=0.3\textwidth]{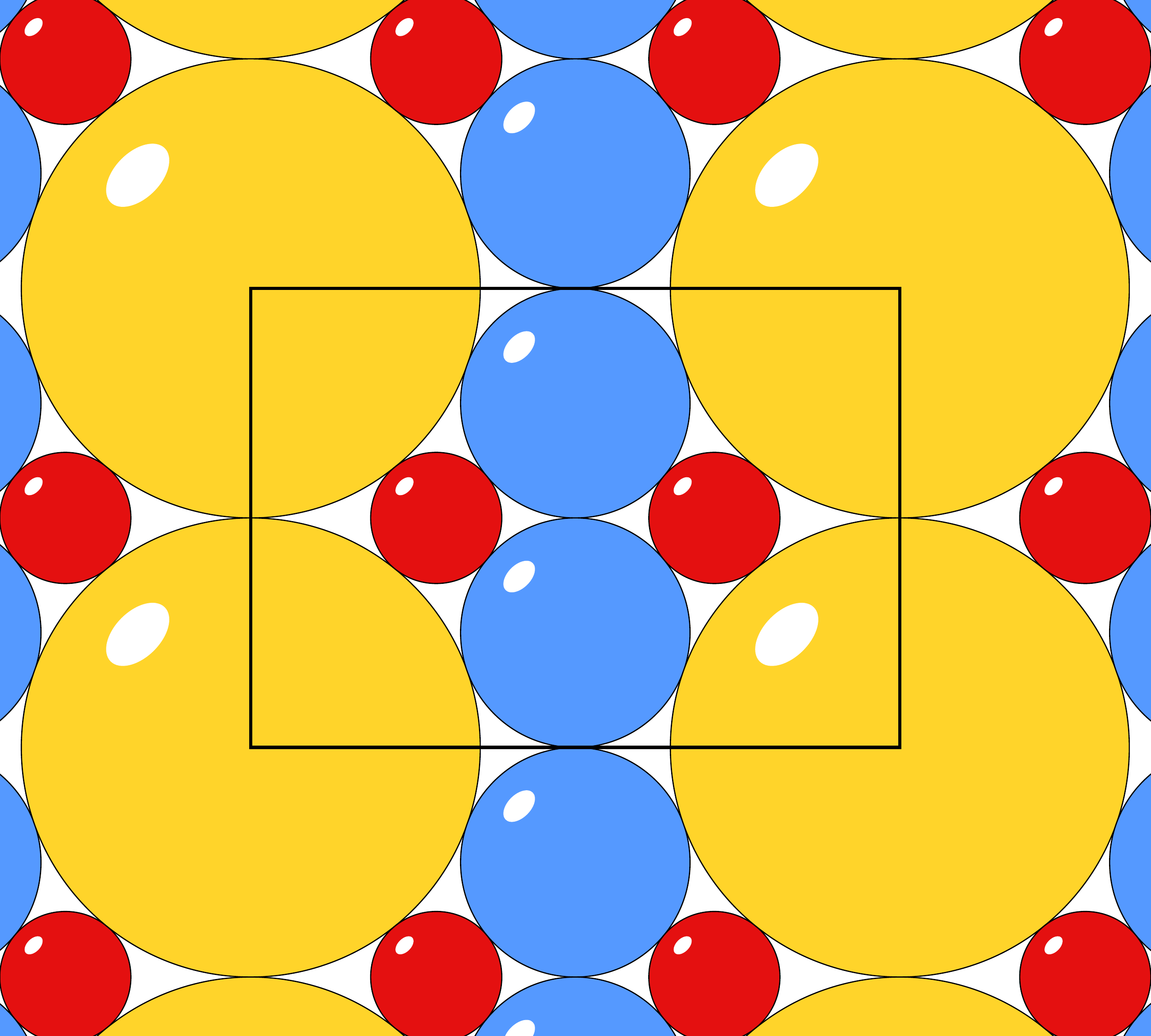} &
  \includegraphics[width=0.3\textwidth]{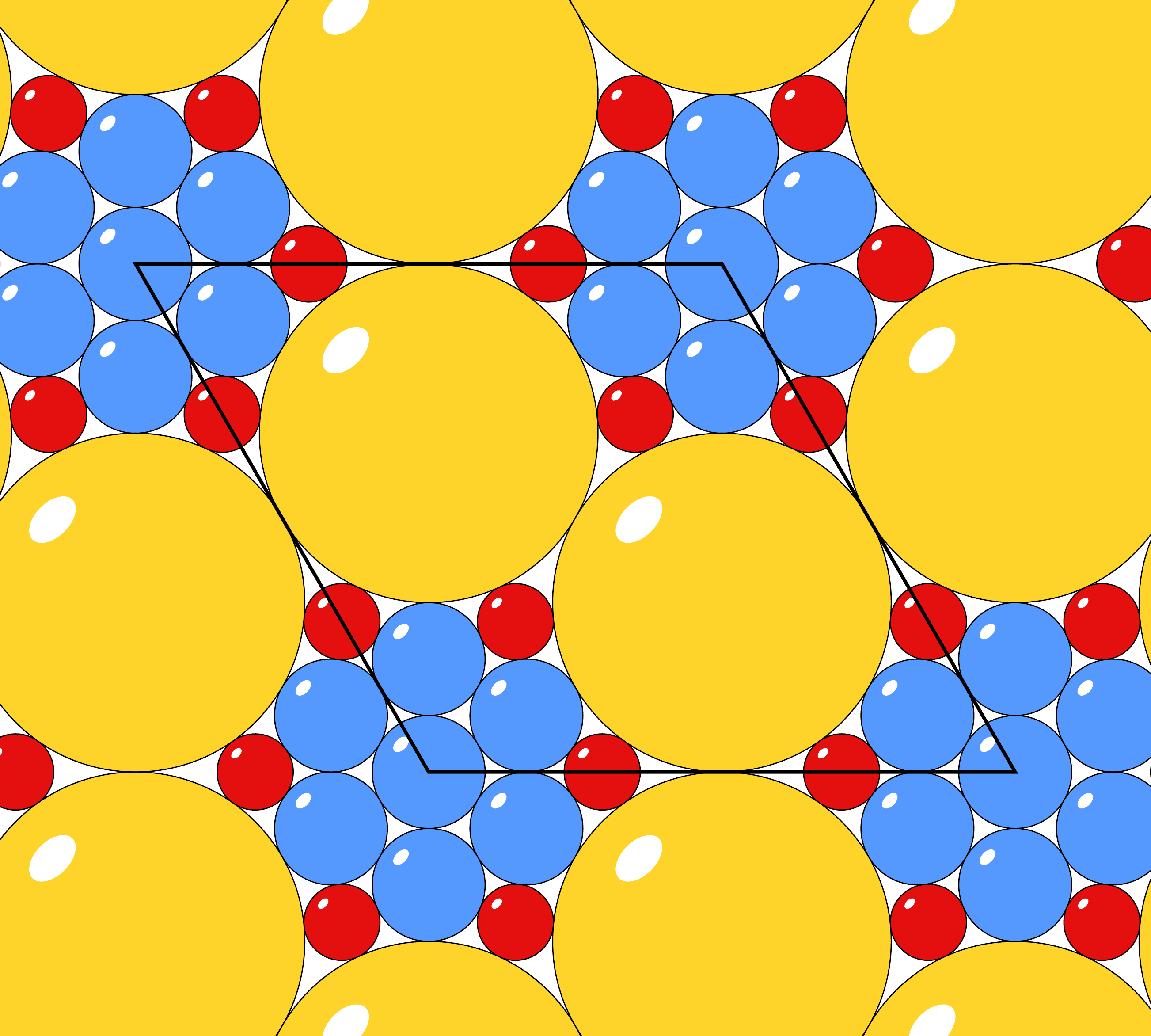} &
  \includegraphics[width=0.3\textwidth]{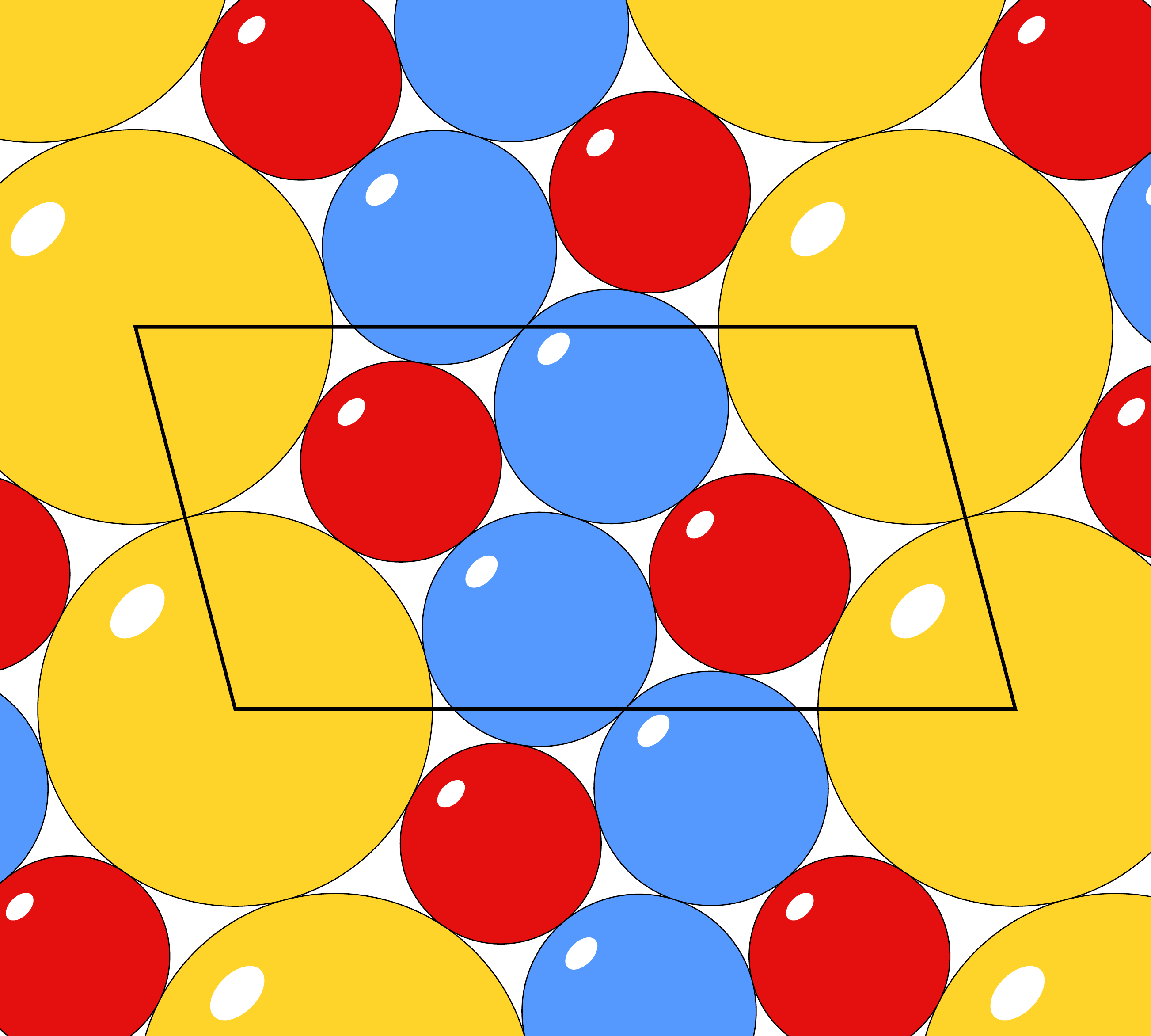}
\end{tabular}
\noindent
\begin{tabular}{lll}
  67\hfill 11rs / 111s1sss & 68\hfill 11rs / 111ss & 69\hfill 11rs / 11r1ss\\
  \includegraphics[width=0.3\textwidth]{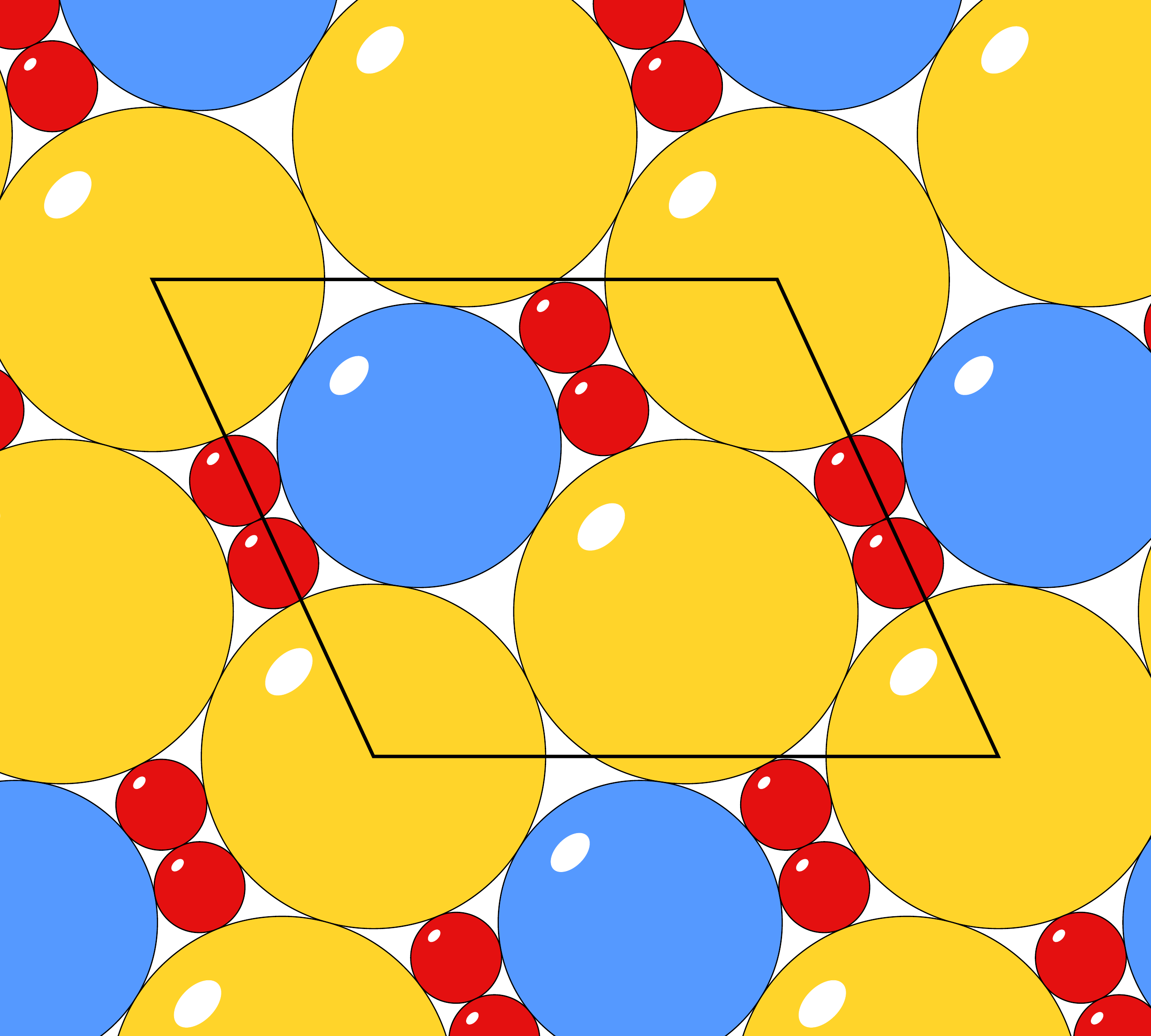} &
  \includegraphics[width=0.3\textwidth]{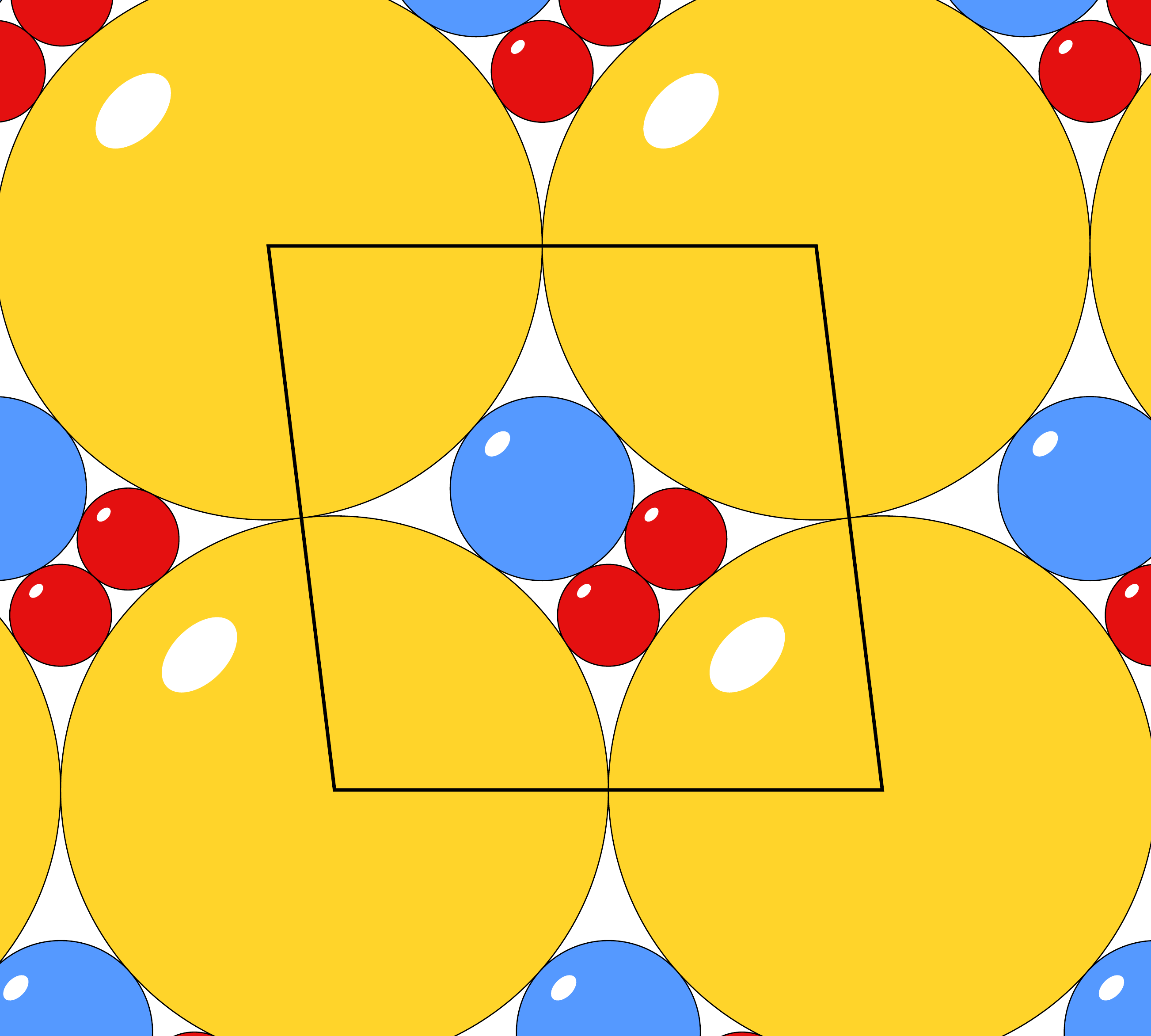} &
  \includegraphics[width=0.3\textwidth]{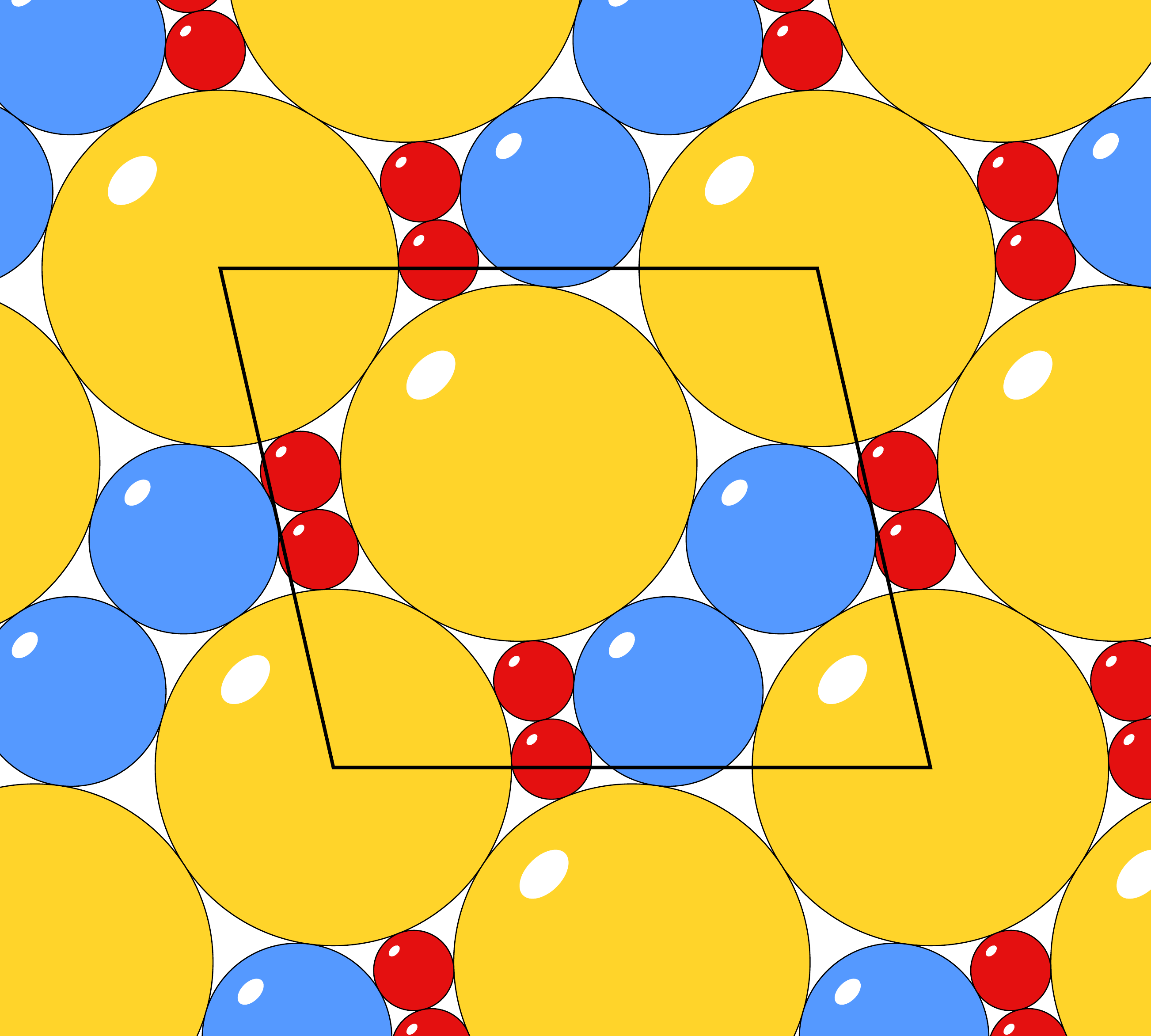}
\end{tabular}
\noindent
\begin{tabular}{lll}
  70\hfill 11rs / 1r1r1ss & 71\hfill 11rs / 1r1ss & 72\hfill 11rs / 1rr1ss\\
  \includegraphics[width=0.3\textwidth]{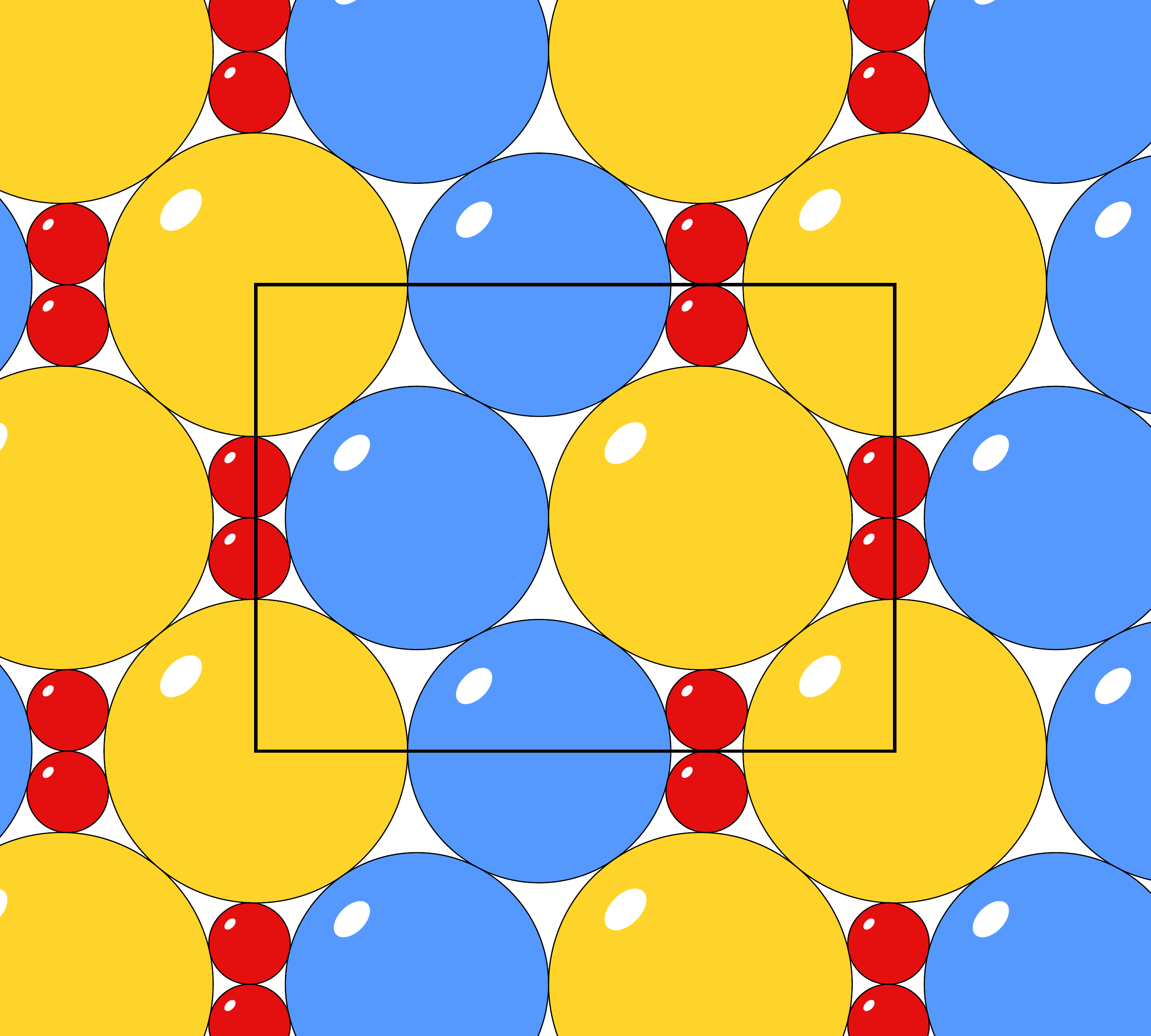} &
  \includegraphics[width=0.3\textwidth]{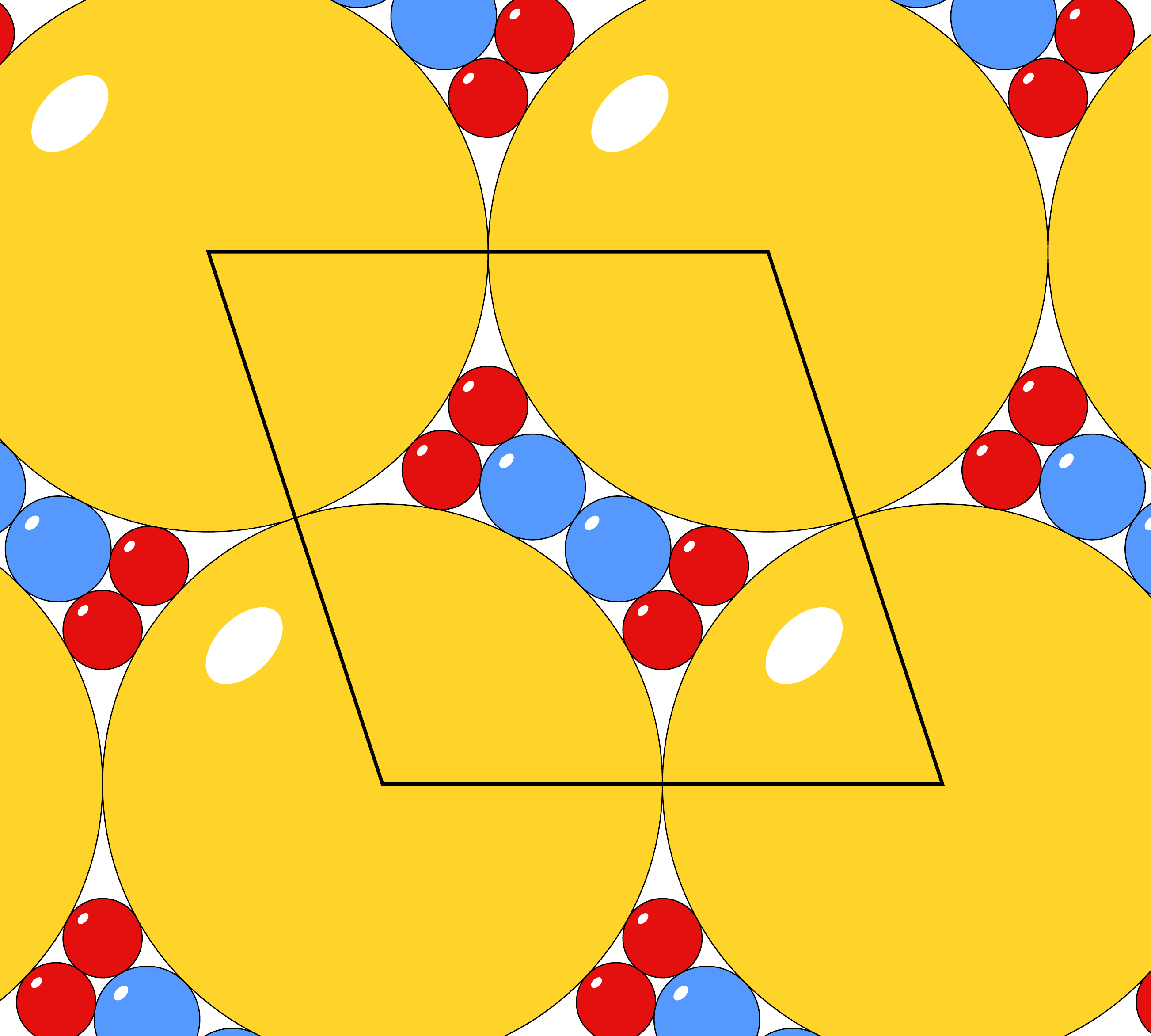} &
  \includegraphics[width=0.3\textwidth]{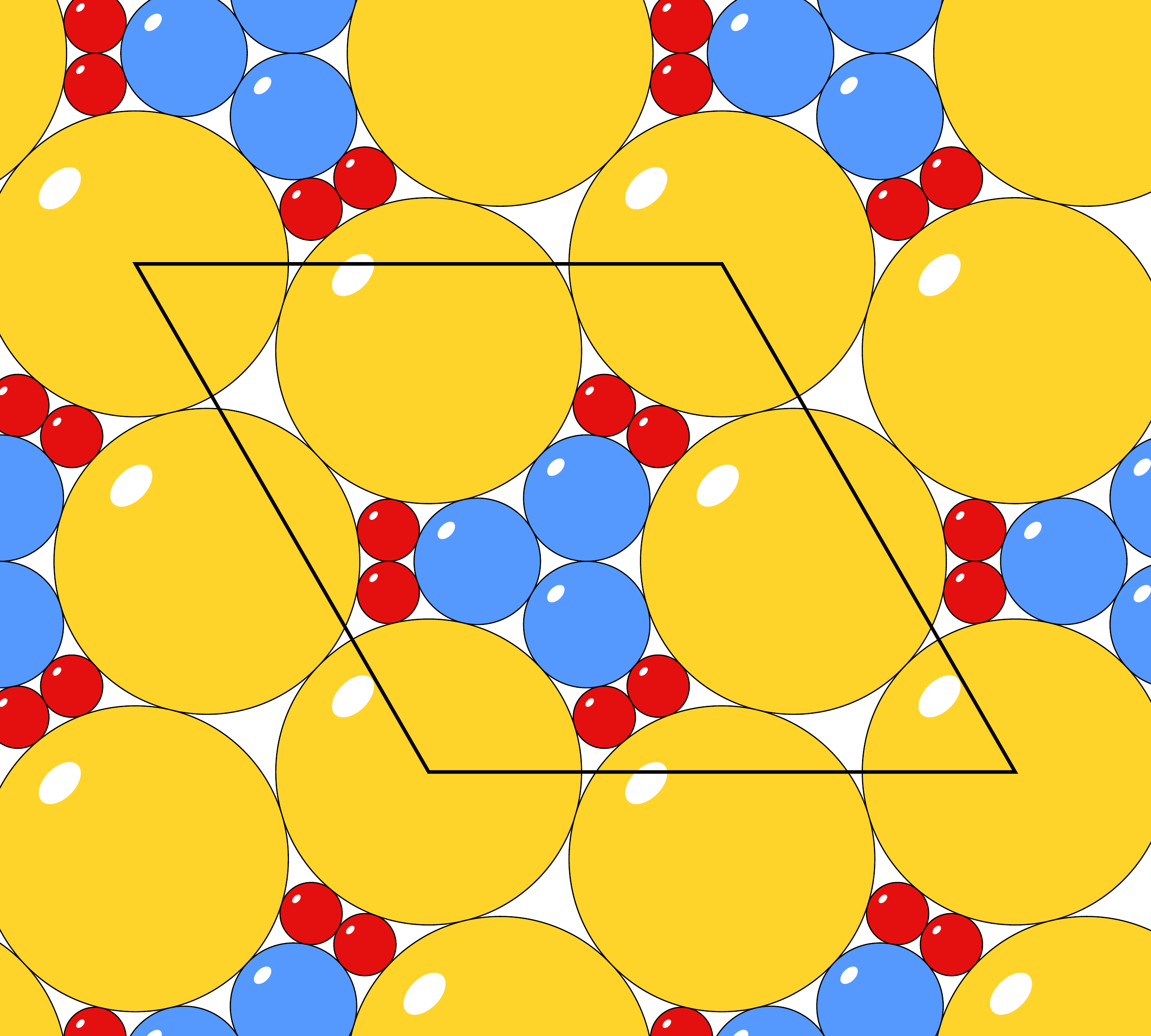}
\end{tabular}
\noindent
\begin{tabular}{lll}
  73\hfill 11rs / 1rrr1ss & 74\hfill 11rs / 1s1s1ssss & 75\hfill 11rs / 1s1sss\\
  \includegraphics[width=0.3\textwidth]{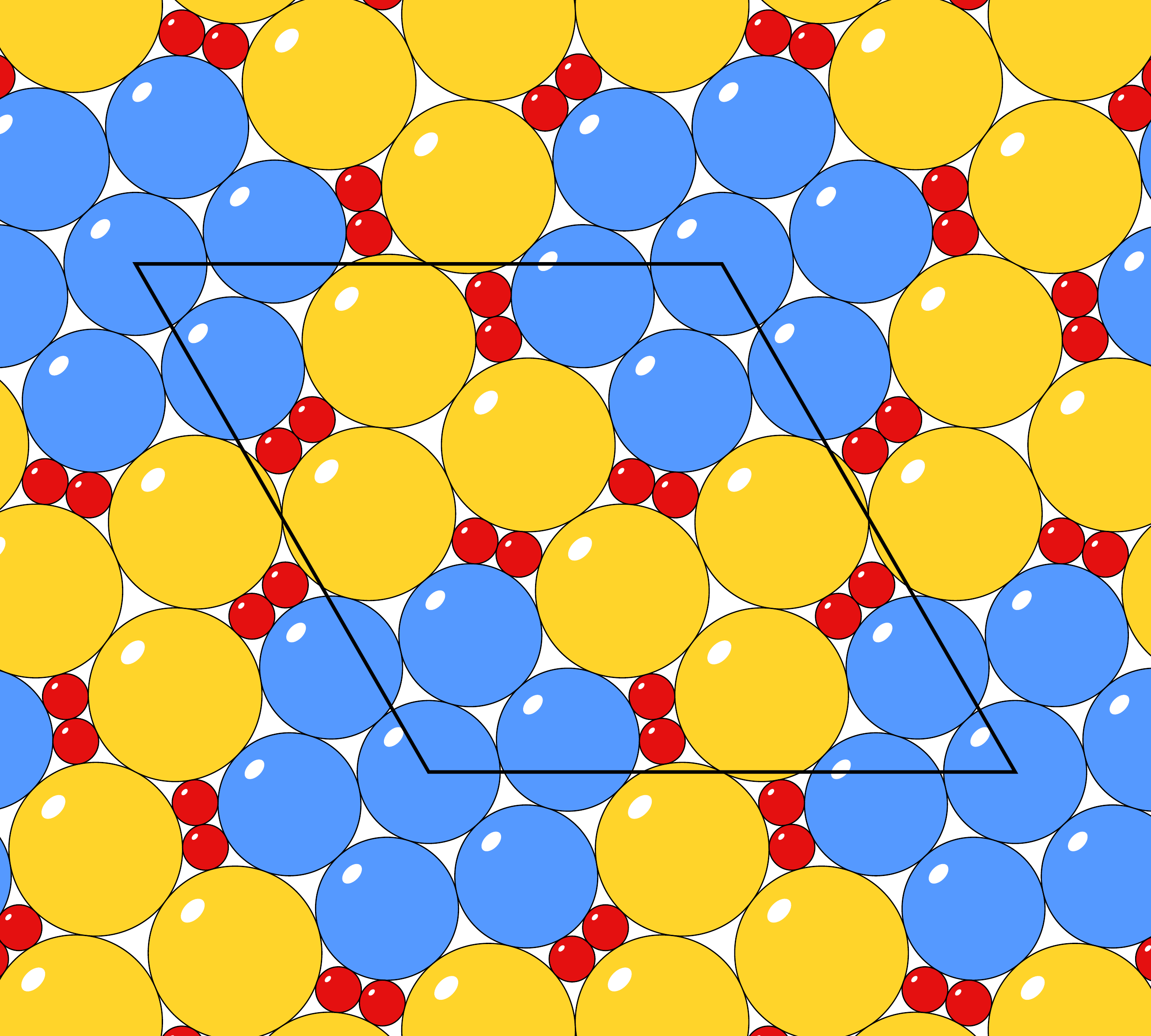} &
  \includegraphics[width=0.3\textwidth]{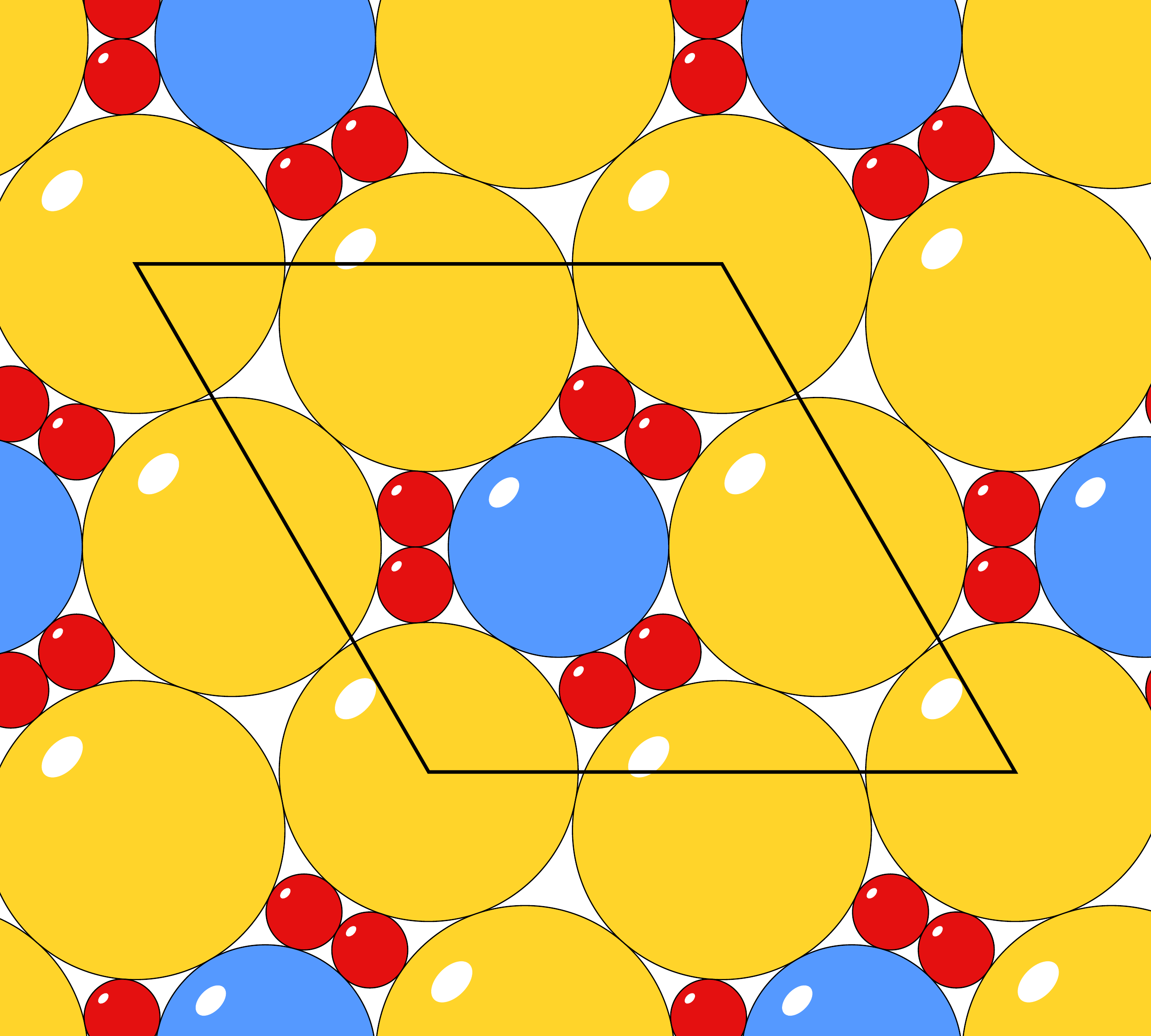} &
  \includegraphics[width=0.3\textwidth]{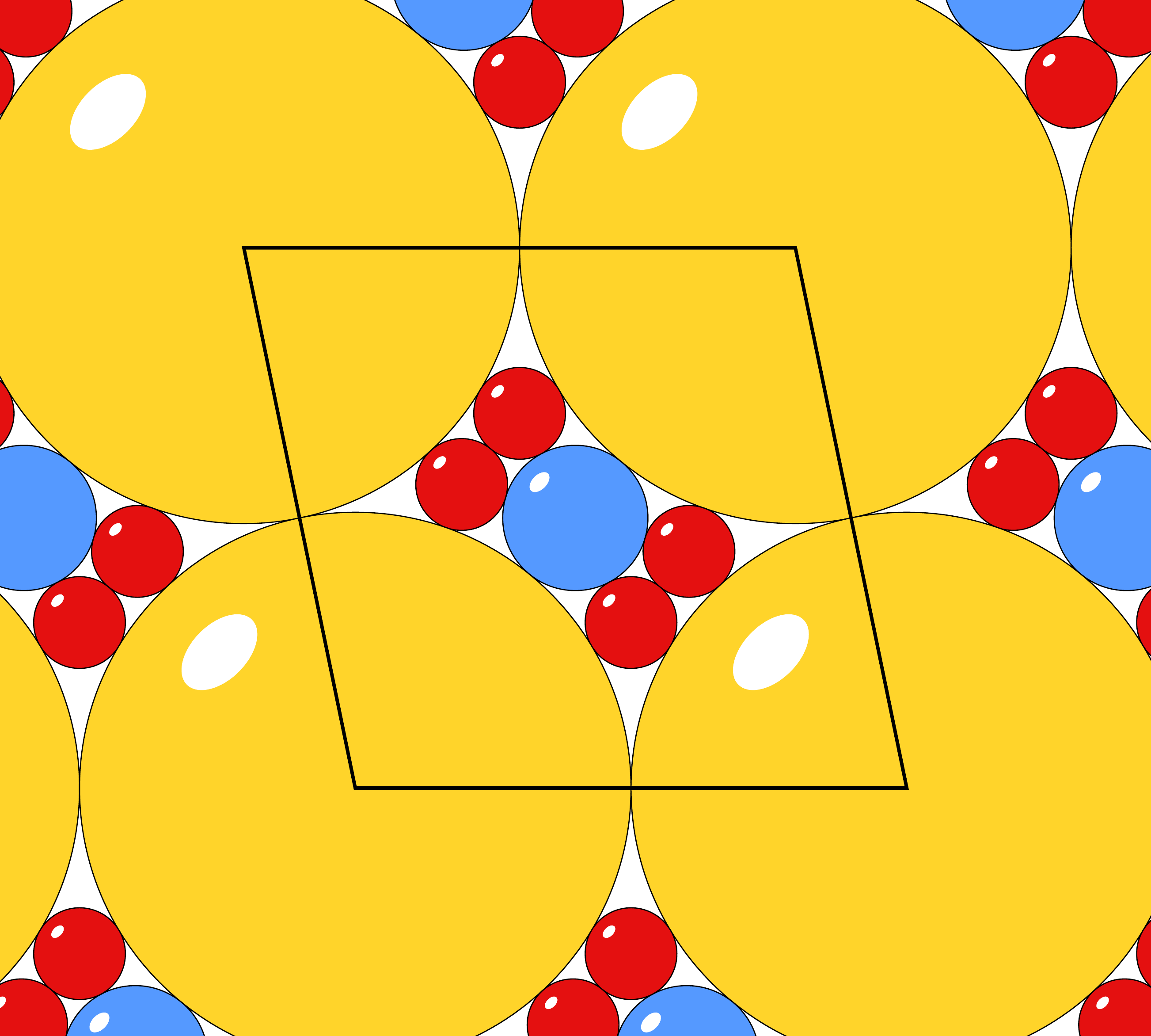}
\end{tabular}
\noindent
\begin{tabular}{lll}
  76\hfill 11rsr / 111ss & 77\hfill 11rsr / 11r1ss & 78\hfill 11rsr / 1rr1ss\\
  \includegraphics[width=0.3\textwidth]{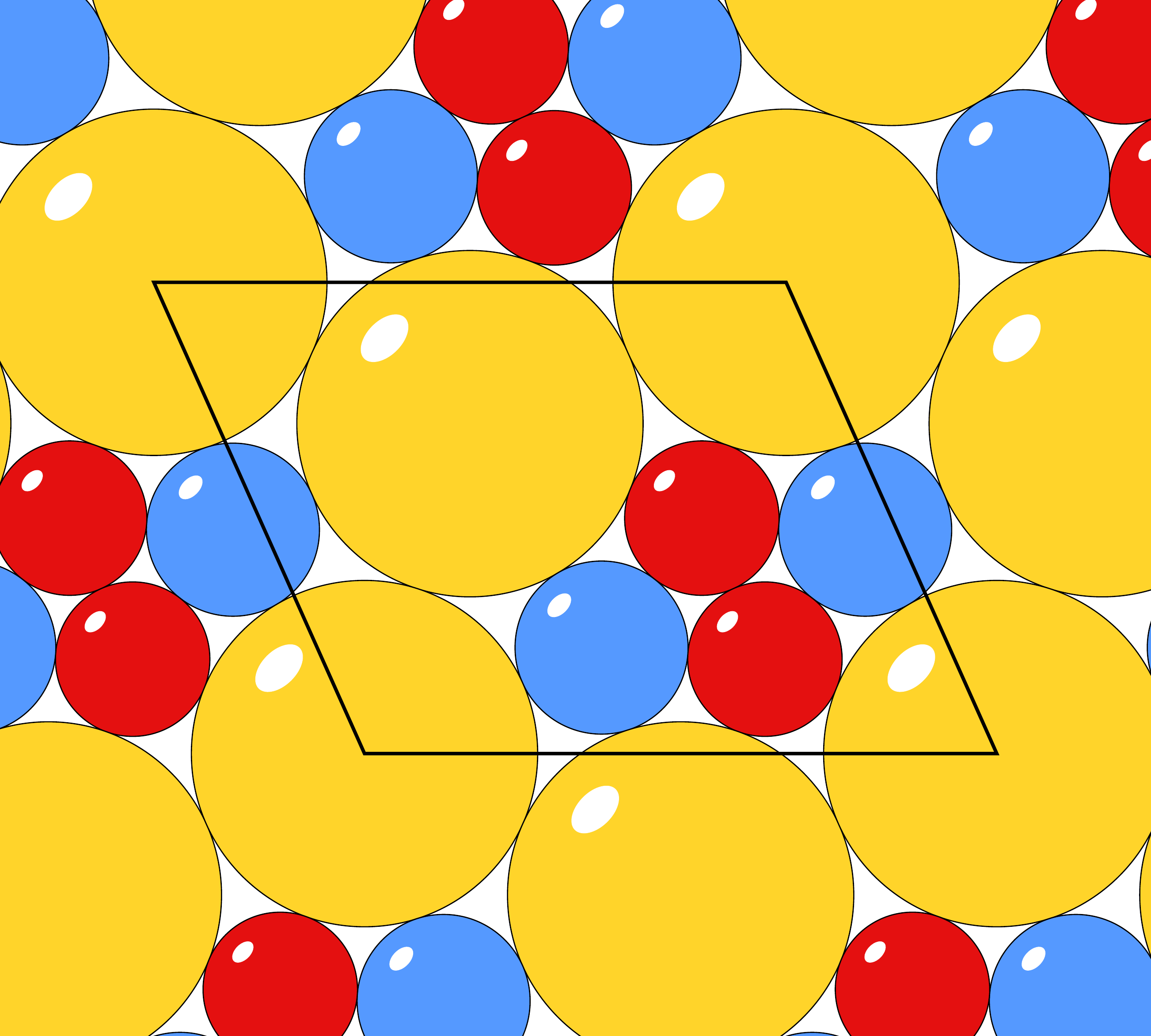} &
  \includegraphics[width=0.3\textwidth]{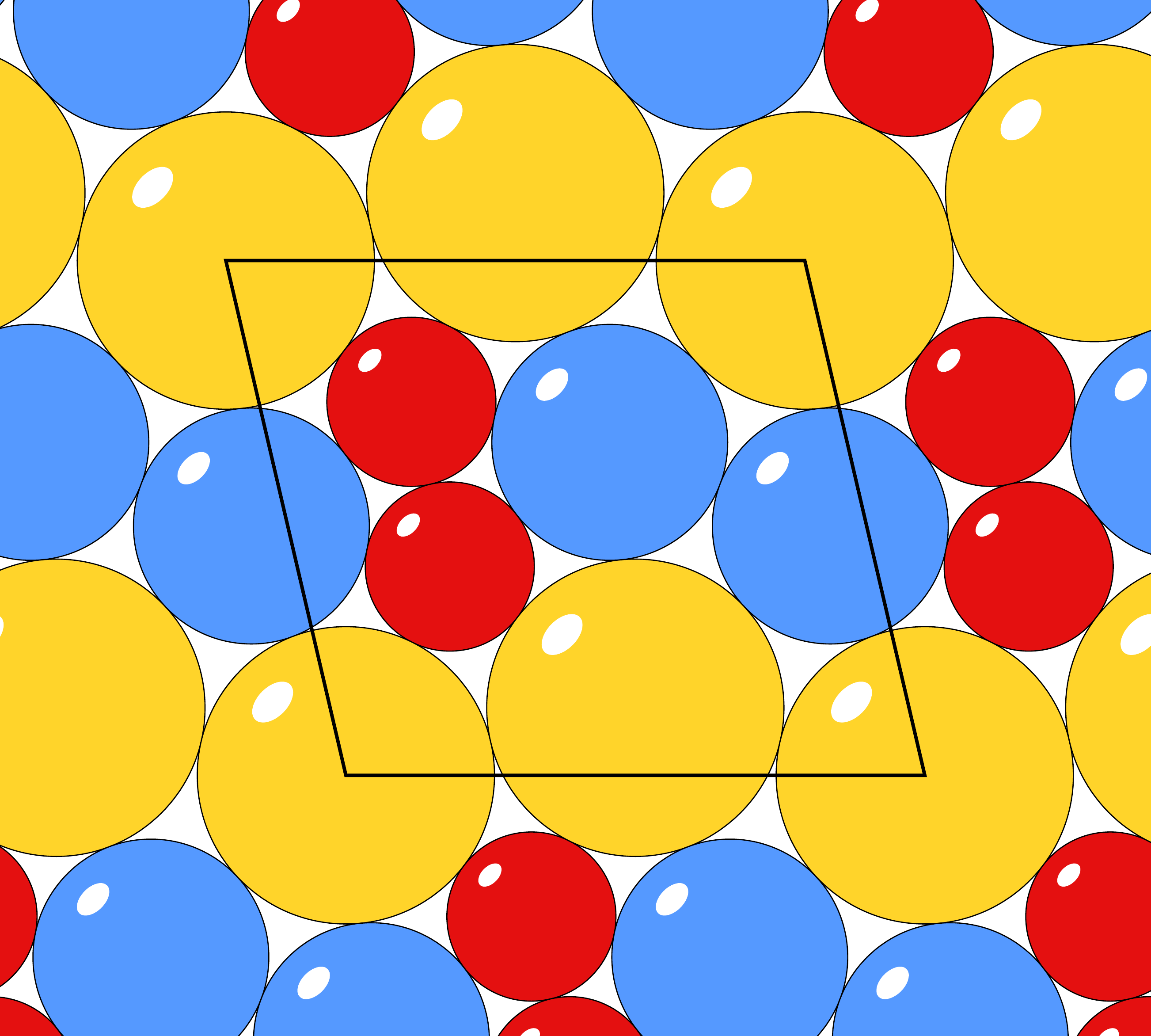} &
  \includegraphics[width=0.3\textwidth]{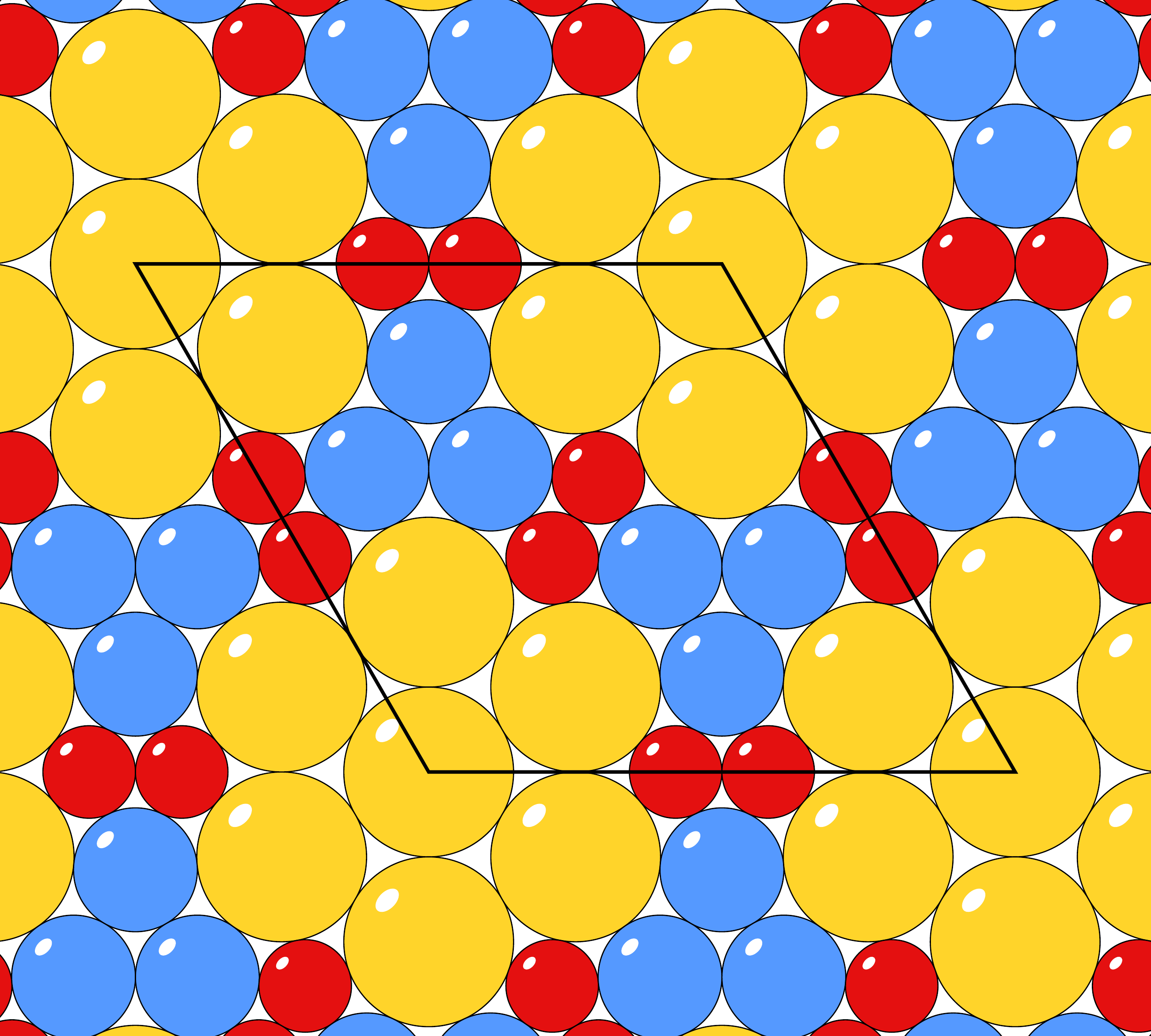}
\end{tabular}
\noindent
\begin{tabular}{lll}
  79\hfill 11rsr / 1s1sss & 80\hfill 1r1r / 1111s & 81\hfill 1r1r / 111r1s\\
  \includegraphics[width=0.3\textwidth]{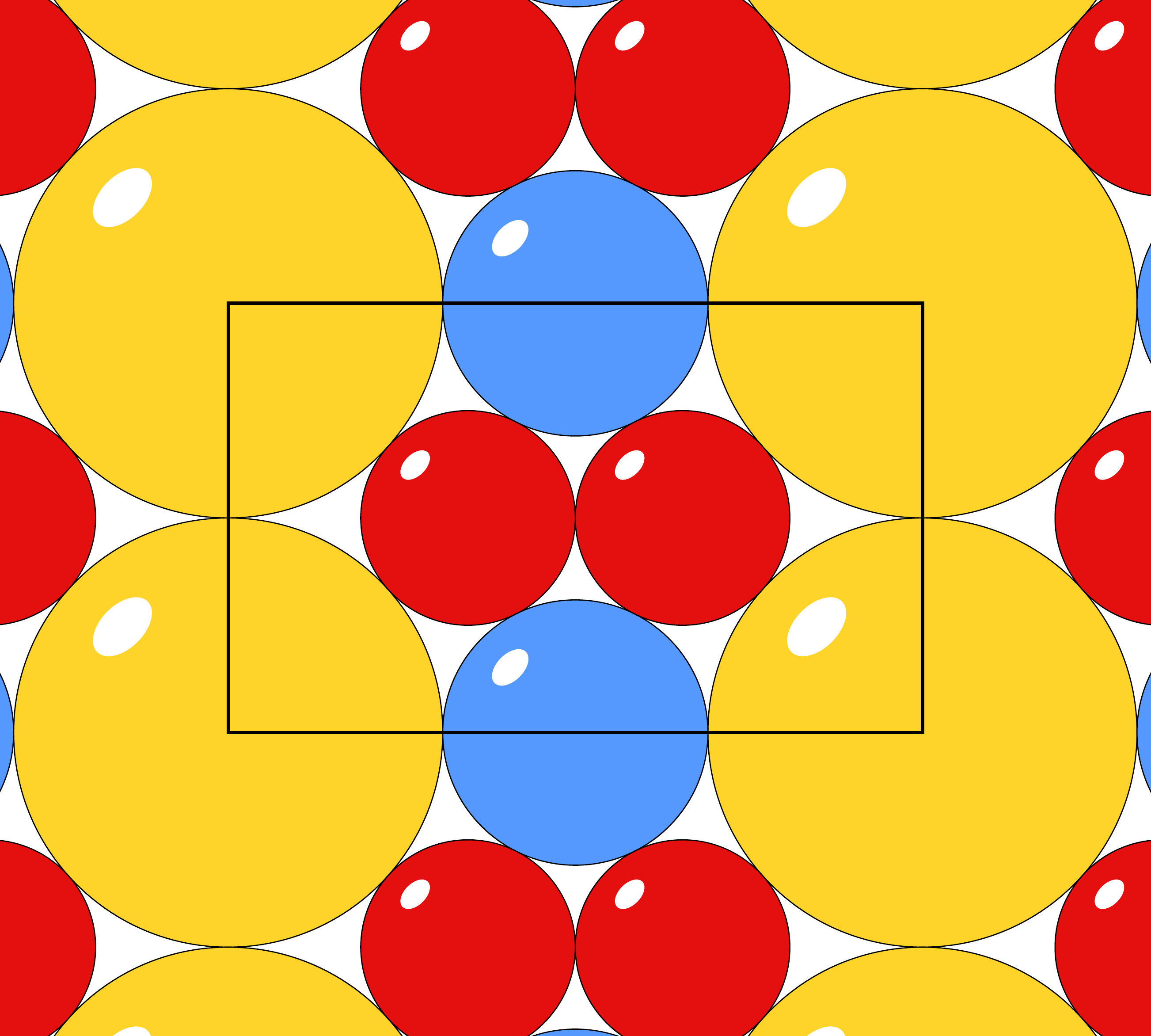} &
  \includegraphics[width=0.3\textwidth]{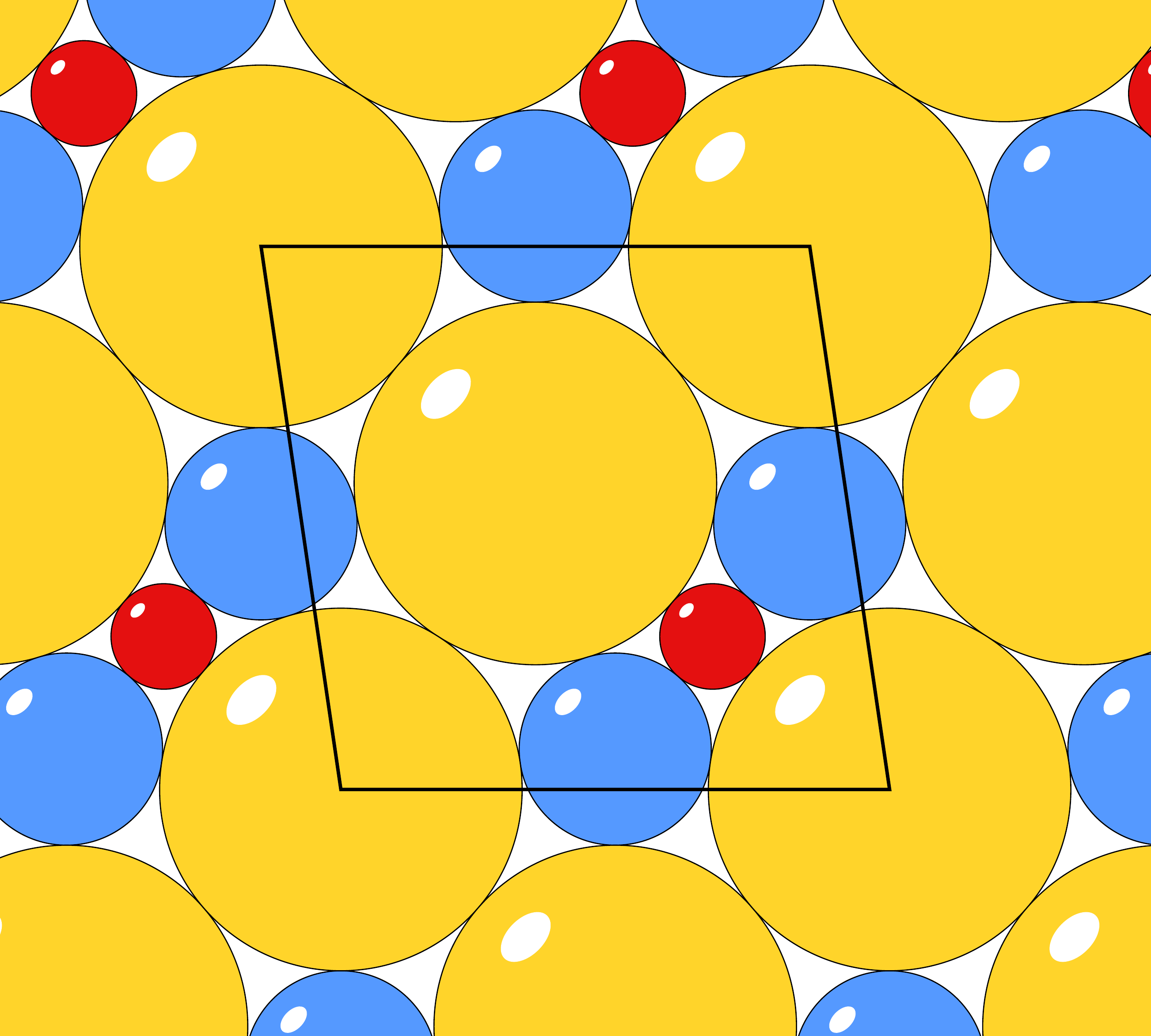} &
  \includegraphics[width=0.3\textwidth]{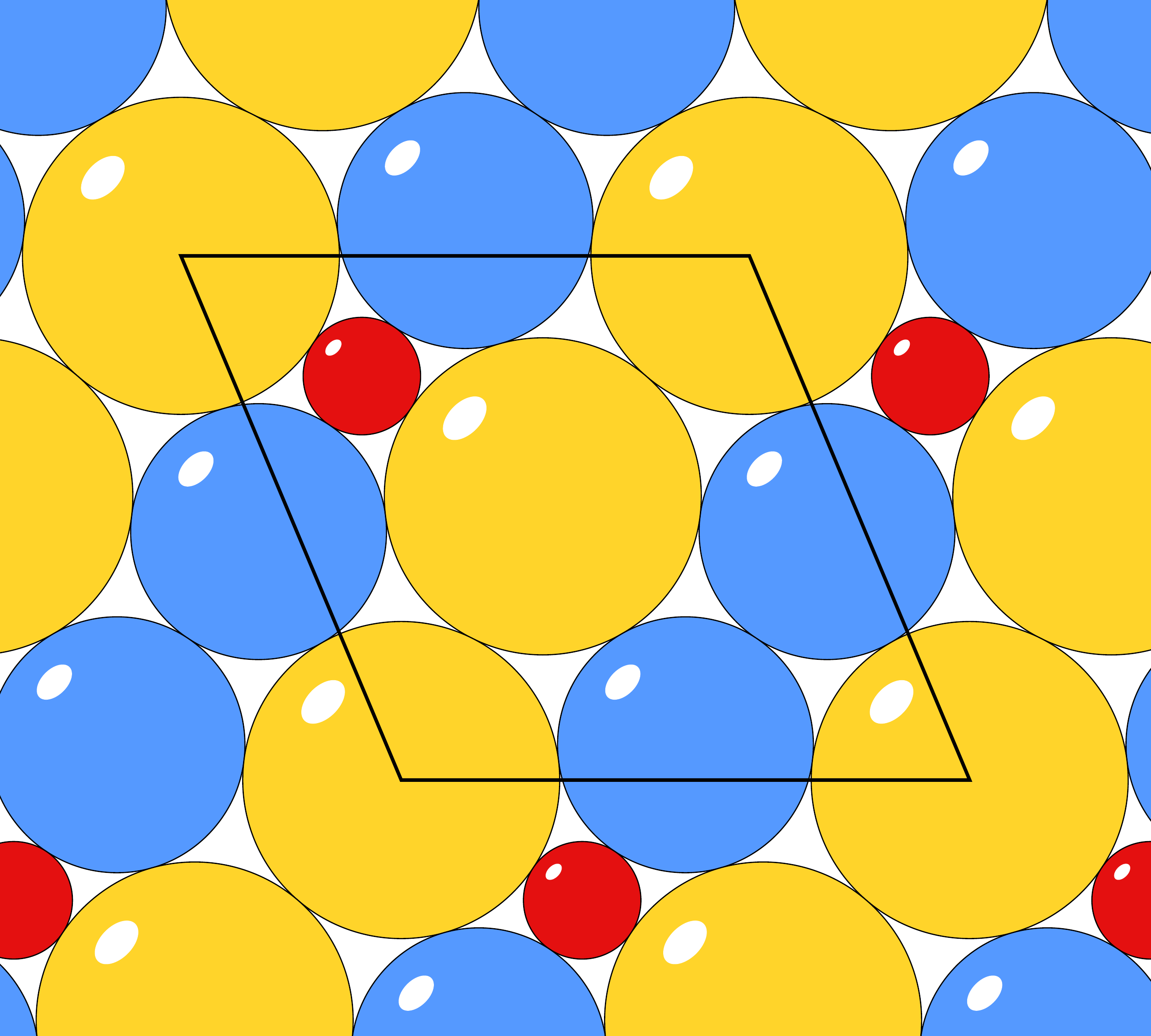}
\end{tabular}
\noindent
\begin{tabular}{lll}
  82\hfill 1r1r / 111s1s & 83\hfill 1r1r / 11r1s & 84\hfill 1r1r / 11rr1s\\
  \includegraphics[width=0.3\textwidth]{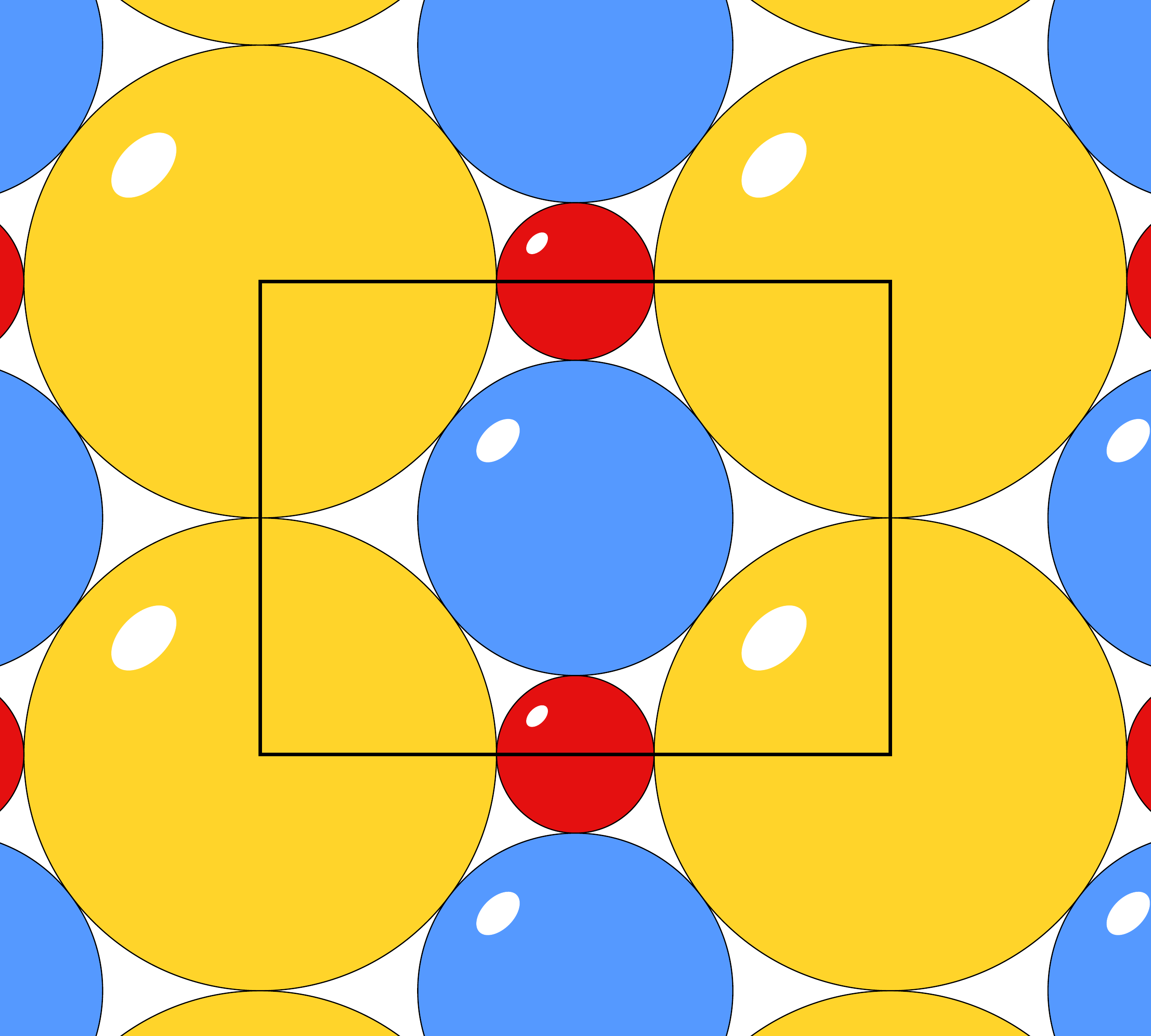} &
  \includegraphics[width=0.3\textwidth]{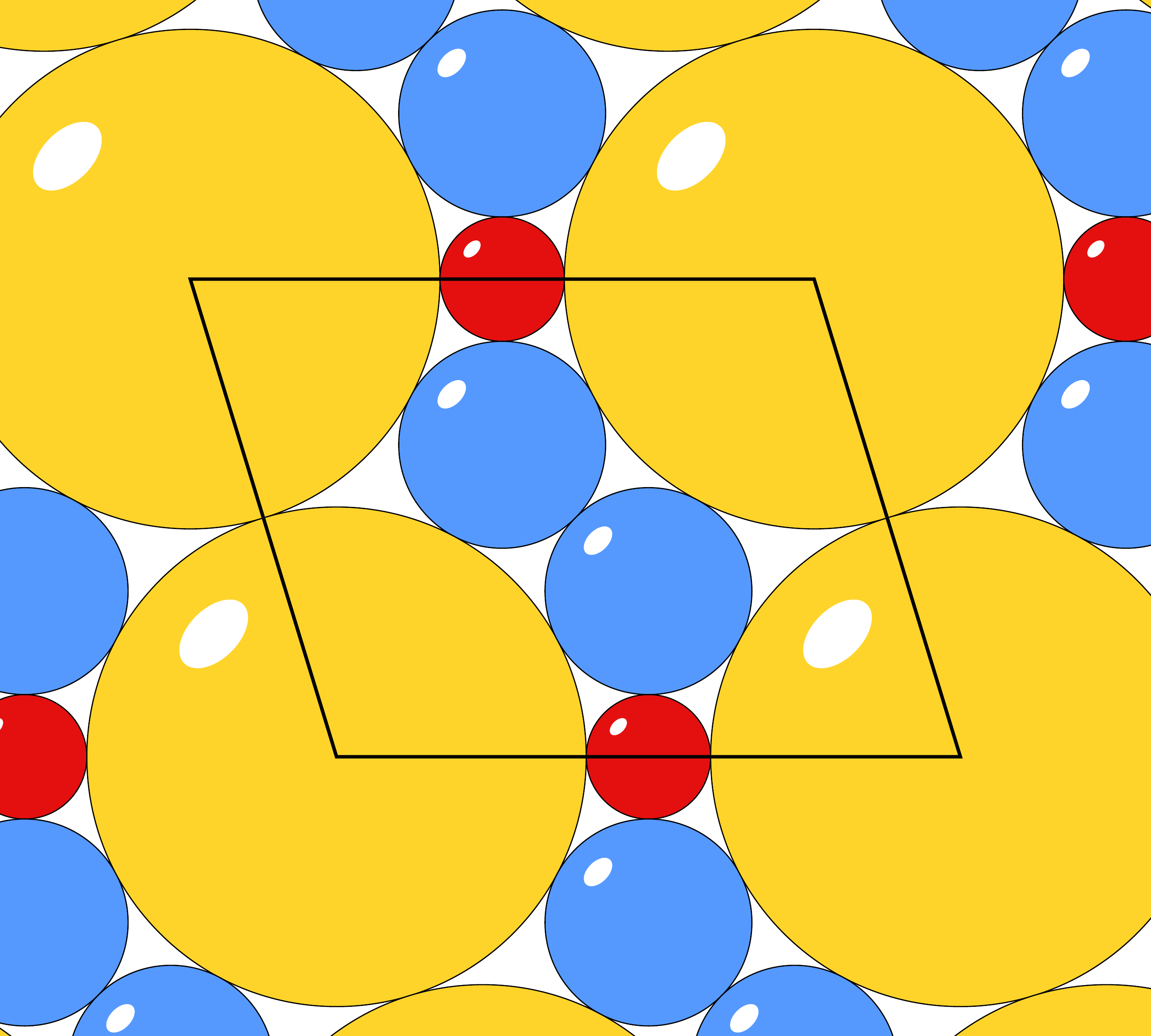} &
  \includegraphics[width=0.3\textwidth]{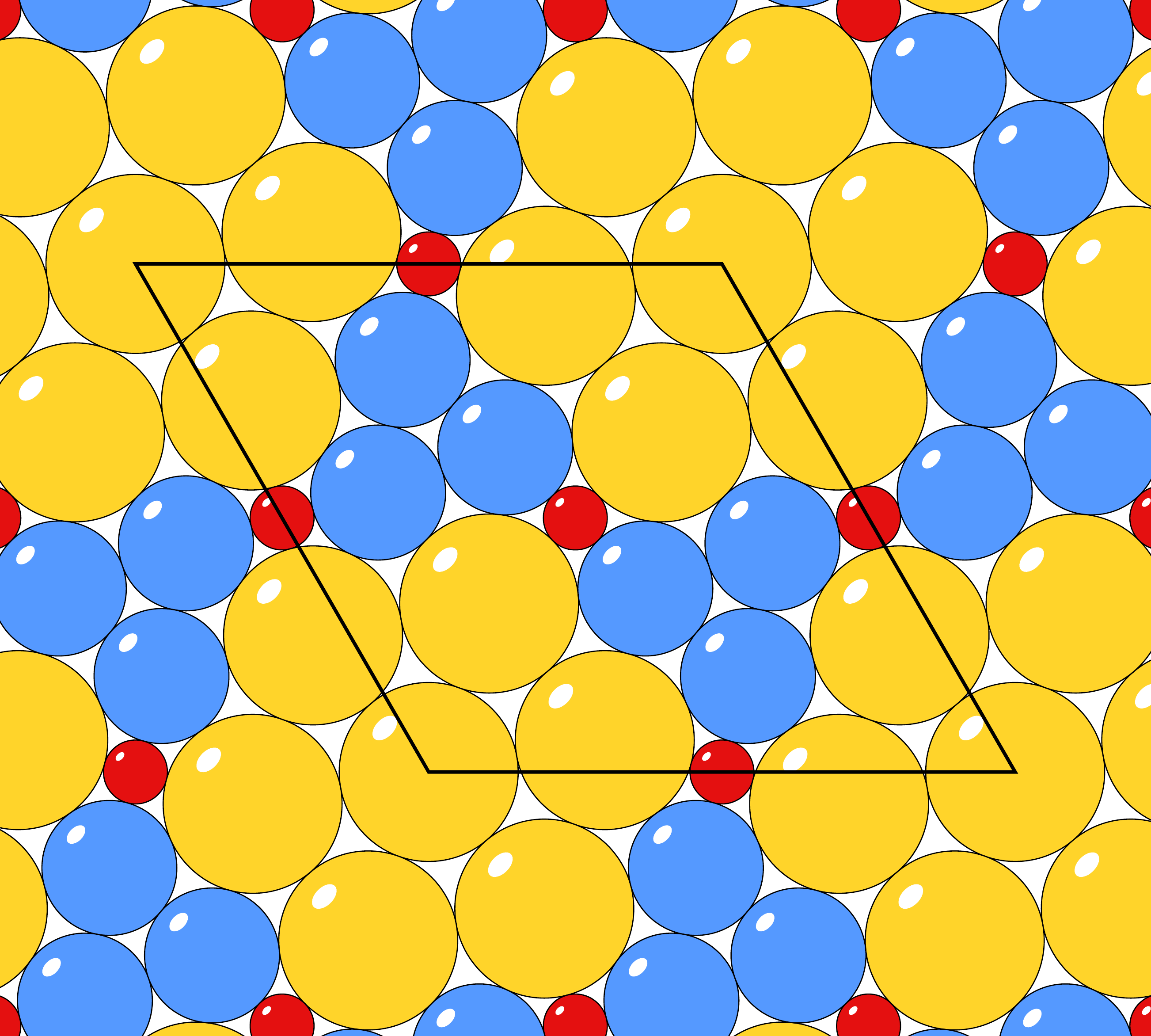}
\end{tabular}
\noindent
\begin{tabular}{lll}
  85\hfill 1r1r / 11s1s & 86\hfill 1r1r / 11s1s1s & 87\hfill 1r1r / 1r1r1s\\
  \includegraphics[width=0.3\textwidth]{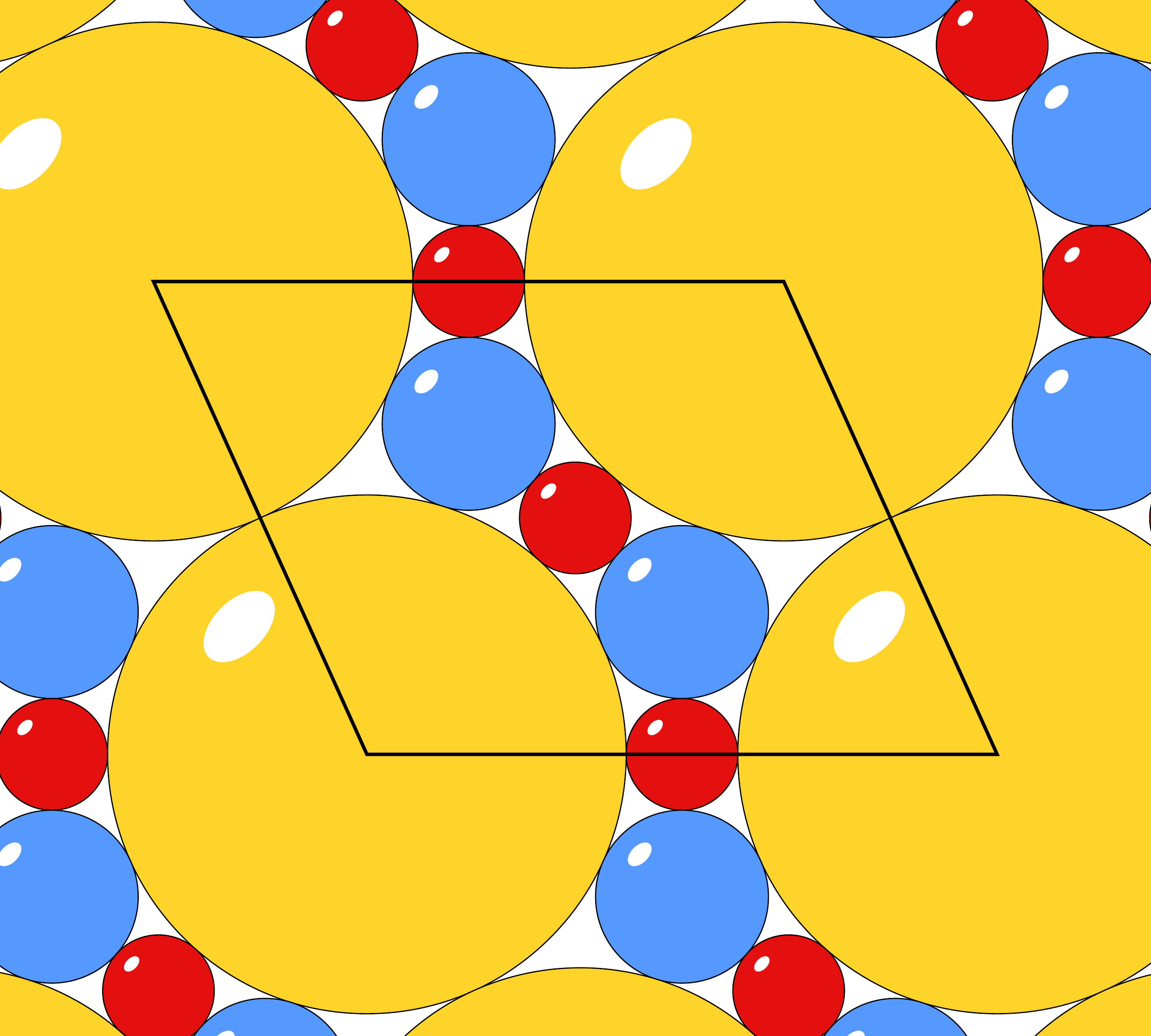} &
  \includegraphics[width=0.3\textwidth]{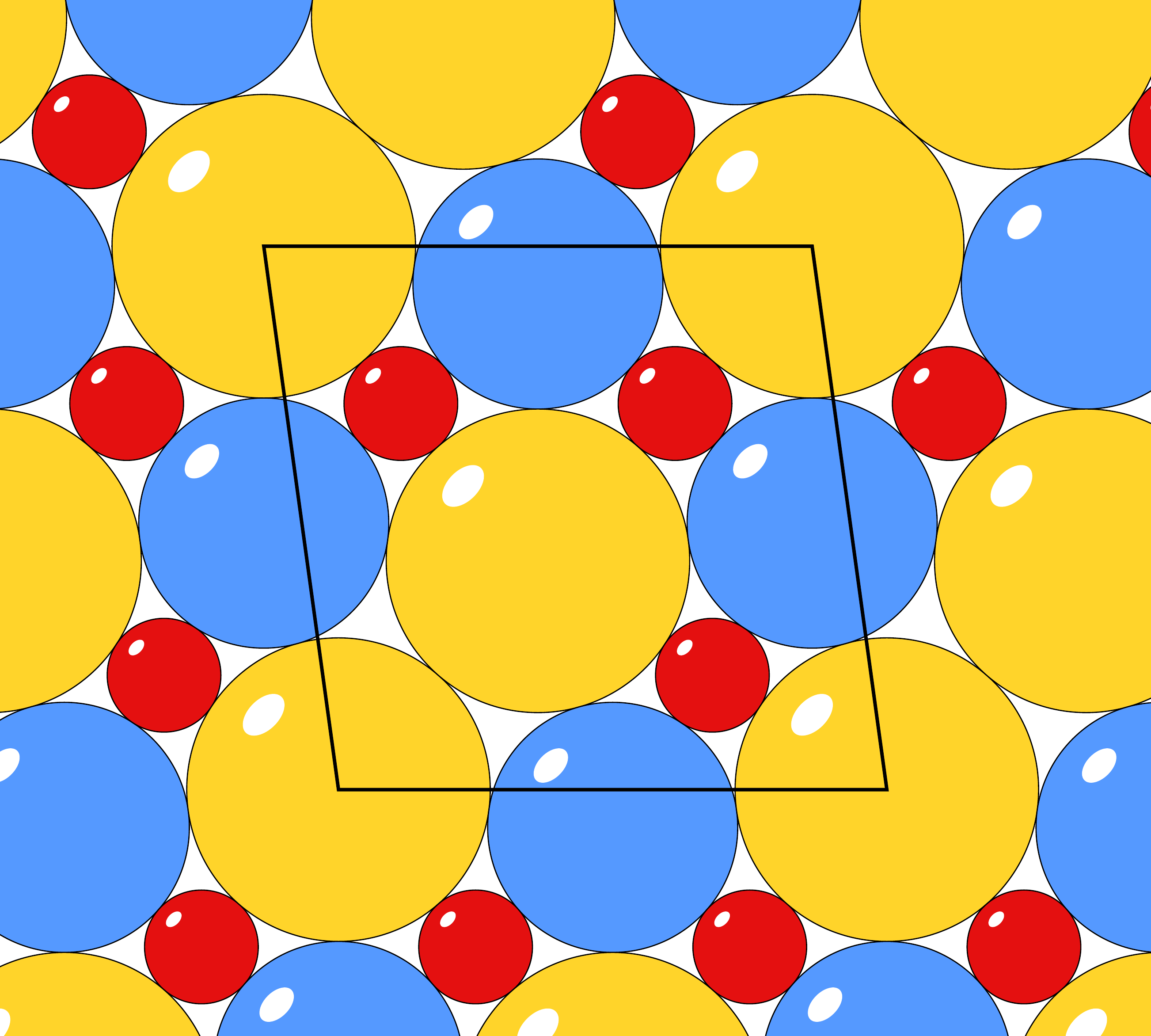} &
  \includegraphics[width=0.3\textwidth]{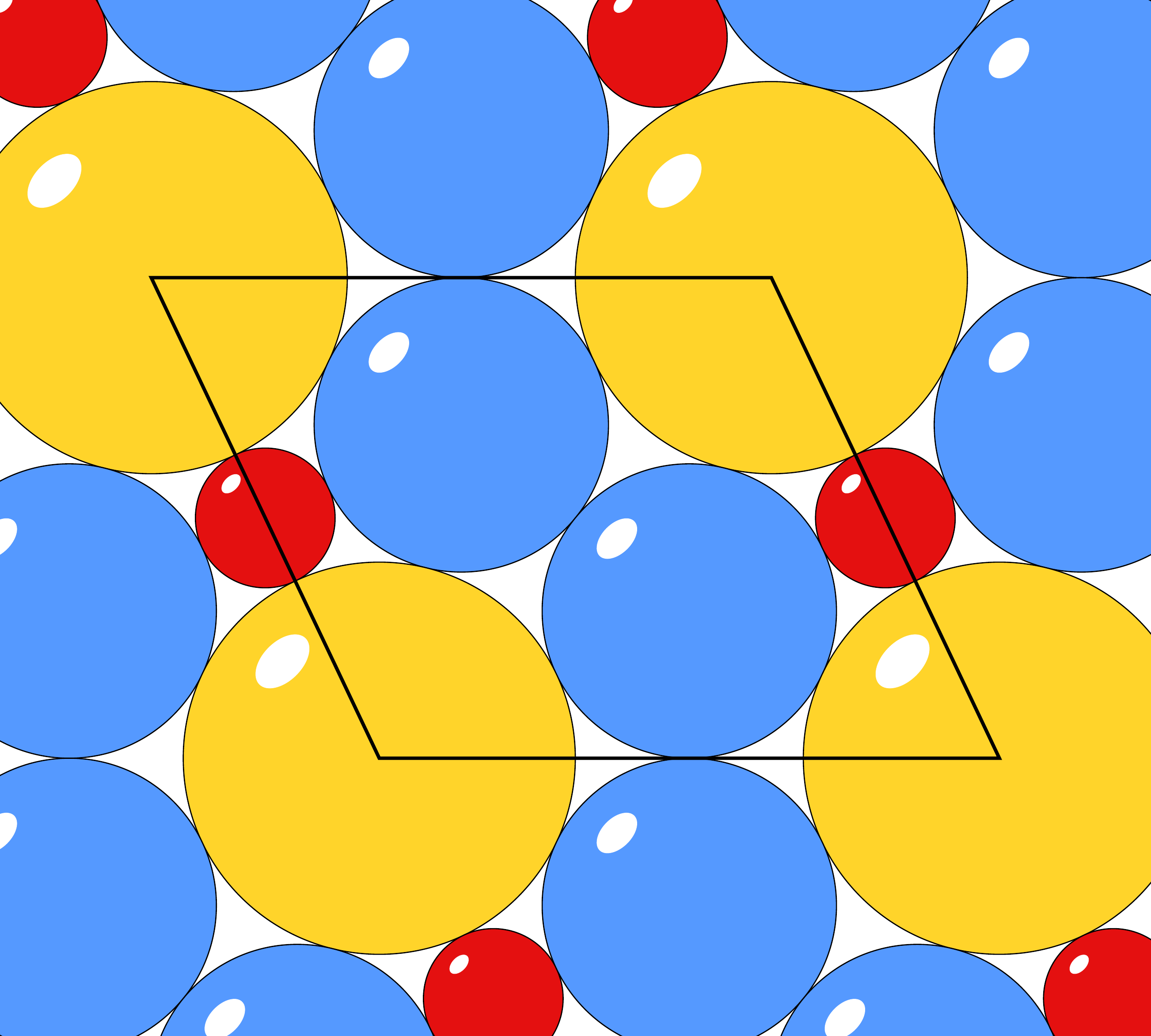}
\end{tabular}
\noindent
\begin{tabular}{lll}
  88\hfill 1r1r / 1r1s1s & 89\hfill 1r1r / 1rr1s & 90\hfill 1r1r / 1rrr1s\\
  \includegraphics[width=0.3\textwidth]{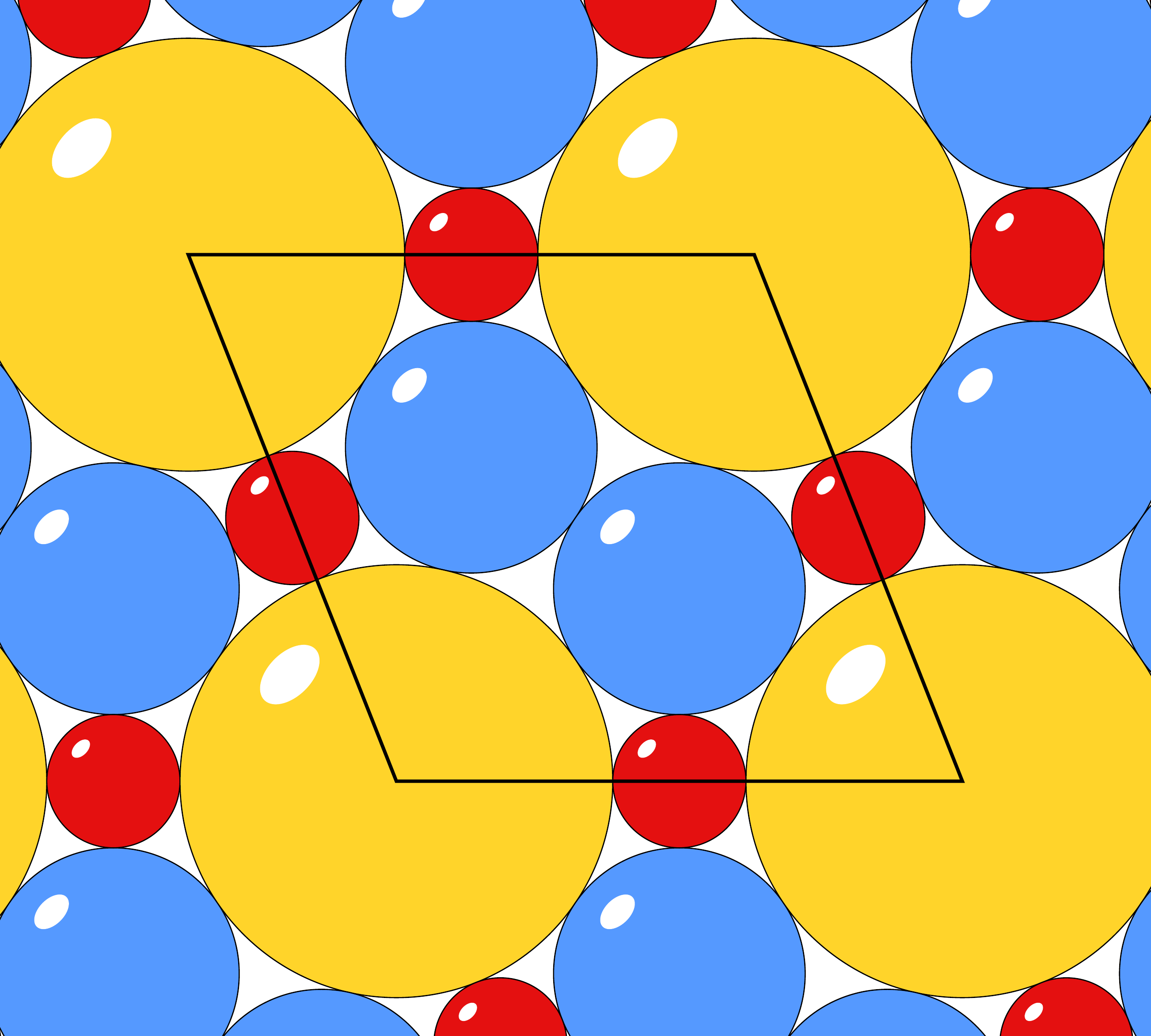} &
  \includegraphics[width=0.3\textwidth]{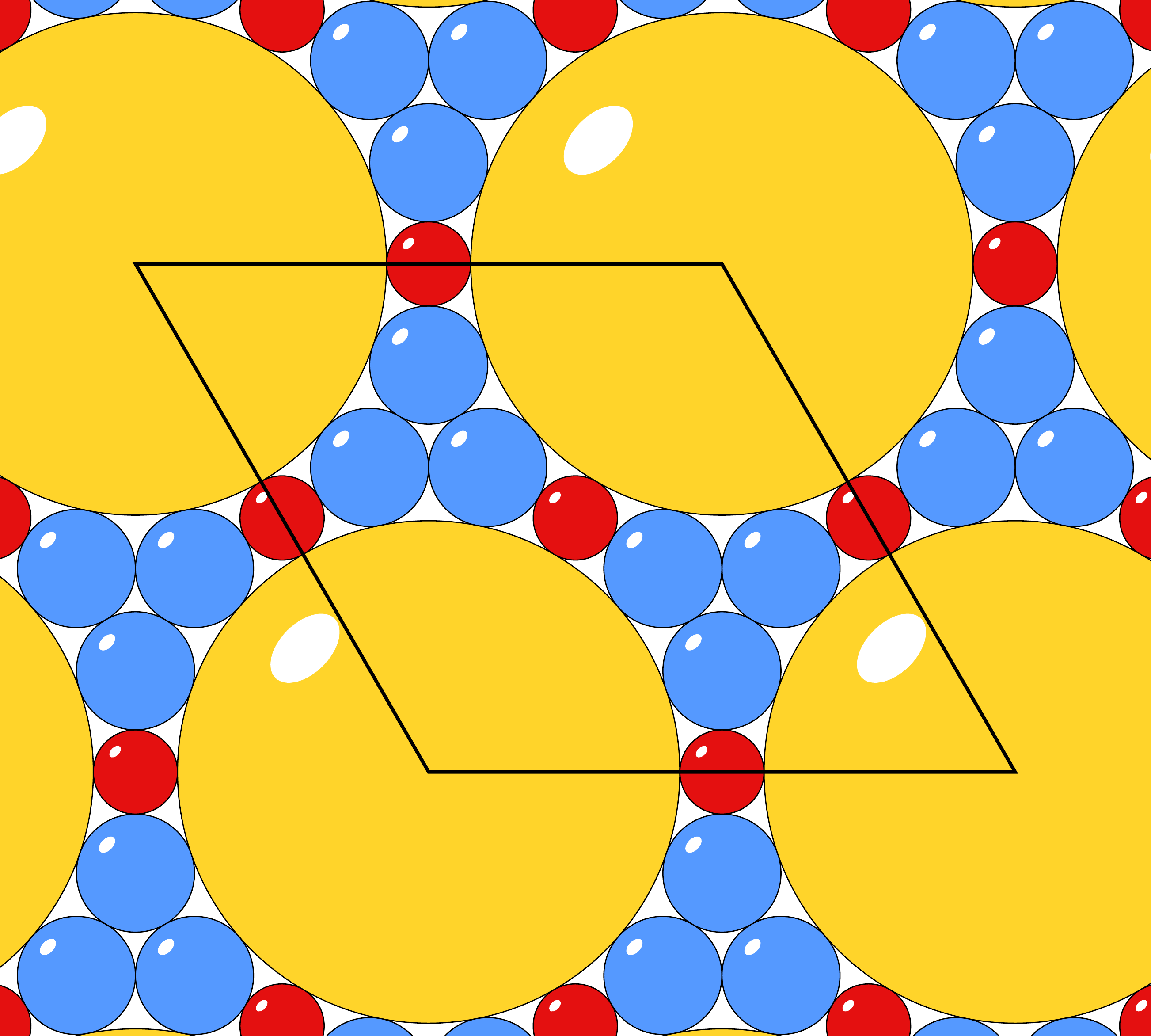} &
  \includegraphics[width=0.3\textwidth]{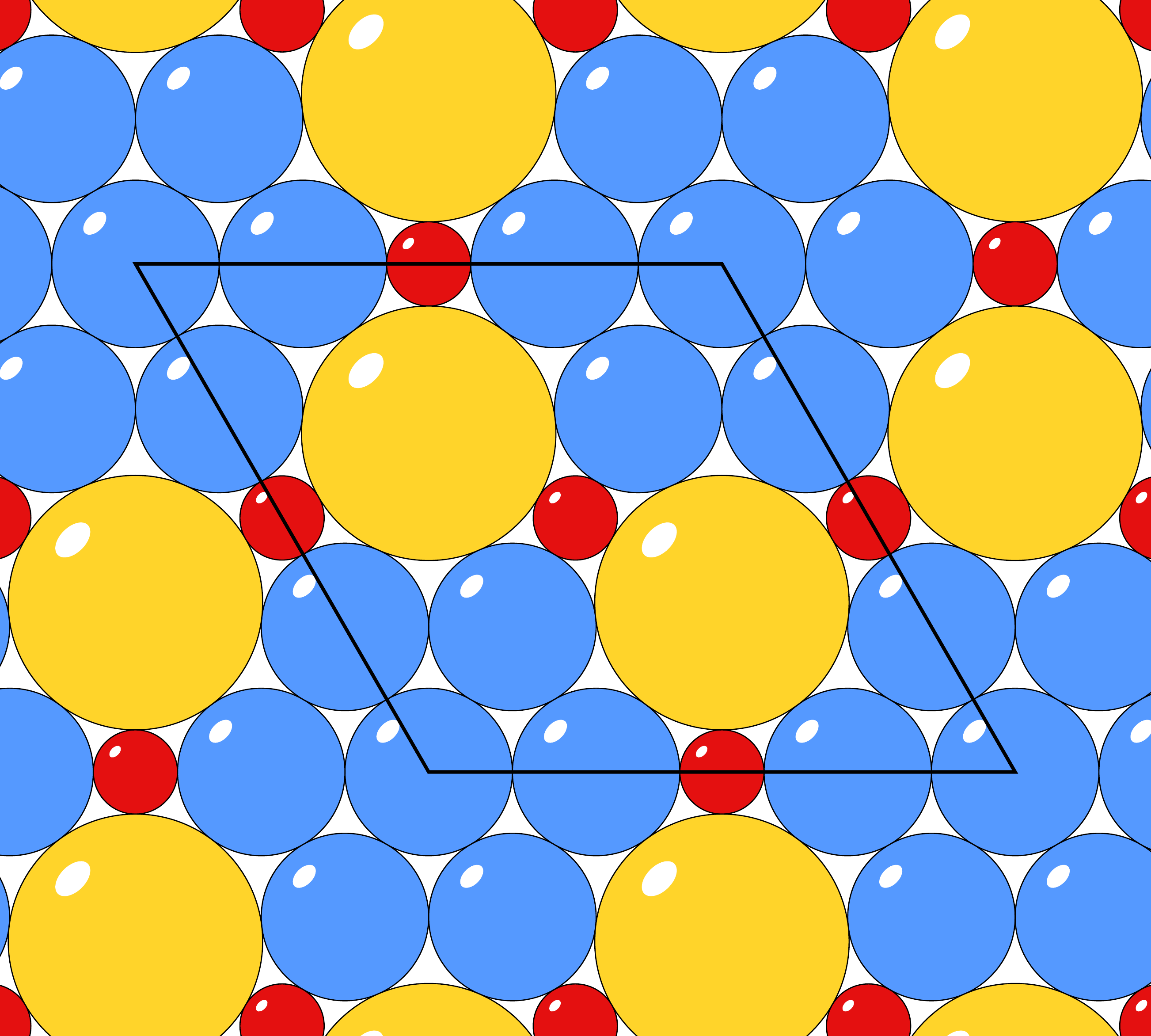}
\end{tabular}
\noindent
\begin{tabular}{lll}
  91\hfill 1r1r / 1s1s1s & 92\hfill 1r1r / 1srsrs & 93\hfill 1r1rr / 1s1srs\\
  \includegraphics[width=0.3\textwidth]{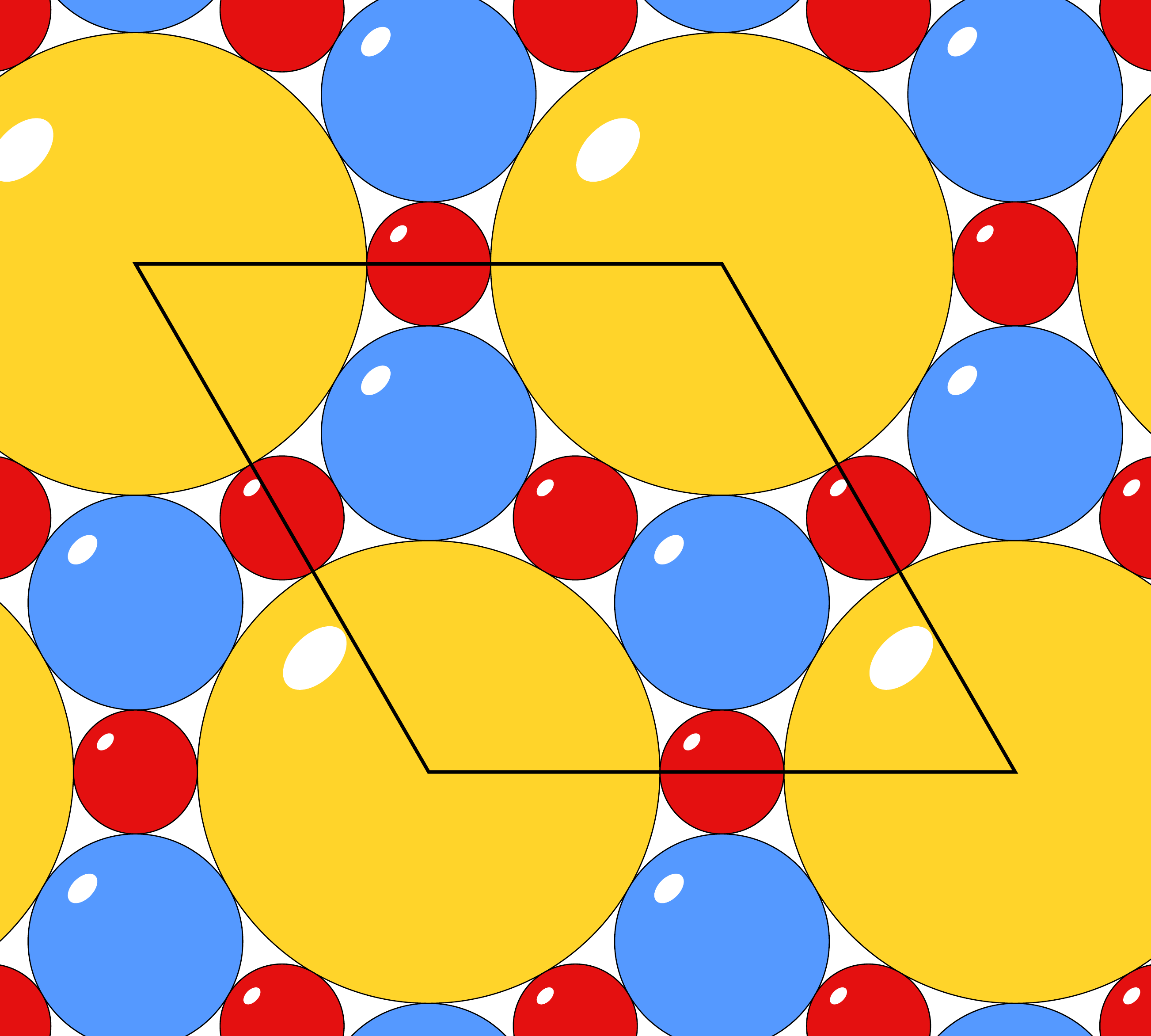} &
  \includegraphics[width=0.3\textwidth]{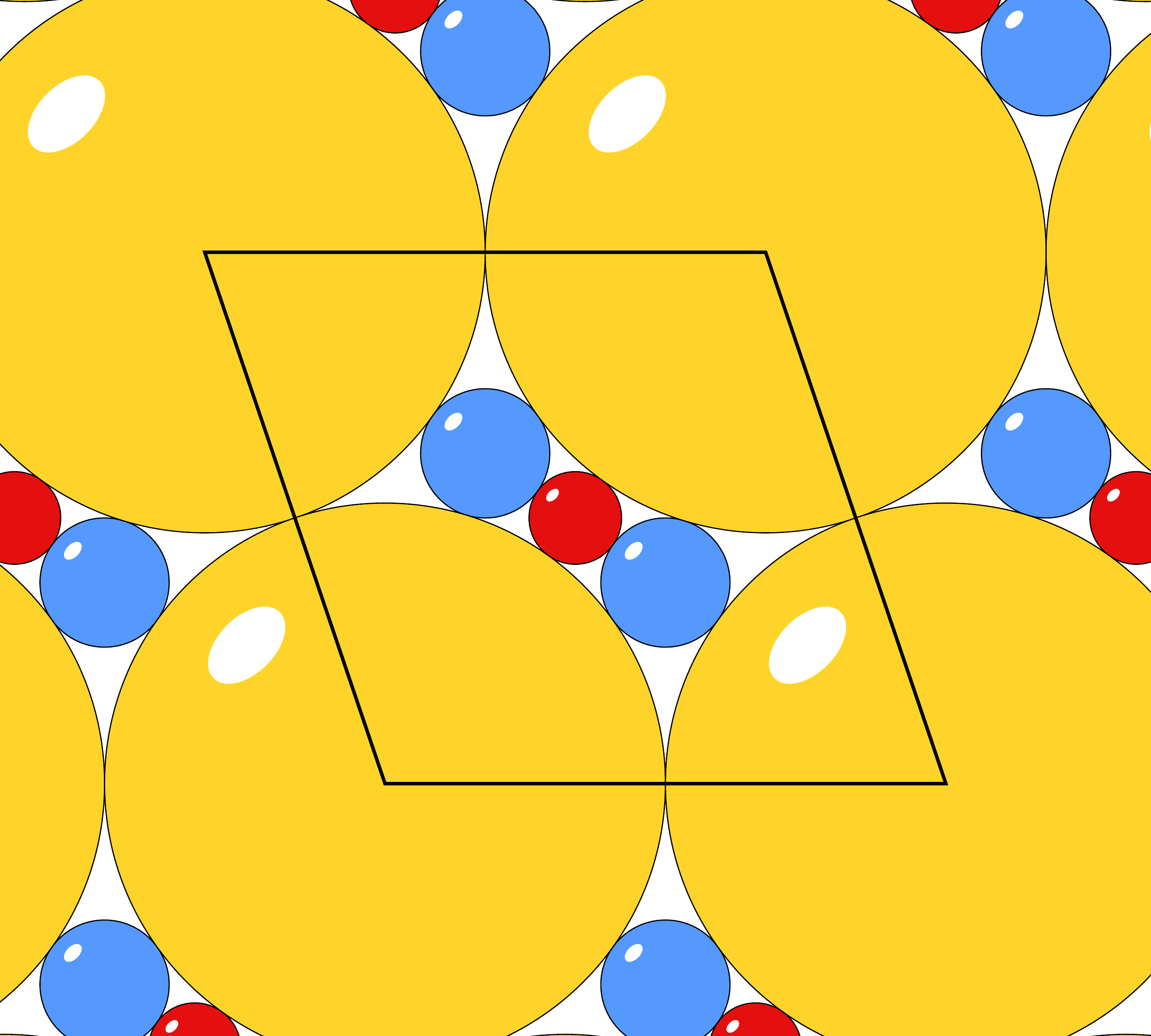} &
  \includegraphics[width=0.3\textwidth]{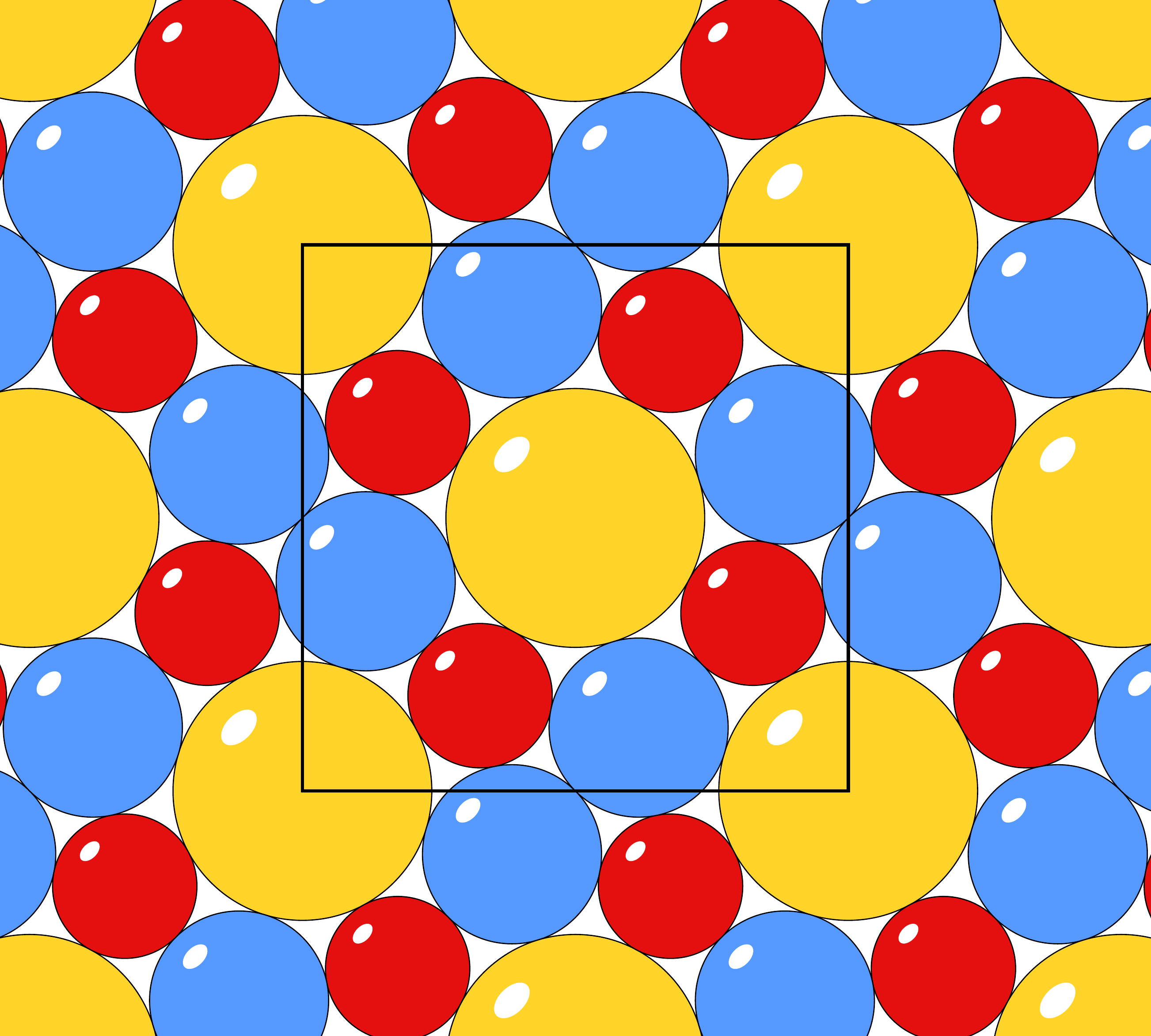}
\end{tabular}
\noindent
\begin{tabular}{lll}
  94\hfill 1r1s / 1111s & 95\hfill 1r1s / 111r1s & 96\hfill 1r1s / 111s1s\\
  \includegraphics[width=0.3\textwidth]{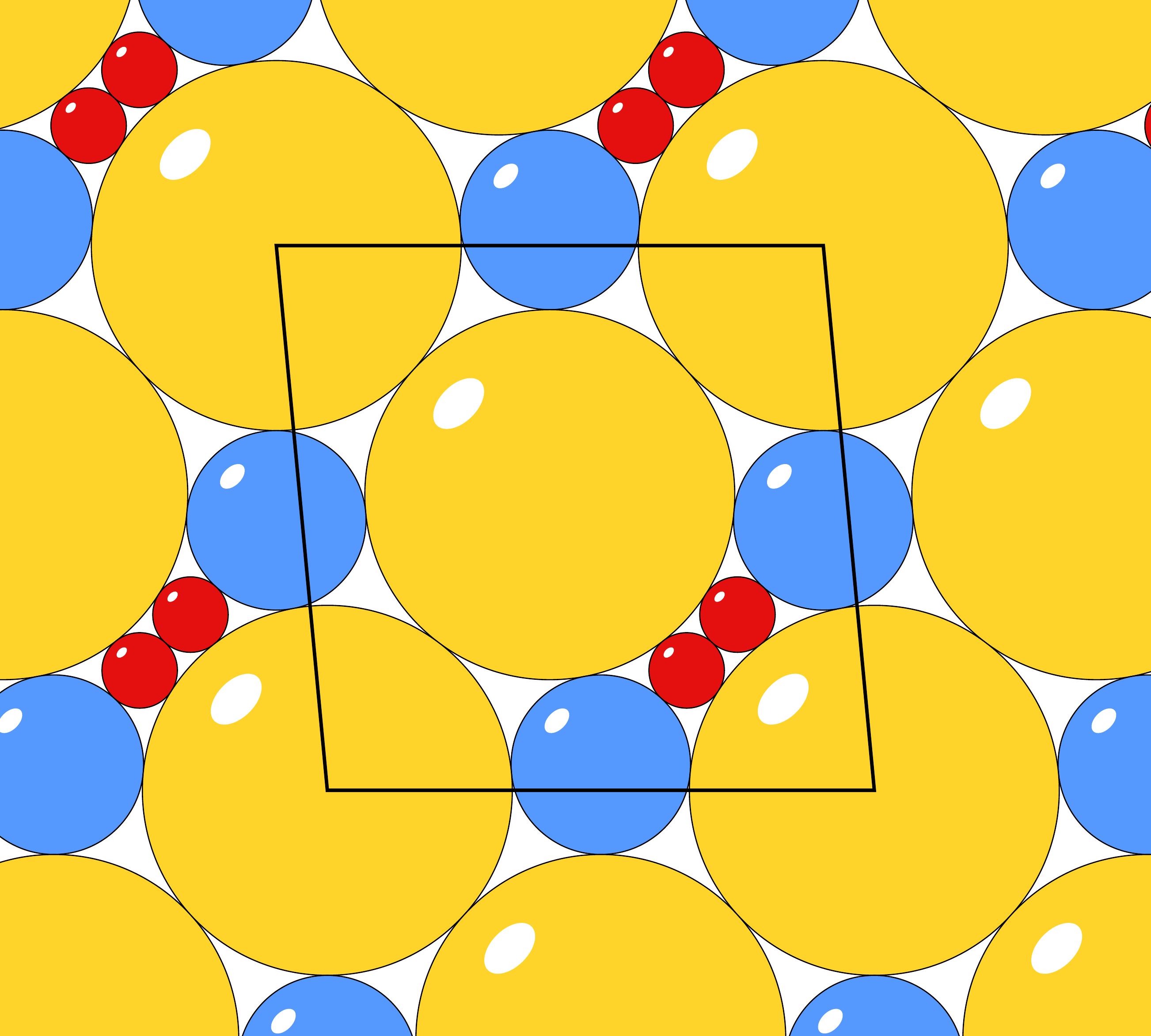} &
  \includegraphics[width=0.3\textwidth]{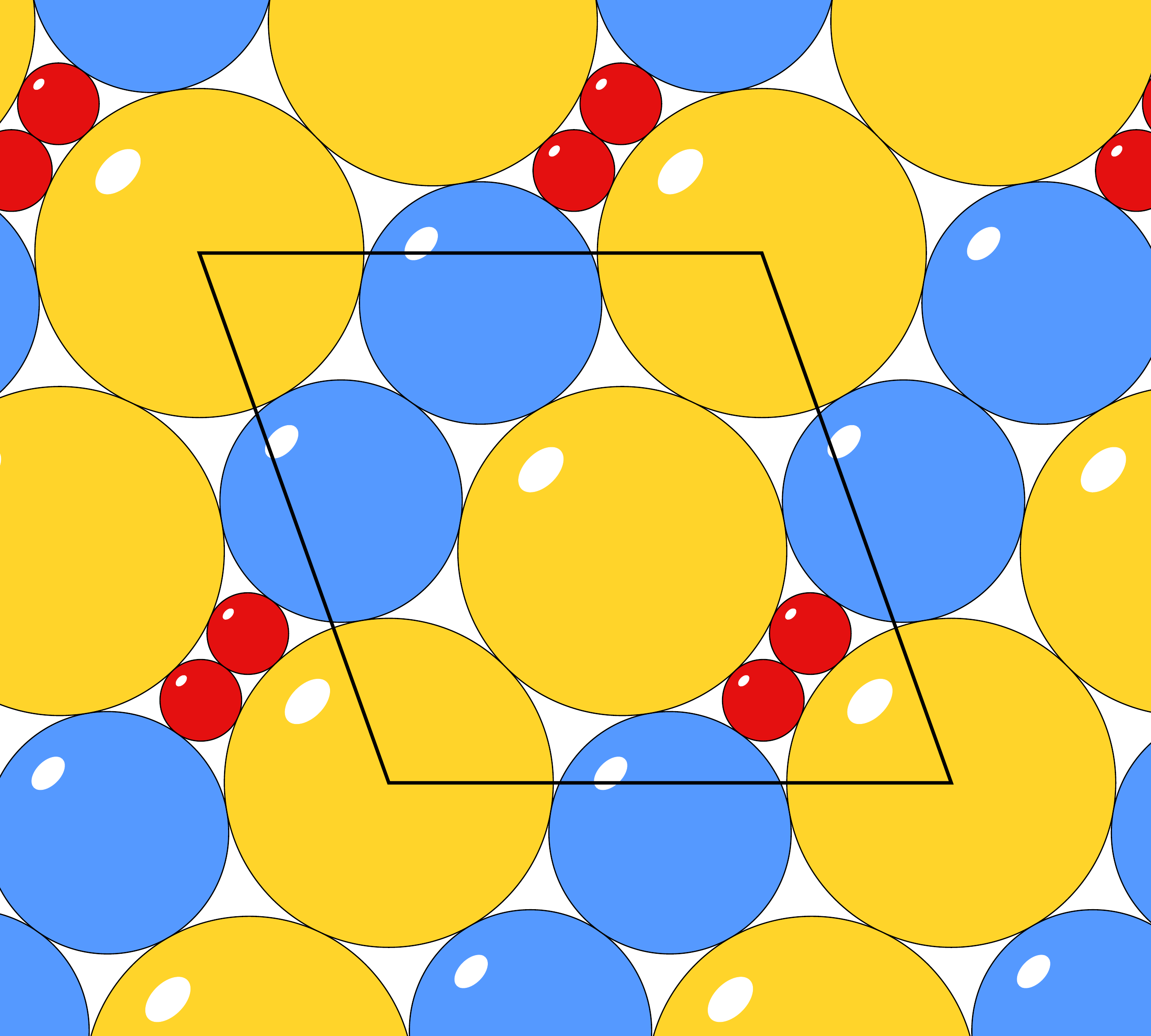} &
  \includegraphics[width=0.3\textwidth]{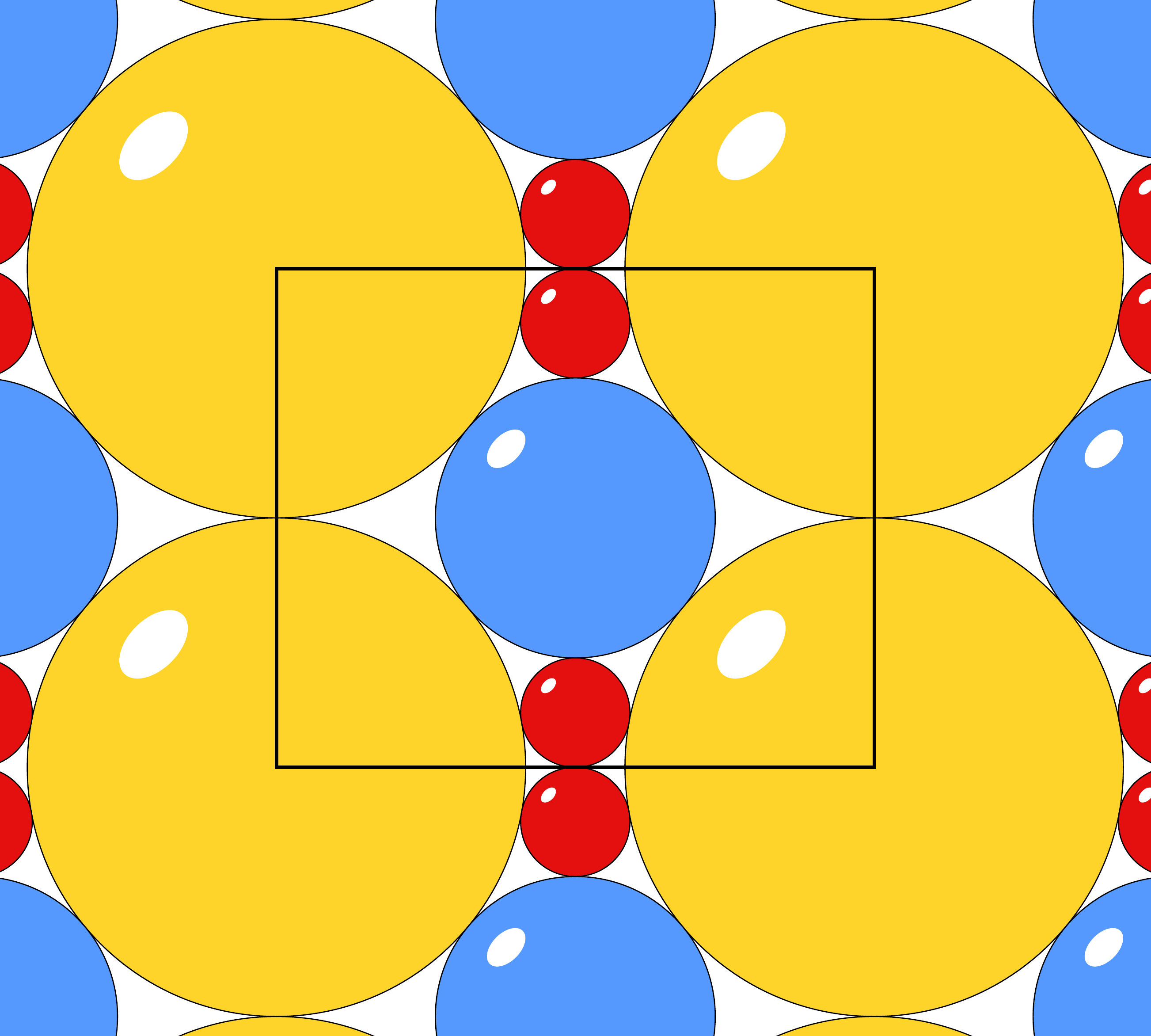}
\end{tabular}
\noindent
\begin{tabular}{lll}
  97\hfill 1r1s / 11r1s & 98\hfill 1r1s / 11r1s1s & 99\hfill 1r1s / 11rr1s\\
  \includegraphics[width=0.3\textwidth]{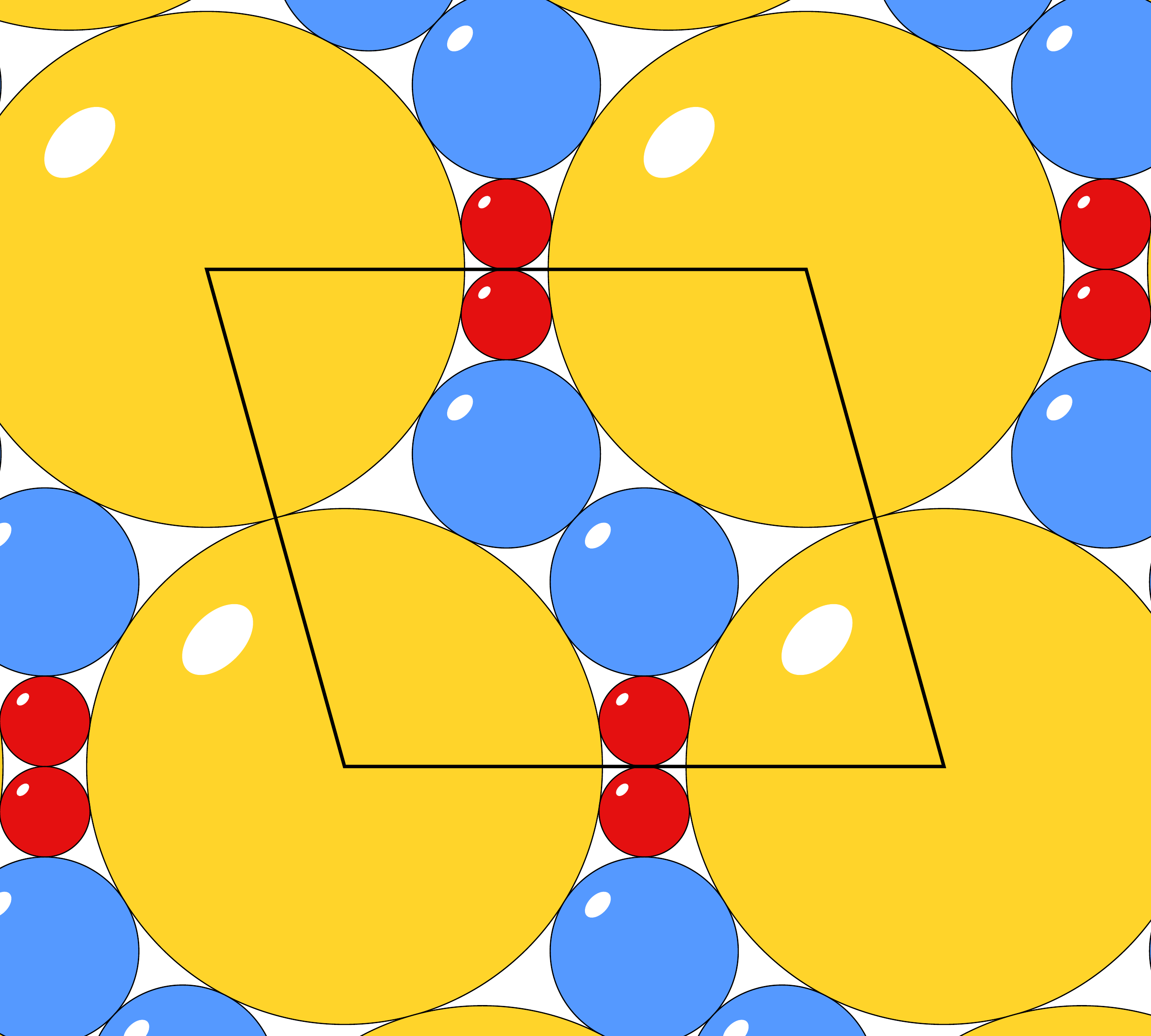} &
  \includegraphics[width=0.3\textwidth]{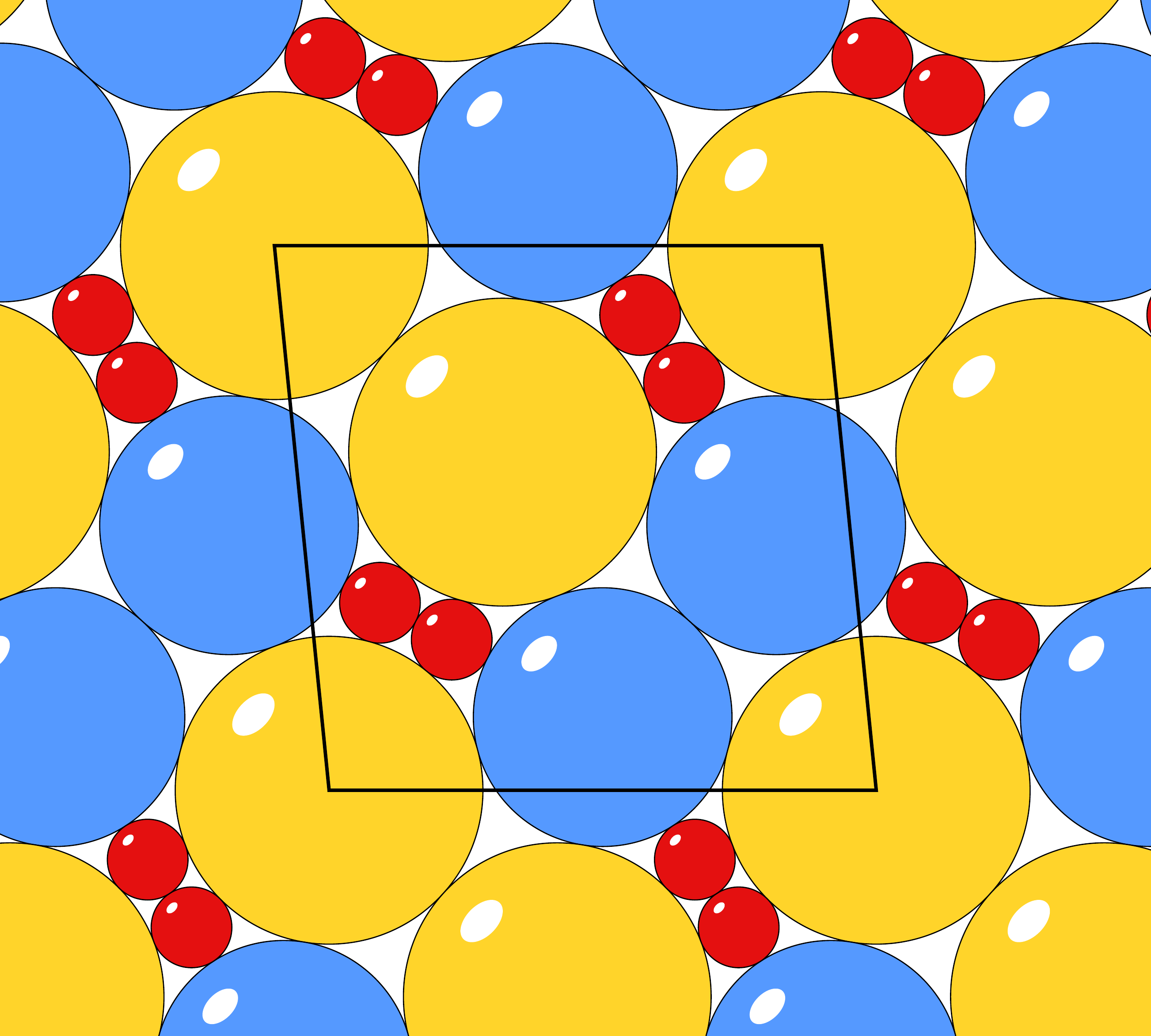} &
  \includegraphics[width=0.3\textwidth]{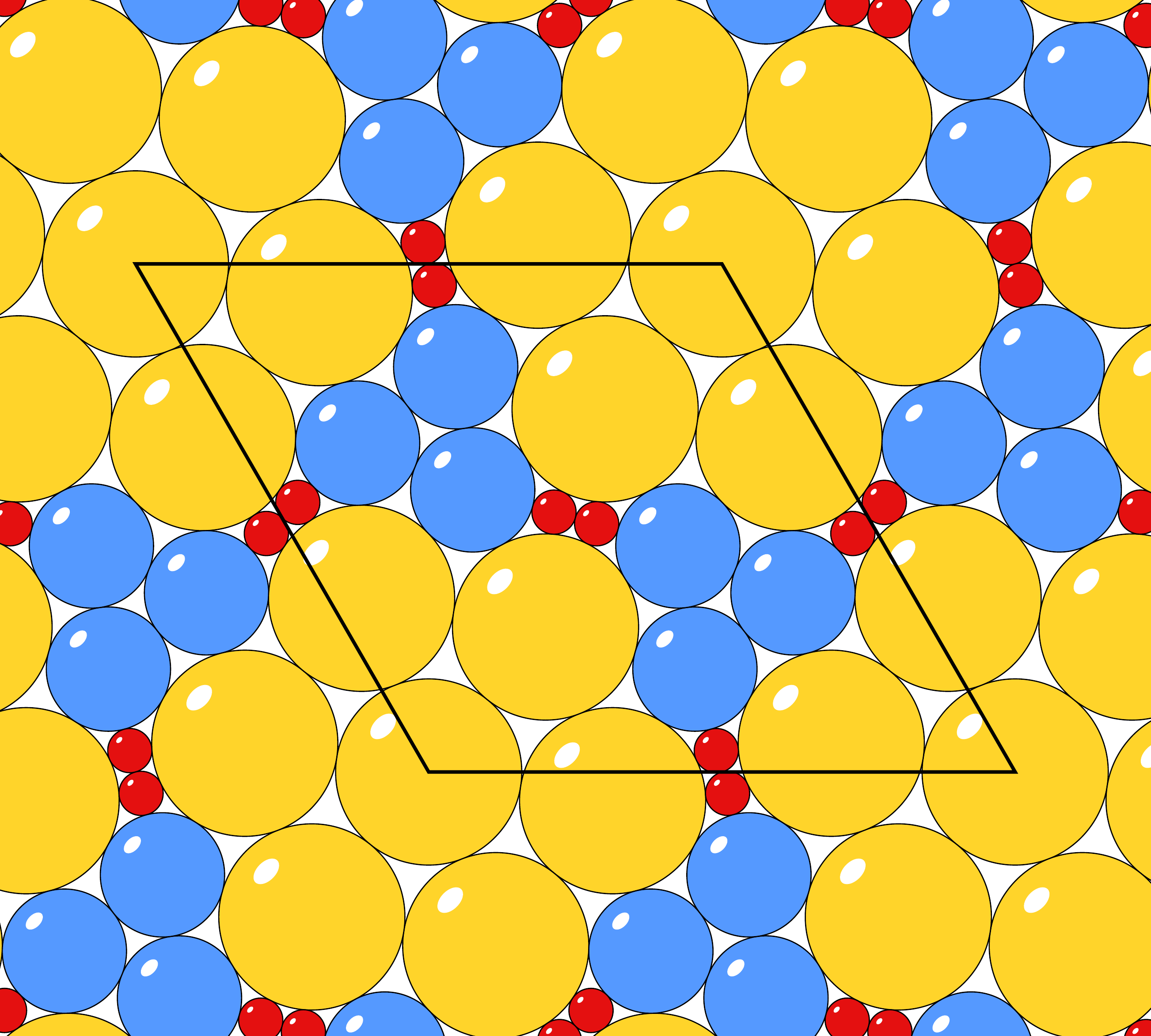}
\end{tabular}
\noindent
\begin{tabular}{lll}
  100\hfill 1r1s / 11s1s & 101\hfill 1r1s / 11s1s1s & 102\hfill 1r1s / 1r1r1s\\
  \includegraphics[width=0.3\textwidth]{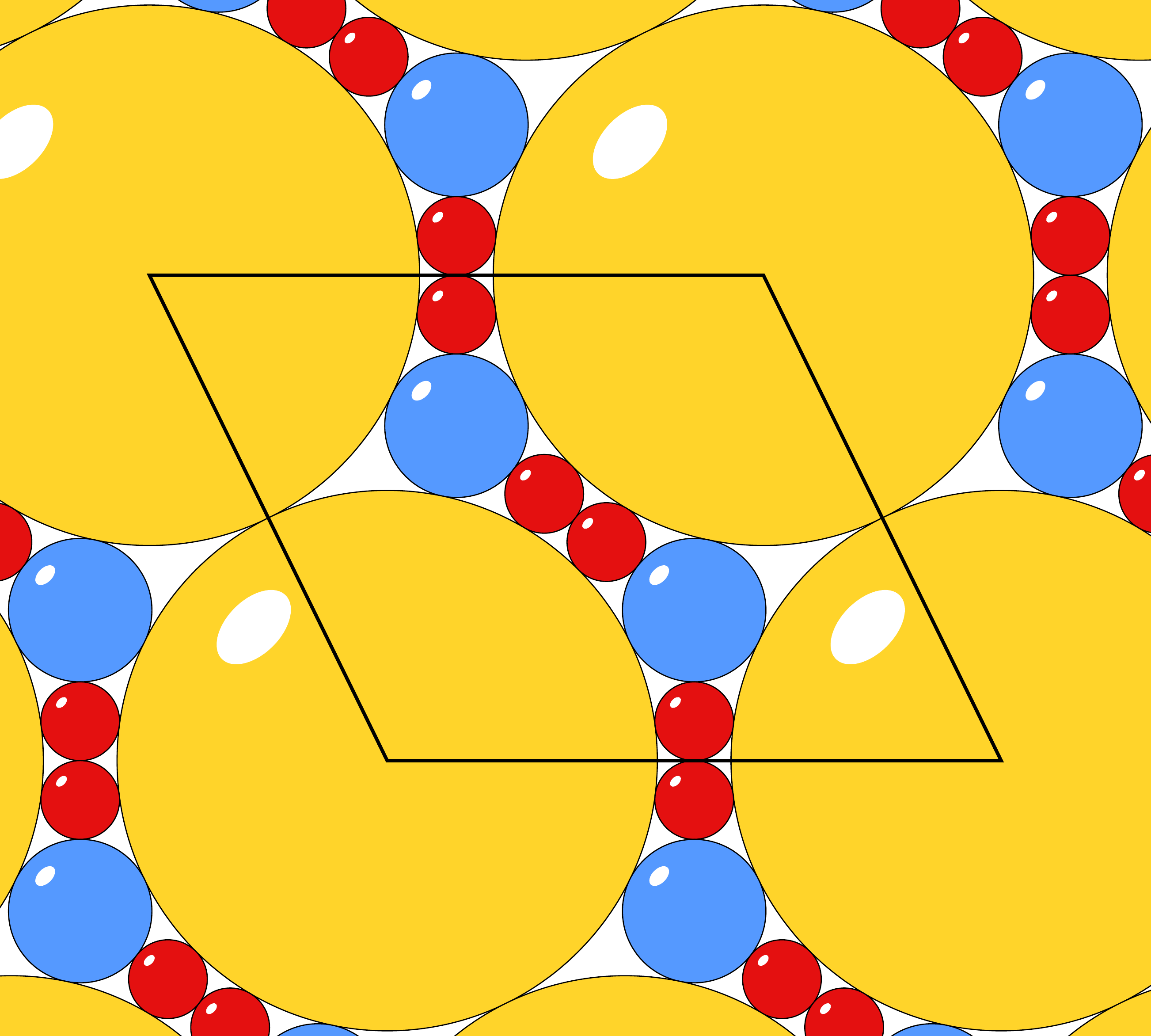} &
  \includegraphics[width=0.3\textwidth]{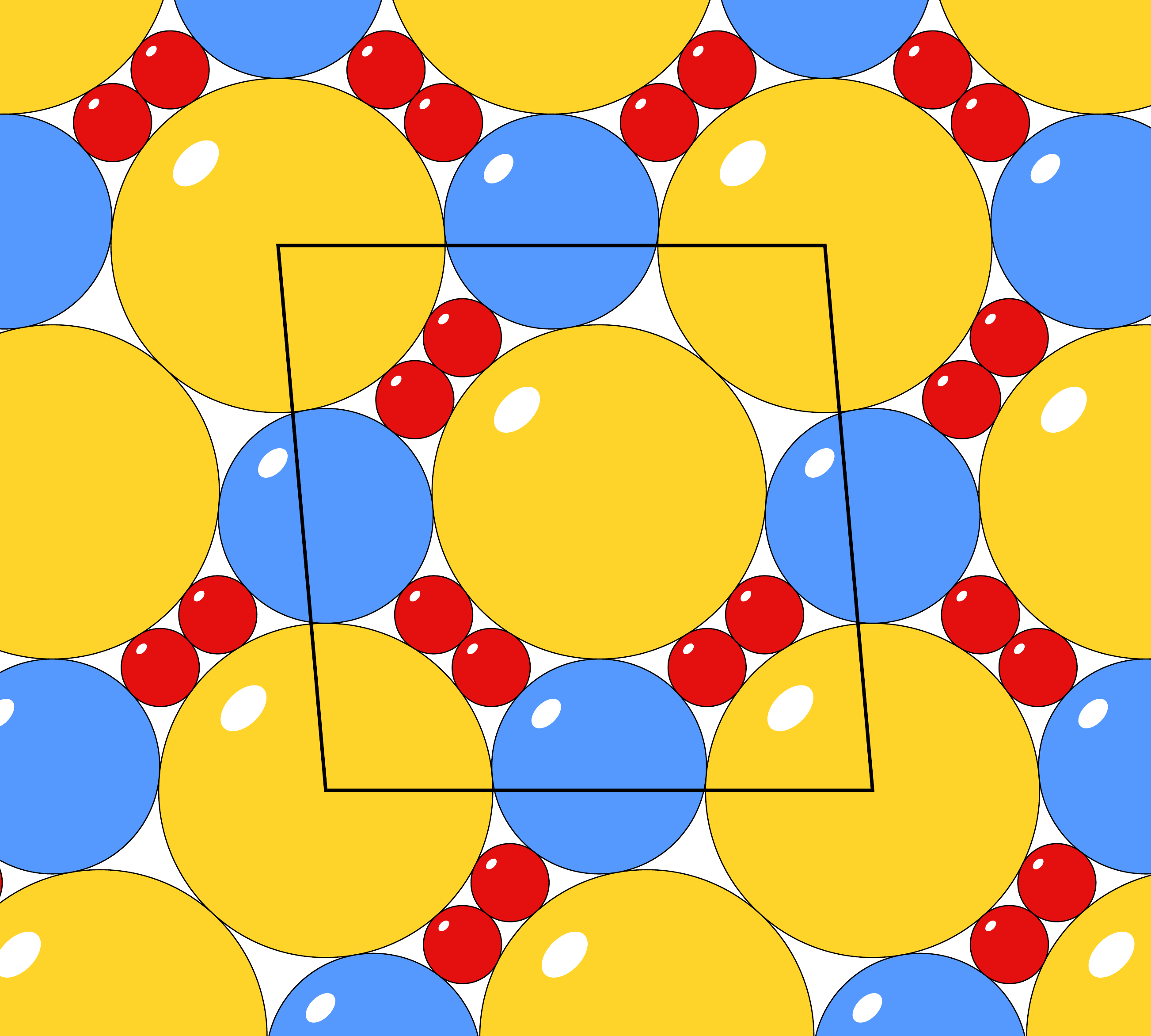} &
  \includegraphics[width=0.3\textwidth]{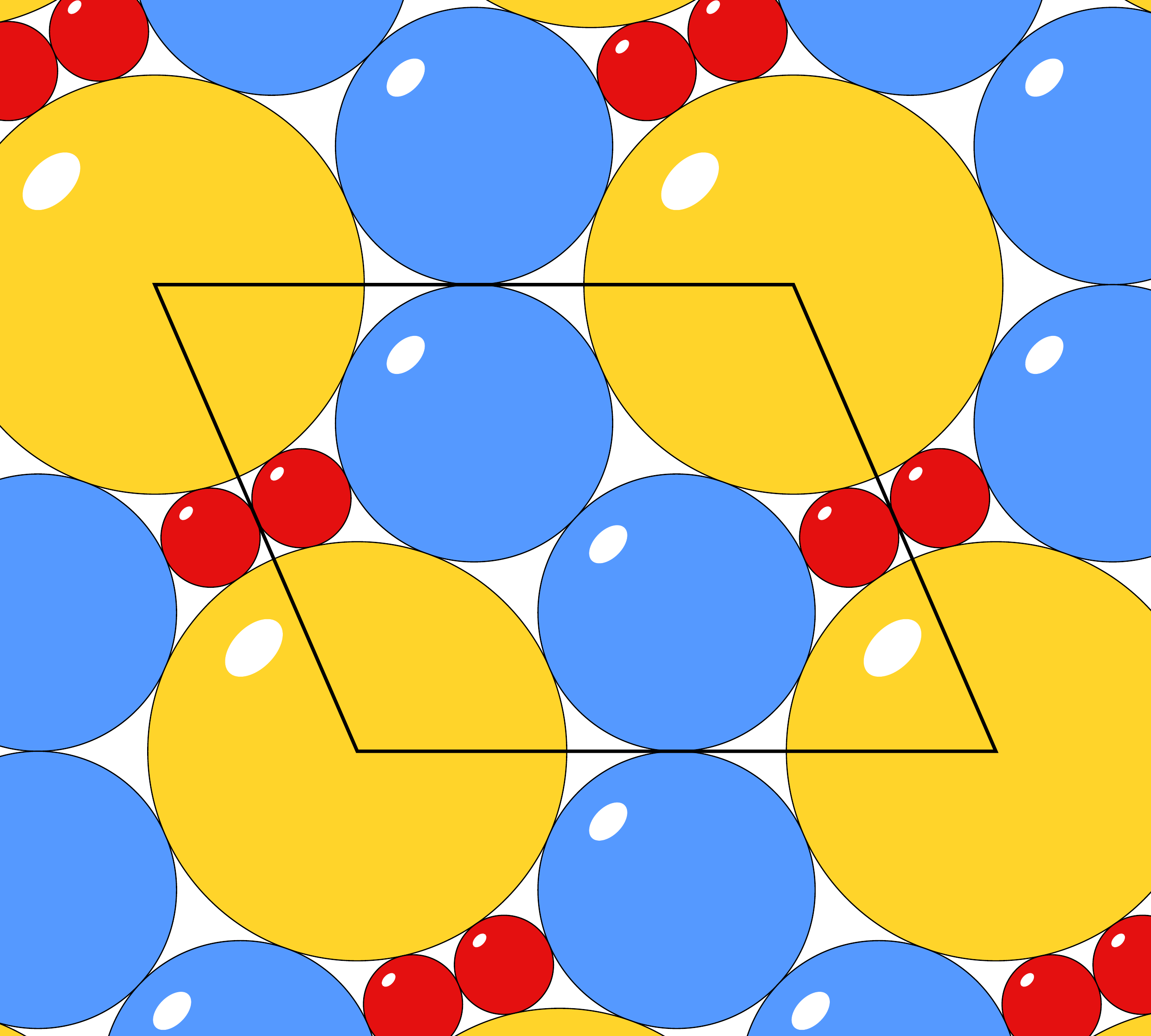}
\end{tabular}
\noindent
\begin{tabular}{lll}
  103\hfill 1r1s / 1r1s1s & 104\hfill 1r1s / 1r1s1s1s & 105\hfill 1r1s / 1rr1s\\
  \includegraphics[width=0.3\textwidth]{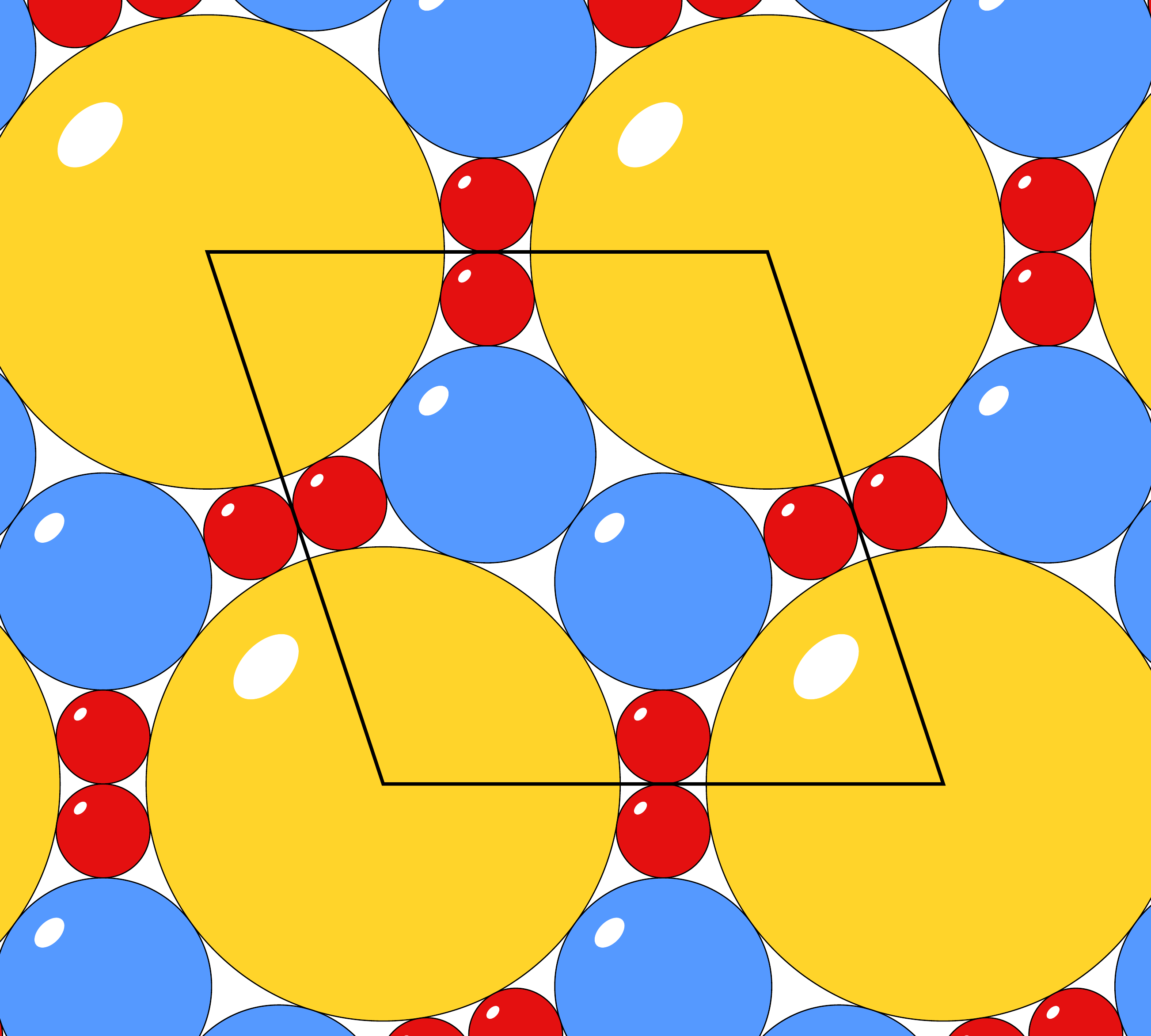} &
  \includegraphics[width=0.3\textwidth]{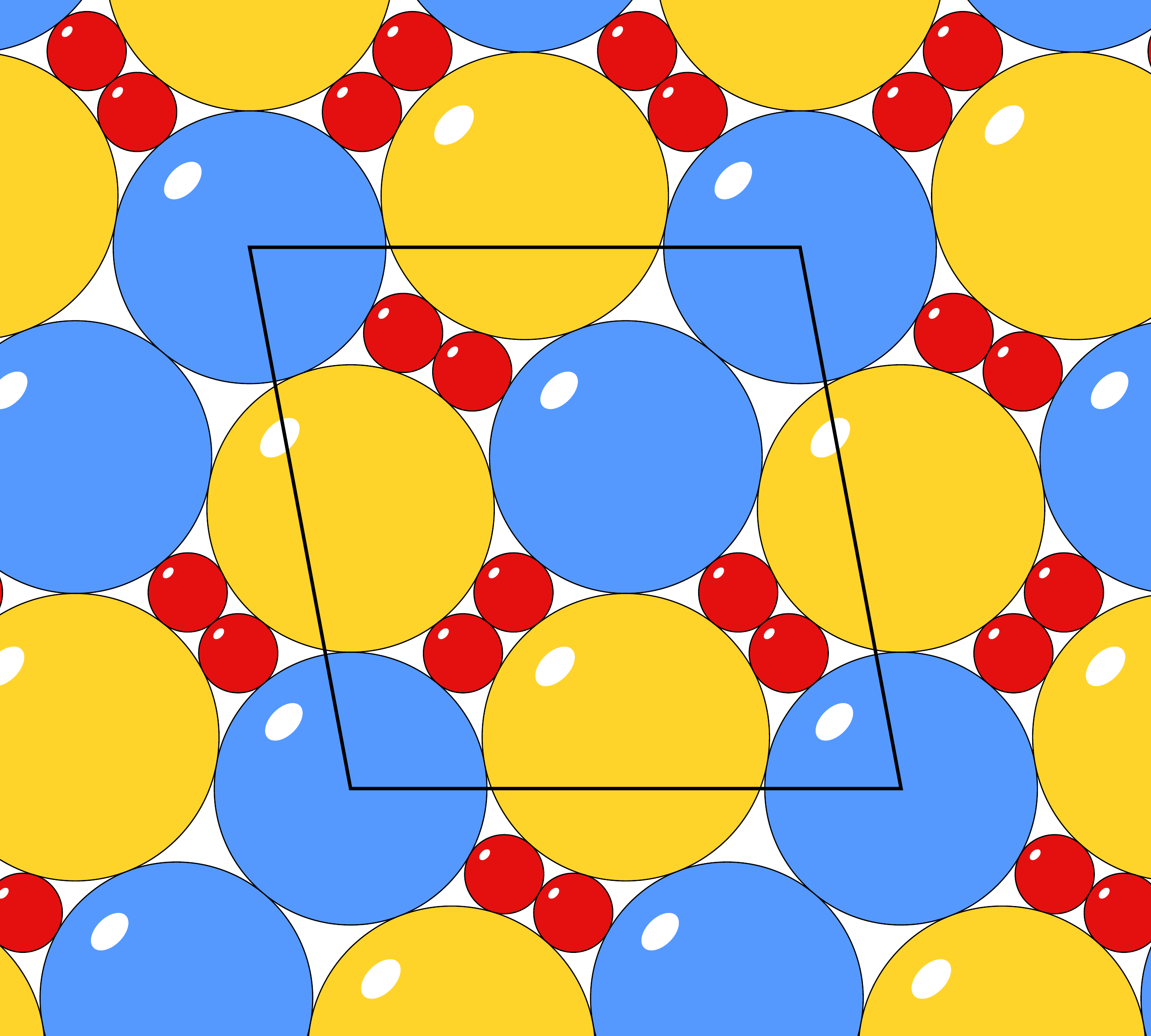} &
  \includegraphics[width=0.3\textwidth]{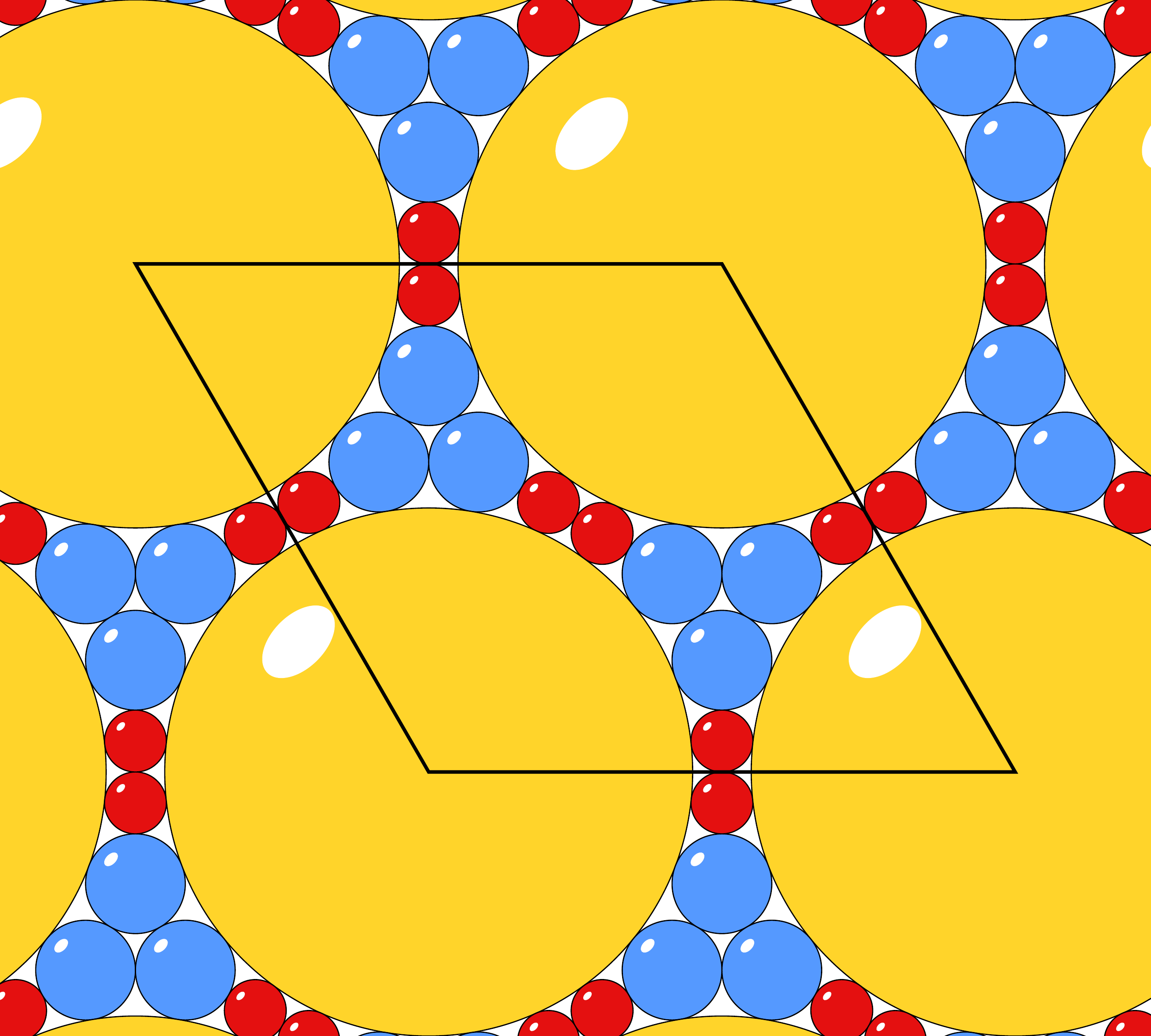}
\end{tabular}
\noindent
\begin{tabular}{lll}
  106\hfill 1r1s / 1rrr1s & 107\hfill 1r1s / 1s1s1s & 108\hfill 1r1s / 1s1s1s1s\\
  \includegraphics[width=0.3\textwidth]{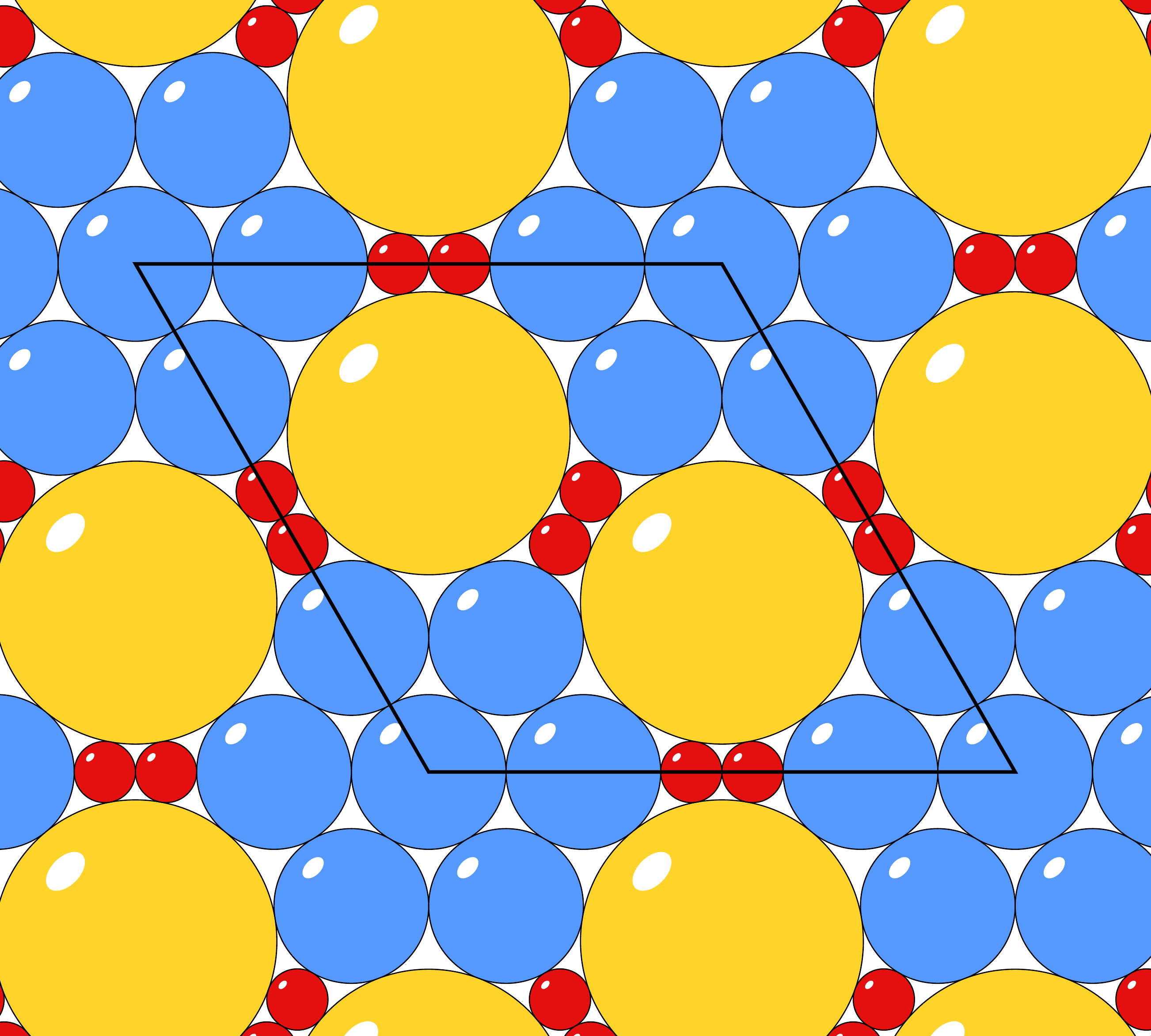} &
  \includegraphics[width=0.3\textwidth]{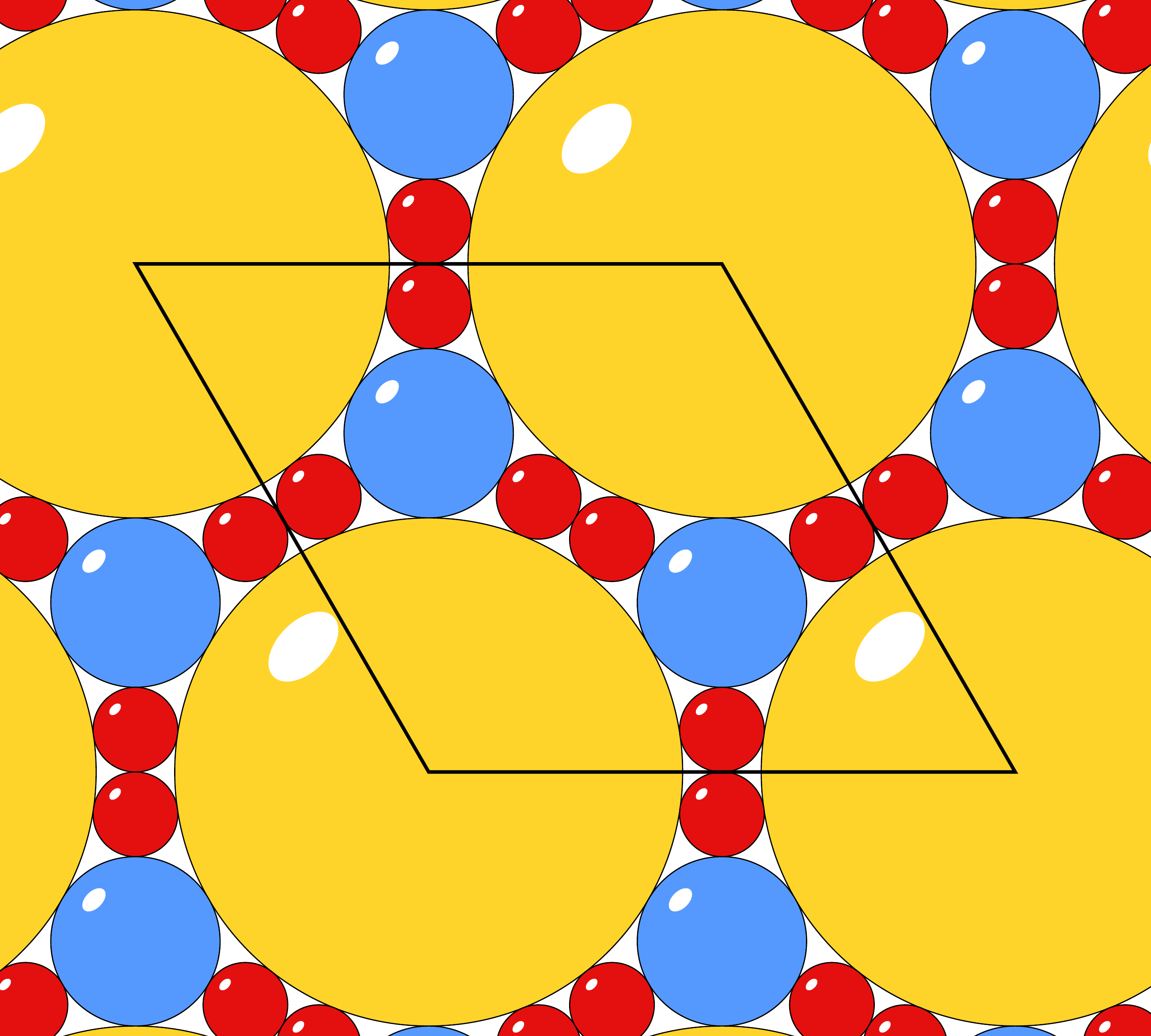} &
  \includegraphics[width=0.3\textwidth]{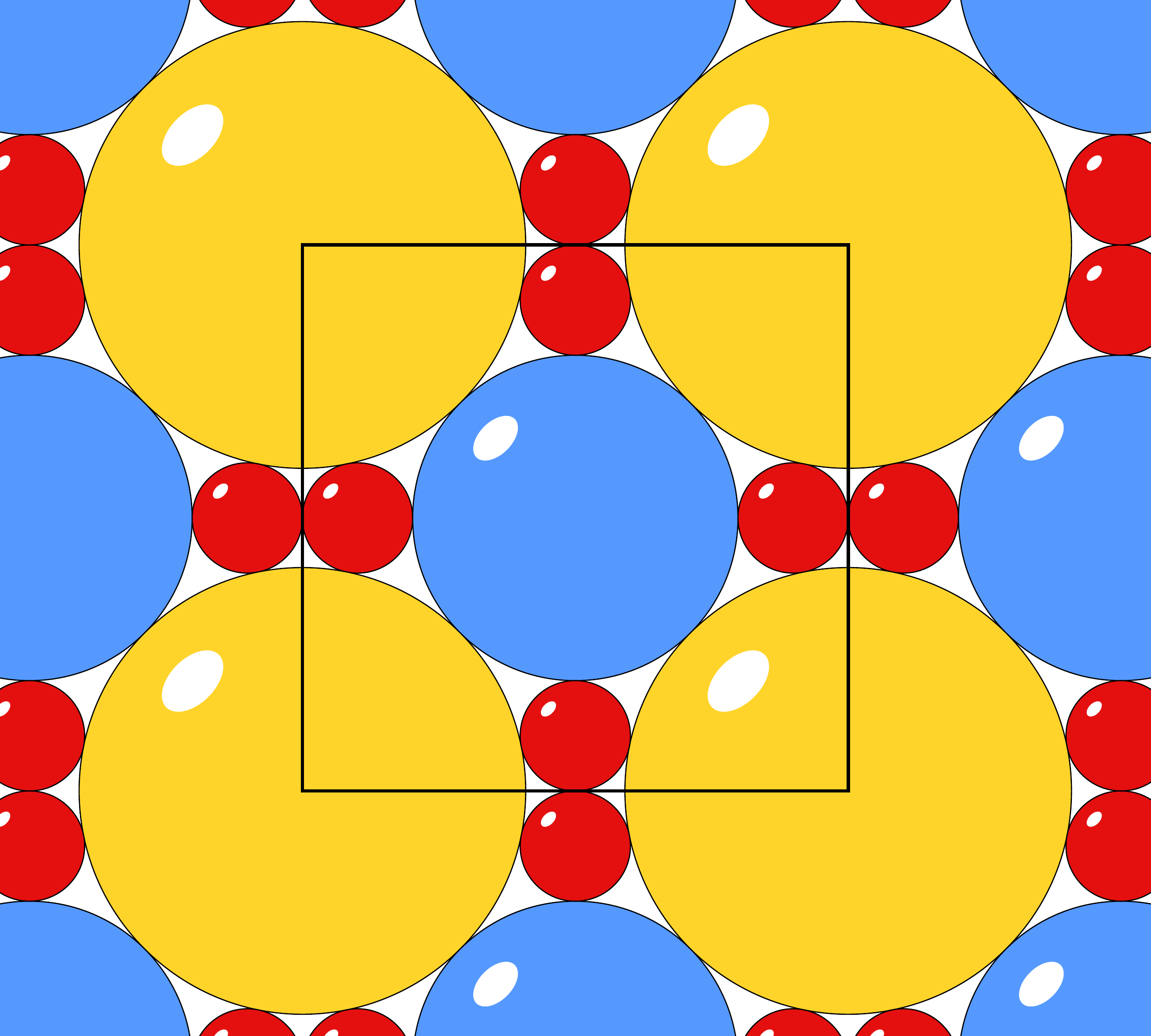}
\end{tabular}
\noindent
\begin{tabular}{lll}
  109\hfill 1r1s / 1s1sss & 110\hfill 1r1ss / 111s1s & 111\hfill 1r1ss / 11r1s\\
  \includegraphics[width=0.3\textwidth]{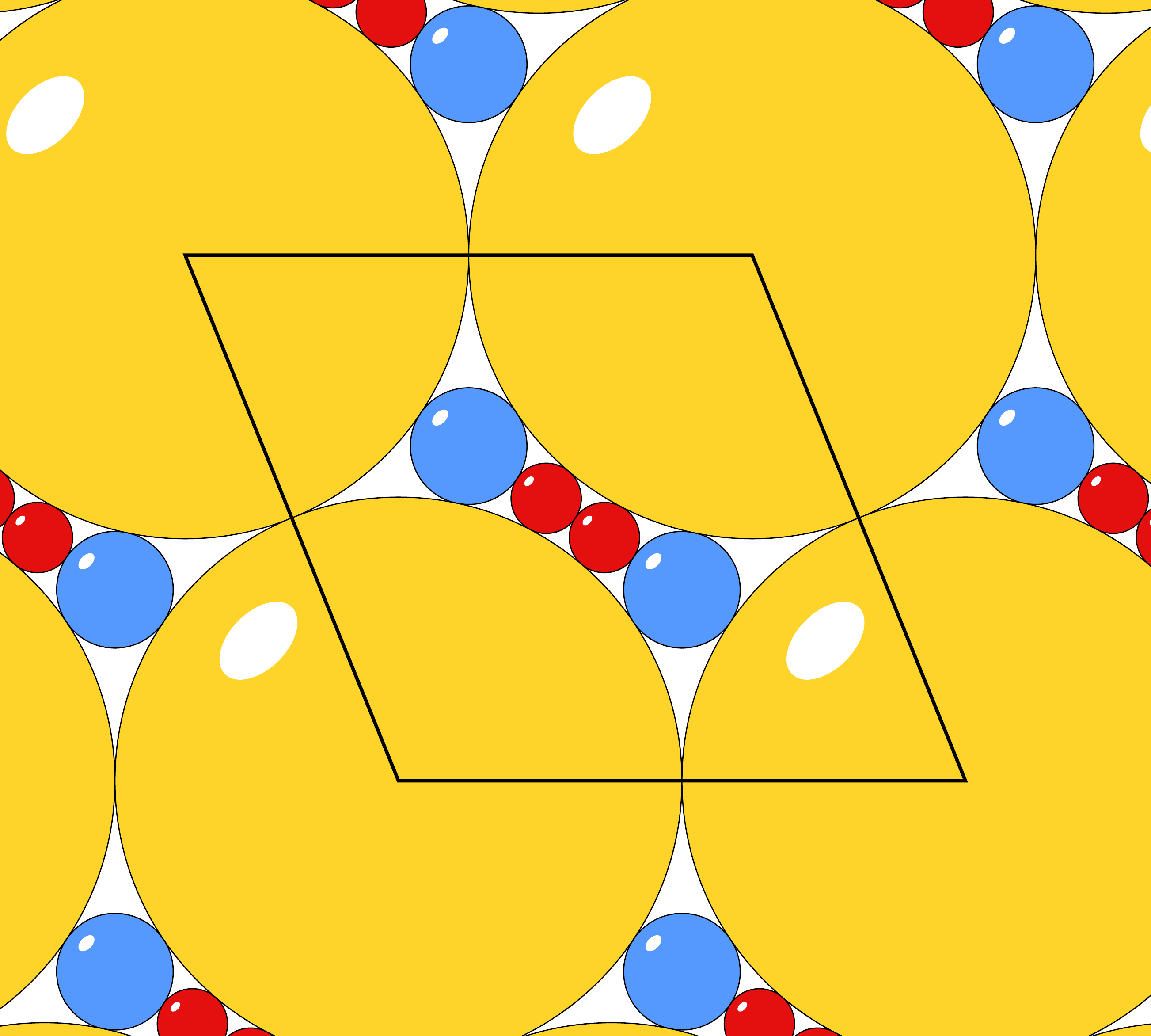} &
  \includegraphics[width=0.3\textwidth]{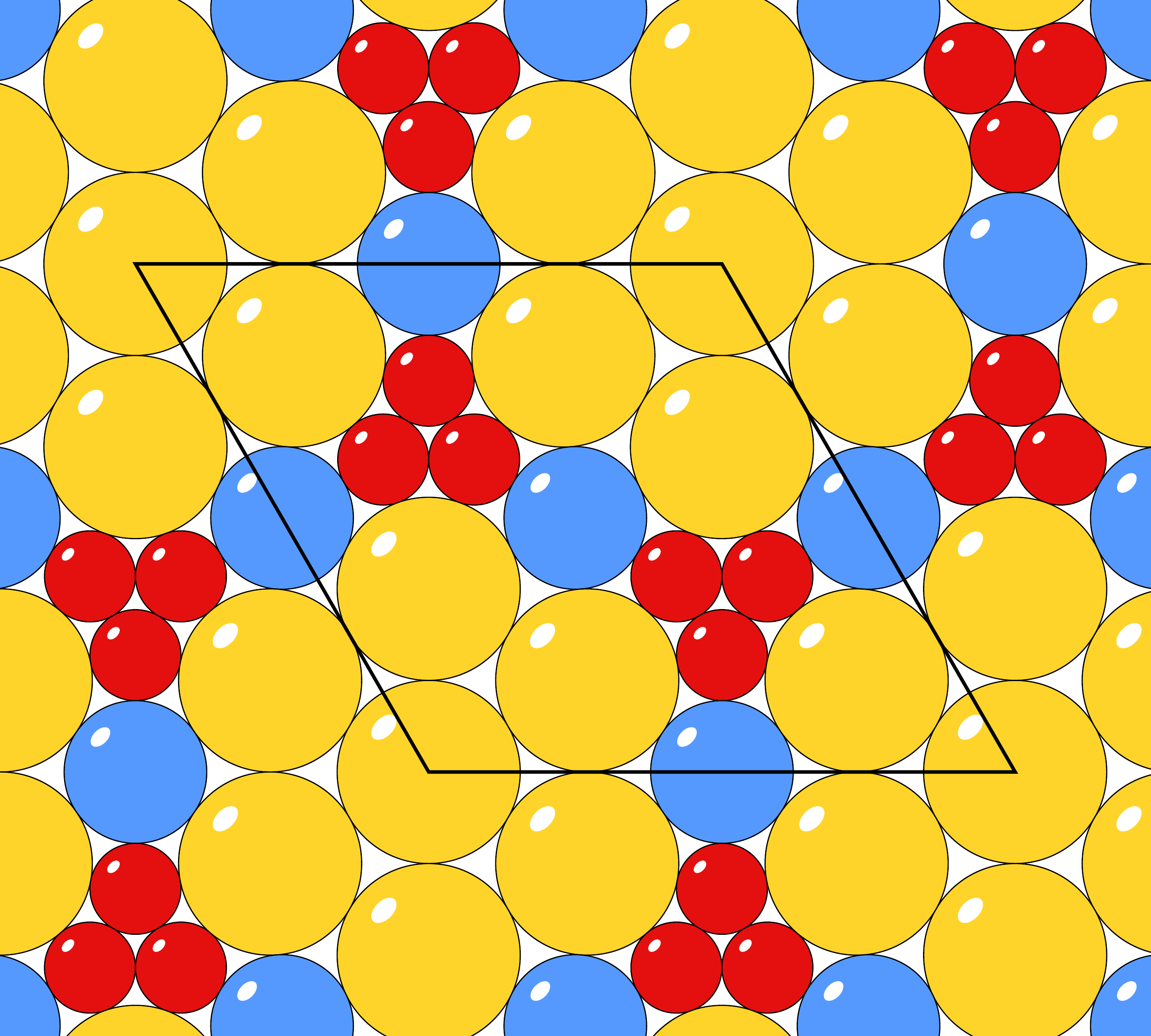} &
  \includegraphics[width=0.3\textwidth]{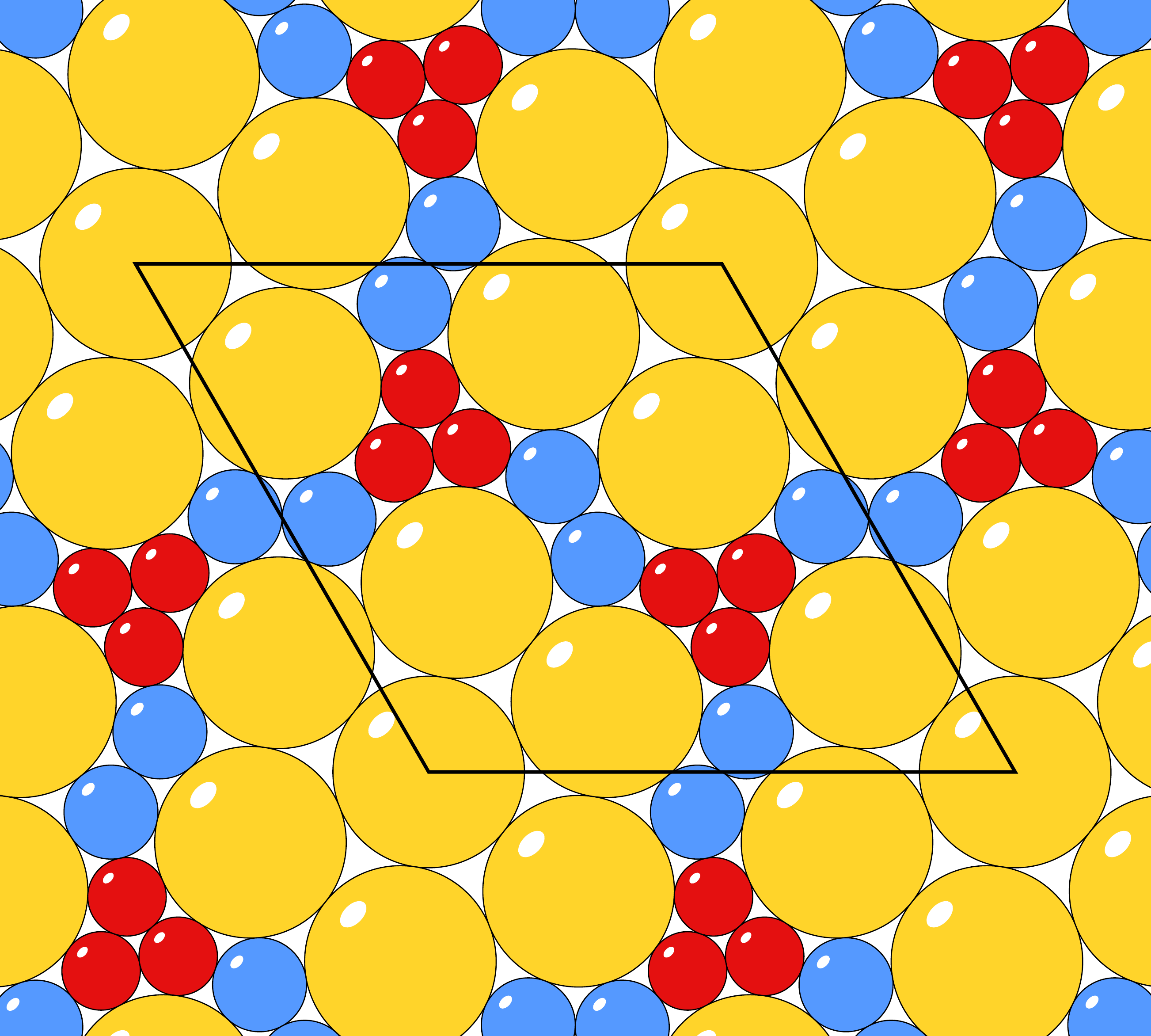}
\end{tabular}
\noindent
\begin{tabular}{lll}
  112\hfill 1r1ss / 11rr1s & 113\hfill 1r1ss / 11s1s & 114\hfill 1r1ss / 1rrr1s\\
  \includegraphics[width=0.3\textwidth]{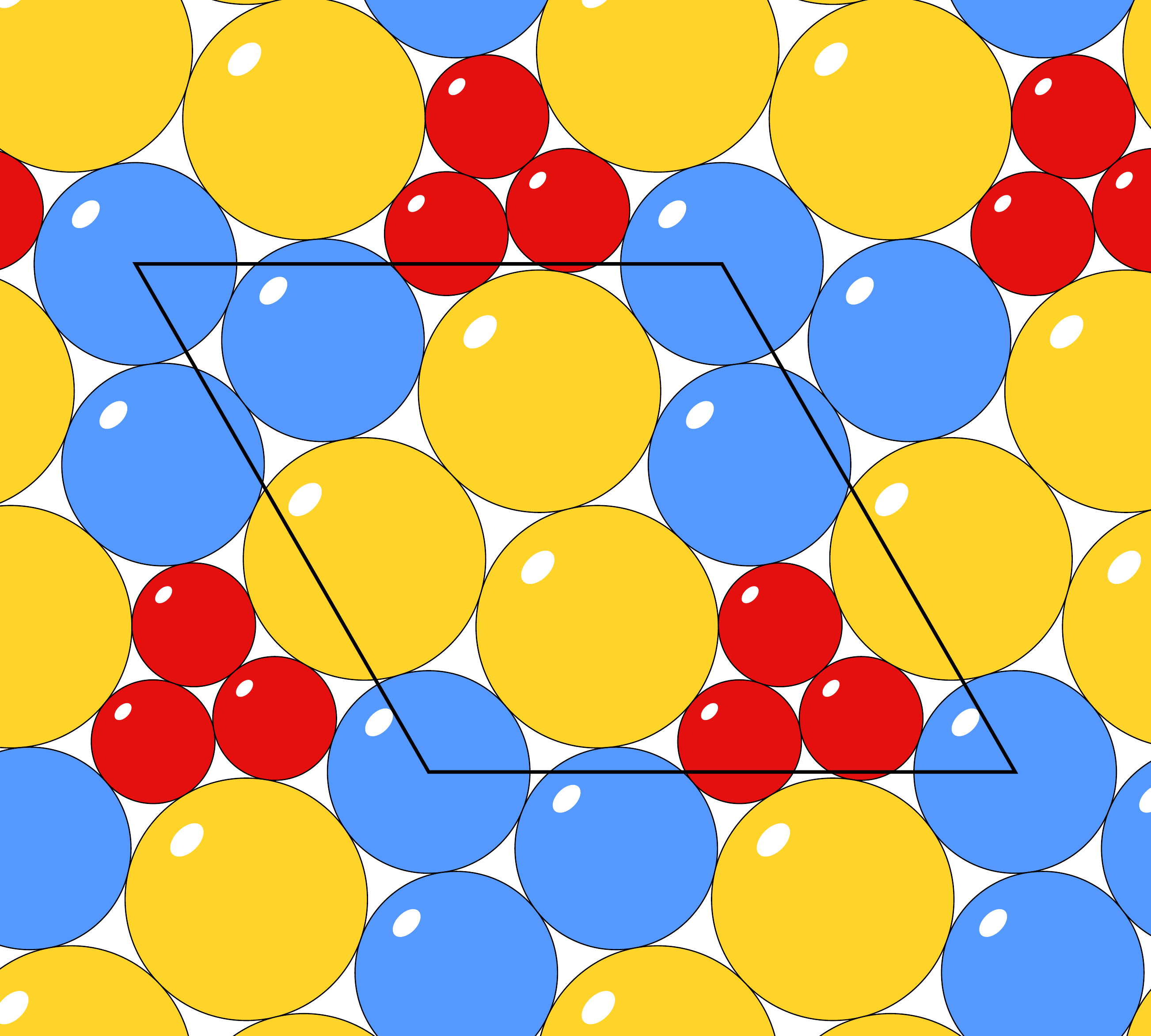} &
  \includegraphics[width=0.3\textwidth]{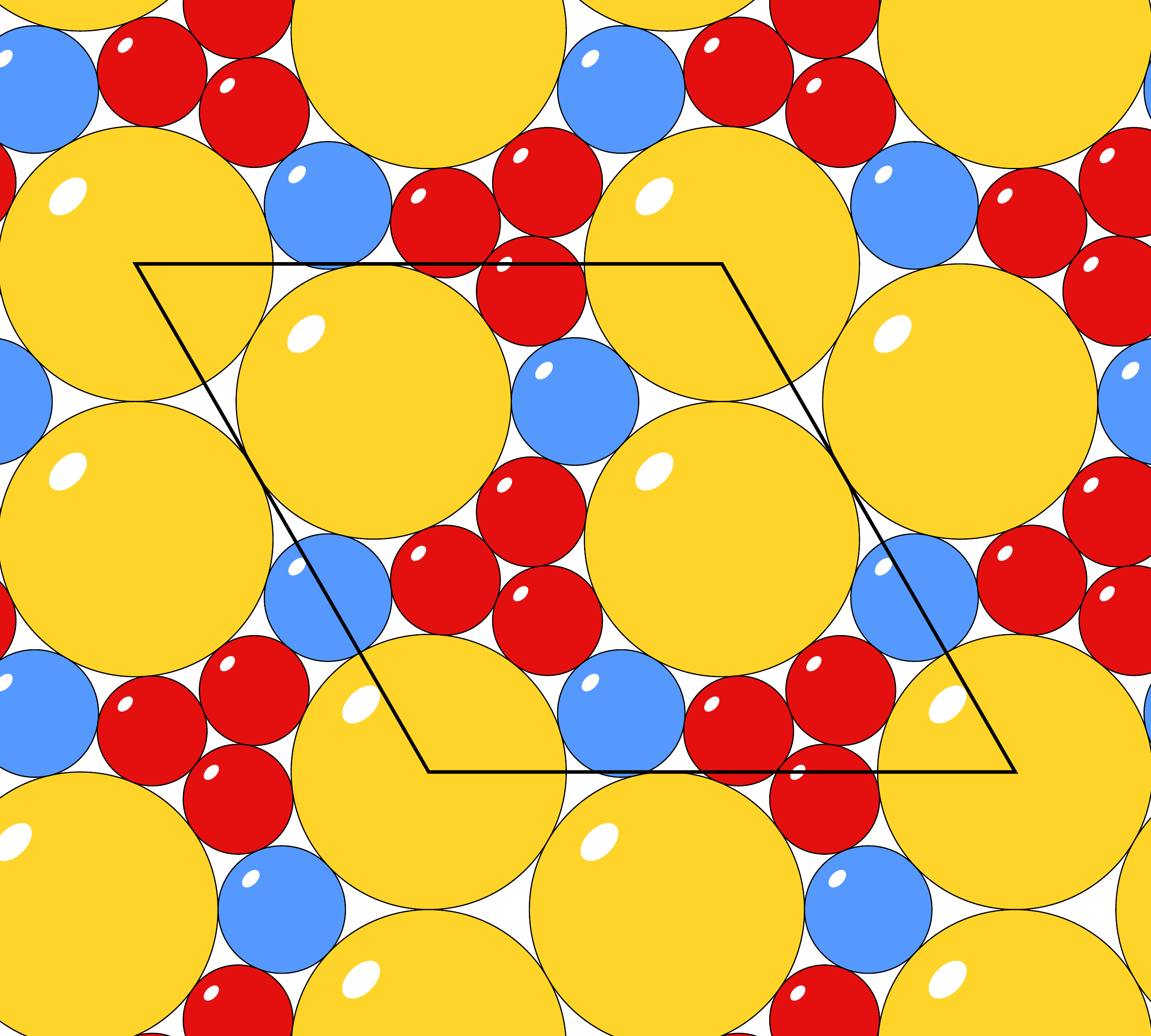} &
  \includegraphics[width=0.3\textwidth]{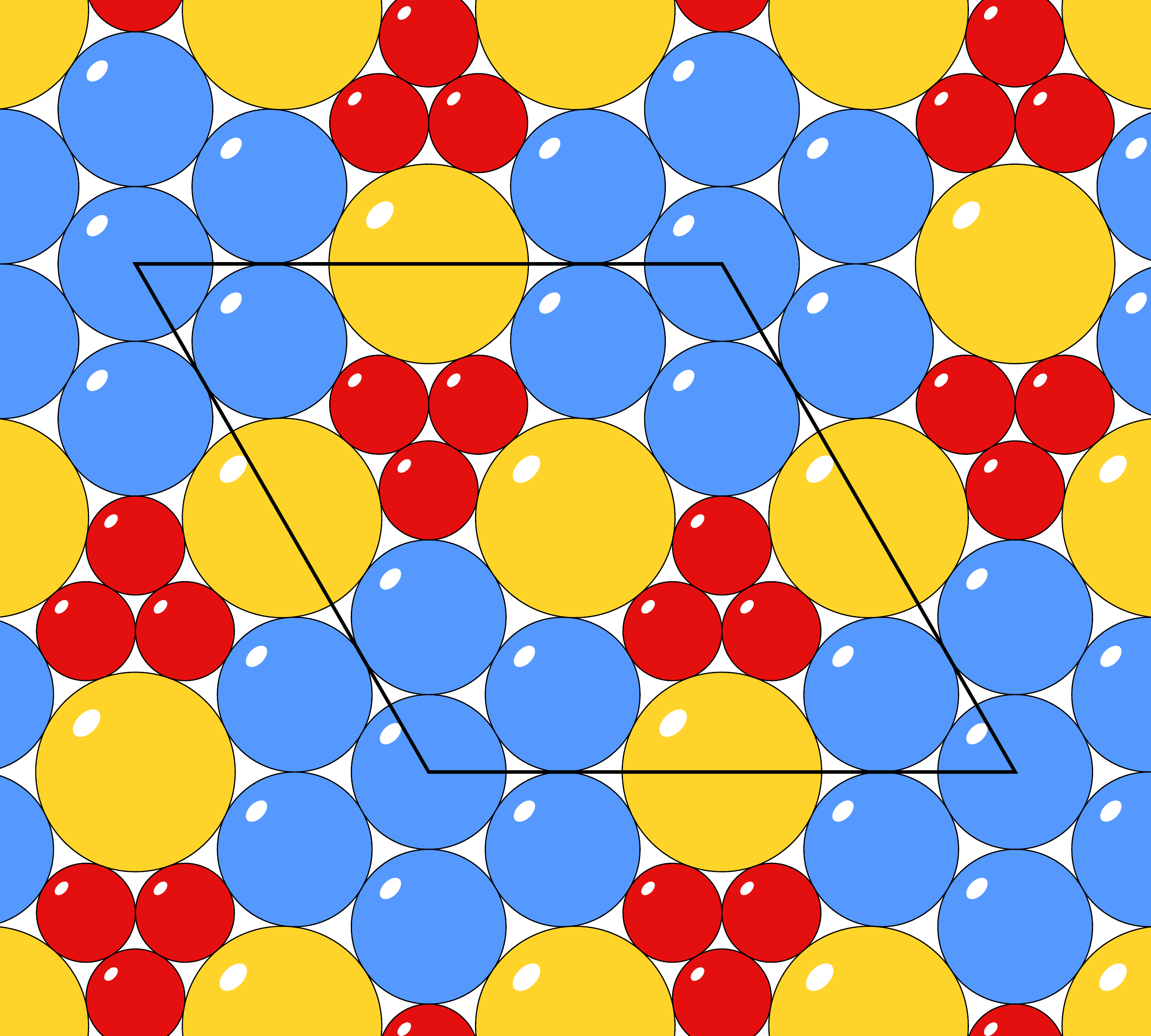}
\end{tabular}
\noindent
\begin{tabular}{lll}
  115\hfill 1r1ss / 1s1s1s & 116\hfill 1rr1s / 111srs & 117\hfill 1rr1s / 11srrs\\
  \includegraphics[width=0.3\textwidth]{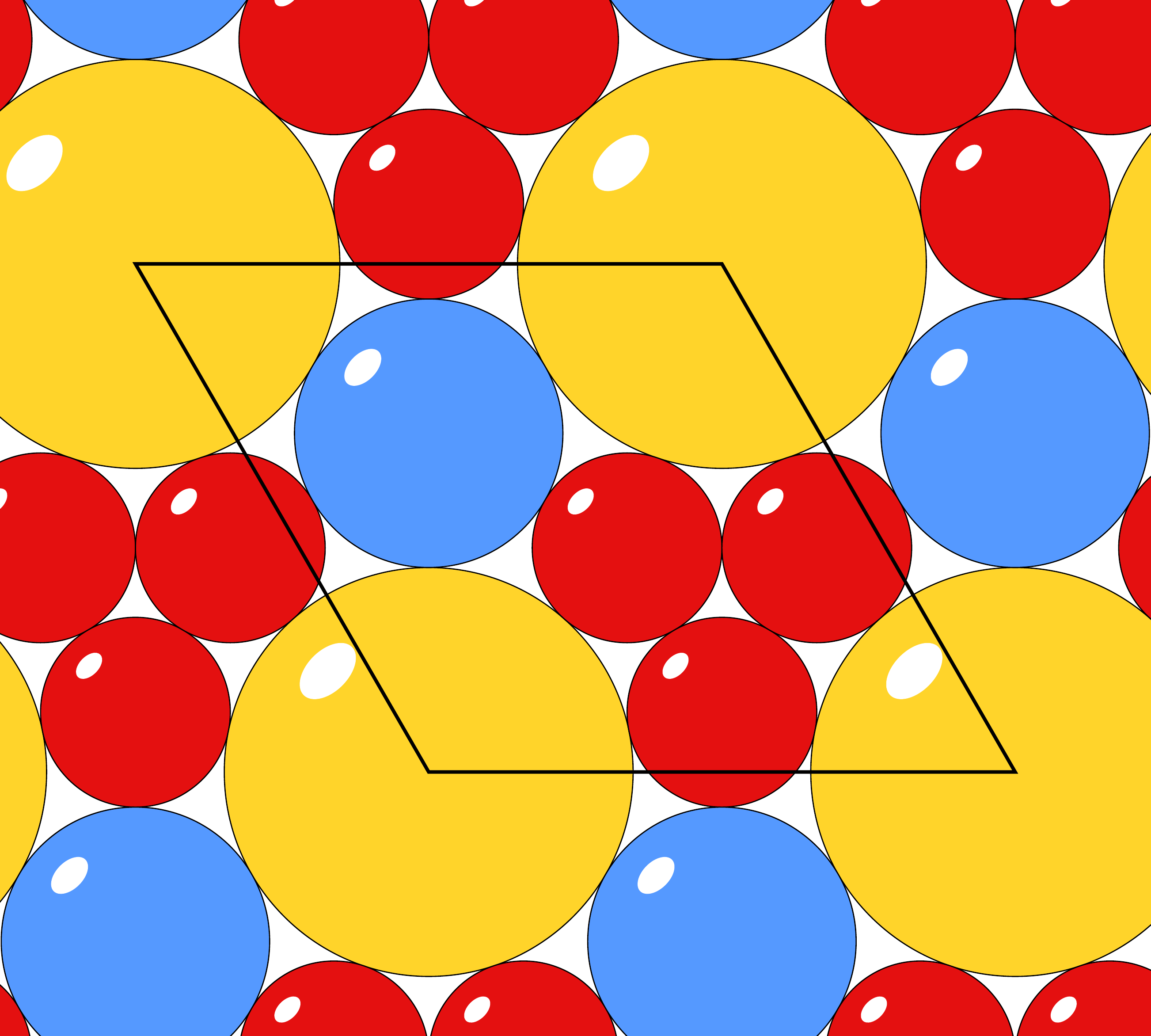} &
  \includegraphics[width=0.3\textwidth]{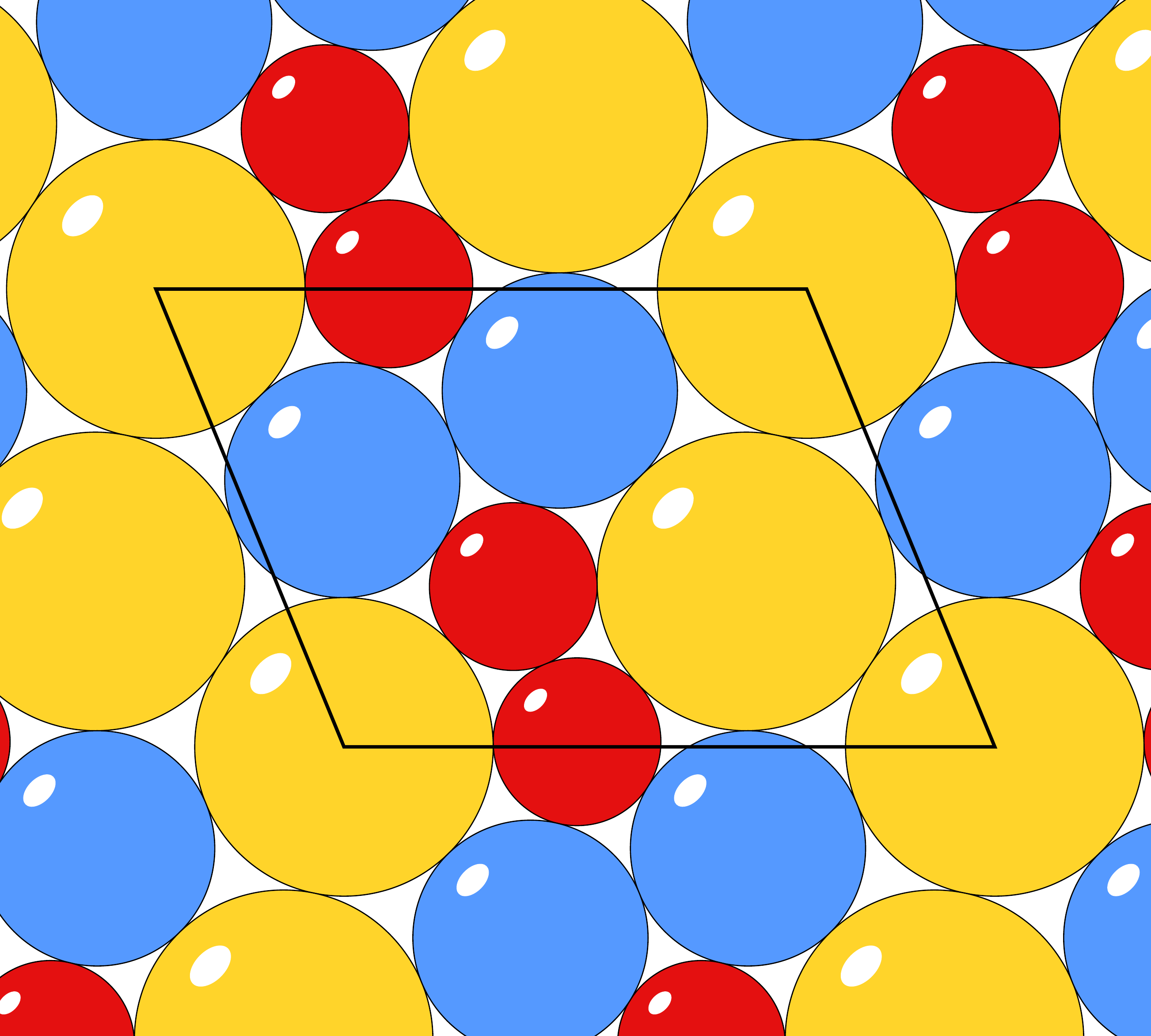} &
  \includegraphics[width=0.3\textwidth]{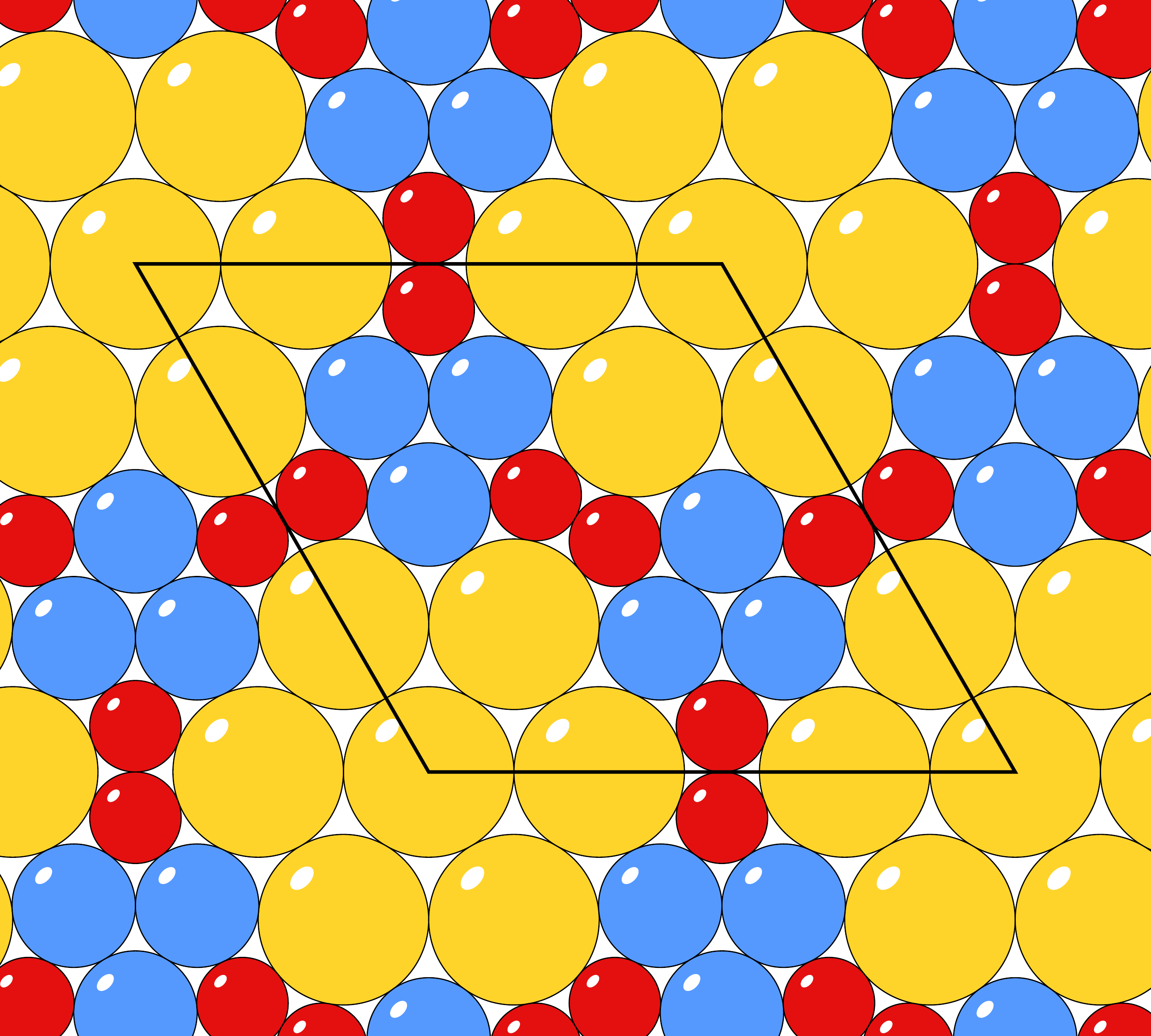}
\end{tabular}
\noindent
\begin{tabular}{lll}
  118\hfill 1rr1s / 11srs & 119\hfill 1rr1s / 1rrrrs & 120\hfill 1rr1s / 1srrrs\\
  \includegraphics[width=0.3\textwidth]{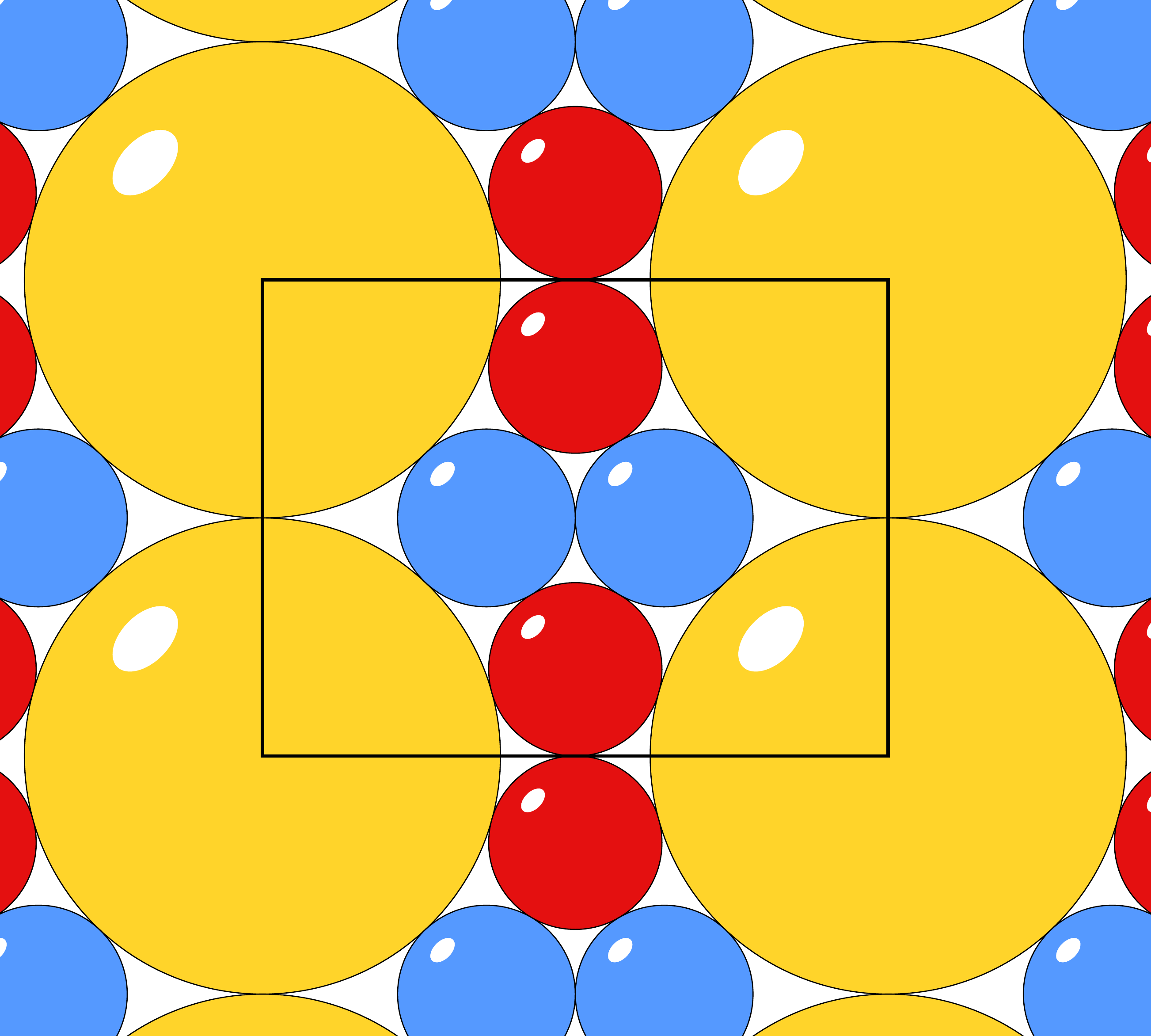} &
  \includegraphics[width=0.3\textwidth]{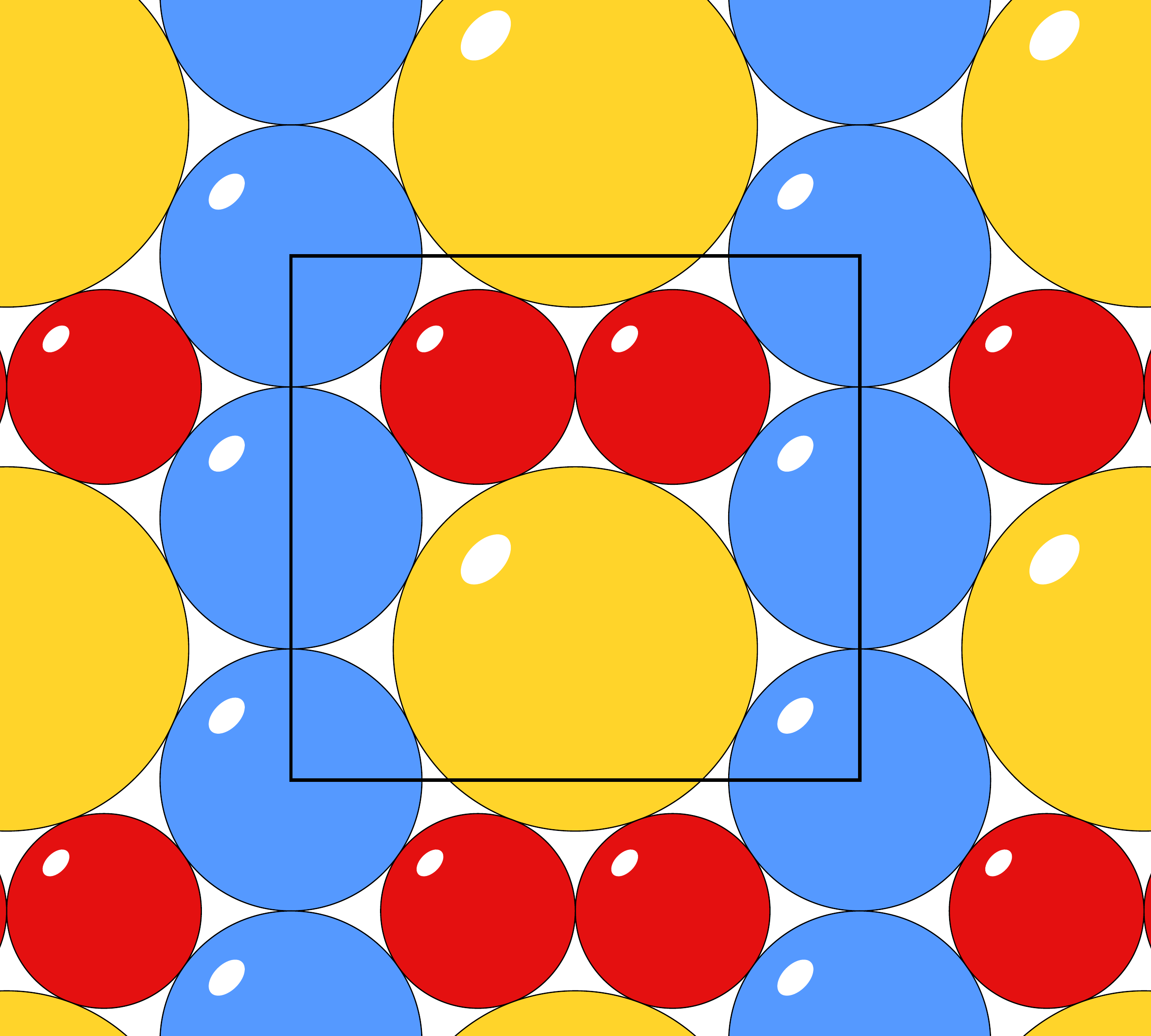} &
  \includegraphics[width=0.3\textwidth]{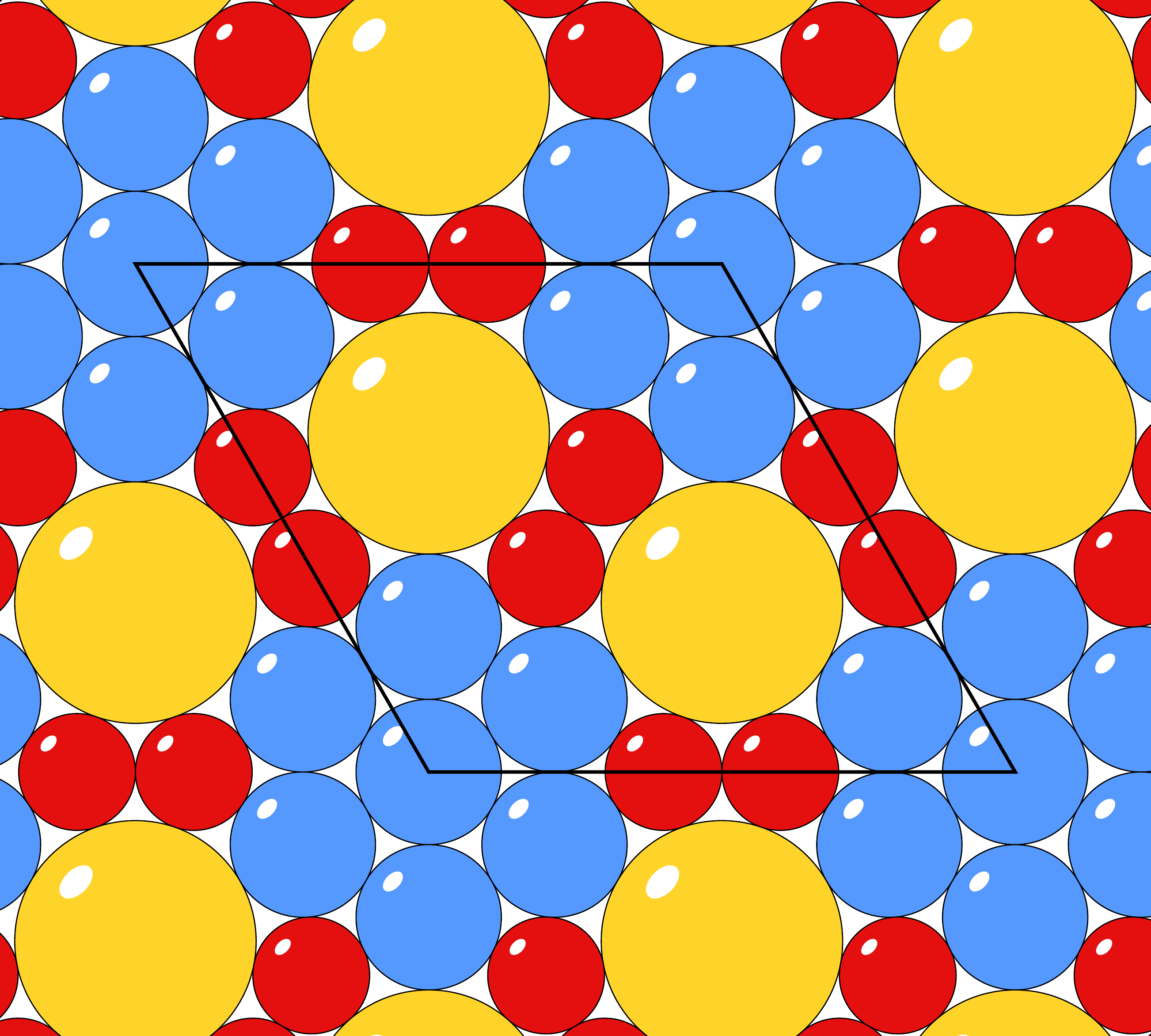}
\end{tabular}
\noindent
\begin{tabular}{lll}
  121\hfill 1rrr / 11srsrs & 122\hfill 1rrr / 1srrsrs & 123\hfill 1rrrr / 11rsrs\\
  \includegraphics[width=0.3\textwidth]{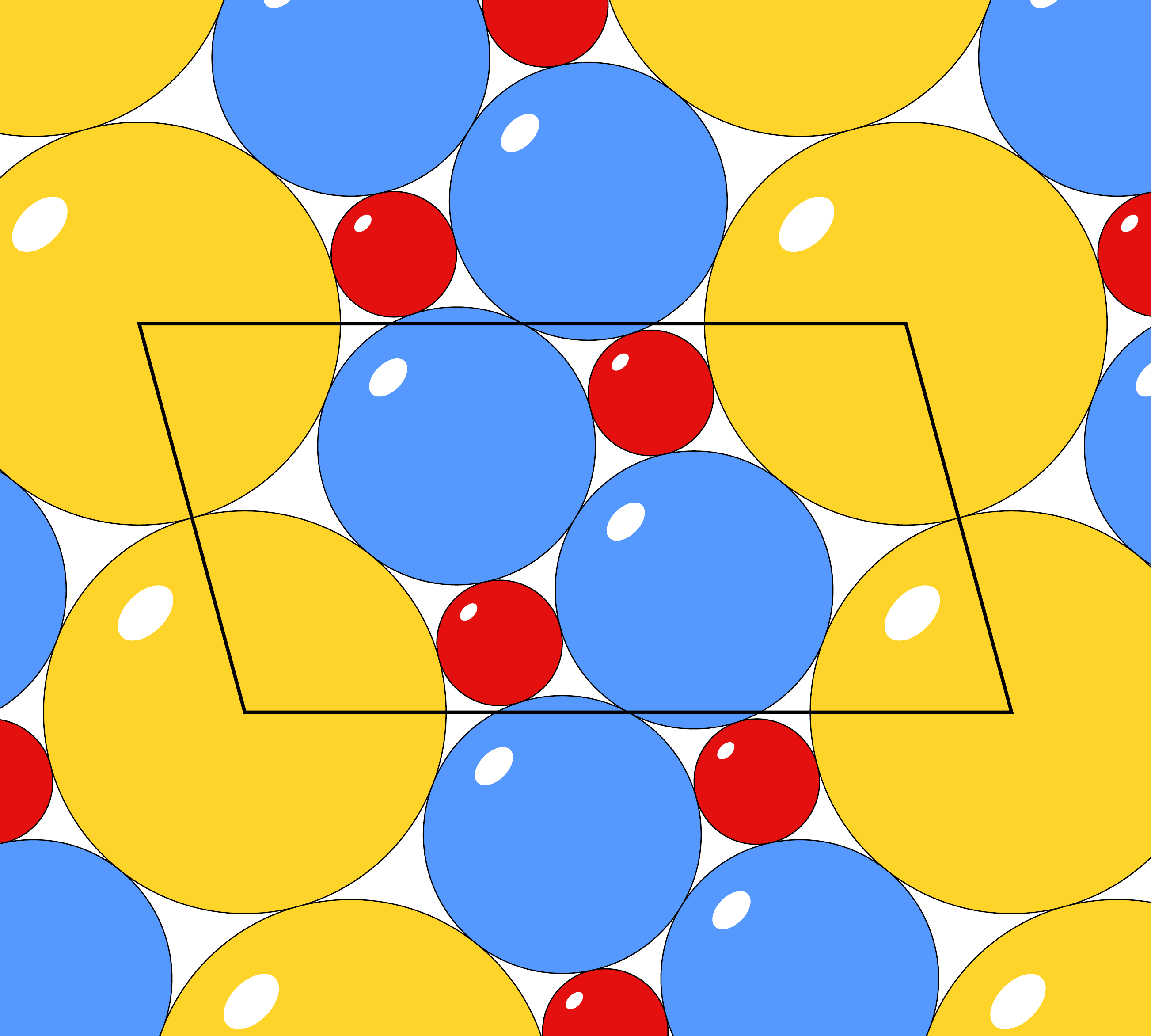} &
  \includegraphics[width=0.3\textwidth]{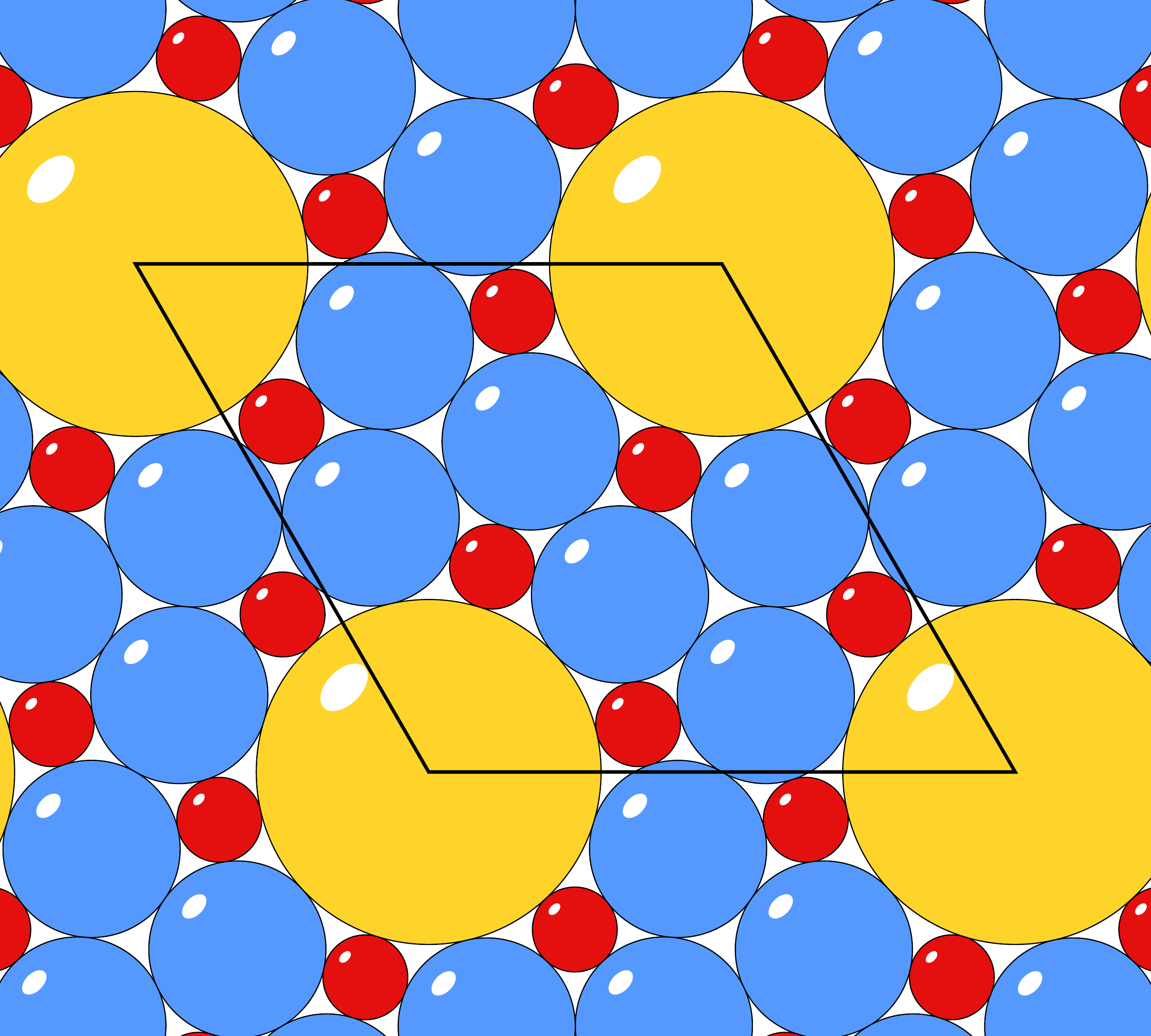} &
  \includegraphics[width=0.3\textwidth]{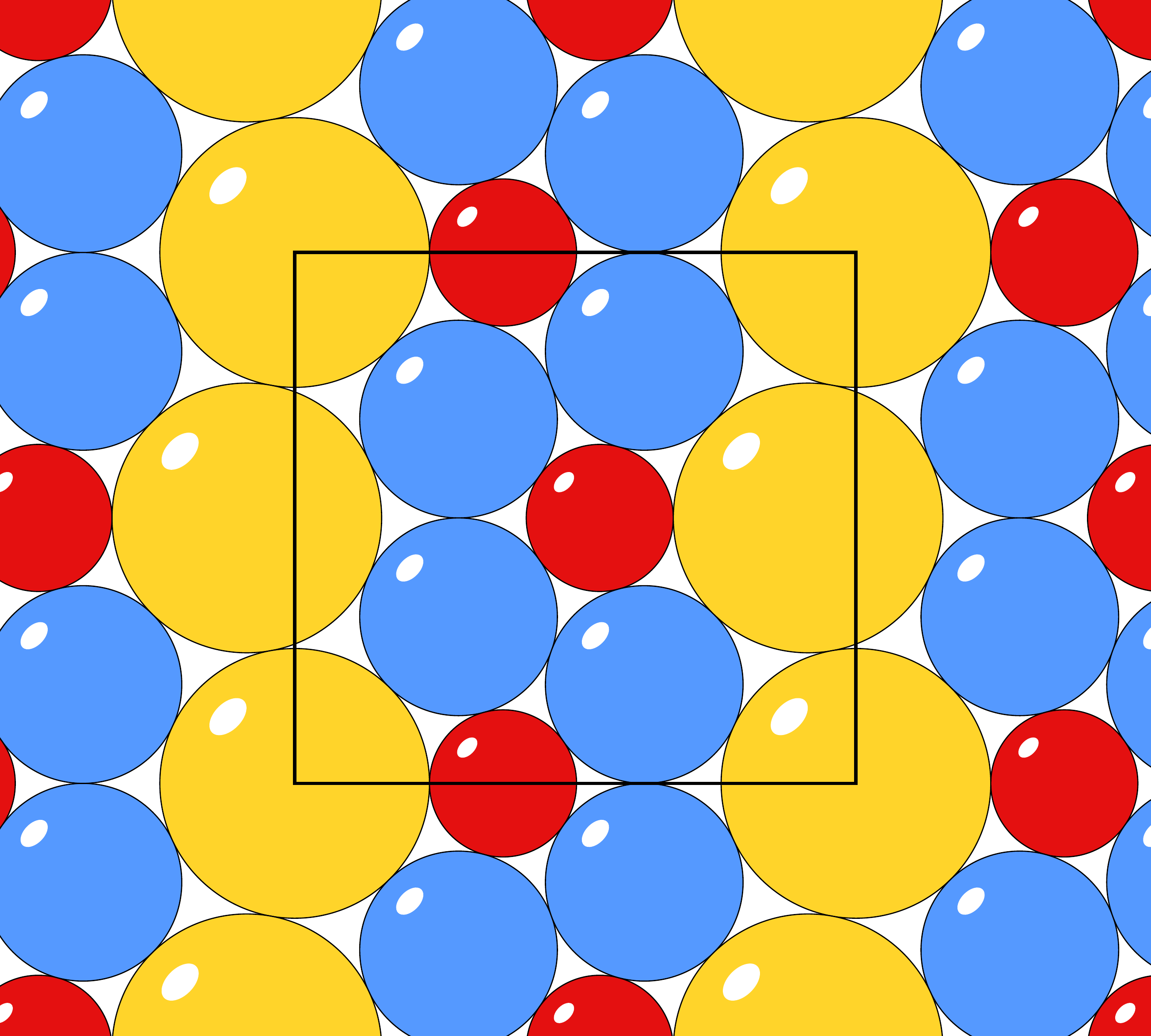}
\end{tabular}
\noindent
\begin{tabular}{lll}
  124\hfill 1rrrr / 1rrsrs & 125\hfill 1rrs / 11srsrss & 126\hfill 1rrs / 1srsrrss\\
  \includegraphics[width=0.3\textwidth]{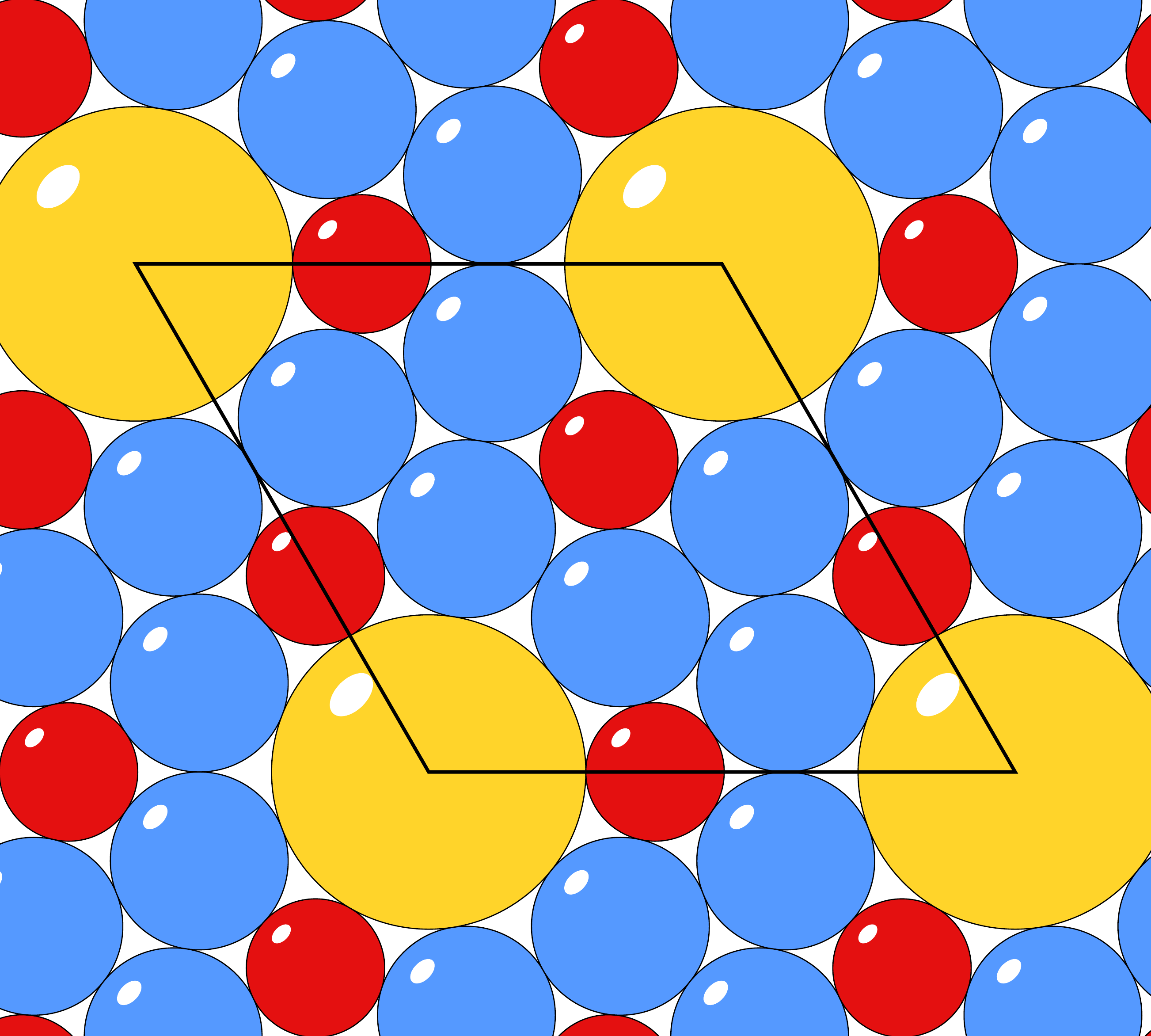} &
  \includegraphics[width=0.3\textwidth]{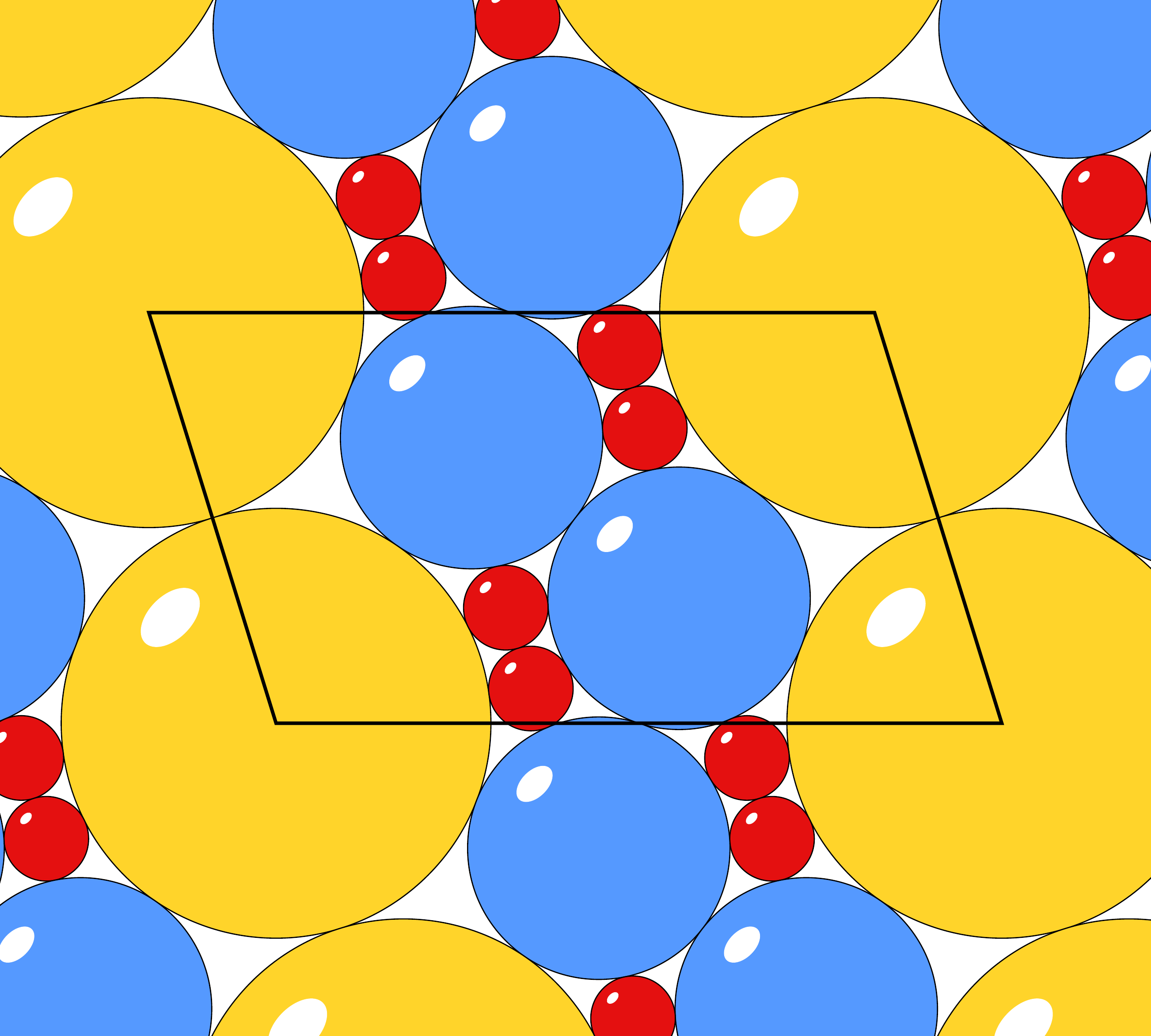} &
  \includegraphics[width=0.3\textwidth]{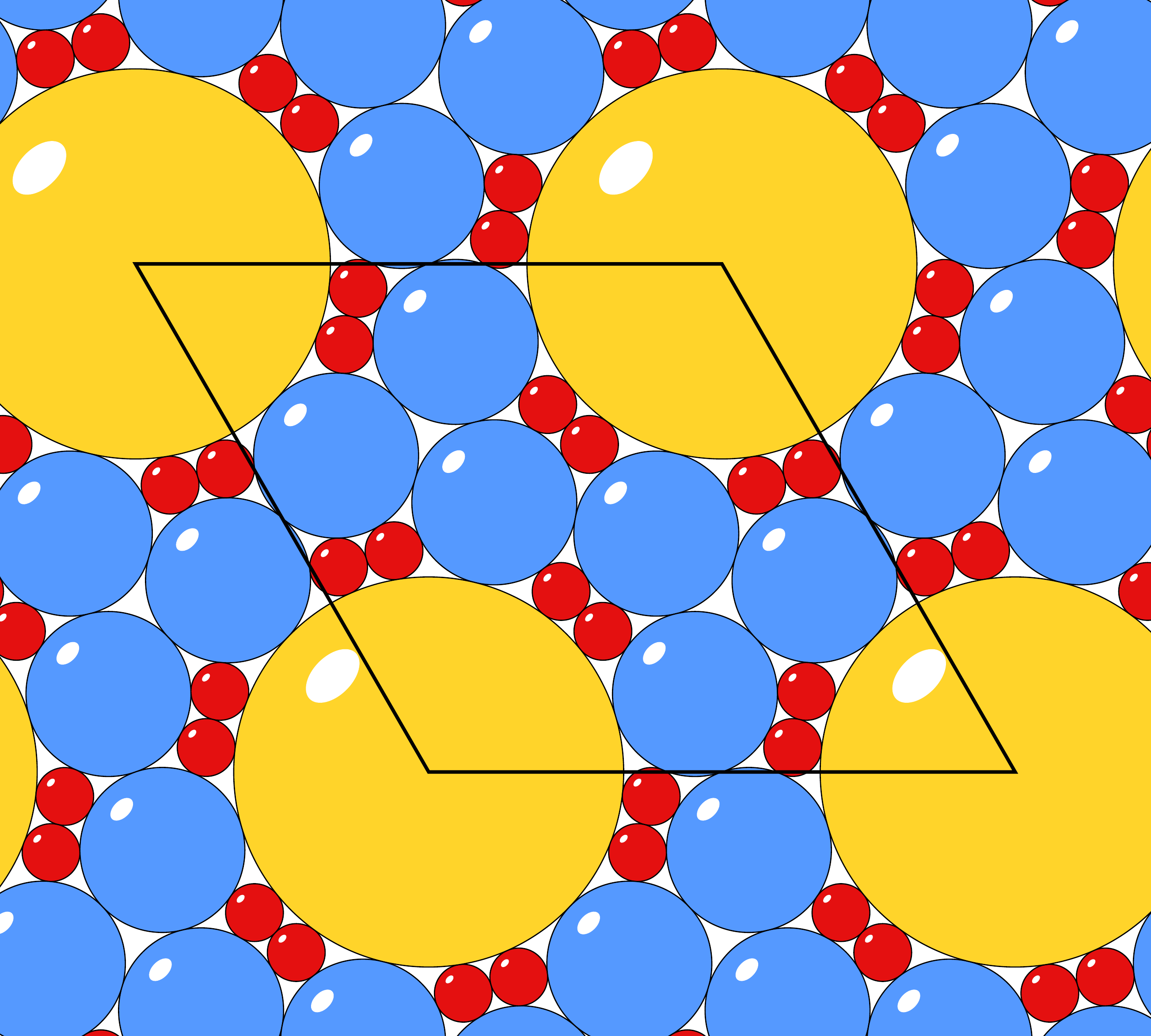}
\end{tabular}
\noindent
\begin{tabular}{lll}
  127\hfill 1rrsr / 11srss & 128\hfill 1rrsr / 1srrss & 129\hfill 1rs1s / 111ss\\
  \includegraphics[width=0.3\textwidth]{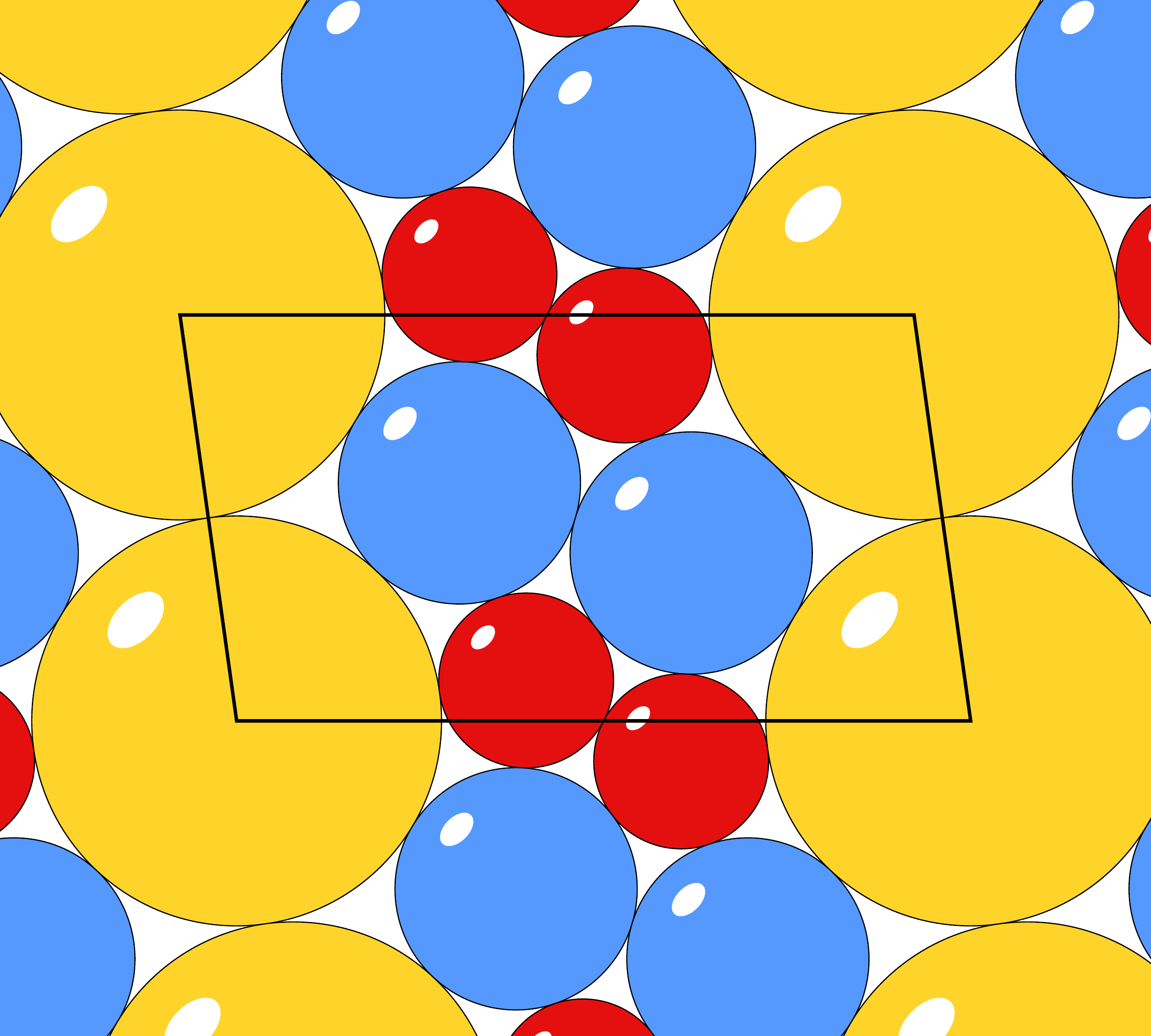} &
  \includegraphics[width=0.3\textwidth]{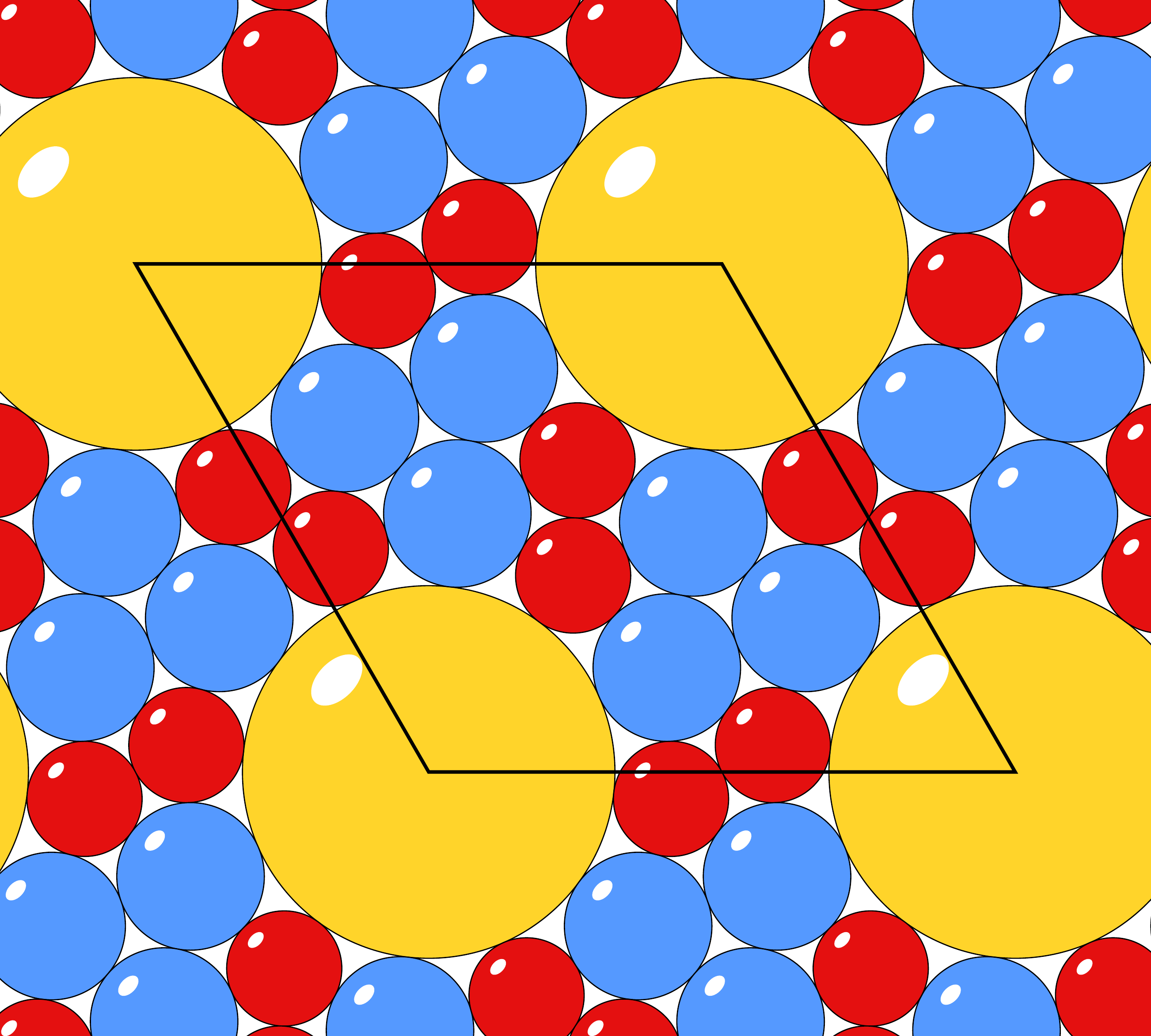} &
  \includegraphics[width=0.3\textwidth]{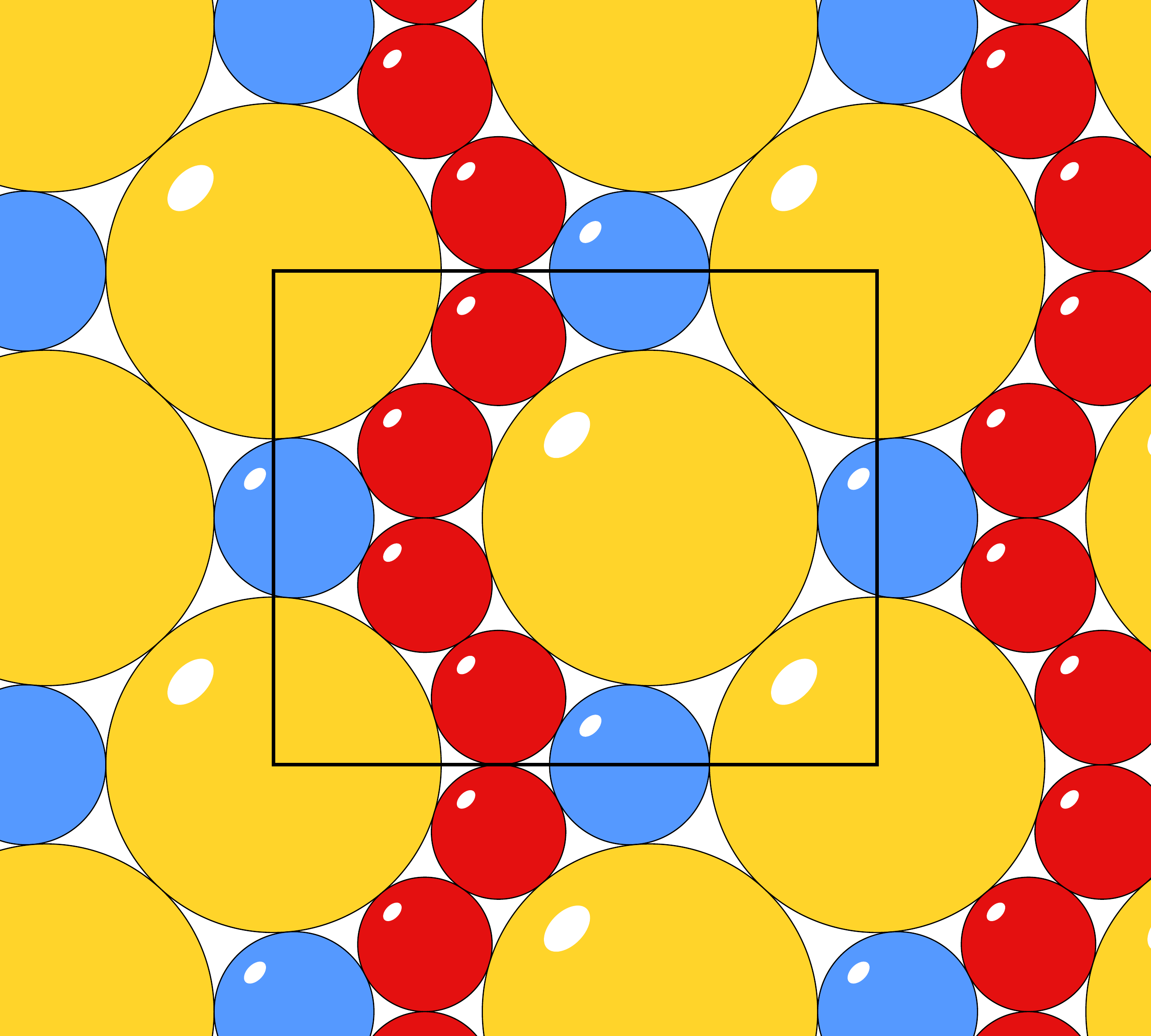}
\end{tabular}
\noindent
\begin{tabular}{lll}
  130\hfill 1rs1s / 11r1ss & 131\hfill 1rs1s / 1s1sss & 132\hfill 1rsr / 1111ss\\
  \includegraphics[width=0.3\textwidth]{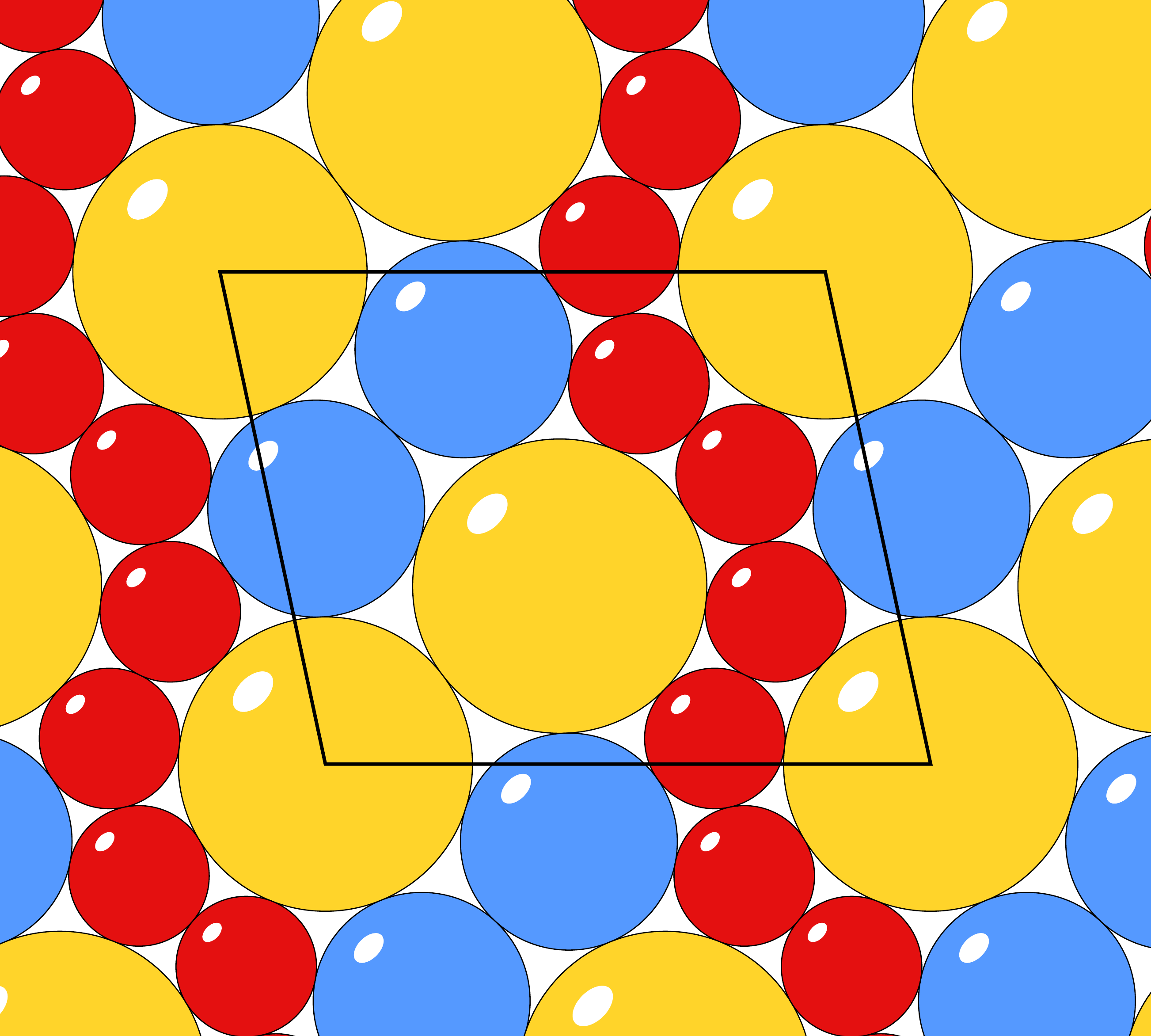} &
  \includegraphics[width=0.3\textwidth]{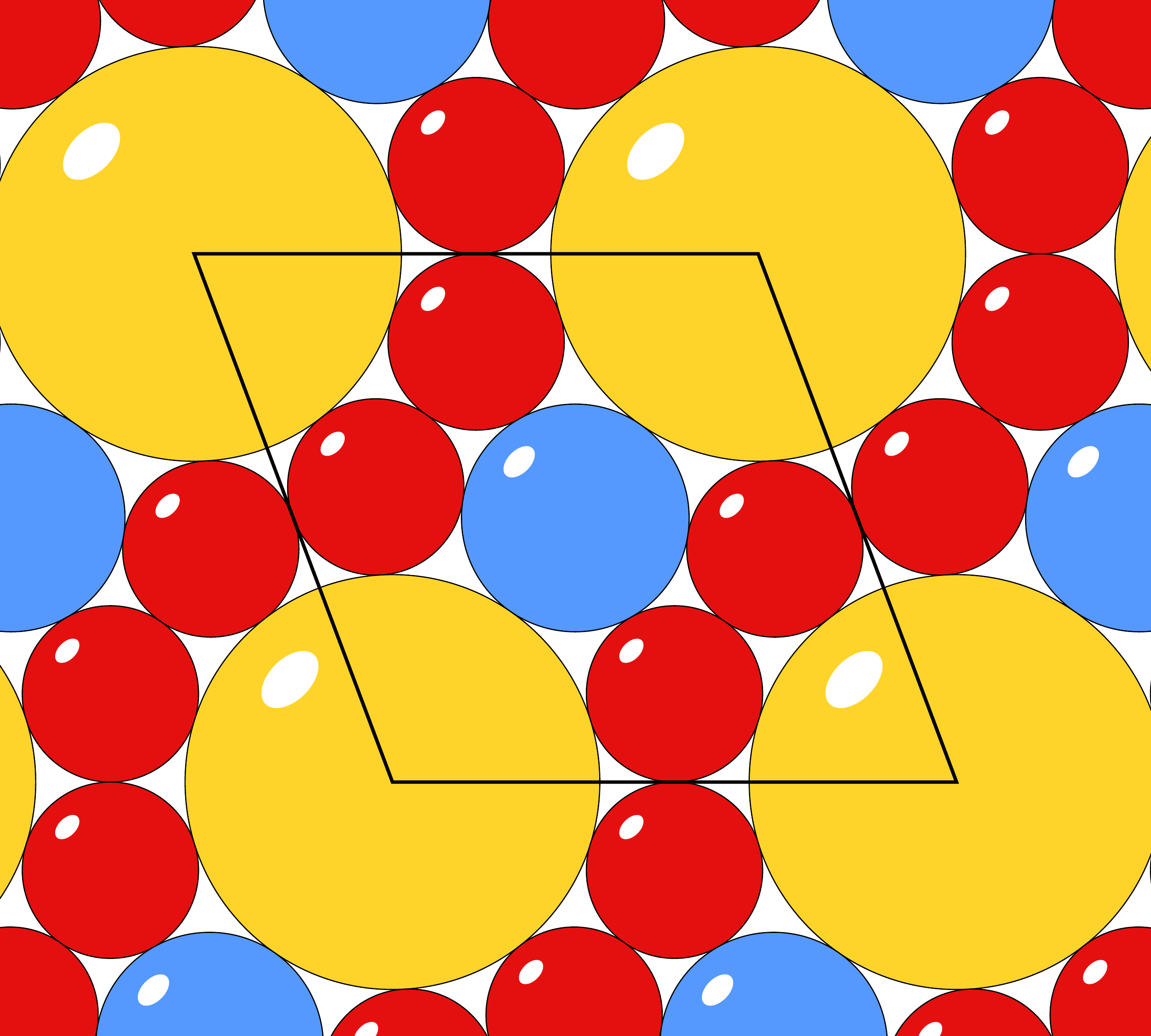} &
  \includegraphics[width=0.3\textwidth]{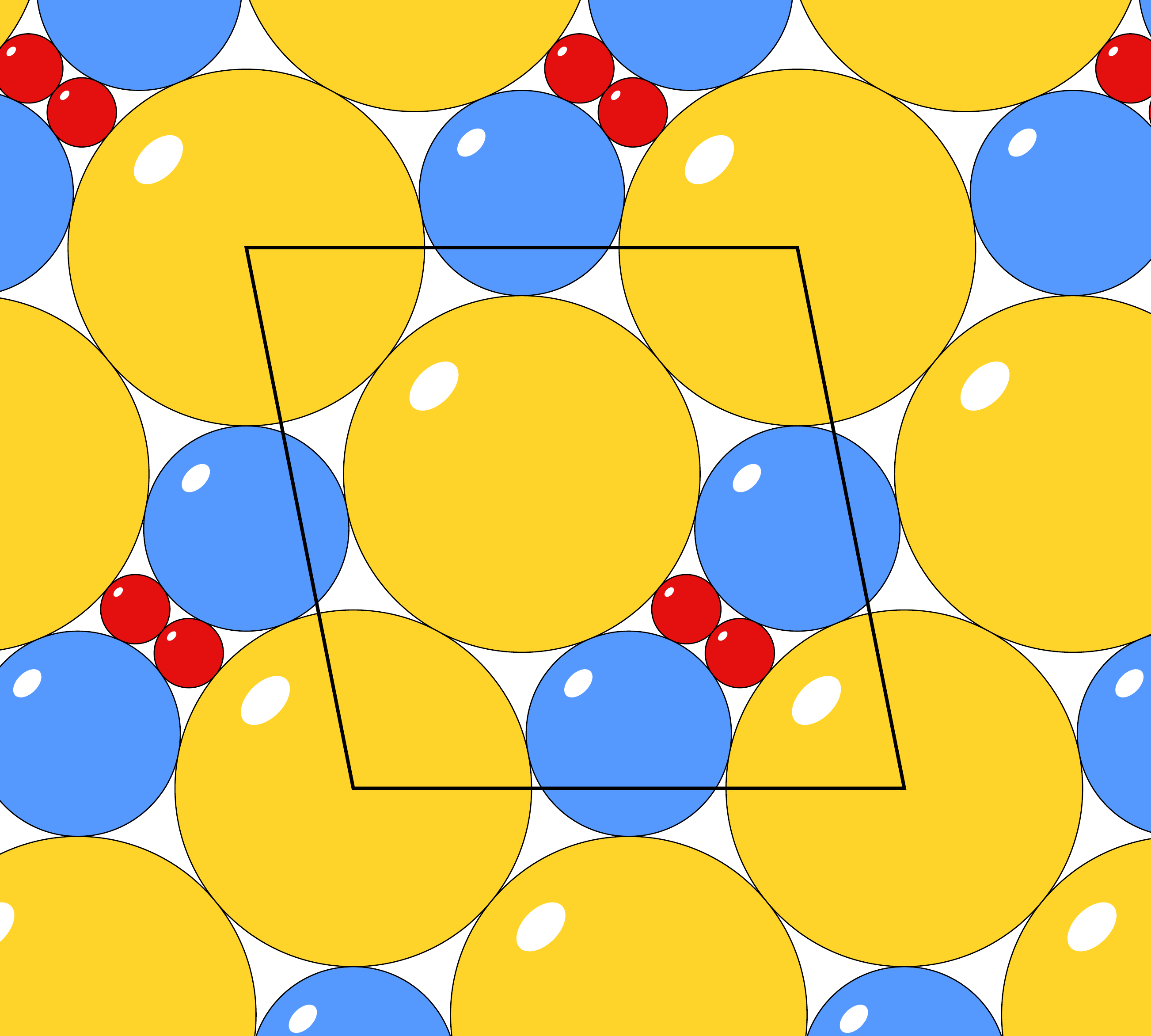}
\end{tabular}
\noindent
\begin{tabular}{lll}
  133\hfill 1rsr / 111r1ss & 134\hfill 1rsr / 111s1sss & 135\hfill 1rsr / 111ss\\
  \includegraphics[width=0.3\textwidth]{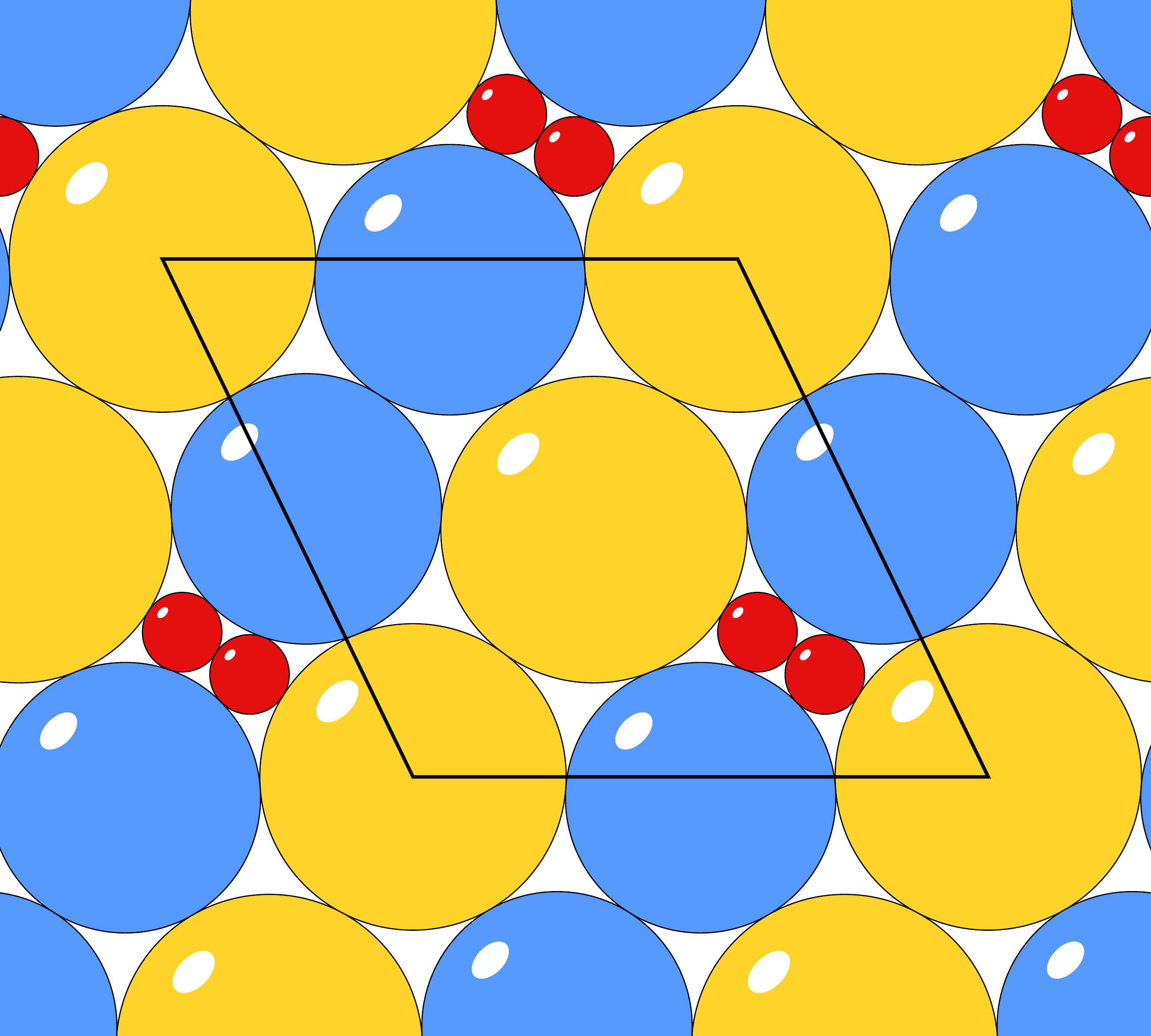} &
  \includegraphics[width=0.3\textwidth]{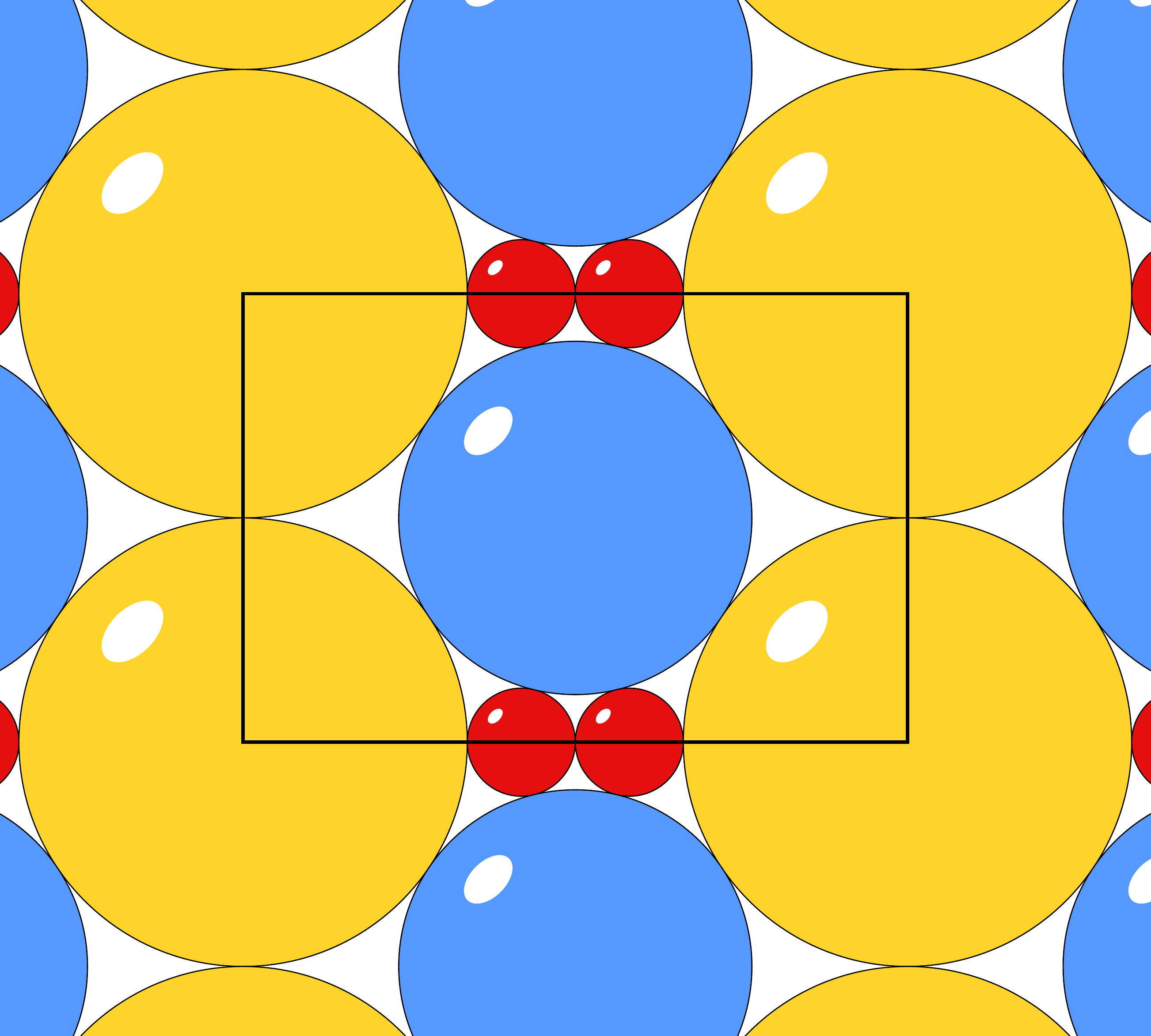} &
  \includegraphics[width=0.3\textwidth]{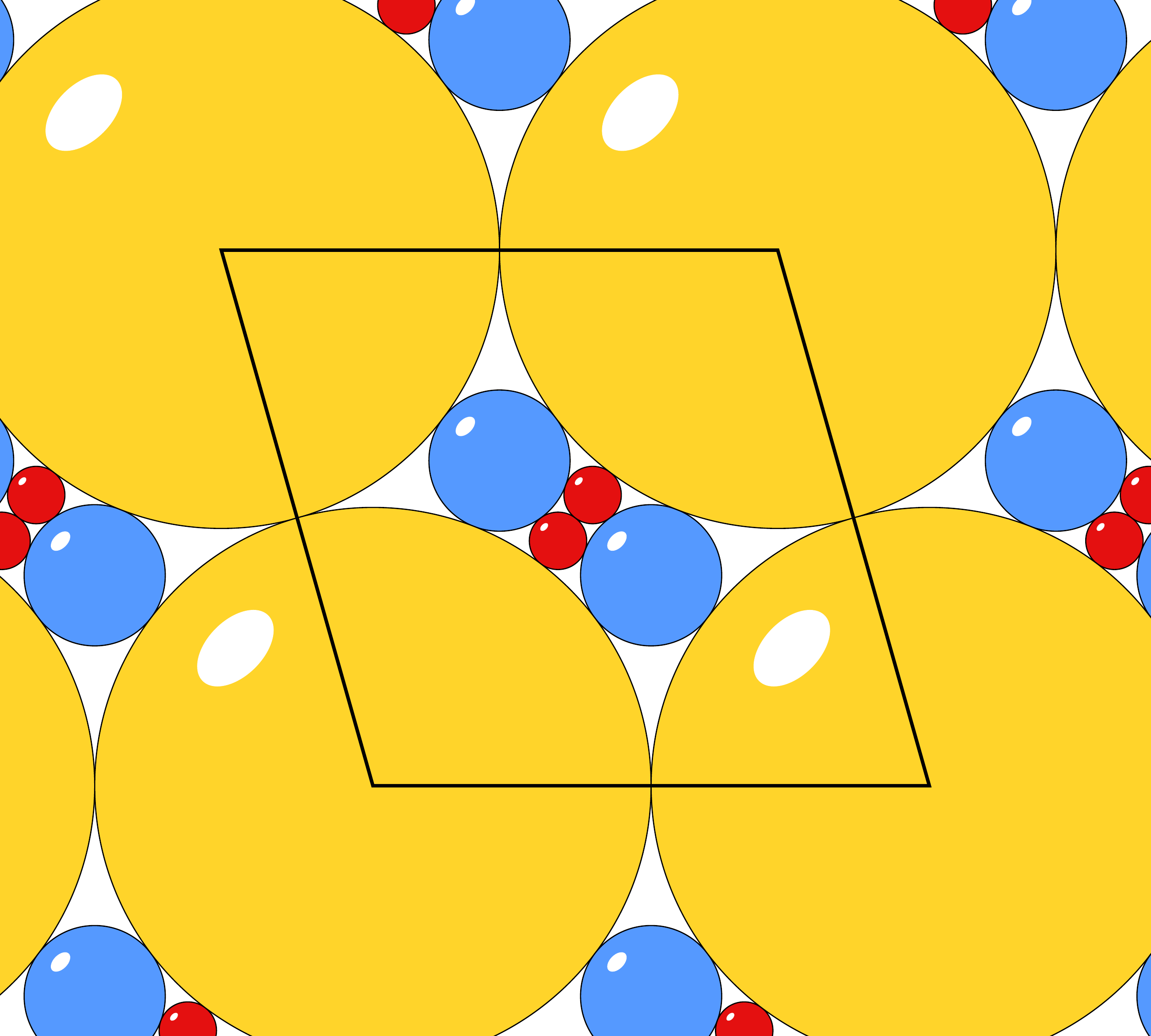}
\end{tabular}
\noindent
\begin{tabular}{lll}
  136\hfill 1rsr / 11r1ss & 137\hfill 1rsr / 11rr1ss & 138\hfill 1rsr / 11s1sss\\
  \includegraphics[width=0.3\textwidth]{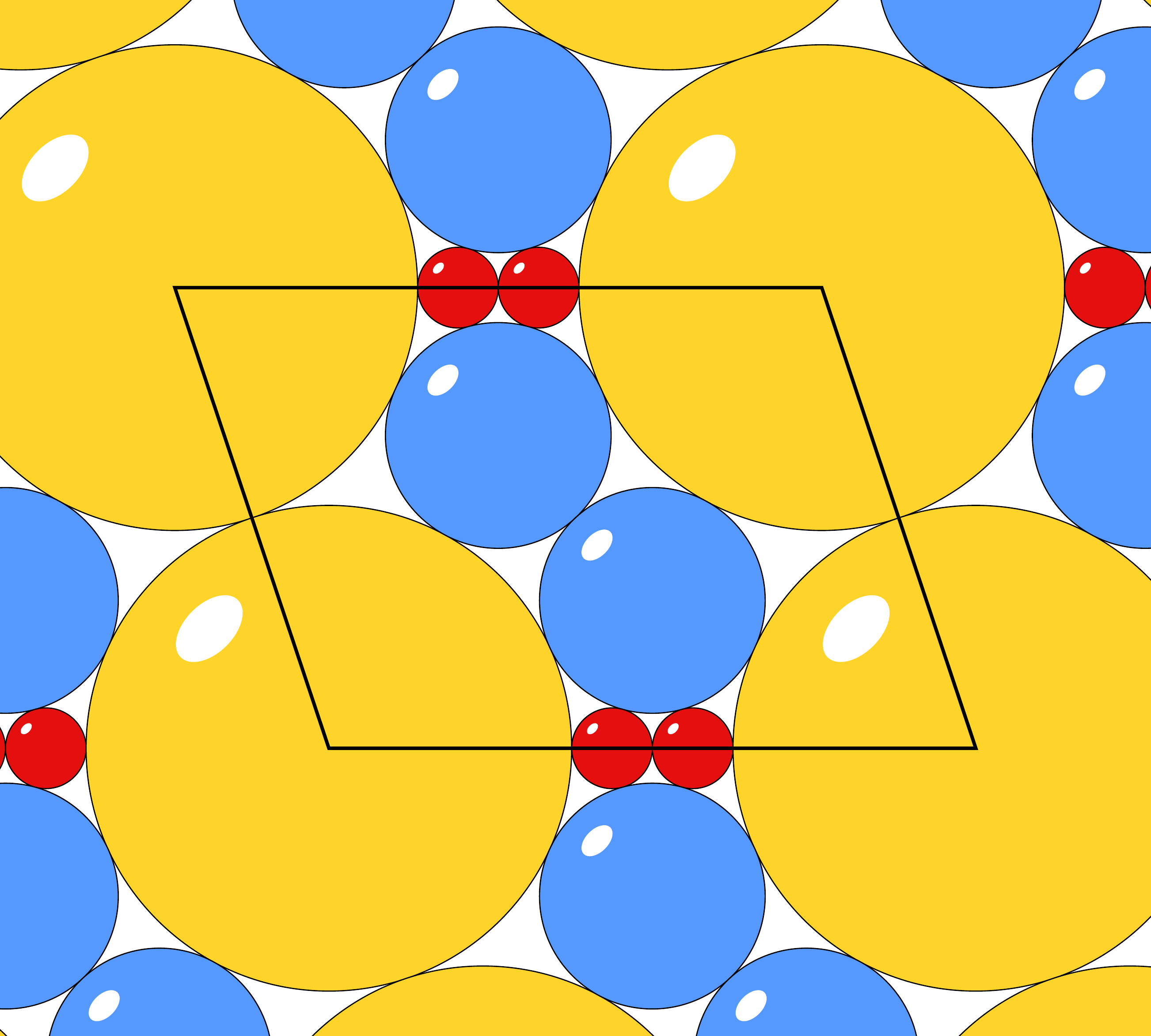} &
  \includegraphics[width=0.3\textwidth]{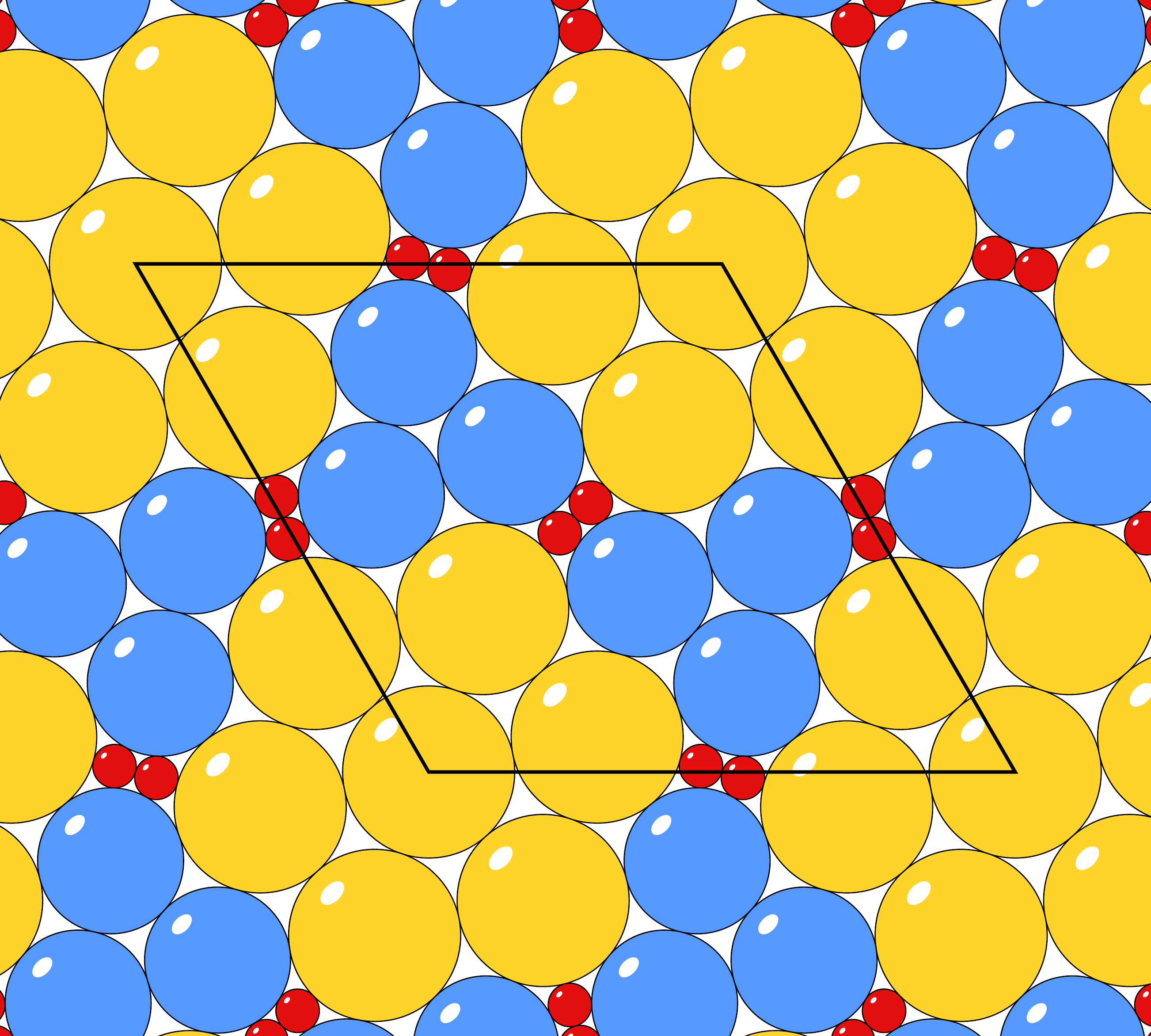} &
  \includegraphics[width=0.3\textwidth]{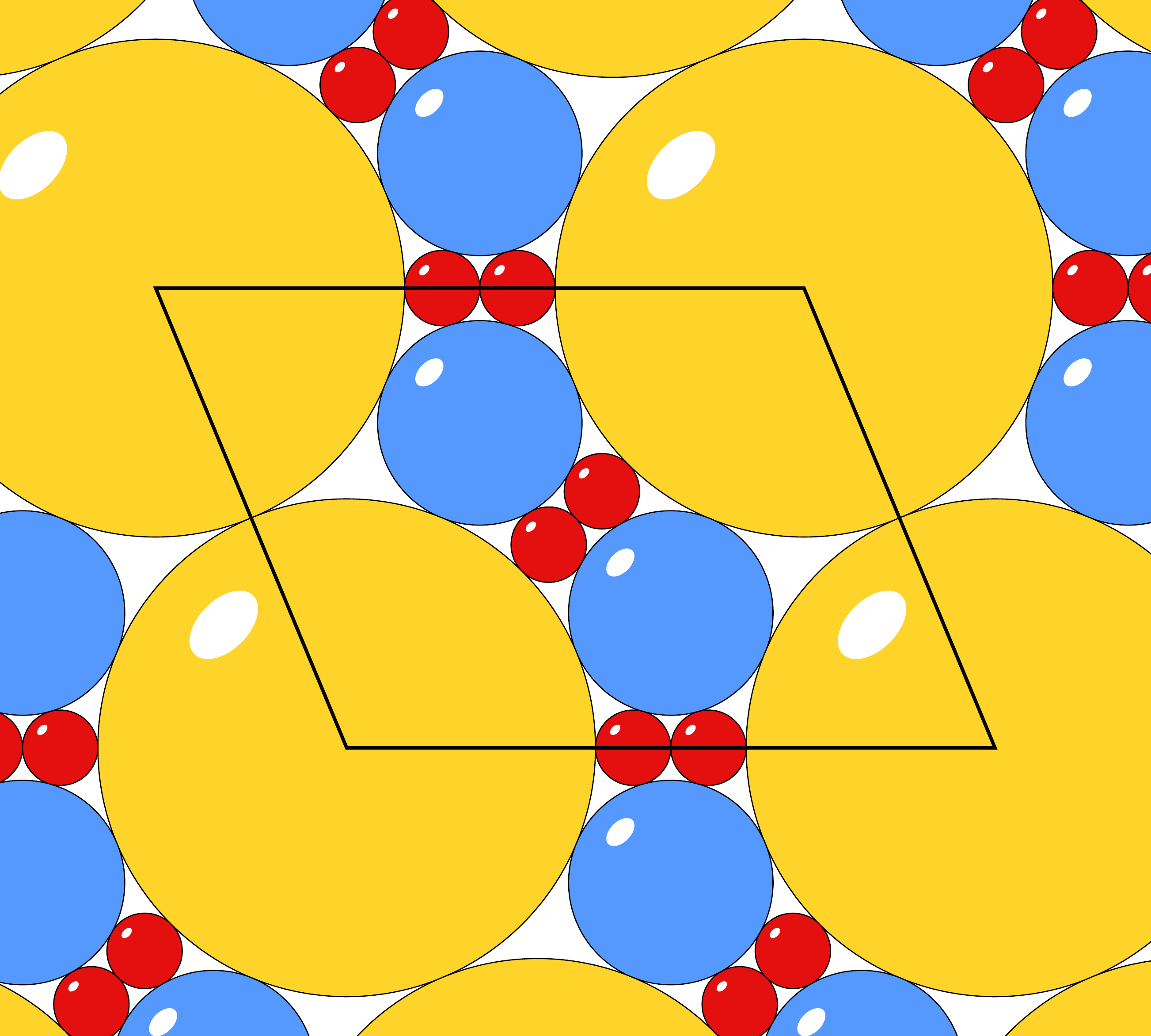}
\end{tabular}
\noindent
\begin{tabular}{lll}
  139\hfill 1rsr / 1r1r1ss & 140\hfill 1rsr / 1r1s1sss & 141\hfill 1rsr / 1rr1ss\\
  \includegraphics[width=0.3\textwidth]{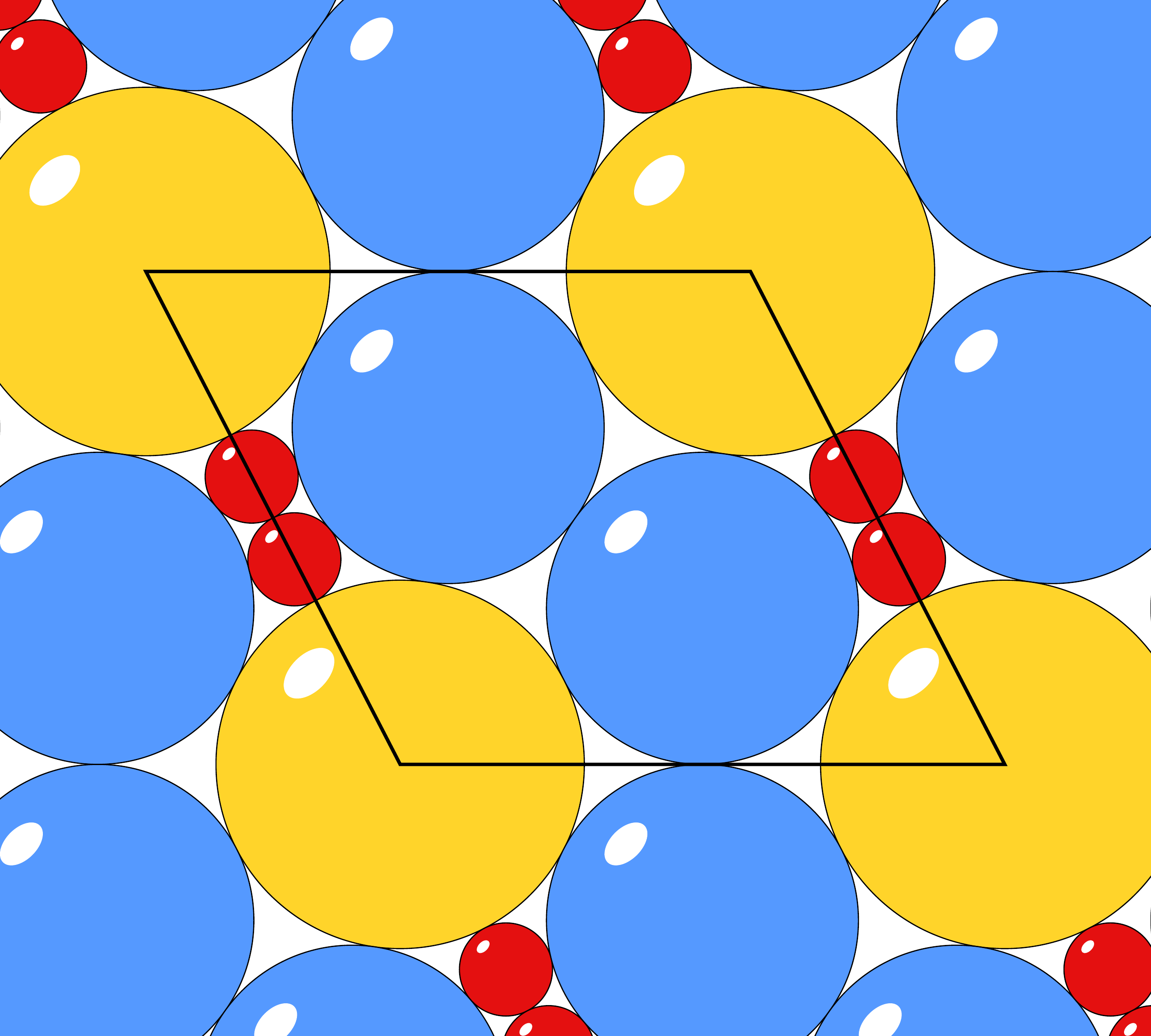} &
  \includegraphics[width=0.3\textwidth]{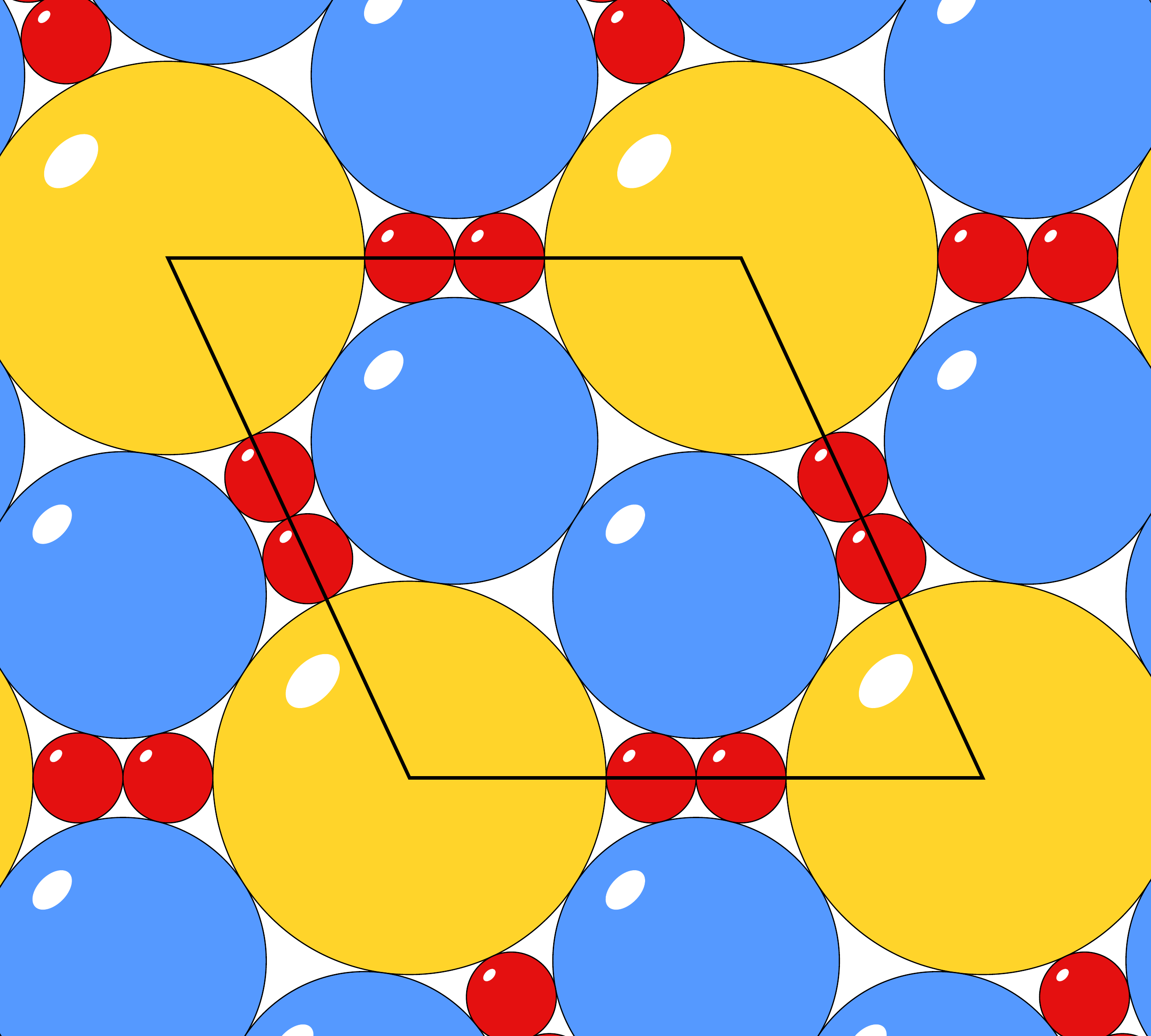} &
  \includegraphics[width=0.3\textwidth]{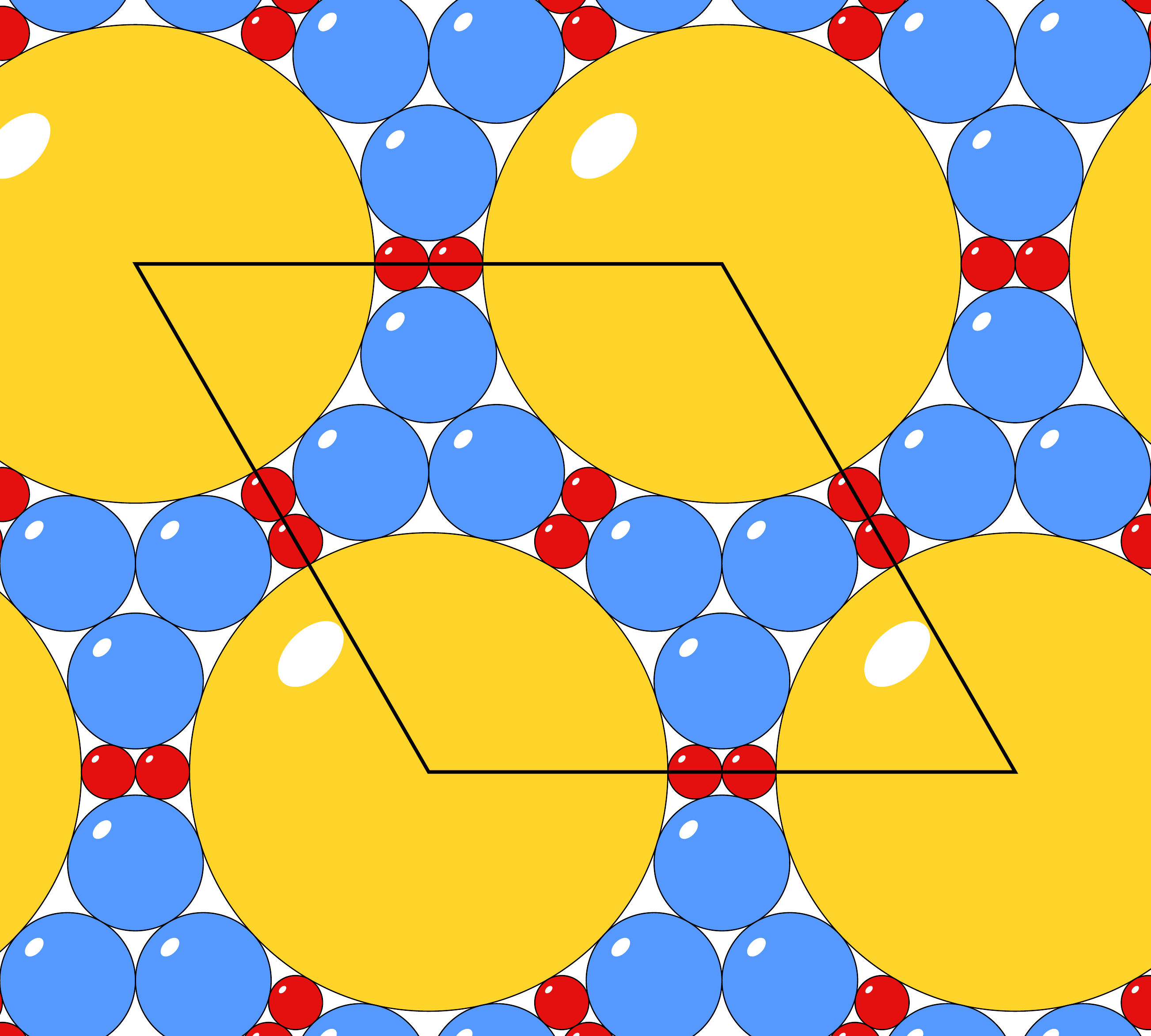}
\end{tabular}
\noindent
\begin{tabular}{lll}
  142\hfill 1rsr / 1rrr1ss & 143\hfill 1rsr / 1s1s1ssss & 144\hfill 1rsrs / 111ssss\\
  \includegraphics[width=0.3\textwidth]{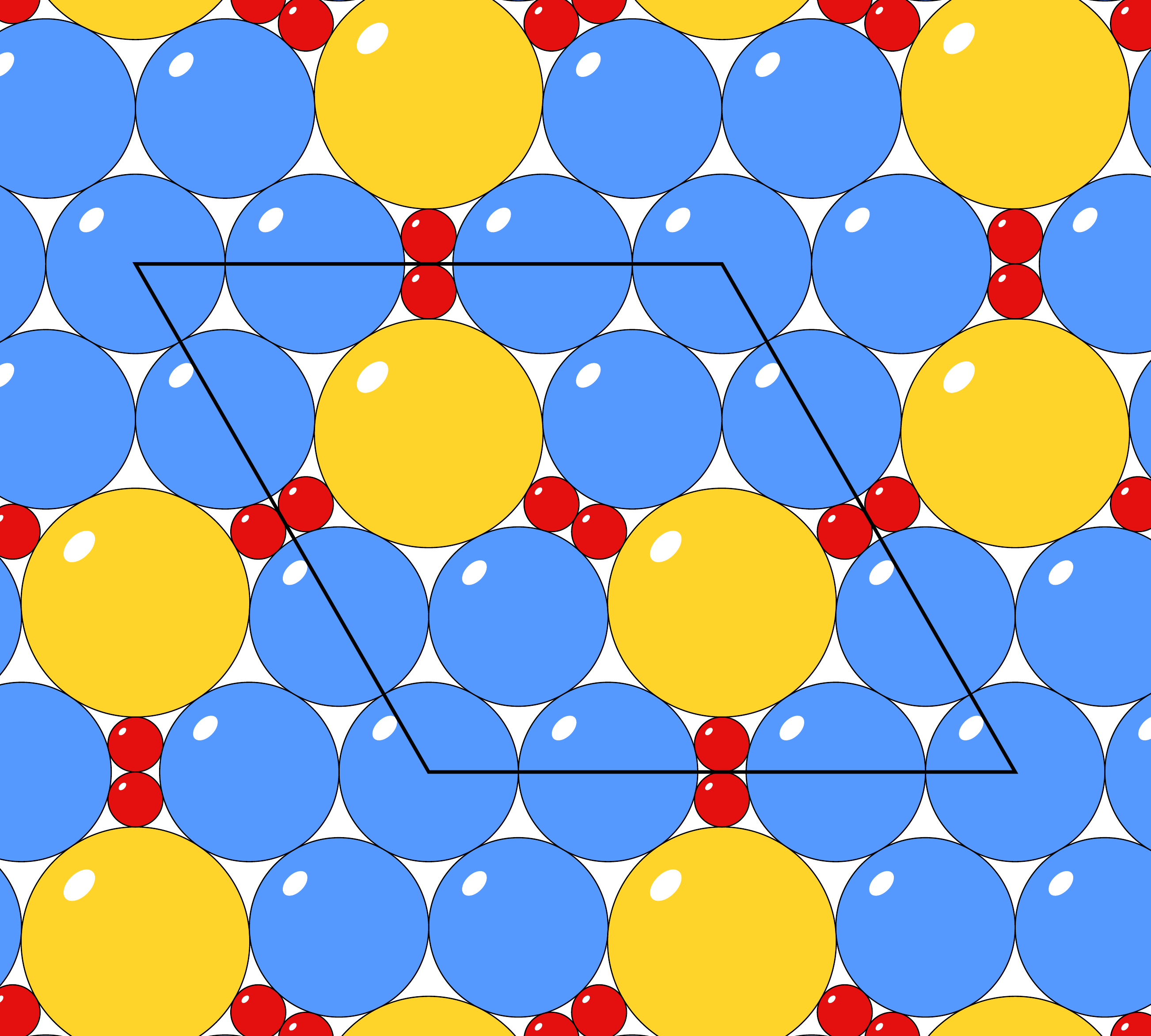} &
  \includegraphics[width=0.3\textwidth]{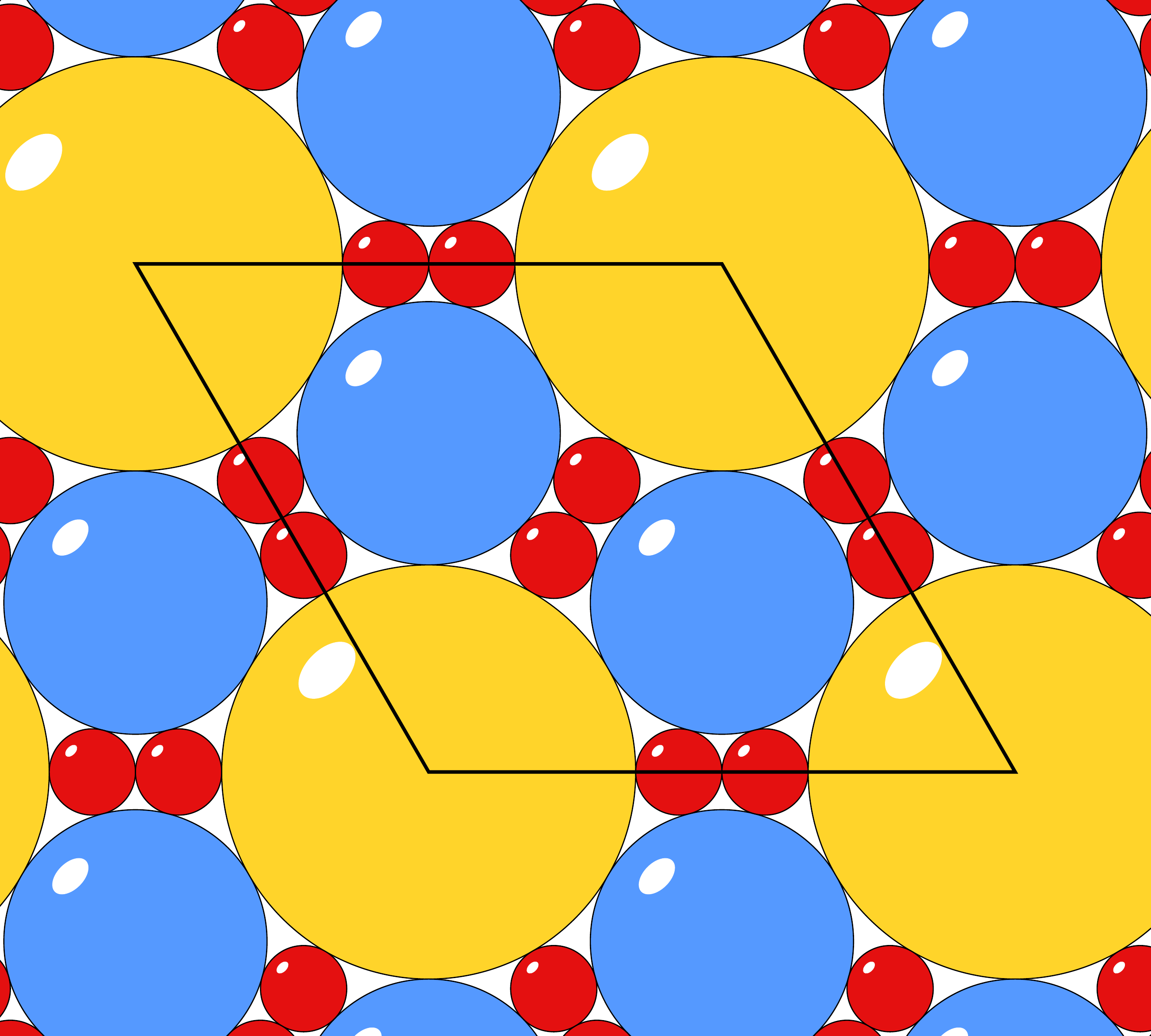} &
  \includegraphics[width=0.3\textwidth]{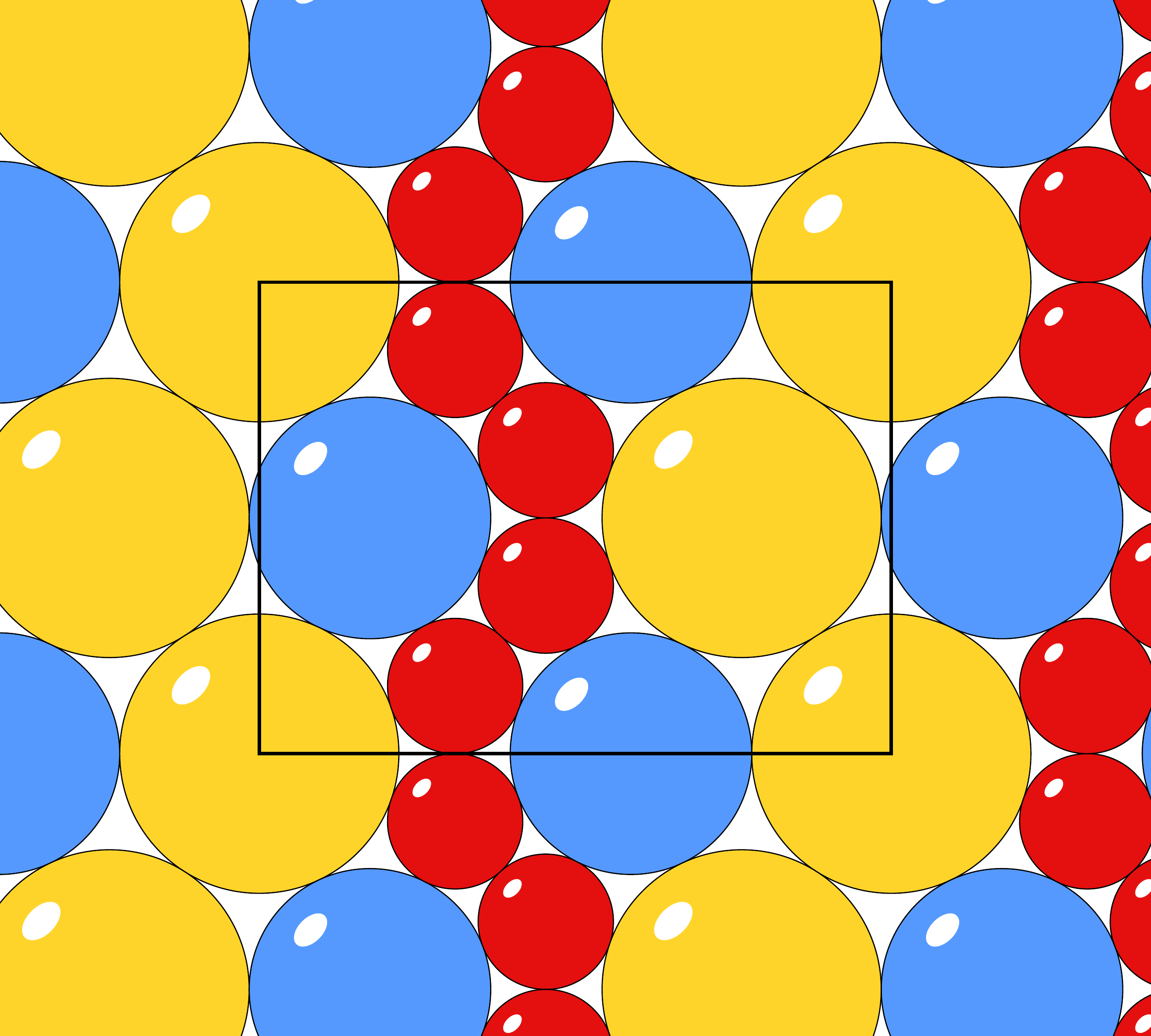}
\end{tabular}
\noindent
\begin{tabular}{lll}
  145\hfill 1rsrs / 11ssss & 146\hfill 1rsrs / 1r1ssss & 147\hfill 1rssr / 111ss\\
  \includegraphics[width=0.3\textwidth]{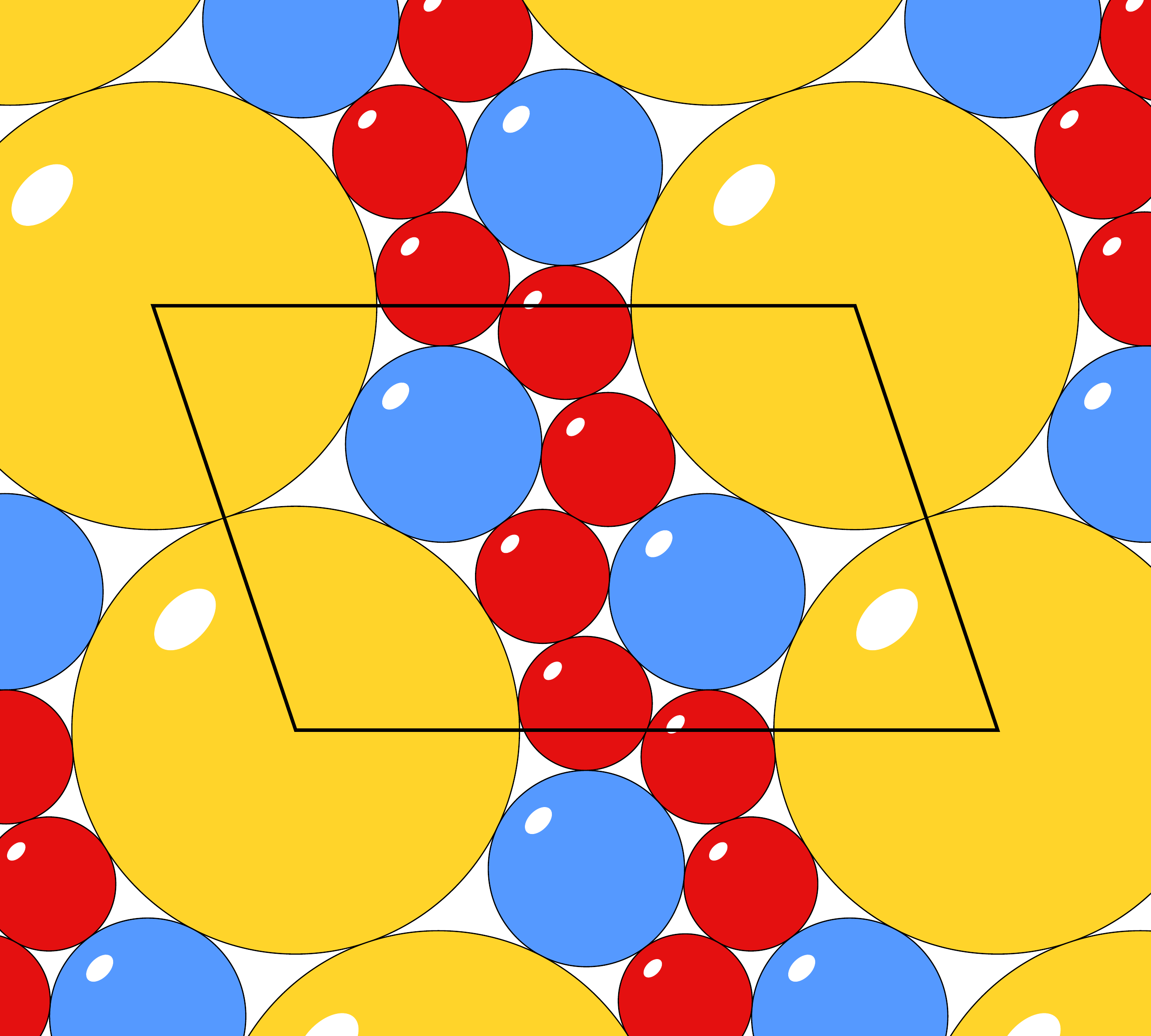} &
  \includegraphics[width=0.3\textwidth]{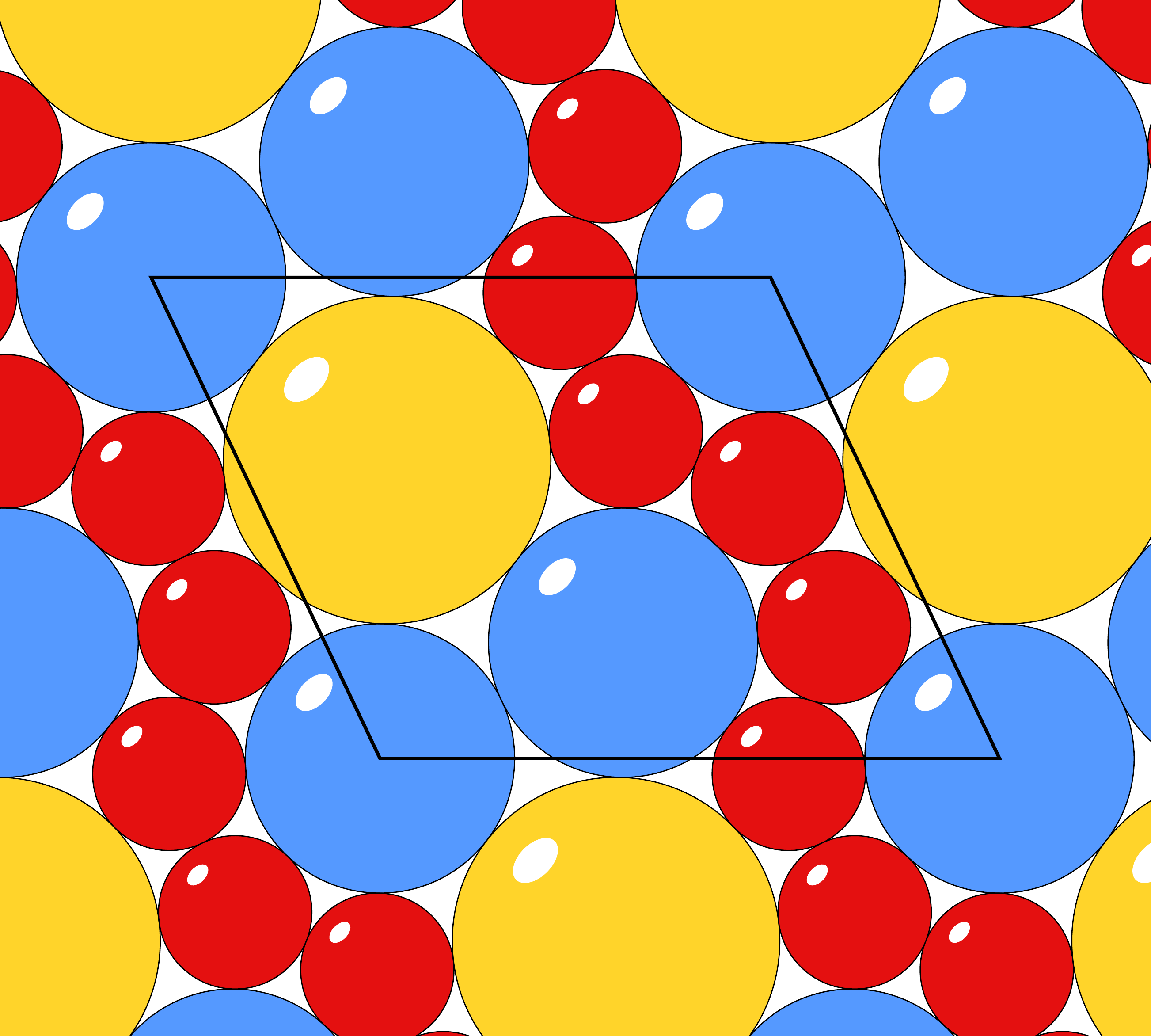} &
  \includegraphics[width=0.3\textwidth]{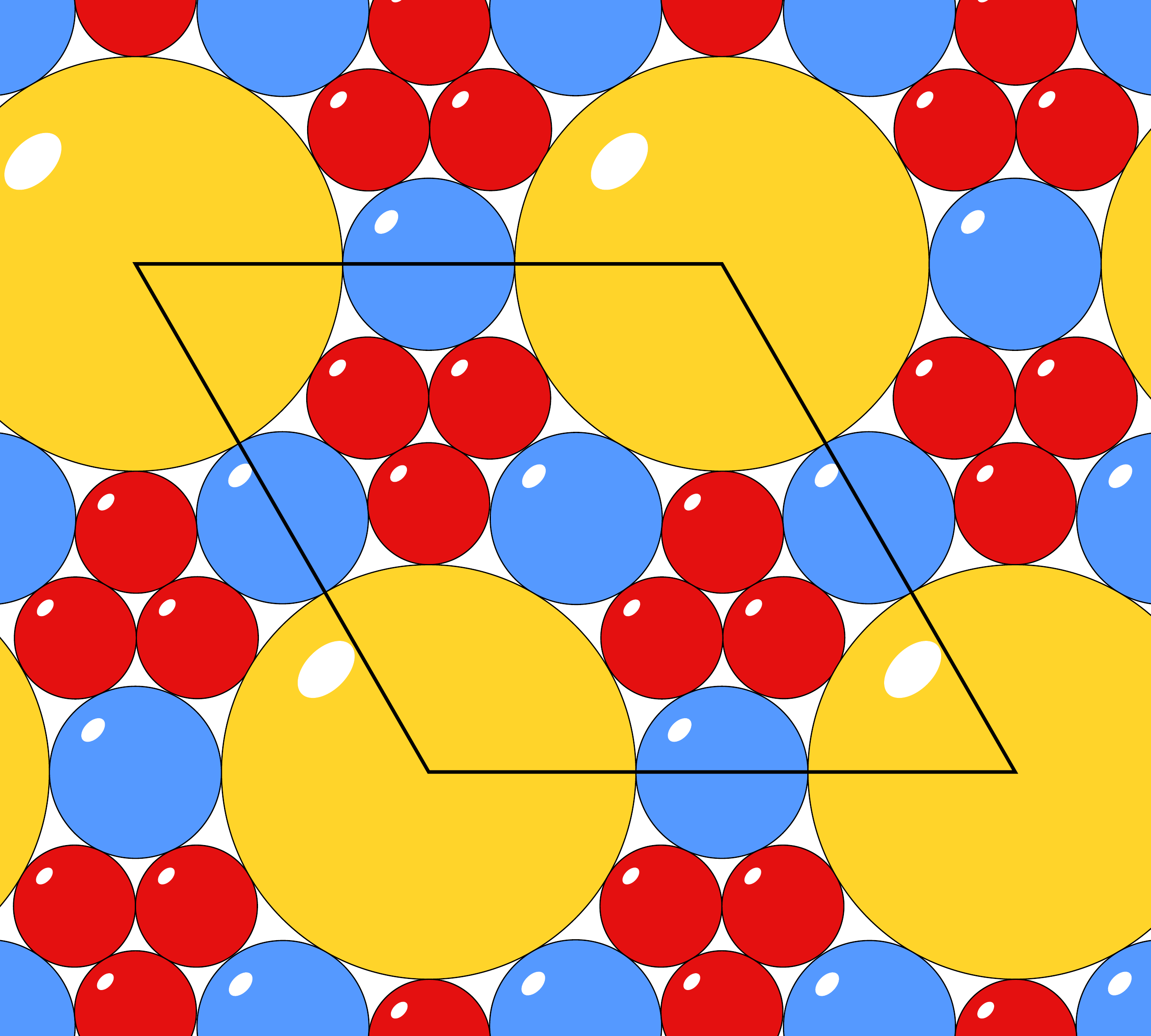}
\end{tabular}
\noindent
\begin{tabular}{lll}
  148\hfill 1rssr / 11r1ss & 149\hfill 1rssr / 11s1sss & 150\hfill 1rssr / 1r1ss\\
  \includegraphics[width=0.3\textwidth]{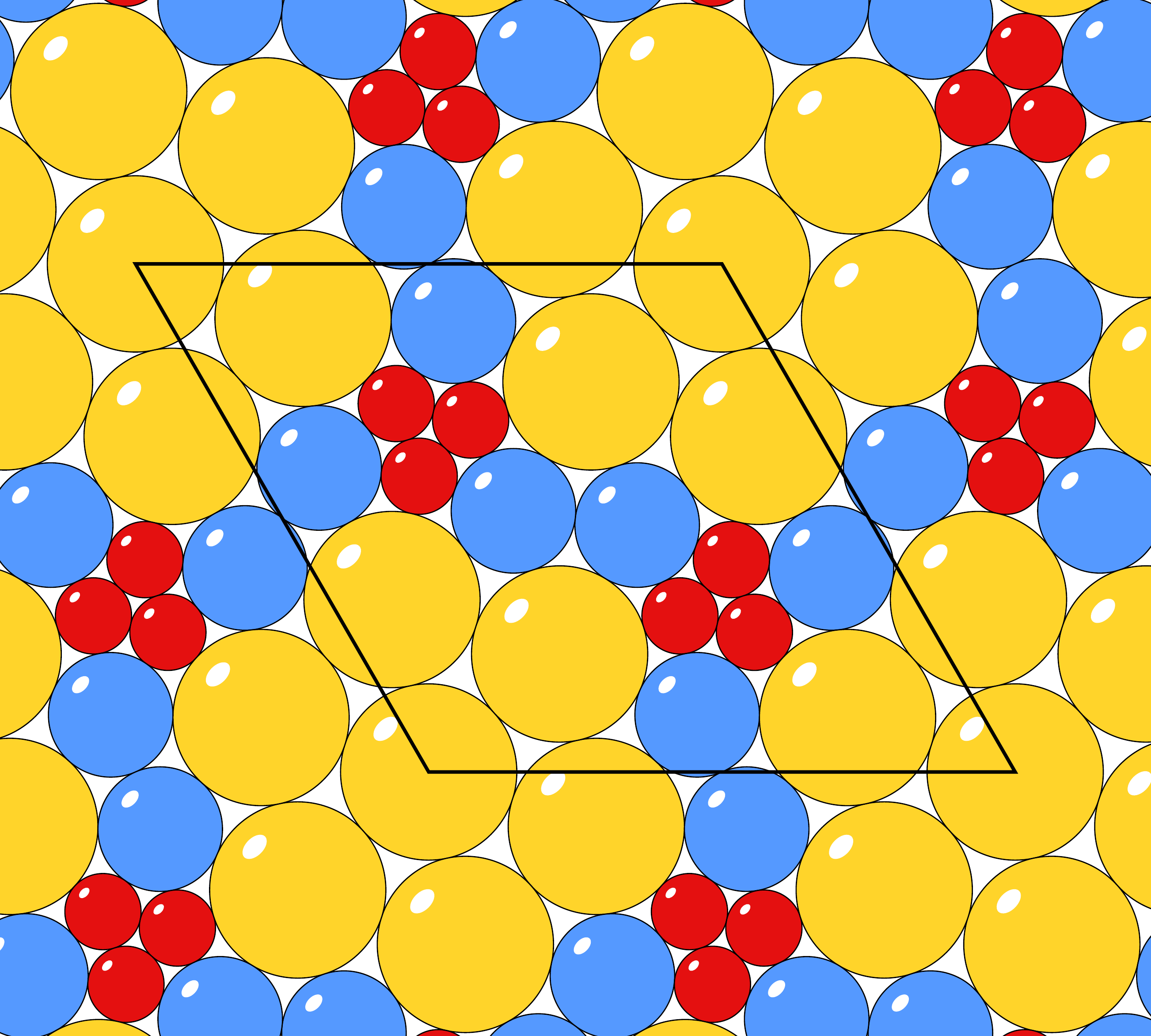} &
  \includegraphics[width=0.3\textwidth]{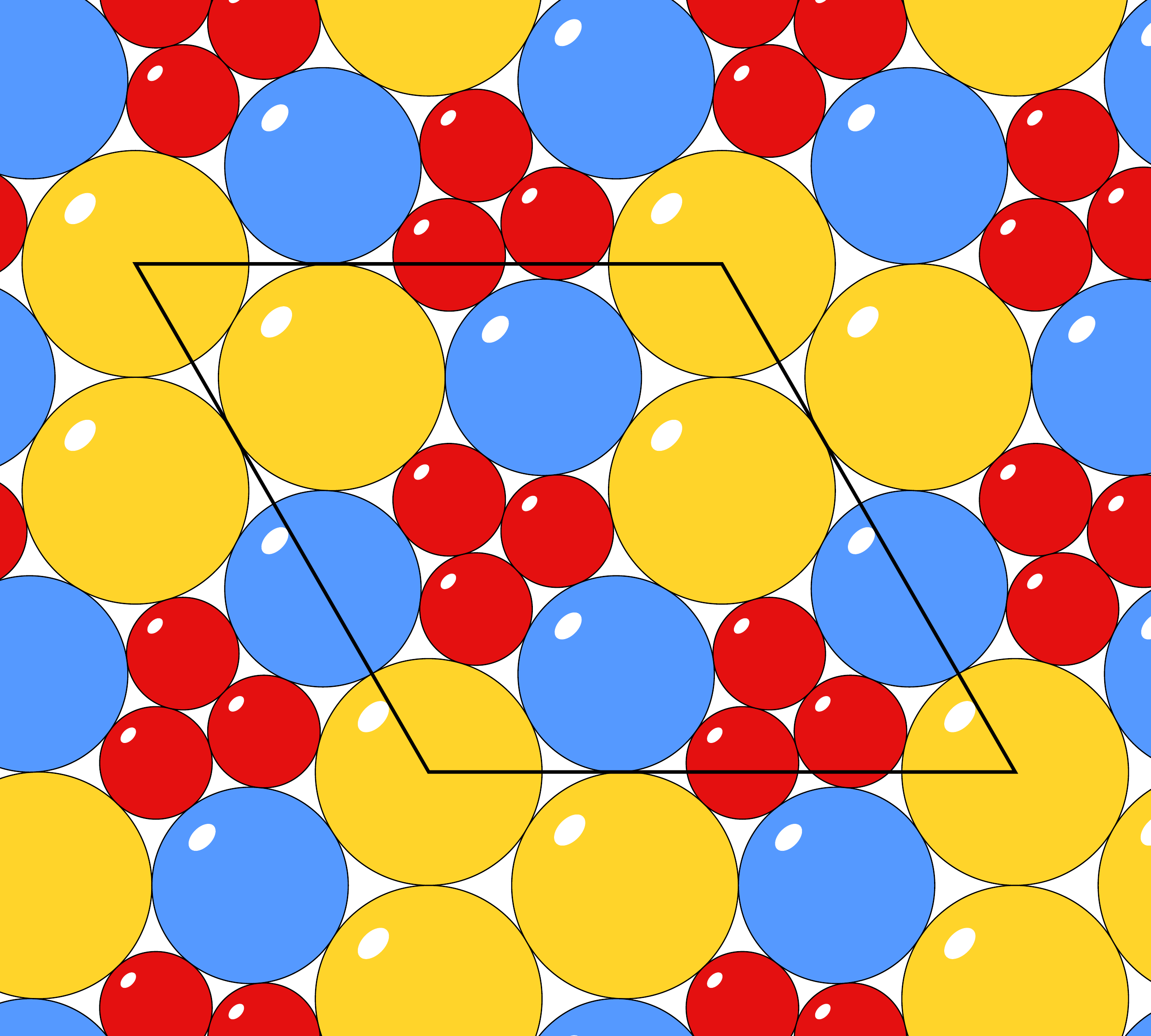} &
  \includegraphics[width=0.3\textwidth]{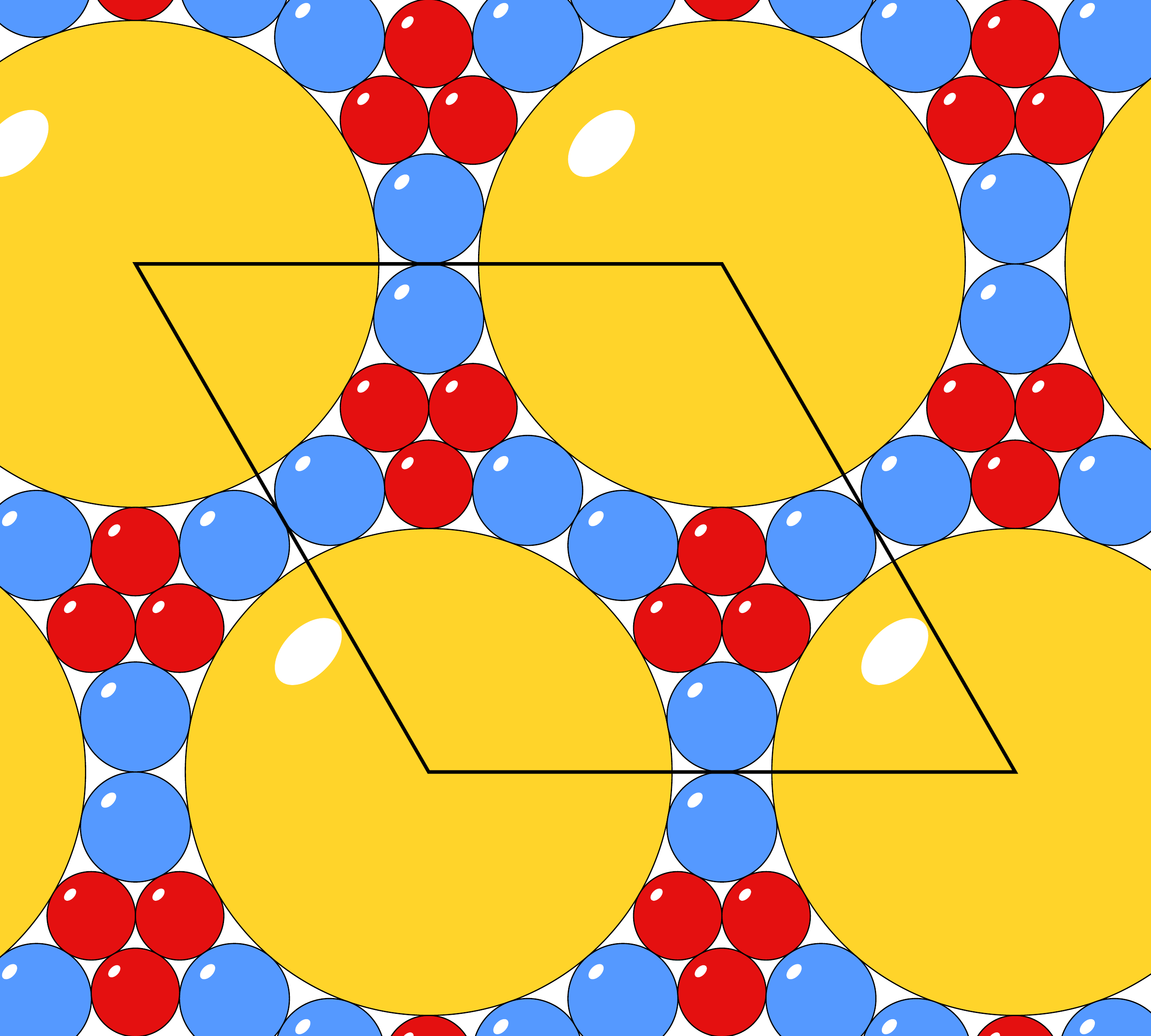}
\end{tabular}
\noindent
\begin{tabular}{lll}
  151\hfill 1rssr / 1rr1ss & 152\hfill 1rssr / 1s1sss & 153\hfill 1rsss / 111ss\\
  \includegraphics[width=0.3\textwidth]{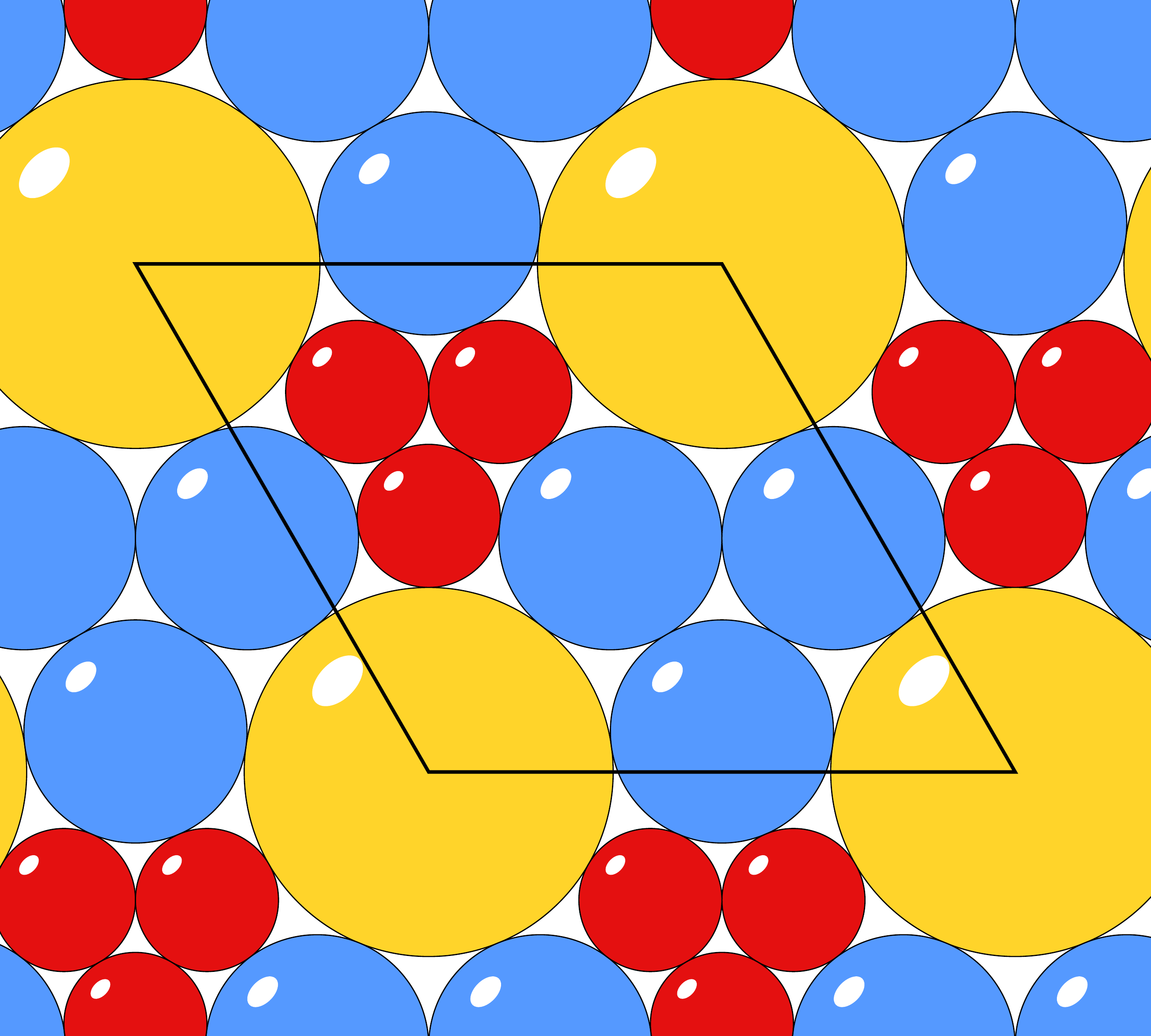} &
  \includegraphics[width=0.3\textwidth]{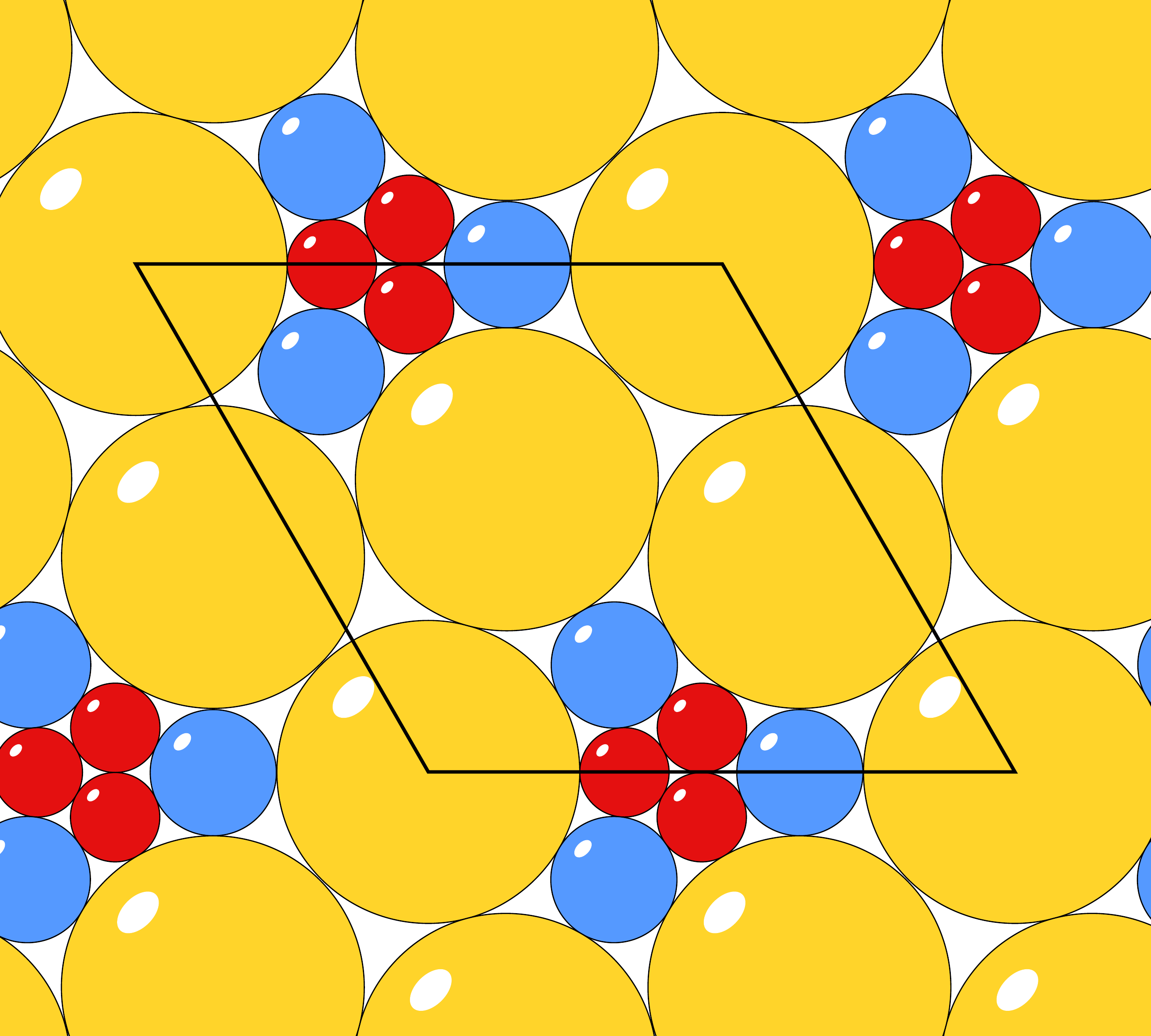} &
  \includegraphics[width=0.3\textwidth]{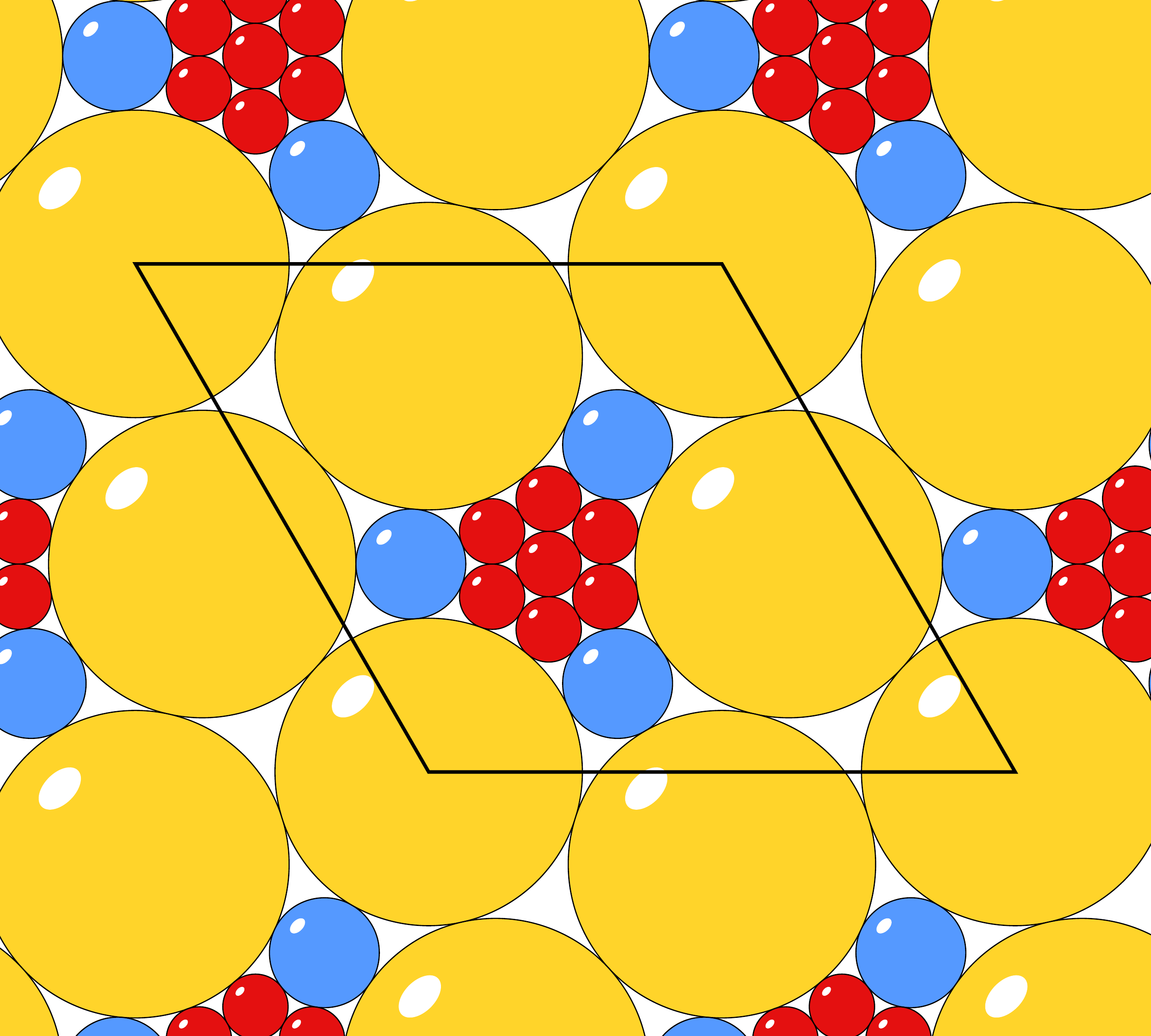}
\end{tabular}
\noindent
\begin{tabular}{lll}
  154\hfill 1rsss / 11r1ss & 155\hfill 1rsss / 11s1sss & 156\hfill 1rsss / 1r1ss\\
  \includegraphics[width=0.3\textwidth]{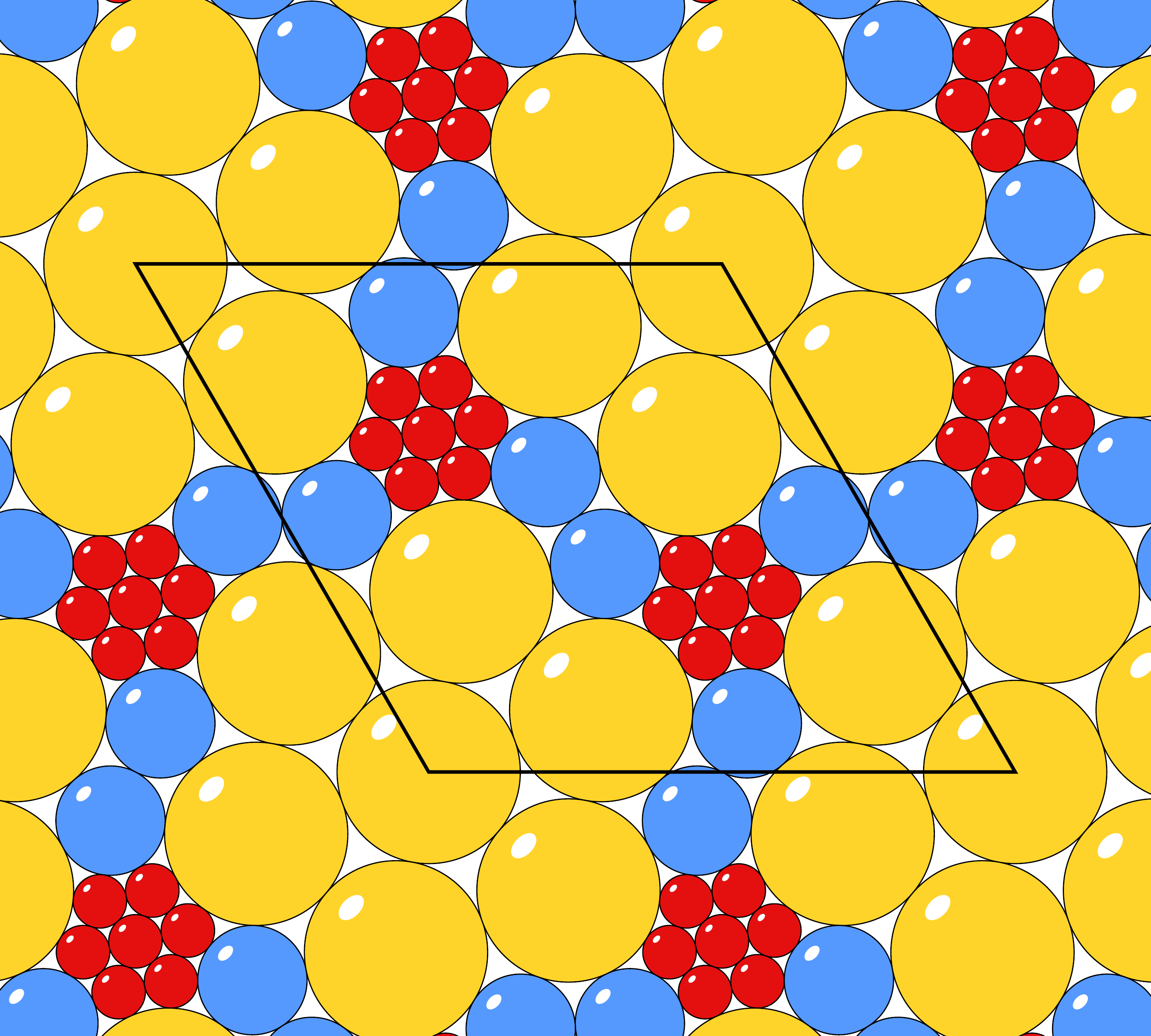} &
  \includegraphics[width=0.3\textwidth]{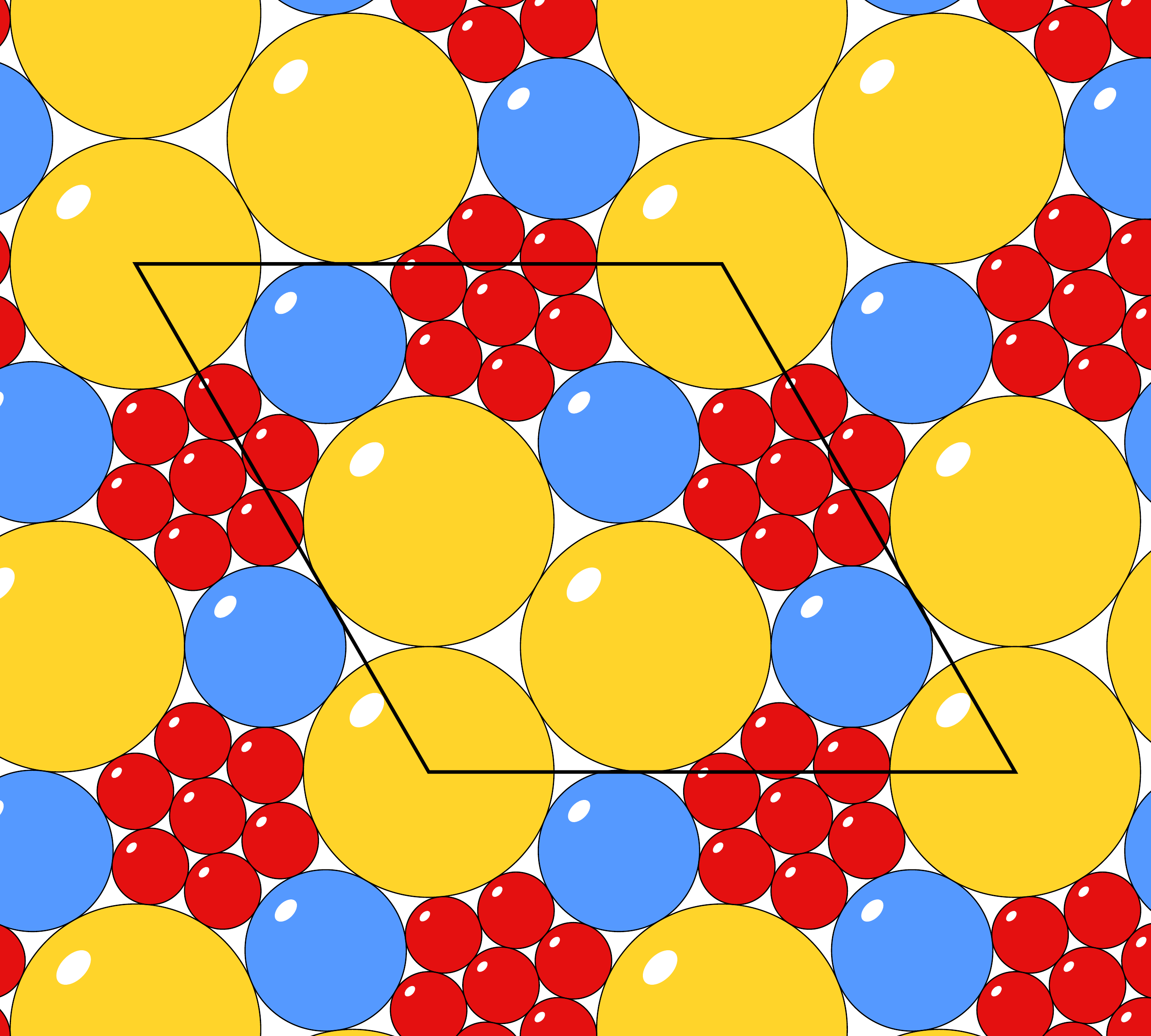} &
  \includegraphics[width=0.3\textwidth]{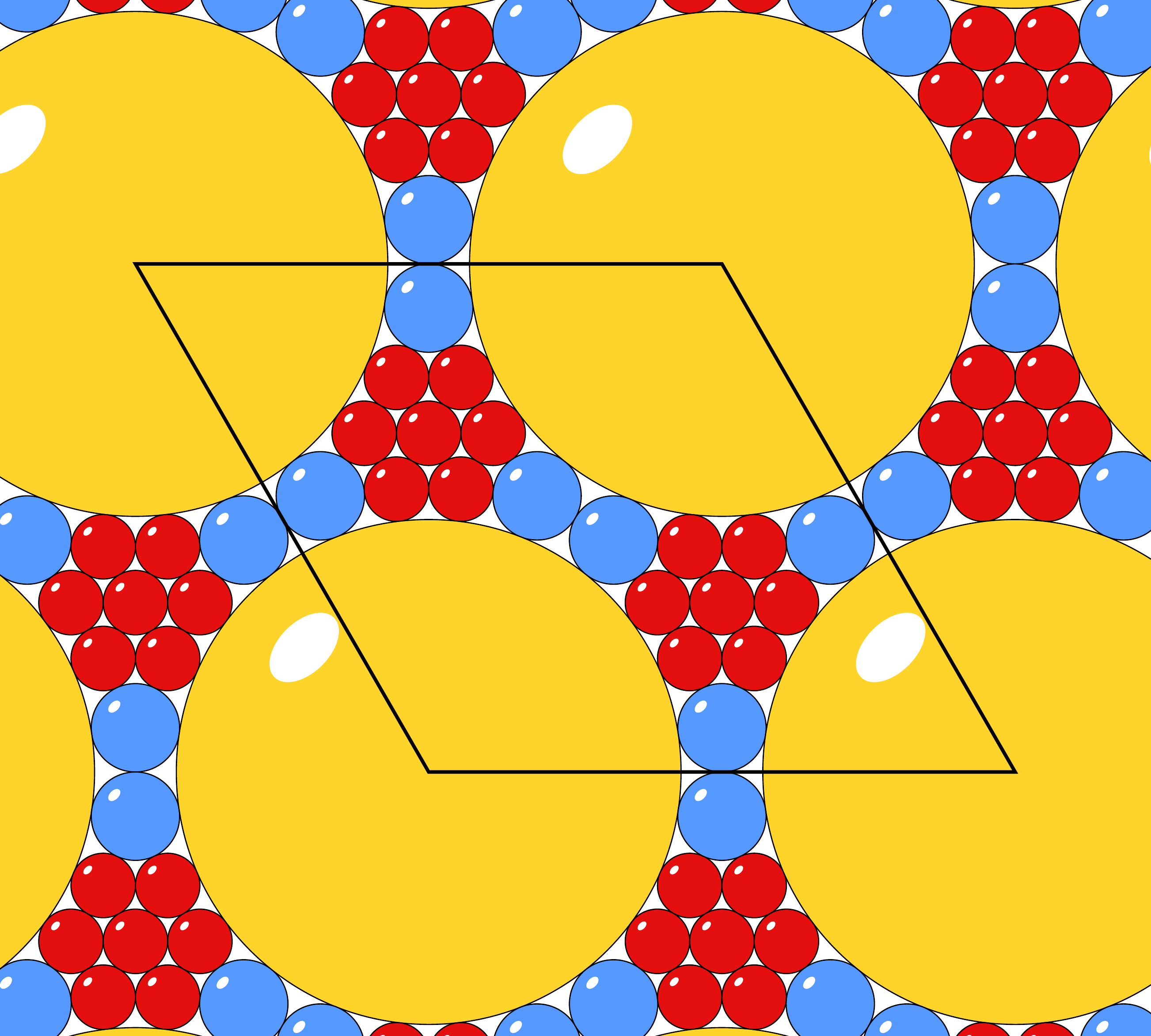}
\end{tabular}
\noindent
\begin{tabular}{lll}
  157\hfill 1rsss / 1rr1ss & 158\hfill 1rsss / 1s1sss & 159\hfill rrrr / 11rrsr\\
  \includegraphics[width=0.3\textwidth]{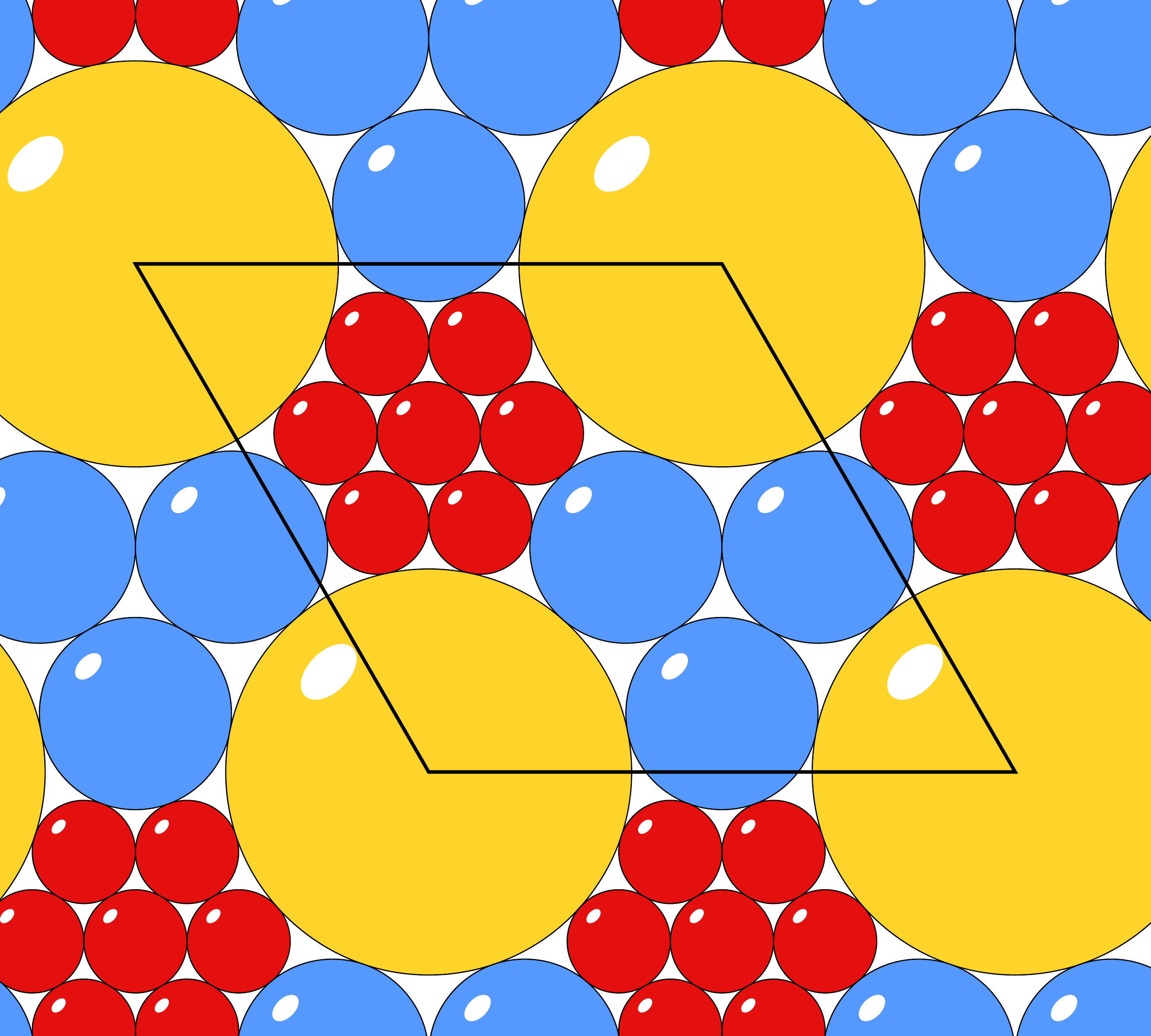} &
  \includegraphics[width=0.3\textwidth]{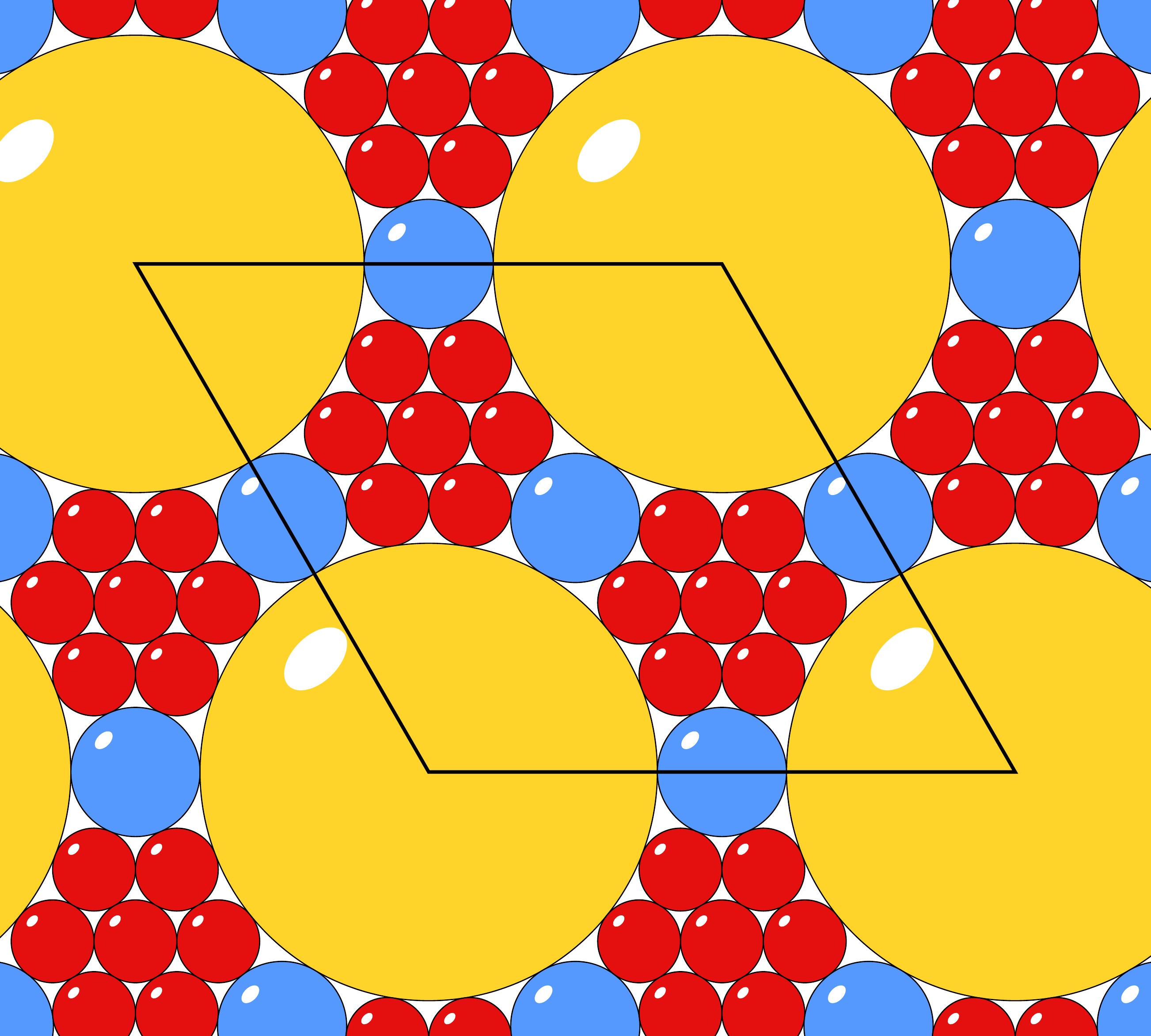} &
  \includegraphics[width=0.3\textwidth]{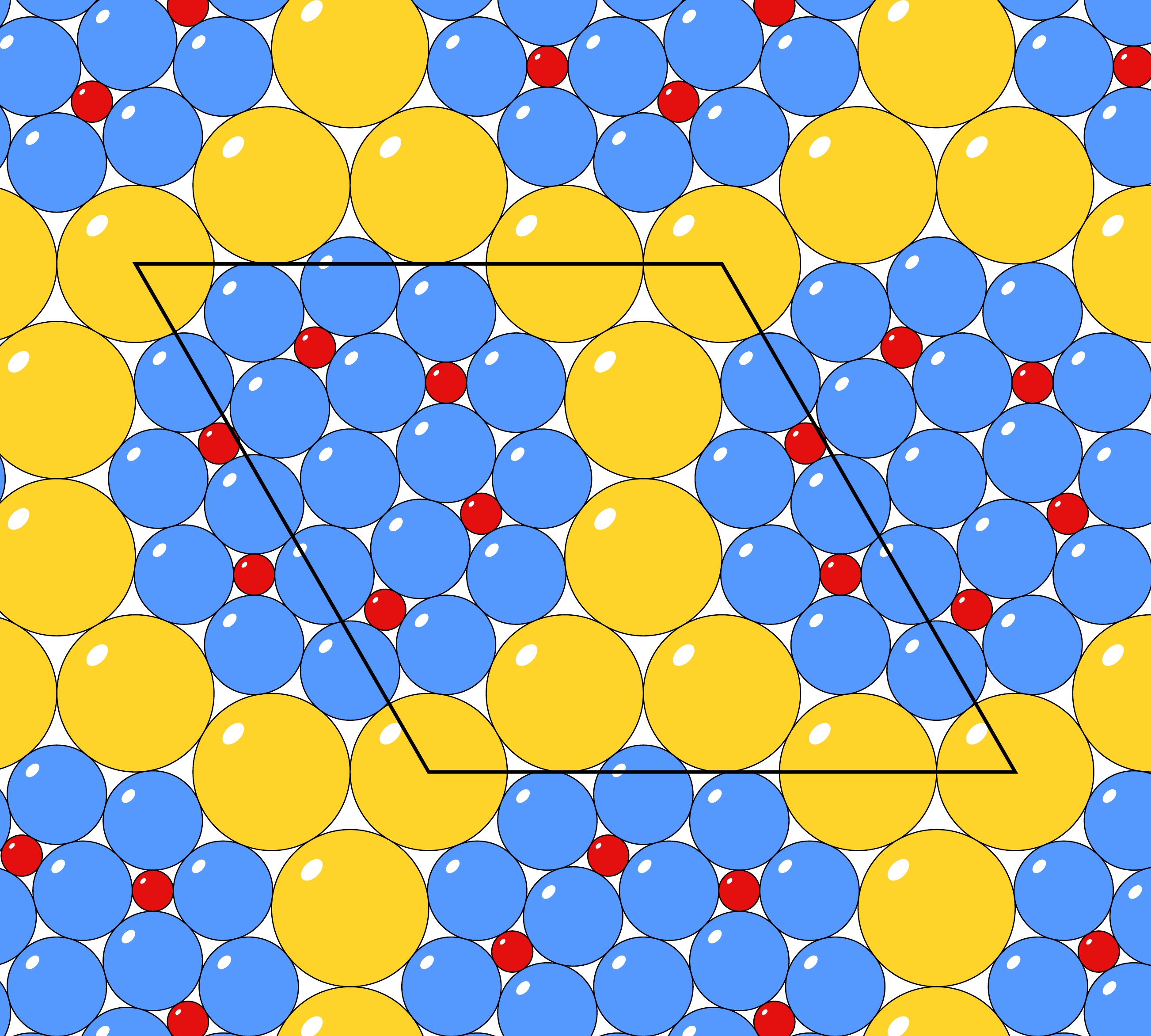}
\end{tabular}
\noindent
\begin{tabular}{lll}
  160\hfill rrrr / 11rsr & 161\hfill rrrr / 1r1rsr & 162\hfill rrrr / 1rrrsr\\
  \includegraphics[width=0.3\textwidth]{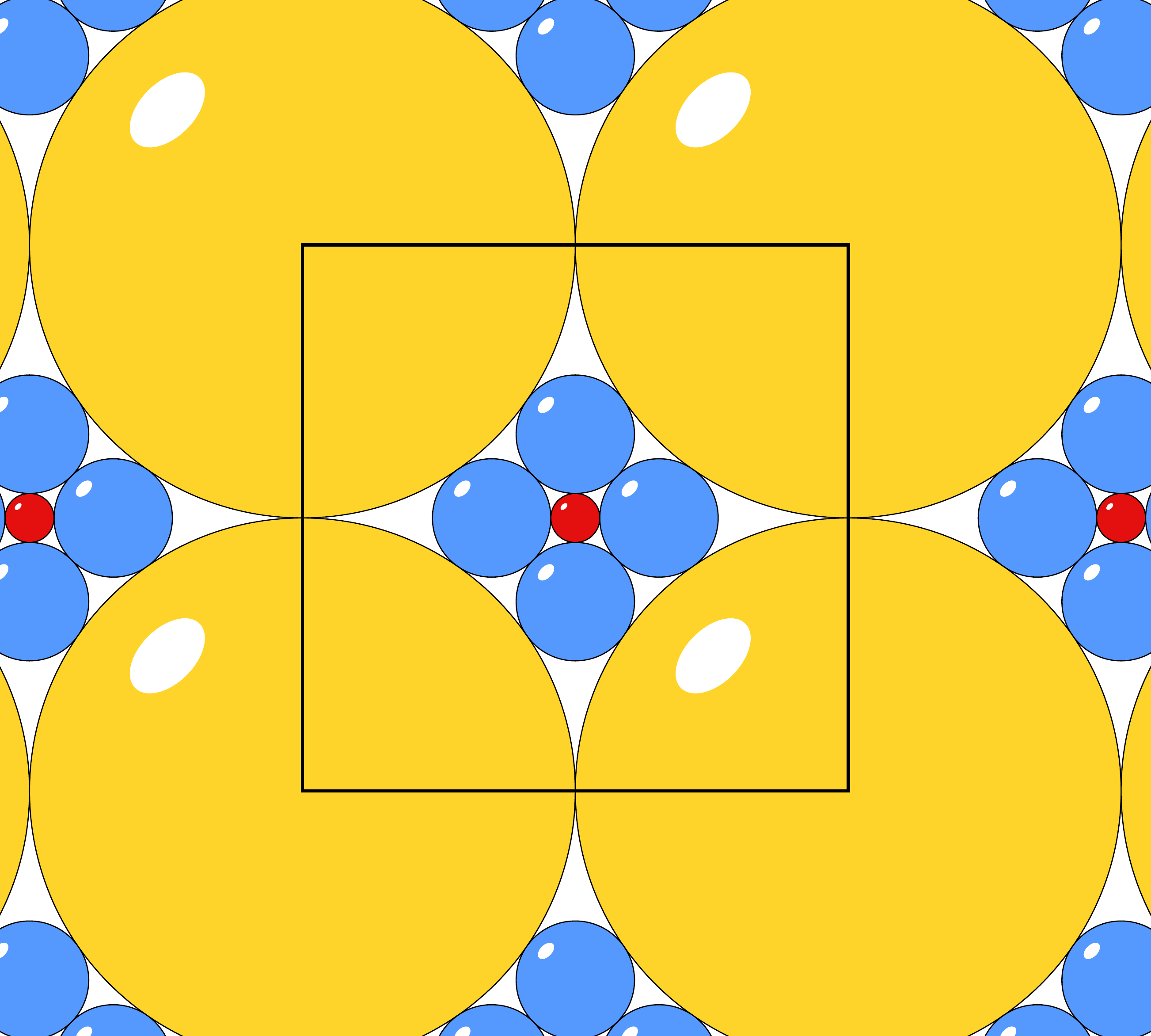} &
  \includegraphics[width=0.3\textwidth]{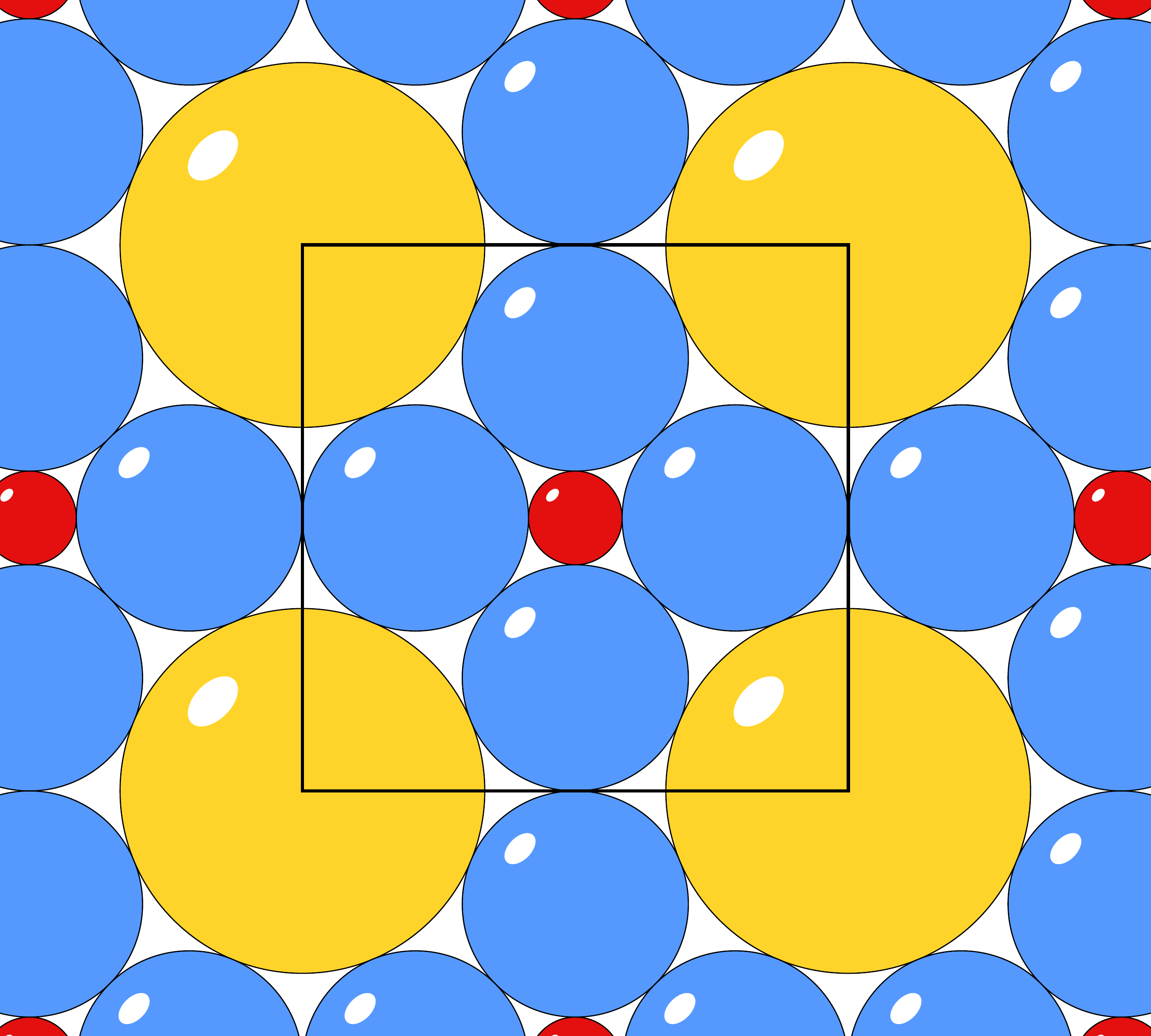} &
  \includegraphics[width=0.3\textwidth]{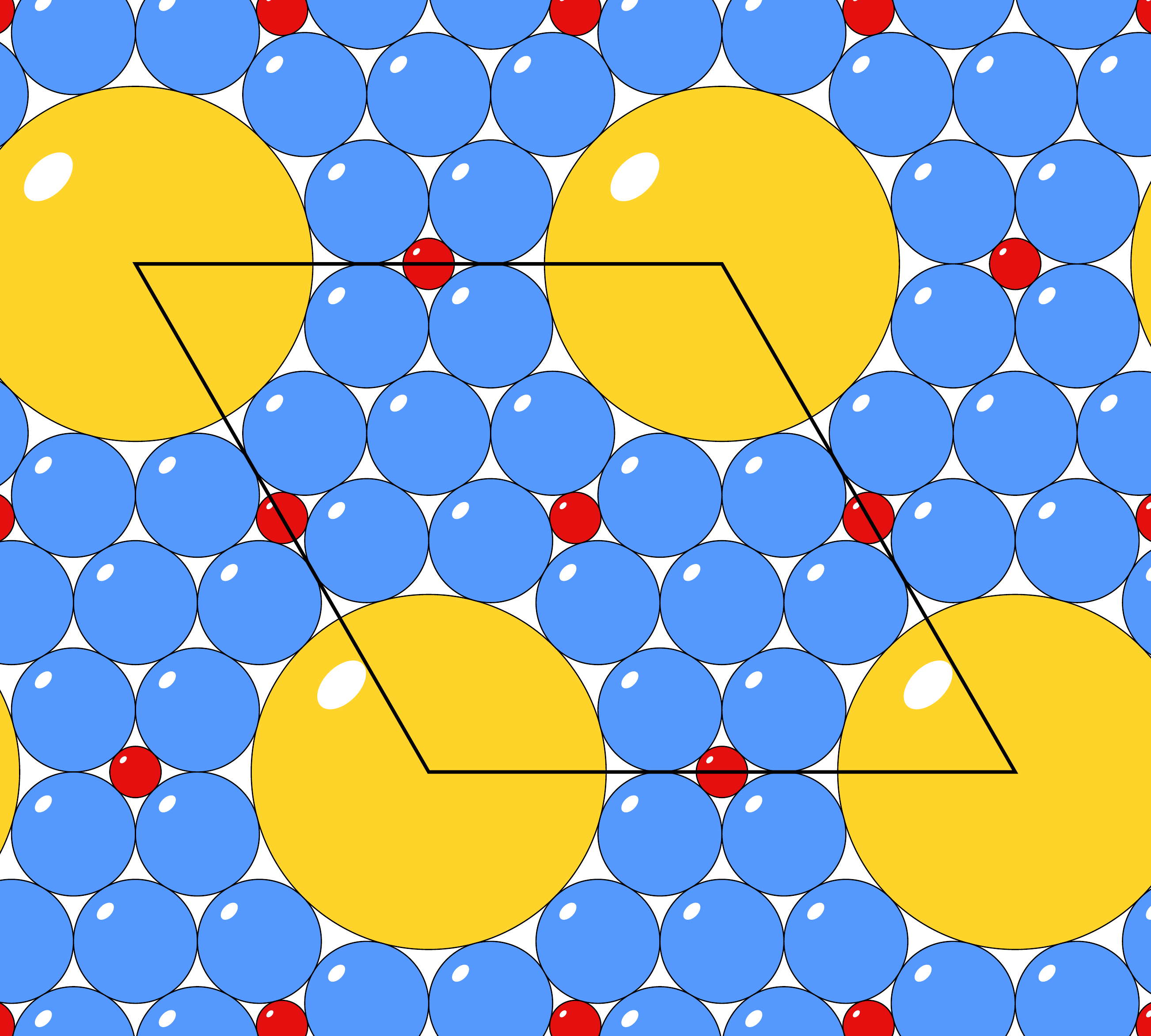}
\end{tabular}
\noindent
\begin{tabular}{lll}
  163\hfill rrsrs / 1rssssr & 164\hfill rrsss / 1rrsssr & \\
  \includegraphics[width=0.3\textwidth]{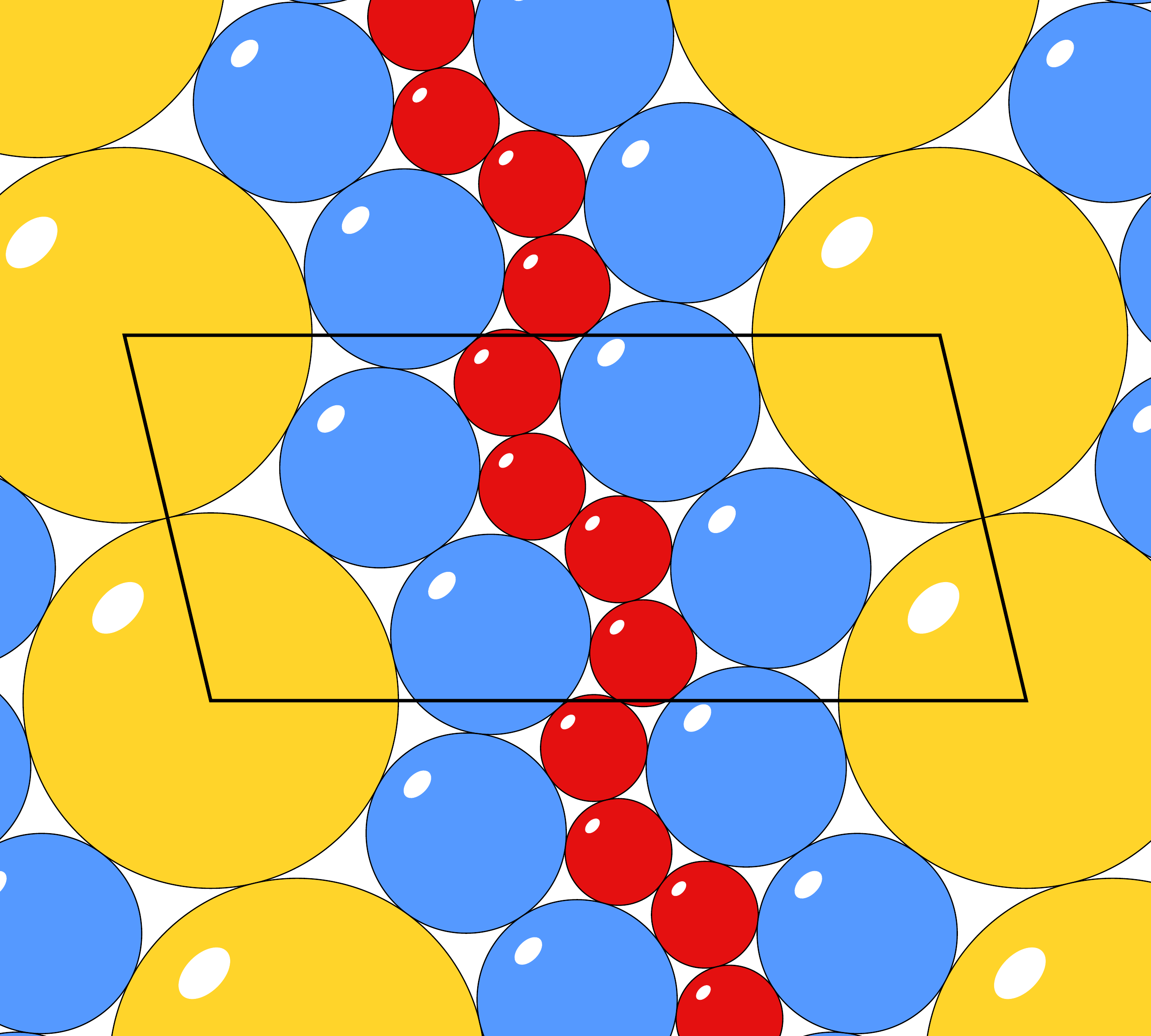} &
  \includegraphics[width=0.3\textwidth]{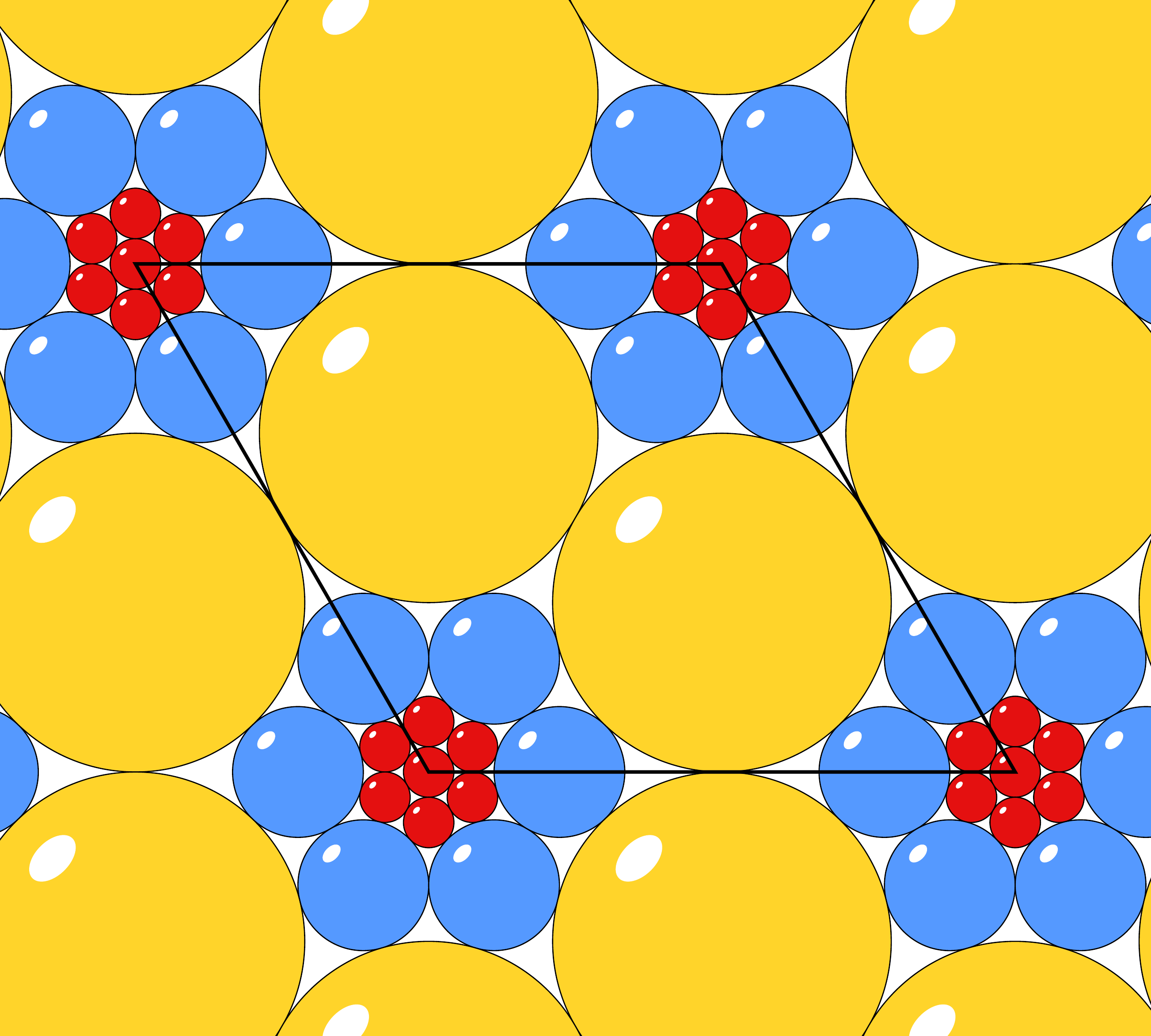} &
\end{tabular}

\vfill
\section{Code}
\label{sec:code}

L'utilisation de l'ordinateur a été cruciale pour démontrer les résultats de cet article.
Le code a été écrit pour SageMath (donc en Python).
Il peut être trouvé à l'adresse :
\begin{center}
\url{https://lipn.univ-paris13.fr/~fernique/info/code3disques.tgz}
\end{center}
Il est articulé en sept programmes :
\begin{enumerate}
  \item \verb+couronnes.sage+ : calcul des vecteurs d'angles et de leurs codages (ou inversement), calcul des couronnes compatibles avec des valeurs de $r$ et $s$ données ;
  \item \verb+equations.sage+ : passage des couronnes aux équations et vérification exacte des candidats $(r,s)$ ;
  \item \verb+biphases.sage+ : calcul et vérification des candidats $(r,s)$ dans le cas des empilements à deux phases ;
  \item \verb+2pc.sage+ : idem dans le cas des empilements avec deux petites couronnes ;
  \item \verb+1pc.sage+ : idem dans le cas des empilements avec une seule petite couronne.
  \item \verb+dessin.sage+ : fonctions pour représenter couronnes ou empilements ;
  \item \verb+empilements.sage+ : codage des $164$ empilements périodiques donné en Annexe~\ref{sec:empilements}.
\end{enumerate}
Les deux premiers, \verb+couronnes.sage+ et \verb+equations.sage+, contiennent des fonctions utilisées dans les autres.
Les trois suivants, \verb+biphases.sage+, \verb+2pc.sage+ et \verb+1pc.sage+ sont indépendants (et relativement similaires).
Les deux derniers, \verb+dessin.sage+ et \verb+empilements.sage+, peuvent être utilisés pour dessiner couronnes ou empilements.
\end{document}